\documentclass[11pt]{article}%
\usepackage{amsfonts}
\usepackage{amsmath}
\usepackage{fullpage}
\usepackage{amssymb}
\usepackage{graphicx}
\usepackage{hyperref}%
\setcounter{MaxMatrixCols}{30}
\providecommand{\U}[1]{\protect\rule{.1in}{.1in}}
\newtheorem{theorem}{Theorem}

\newtheorem{conjecture}[theorem]{Conjecture}
\newtheorem{corollary}[theorem]{Corollary}

\newtheorem{definition}[theorem]{Definition}

\newtheorem{lemma}[theorem]{Lemma}

\newtheorem{proposition}[theorem]{Proposition}
\newtheorem{remark}[theorem]{Remark}

\numberwithin{equation}{section}
\newenvironment{proof}[1][Proof]{\noindent\textbf{#1.} }{\ \rule{0.5em}{0.5em}}
\begin{document}

\title{\textbf{R\'{e}nyi generalizations of the conditional quantum mutual
information}}
\author{Mario Berta\thanks{Institute for Quantum Information and Matter, California
Institute of Technology, Pasadena, California 91125, USA}
\and Kaushik P. Seshadreesan\thanks{Hearne Institute for Theoretical Physics,
Department of Physics and Astronomy, Louisiana State University, Baton Rouge,
Louisiana 70803, USA}
\and Mark M. Wilde\footnotemark[2] \thanks{Center for Computation and Technology,
Louisiana State University, Baton Rouge, Louisiana 70803, USA}}
\maketitle

\begin{abstract}
The conditional quantum mutual information $I(A;B|C)$ of a tripartite state
$\rho_{ABC}$\ is an information quantity which lies at the center of many
problems in quantum information theory. Three of its main properties are that
it is non-negative for any tripartite state, that it decreases under local
operations applied to systems $A$ and $B$, and that it obeys the duality
relation $I(A;B|C)=I(A;B|D)$ for a four-party pure state on systems $ABCD$.
The conditional mutual information also underlies the squashed entanglement,
an entanglement measure that satisfies all of the axioms desired for an
entanglement measure. As such, it has been an open question to find R\'{e}nyi
generalizations of the conditional mutual information, that would allow for a
deeper understanding of the original quantity and find applications beyond the
traditional memoryless setting of quantum information theory. The present
paper addresses this question, by defining different $\alpha$-R\'{e}nyi
generalizations $I_{\alpha}(A;B|C)$ of the conditional mutual information,
some of which we can prove converge to the conditional mutual information in
the limit $\alpha\rightarrow1$. Furthermore, we prove that many of these
generalizations satisfy non-negativity, duality, and monotonicity with respect
to local operations on one of the systems $A$ or $B$ (with it being left as an
open question to prove that monotoniticity holds with respect to local
operations on both systems). The quantities defined here should find
applications in quantum information theory and perhaps even in other areas of
physics, but we leave this for future work. We also state a conjecture
regarding the monotonicity of the R\'{e}nyi conditional mutual informations
defined here with respect to the R\'{e}nyi parameter~$\alpha$. We prove that
this conjecture is true in some special cases and when $\alpha$ is in a
neighborhood of one.

\end{abstract}


\section{Introduction}

How much correlation do two parties have from the perspective of a third? This
kind of correlation is what the conditional quantum mutual information
quantifies. Indeed, let $\rho_{ABC}$ be a density operator corresponding to a
quantum state shared between three parties, say, Alice, Bob, and Charlie. Then
the conditional quantum mutual information is defined as%
\begin{equation}
I(A;B|C)_{\rho}\equiv H(AC)_{\rho}+H(BC)_{\rho}-H(C)_{\rho}-H(ABC)_{\rho},
\label{eq:CMI-definition}%
\end{equation}
where $H(F)_{\sigma}\equiv-$Tr$\{\sigma_{F}\log\sigma_{F}\}$ is the von
Neumann entropy of a state $\sigma_{F}$ on system $F$ and we unambiguously let
$\rho_{C}\equiv\ $Tr$_{AB}\{\rho_{ABC}\}$ denote the reduced density operator
on system $C$, for example. Refs.~\cite{DY08,YD09} provided a compelling
operational interpretation of the conditional quantum mutual information in
terms of the quantum state redistribution protocol:\ given many independent
copies of a four-party pure state $\psi_{ADBC}$, with a sender possessing the
$D$ and $B$ systems, a receiver possessing the $C$ systems, and the sender and
receiver sharing noiseless entanglement before communication begins, the
optimal rate of quantum communication necessary to transfer the $B$~systems to
the receiver is given by $\tfrac{1}{2}I(A;B|C)_{\psi}$.

It is a nontrivial fact, known as strong subadditivity of quantum entropy
\cite{PhysRevLett.30.434,LR73}, that the conditional quantum mutual
information of any tripartite quantum state is non-negative. This can be
viewed as a general constraint imposed on the marginal entropy values of
arbitrary tripartite quantum states. Strong subadditivity also implies that
the conditional mutual information can never increase under local quantum
operations performed on the systems $A$ and $B$ \cite{CW04}, so that
$I(A;B|C)_{\rho}$ is a sensible measure of the correlations present between
systems $A$ and $B$, from the perspective of $C$. That is, the following
inequality holds%
\begin{equation}
I(A;B|C)_{\rho}\geq I(A^{\prime};B^{\prime}|C)_{\omega},
\end{equation}
where $\omega_{A^{\prime}B^{\prime}C}\equiv\left(  \mathcal{N}_{A\rightarrow
A^{\prime}}\otimes\mathcal{M}_{B\rightarrow B^{\prime}}\right)  \left(
\rho_{ABC}\right)  $ with $\mathcal{N}_{A\rightarrow A^{\prime}}$ and
$\mathcal{M}_{B\rightarrow B^{\prime}}$ arbitrary local quantum operations
performed on the input systems $A$ and $B$, leading to output systems
$A^{\prime}$ and $B^{\prime}$, respectively. Inequalities like these are
extremely useful in applications, with nearly all coding theorems in quantum
information theory invoking the strong subadditivity inequality in their proofs.

One of the most fruitful avenues of research in quantum information theory has
been the program of generalizing entropies beyond those that are linear
combinations of the von Neumann entropy
\cite{P86,RennerThesis,D09,T12,WR12,MDSFT13,WWY13,DKFRR13}. Not only is this
interesting from a theoretical perspective, but more importantly, these
generalizations have found application in operational settings in which there
is no assumption of many independent and identically distributed
(i.i.d.)~systems, so that the law of large numbers does not come into play. In
particular, the family of R\'{e}nyi entropies has proved to possess a wide
variety of applications in these non-i.i.d.~settings. More recently,
researchers have shown that nearly all of the known information quantities
being employed in the non-i.i.d.~setting are special cases of a R\'{e}nyi
family of quantum entropies \cite{MDSFT13,AD13}.

However, in spite of this aforementioned progress, it has been a vexing open
question to determine a R\'{e}nyi generalization of the conditional quantum
mutual information that can be useful in applications. On the one hand, a
potential R\'{e}nyi generalization of the conditional mutual information of a
tripartite state $\rho_{ABC}$ consists of simply taking a linear combination
of R\'{e}nyi entropies. For example, in analogy with the definition in
(\ref{eq:CMI-definition}), one could define a R\'{e}nyi generalization of the
conditional mutual information as follows:%
\begin{equation}
I_{\alpha}^{\prime}(A;B|C)_{\rho}\equiv H_{\alpha}(AC)_{\rho}+H_{\alpha
}(BC)_{\rho}-H_{\alpha}(C)_{\rho}-H_{\alpha}(ABC)_{\rho},
\label{eq:simple-renyi-generalization}%
\end{equation}
where $H_{\alpha}(F)_{\sigma}\equiv\left[  1-\alpha\right]  ^{-1}\log
$Tr$\left\{  \sigma_{F}^{\alpha}\right\}  $ is the R\'{e}nyi entropy of a
state $\sigma_{F}$ on system $F$, with parameter $\alpha\in(0,1)\cup
(1,\infty)$ (with the R\'{e}nyi entropy being defined for $\alpha\in\left\{
0,1,\infty\right\}  $ in the limit as $\alpha$ approaches $0$, $1$, and
$\infty$, respectively). Although this quantity is non-negative in some very
special cases \cite{AGS12}, in general, $I_{\alpha}^{\prime}(A;B|C)_{\rho}$
can be negative, and in fact there are some simple examples of states for
which this occurs. Furthermore, the results of \cite{LMW13}\ imply that there
are generally no linear inequality constraints on the marginal R\'{e}nyi
entropies of a multiparty quantum state other than non-negativity when
$\alpha\in\left(  0,1\right)  \cup\left(  1,\infty\right)  $. This implies
that monotonicity under local quantum operations generally does not hold for
$I_{\alpha}^{\prime}(A;B|C)_{\rho}$, and \cite{LMW13} provides many examples
of four-party states $\rho_{ABCD}$ such that $I_{\alpha}^{\prime
}(A;BD|C)_{\rho}<I_{\alpha}^{\prime}(A;B|C)_{\rho}$. For these reasons, we
feel that formulas like that in (\ref{eq:simple-renyi-generalization}) should
not be considered as R\'{e}nyi generalizations of the conditional quantum
mutual information, given that non-negativity and monotonicity under local
operations are two of the basic properties of the conditional quantum mutual
information which are consistently employed in applications. However, one could
certainly argue that the case $\alpha=2$ is useful for the class of Gaussian
quantum states, as done in \cite{AGS12}.

On the other hand, the standard approach for generalizing information
quantities such as entropy, conditional entropy, and mutual information beyond
the von Neumann setting begins with the realization that these quantities can
be written in terms of the Umegaki relative entropy $D(\rho\Vert\sigma)$
\cite{U62}:%
\begin{align}
H(A)_{\rho} &  =-D(\rho_{A}\Vert I_{A}),\\
H(A|B)_{\rho} &  \equiv H(AB)_{\rho}-H(B)_{\rho}=-\min_{\sigma_{B}}D(\rho
_{AB}\Vert I_{A}\otimes\sigma_{B}),\label{eq:optimized-CE}\\
I(A;B)_{\rho} &  \equiv H(A)_{\rho}+H(B)_{\rho}-H(AB)_{\rho}=\min_{\sigma_{B}%
}D(\rho_{AB}\Vert\rho_{A}\otimes\sigma_{B}),\label{eq:optimized-MI}%
\end{align}
where%
\begin{equation}
D(\rho\Vert\sigma)\equiv\left\{
\begin{array}
[c]{cc}%
\left[  \text{Tr}\left\{  \rho\right\}  \right]  ^{-1}\left[  \text{Tr}%
\left\{  \rho\log\rho\right\}  -\text{Tr}\left\{  \rho\log\sigma\right\}
\right]   & \text{if supp}\left(  \rho\right)  \subseteq\text{supp}\left(
\sigma\right)  \\
+\infty & \text{otherwise}%
\end{array}
\right.  .\label{eq:vn-rel-ent}%
\end{equation}
Note that the unique optimum $\sigma_{B}$ in (\ref{eq:optimized-CE}) and
(\ref{eq:optimized-MI}) turns out to be the reduced density operator$~\rho
_{B}$. The R\'{e}nyi relative entropy of order $\alpha\in\lbrack
0,1)\cup(1,\infty)$ is defined as \cite{P86}%
\begin{equation}
D_{\alpha}(\rho\Vert\sigma)\equiv\left\{
\begin{array}
[c]{cc}%
\frac{1}{\alpha-1}\log\text{Tr}\left\{  \left[  \text{Tr}\left\{
\rho\right\}  \right]  ^{-1}\rho^{\alpha}\sigma^{1-\alpha}\right\}   &
\text{if supp}\left(  \rho\right)  \subseteq\text{supp}\left(  \sigma\right)
\text{ or (}\alpha\in\lbrack0,1)\text{ and }\rho\not \perp \sigma\text{)}\\
+\infty & \text{otherwise}%
\end{array}
\right.  ,\label{eq:Renyi-rel-ent}%
\end{equation}
with the support conditions established in \cite{TCR09}. Using this quantity,
one can easily define R\'{e}nyi generalizations of entropy, conditional
entropy, and mutual information in analogy with the above formulations:%
\begin{align}
H_{\alpha}(A)_{\rho} &  =-D_{\alpha}(\rho_{A}\Vert I_{A}),\\
H_{\alpha}(A|B)_{\rho} &  \equiv-\min_{\sigma_{B}}D_{\alpha}(\rho_{AB}\Vert
I_{A}\otimes\sigma_{B}),\label{eq:cond_alpha}\\
I_{\alpha}(A;B)_{\rho} &  \equiv\min_{\sigma_{B}}D_{\alpha}(\rho_{AB}\Vert
\rho_{A}\otimes\sigma_{B}).\label{eq:alpha_mutual}%
\end{align}
Since the R\'{e}nyi relative entropy obeys monotonicity under quantum
operations for $\alpha\in\lbrack0,1)\cup(1,2]$ \cite{P86}, in the sense that
$D_{\alpha}(\rho\Vert\sigma)\geq D_{\alpha}(\mathcal{N}\left(  \rho\right)
\Vert\mathcal{N}\left(  \sigma\right)  )$ for a quantum operation
$\mathcal{N}$, the above generalizations have proven useful in several
applications (see \cite{KW09,MH11,T12} and references therein).

\section{Overview of results}

The main purpose of the present paper is to develop R\'{e}nyi generalizations
of the conditional quantum mutual information that satisfy the aforementioned
properties of non-negativity, monotonicity under local quantum operations, and
duality. We come close to achieving this goal by showing that non-negativity,
duality, and monotonicity under local operations on one of the systems $A$ or
$B$ hold for many of our R\'{e}nyi generalizations. Numerical evidence has not
falsified monotonicity under local operations holding for both systems $A$ and
$B$, but it remains an open question to determine if this holds for both
systems $A$ and $B$. Nevertheless, we think the quantities defined here should
be useful in applications in quantum information theory, and they might even
find use in other areas of physics
\cite{HT07,CC09,H10,K12,K13,HHM13,K13thesis}.

After establishing some notation and recalling definitions in the next
section, our starting point is in Section~\ref{sec:vn-cmi}, where we recall
that the conditional quantum mutual information of a tripartite state
$\rho_{ABC}$ can be written in terms of the relative entropy as follows (see
\textquotedblleft\textit{Proof of (1.5)}\textquotedblright\ in \cite{LR73}):%
\begin{equation}
I\left(  A;B|C\right)  _{\rho}=D\left(  \rho_{ABC}\Vert\exp\left\{  \log
\rho_{AC}+\log\rho_{BC}-\log\rho_{C}\right\}  \right)  .
\label{eq:CMI-rel-ent-formula}%
\end{equation}
We then recall the following generalized Lie-Trotter product formula from
\cite{S85}, with the particular form below being inspired from developments in
\cite{LP08}%
\begin{equation}
\exp\left\{  \log\rho_{AC}+\log\rho_{BC}-\log\rho_{C}\right\}  =\lim
_{\alpha\rightarrow1}\left[  \rho_{AC}^{\left(  1-\alpha\right)  /2}\rho
_{C}^{\left(  \alpha-1\right)  /2}\rho_{BC}^{1-\alpha}\rho_{C}^{\left(
\alpha-1\right)  /2}\rho_{AC}^{\left(  1-\alpha\right)  /2}\right]
^{1/\left(  1-\alpha\right)  }, \label{eq:CMI-LTrott}%
\end{equation}
where we assume that the operators $\rho_{AC}$, $\rho_{BC}$, and $\rho_{C}$
are invertible. The relation above suggests a number of R\'{e}nyi
generalizations of the relative entropy formulation in
\eqref{eq:CMI-rel-ent-formula}, one of which is%
\begin{multline}
D_{\alpha}\left(  \rho_{ABC}\middle\Vert\left[  \rho_{AC}^{\left(
1-\alpha\right)  /2}\rho_{C}^{\left(  \alpha-1\right)  /2}\rho_{BC}^{1-\alpha
}\rho_{C}^{\left(  \alpha-1\right)  /2}\rho_{AC}^{\left(  1-\alpha\right)
/2}\right]  ^{1/\left(  1-\alpha\right)  }\right)
\label{eq:renyi-gen-marginals}\\
=\frac{1}{\alpha-1}\log\text{Tr}\left\{  \rho_{ABC}^{\alpha}\rho_{AC}^{\left(
1-\alpha\right)  /2}\rho_{C}^{\left(  \alpha-1\right)  /2}\rho_{BC}^{1-\alpha
}\rho_{C}^{\left(  \alpha-1\right)  /2}\rho_{AC}^{\left(  1-\alpha\right)
/2}\right\}  .
\end{multline}
We prove that several of these R\'{e}nyi conditional mutual informations are
non-negative for $\alpha\in\lbrack0,1)\cup(1,2]$ and obey monotonicity under
local quantum operations on one of the systems $A$ or $B$ in the same range of
$\alpha$ (with the proof following from the Lieb concavity theorem~\cite{L73}
and the Ando convexity theorem~\cite{A79}).Our proof for monotonicity under
local operations depends on operator orderings in the particular R\'{e}nyi
generalization of the conditional mutual information. For example, we can show
that monotonicity under operations on the $B$ system holds for the quantity
defined in (\ref{eq:renyi-gen-marginals}), due to the fact that $\rho_{BC}$ is
\textquotedblleft placed in the middle.\textquotedblright We also consider
several limiting cases, the most important of which is the limit as
$\alpha\rightarrow1$. We prove that some of the $\alpha$-R\'{e}nyi conditional
mutual informations converge to $I\left(  A;B|C\right)  _{\rho}$ in this
limit. Note that classical and quantum quantities related to these have been
explored in prior work \cite{B12,E13}.

The sandwiched R\'{e}nyi relative entropy \cite{MDSFT13,WWY13}\ is another
variant of the R\'{e}nyi relative entropy which has found a number of
applications recently in the context of strong converse theorems
\cite{WWY13,MO13,GW13,CMW14,TWW14}. It is defined for $\alpha\in
(0,1)\cup(1,\infty)$ as follows:%
\begin{equation}
\widetilde{D}_{\alpha}\left(  \rho\Vert\sigma\right)  \equiv\left\{
\begin{array}
[c]{cc}%
\frac{1}{\alpha-1}\log\left[  \left[  \text{Tr}\left\{  \rho\right\}  \right]
^{-1}\text{Tr}\left\{  \left(  \sigma^{\left(  1-\alpha\right)  /2\alpha}%
\rho\sigma^{\left(  1-\alpha\right)  /2\alpha}\right)  ^{\alpha}\right\}
\right]  &
\begin{array}
[c]{c}%
\text{if supp}\left(  \rho\right)  \subseteq\text{supp}\left(  \sigma\right)
\text{ or}\\
\text{(}\alpha\in(0,1)\text{ and }\rho\not \perp \sigma\text{)}%
\end{array}
\\
+\infty & \text{otherwise}%
\end{array}
\right.  . \label{eq:def-sandwiched}%
\end{equation}
In Section~\ref{sec:sandwiched-Renyi-CMI}, we use this sandwiched R\'{e}nyi
relative entropy to establish a number of sandwiched R\'{e}nyi generalizations
of the conditional mutual information, one of which is%
\begin{multline}
\widetilde{D}_{\alpha}\left(  \rho_{ABC}\middle\Vert\left[  \rho_{AC}^{\left(
1-\alpha\right)  /2\alpha}\rho_{C}^{\left(  \alpha-1\right)  /2\alpha}%
\rho_{BC}^{\left(  1-\alpha\right)  /\alpha}\rho_{C}^{\left(  \alpha-1\right)
/2\alpha}\rho_{AC}^{\left(  1-\alpha\right)  /2\alpha}\right]  ^{\alpha
/\left(  1-\alpha\right)  }\right) \label{eq:renyi-sandwiched-marginals}\\
=\frac{1}{\alpha-1}\log\text{Tr}\left\{  \left(  \rho_{ABC}^{1/2}\rho
_{AC}^{\left(  1-\alpha\right)  /2\alpha}\rho_{C}^{\left(  \alpha-1\right)
/2\alpha}\rho_{BC}^{\left(  1-\alpha\right)  /\alpha}\rho_{C}^{\left(
\alpha-1\right)  /2\alpha}\rho_{AC}^{\left(  1-\alpha\right)  /2\alpha}%
\rho_{ABC}^{1/2}\right)  ^{\alpha}\right\}  ,
\end{multline}
where the equality follows from the fact that%
\begin{equation}
\text{Tr}\left\{  \left(  \sigma^{\left(  1-\alpha\right)  /2\alpha}\rho
\sigma^{\left(  1-\alpha\right)  /2\alpha}\right)  ^{\alpha}\right\}
=\text{Tr}\left\{  \left(  \rho^{1/2}\sigma^{\left(  1-\alpha\right)  /\alpha
}\rho^{1/2}\right)  ^{\alpha}\right\}  .
\end{equation}
Although both R\'{e}nyi generalizations of the conditional mutual information
feature \textquotedblleft operator sandwiches,\textquotedblright\ we give this
particular generalization the epithet \textquotedblleft
sandwiched\textquotedblright\ because it is derived from the sandwiched
R\'{e}nyi relative entropy. We prove that several of these sandwiched
R\'{e}nyi conditional mutual informations are non-negative for all $\alpha
\in\lbrack1/2,1)\cup(1,\infty)$ and that they are monotone under local quantum
operations on one of the systems $A$ or $B$ for the same range of $\alpha$
(with the proof following from recent work in \cite{Hiai20131568} and
\cite{FL13}). We can prove that some of them converge to $I\left(
A;B|C\right)  _{\rho}$ in the limit as $\alpha\rightarrow1$, and there are
other interesting quantities to consider for $\alpha=1/2$ or $\alpha=\infty$,
leading to a min- and max- version of conditional mutual information,
respectively. There are certainly other possible definitions for R\'{e}nyi
conditional mutual information that one could consider and we discuss these in
the conclusion.

One of the most curious non-classical properties of the conditional quantum
mutual information is that it obeys a duality relation \cite{DY08,YD09}. That
is, for a four-party pure state $\psi_{ABCD}$, the following equality holds%
\begin{equation}
I\left(  A;B|C\right)  _{\psi}=I\left(  A;B|D\right)  _{\psi}.
\end{equation}
In Section~\ref{sec:duality}, we prove that some variants of the R\'{e}nyi
conditional mutual information obey duality relations analogous to the above one.

A well known property of both the traditional and the sandwiched R\'{e}nyi
relative entropy is that they are monotone non-decreasing in $\alpha$. That
is, for $0\leq\alpha\leq\beta$, we have the following inequalities
\cite{TCR09,MDSFT13}:%
\begin{equation}
D_{\alpha}\left(  \rho\Vert\sigma\right)  \leq D_{\beta}\left(  \rho
\Vert\sigma\right)  ,\ \ \ \ \ \ \ \widetilde{D}_{\alpha}\left(  \rho
\Vert\sigma\right)  \leq\widetilde{D}_{\beta}\left(  \rho\Vert\sigma\right)  .
\label{eq:monotonicity-in-alpha-intro}%
\end{equation}
Section~\ref{sec:monotonicity-in-alpha}\ states an open conjecture, that the
R\'{e}nyi generalizations of the conditional mutual information obey a similar
monotonicity. We prove that this conjecture is true in some special cases, we
prove that it is true when $\alpha$ is in a neighborhood of one, and numerical
evidence indicates that it is true in general.
%
We finally conclude in Section~\ref{sec:conclusion} with a summary of our
results and a discussion of directions for future research.

\section{Notation and definitions}

\label{sec:notation}\textbf{Norms, states, channels, and measurements.} Let
$\mathcal{B}\left(  \mathcal{H}\right)  $ denote the algebra of bounded linear
operators acting on a Hilbert space $\mathcal{H}$. We restrict ourselves to
finite-dimensional Hilbert spaces throughout this paper. For $\alpha\geq1$, we
define the $\alpha$-norm of an operator $X$ as%
\begin{equation}
\left\Vert X\right\Vert _{\alpha}\equiv\text{Tr}\{(\sqrt{X^{\dag}X})^{\alpha
}\}^{1/\alpha}, \label{eq:a-norm}%
\end{equation}
and we use the same notation even for the case $\alpha\in(0,1)$, when it is
not a norm. Let $\mathcal{B}\left(  \mathcal{H}\right)  _{+}$ denote the
subset of positive semi-definite operators, and let $\mathcal{B}\left(
\mathcal{H}\right)  _{++}$ denote the subset of positive definite operators.
We also write $X\geq0$ if $X\in\mathcal{B}\left(  \mathcal{H}\right)  _{+}$
and $X>0$ if $X\in\mathcal{B}\left(  \mathcal{H}\right)  _{++}$. An\ operator
$\rho$ is in the set $\mathcal{S}\left(  \mathcal{H}\right)  $\ of density
operators (or states) if $\rho\in\mathcal{B}\left(  \mathcal{H}\right)  _{+}$
and Tr$\left\{  \rho\right\}  =1$, and an\ operator $\rho$ is in the set
$\mathcal{S}(\mathcal{H})_{++}$\ of strictly positive definite density
operators if $\rho\in\mathcal{B}\left(  \mathcal{H}\right)  _{++}$ and
Tr$\left\{  \rho\right\}  =1$. The tensor product of two Hilbert spaces
$\mathcal{H}_{A}$ and $\mathcal{H}_{B}$ is denoted by $\mathcal{H}_{A}%
\otimes\mathcal{H}_{B}$ or $\mathcal{H}_{AB}$.\ Given a multipartite density
operator $\rho_{AB}\in\mathcal{S}(\mathcal{H}_{A}\otimes\mathcal{H}_{B})$, we
unambiguously write $\rho_{A}=\ $Tr$_{B}\left\{  \rho_{AB}\right\}  $ for the
reduced density operator on system $A$. We use $\rho_{AB}$, $\sigma_{AB}$,
$\tau_{AB}$, $\omega_{AB}$, etc.~to denote general density operators in
$\mathcal{S}(\mathcal{H}_{A}\otimes\mathcal{H}_{B})$, while $\psi_{AB}$,
$\varphi_{AB}$, $\phi_{AB}$, etc.~denote rank-one density operators (pure
states) in $\mathcal{S}(\mathcal{H}_{A}\otimes\mathcal{H}_{B})$ (with it
implicit, clear from the context, and the above convention implying that
$\psi_{A}$, $\varphi_{A}$, $\phi_{A}$ may be mixed if $\psi_{AB}$,
$\varphi_{AB}$, $\phi_{AB}$ are pure). In expressions like that in
(\ref{eq:renyi-gen-marginals}) and (\ref{eq:renyi-sandwiched-marginals}), an
identity operator is implicit when not written (and should be clear from the
context), so that, for example, the expression $\rho_{BC}^{1-\alpha}$ in
(\ref{eq:renyi-gen-marginals}) should be interpreted as $I_{A}\otimes\rho
_{BC}^{1-\alpha}$.

The trace distance between two quantum states $\rho,\sigma\in\mathcal{S}%
\left(  \mathcal{H}\right)  $\ is equal to $\left\Vert \rho-\sigma\right\Vert
_{1}$. It has a direct operational interpretation in terms of the
distinguishability of these states. That is, if $\rho$ or $\sigma$ are
prepared with equal probability and the task is to distinguish them via some
quantum measurement, then the optimal success probability in doing so is equal
to $\left(  1+\left\Vert \rho-\sigma\right\Vert _{1}/2\right)  /2$. Throughout
the paper, for technical convenience and simplicity, some of our statements
apply only to states in $\mathcal{S}\left(  \mathcal{H}\right)  _{++}$. This
might seem restrictive, but in the following sense, it is physically
reasonable. Given any state $\omega\in\mathcal{S}\left(  \mathcal{H}\right)
\backslash\mathcal{S}\left(  \mathcal{H}\right)  _{++}$, there is a state
$\omega\left(  \xi\right)  =\left(  1-\xi\right)  \omega+\xi I/\dim\left(
\mathcal{H}\right)  $ for a constant $\xi>0$, so that $\omega\left(
\xi\right)  \in\mathcal{S}\left(  \mathcal{H}\right)  _{++}$ and $\left\Vert
\omega-\omega\left(  \xi\right)  \right\Vert _{1}\leq2\xi$. Thus, the bias in
distinguishing $\omega$ from $\omega\left(  \xi\right)  $ is no more than
$\xi/2$, so that $\omega\left(  \xi\right)  $ can \textquotedblleft
mask\textquotedblright\ as$~\omega$.

Throughout this paper, we take the usual convention that $f\left(  A\right)
=\sum_{i}f\left(  a_{i}\right)  \left\vert i\right\rangle \left\langle
i\right\vert $ when given a function $f$ and a Hermitian operator $A$ with
spectral decomposition $A=\sum_{i}a_{i}\left\vert i\right\rangle \left\langle
i\right\vert $. So this means that $A^{-1}$ is interpreted as a generalized
inverse, so that $A^{-1}=\sum_{i:a_{i} \neq0}a_{i}^{-1}\left\vert
i\right\rangle \left\langle i\right\vert $, $\log\left(  A\right)
=\sum_{i:a_{i} >0 }\log\left(  a_{i}\right)  \left\vert i\right\rangle
\left\langle i\right\vert $, $\exp\left(  A\right)  =\sum_{i}\exp\left(
a_{i}\right)  \left\vert i\right\rangle \left\langle i\right\vert $, etc.
Throughout the paper, we interpret $\log$ as the natural logarithm. The above
convention for $f\left(  A\right)  $ leads to the convention that $A^{0}$
denotes the projection onto the support of $A$, i.e., $A^{0}=\sum_{i:a_{i}
\neq0}\left\vert i\right\rangle \left\langle i\right\vert $. We employ the
shorthand supp$(A)$ and ker$(A)$ to refer to the support and kernel of an
operator $A$, respectively.

A linear map $\mathcal{N}_{A\rightarrow B}:\mathcal{B}\left(  \mathcal{H}%
_{A}\right)  \rightarrow\mathcal{B}\left(  \mathcal{H}_{B}\right)  $\ is
positive if $\mathcal{N}_{A\rightarrow B}\left(  \sigma_{A}\right)
\in\mathcal{B}\left(  \mathcal{H}_{B}\right)  _{+}$ whenever $\sigma_{A}%
\in\mathcal{B}\left(  \mathcal{H}_{A}\right)  _{+}$. A linear map
$\mathcal{N}_{A\rightarrow B}:\mathcal{B}\left(  \mathcal{H}_{A}\right)
\rightarrow\mathcal{B}\left(  \mathcal{H}_{B}\right)  $\ is strictly positive
if $\mathcal{N}_{A\rightarrow B}\left(  \sigma_{A}\right)  \in\mathcal{B}%
\left(  \mathcal{H}_{B}\right)  _{++}$ whenever $\sigma_{A}\in\mathcal{B}%
\left(  \mathcal{H}_{A}\right)  _{++}$. Let id$_{A}$ denote the identity map
acting on a system $A$. A linear map $\mathcal{N}_{A\rightarrow B}$ is
completely positive if the map id$_{R}\otimes\mathcal{N}_{A\rightarrow B}$ is
positive for a reference system $R$ of arbitrary size. A linear map
$\mathcal{N}_{A\rightarrow B}$ is trace-preserving if Tr$\left\{
\mathcal{N}_{A\rightarrow B}\left(  \tau_{A}\right)  \right\}  =\ $Tr$\left\{
\tau_{A}\right\}  $ for all input operators $\tau_{A}\in\mathcal{B}\left(
\mathcal{H}_{A}\right)  $. If a linear map is completely positive and
trace-preserving (CPTP), we say that it is a quantum channel or quantum
operation. A positive operator-valued measure (POVM) is a set $\left\{
\Lambda^{m}\right\}  $ of positive semi-definite operators such that $\sum
_{m}\Lambda^{m}=I$.

\bigskip

\textbf{Relative entropies.} We defined the relative entropy $D(P\Vert Q)$
between $P,Q\in\mathcal{B}\left(  \mathcal{H}\right)  _{+}$ in
(\ref{eq:vn-rel-ent}), with $P\neq0$. The definition is consistent with the
following limit, so that%
\begin{equation}
\lim_{\xi\searrow0}\left[  \text{Tr}\left\{  P\right\}  \right]
^{-1}\text{Tr}\left\{  P\left[  \log P-\log\left(  Q+\xi I\right)  \right]
\right\}  =D(P\Vert Q), \label{eq:rel-ent-consistency}%
\end{equation}
where $I$ is the identity operator acting on $\mathcal{H}$. The statement in
(\ref{eq:rel-ent-consistency}) follows because the quantity%
\begin{equation}
\lim_{\xi\searrow0}\text{Tr}\left\{  P\log\left(  Q+\xi I\right)  \right\}
\label{eq:limiting-quantity}%
\end{equation}
is finite and equal to Tr$\left\{  P\log Q\right\}  $ if supp$\left(
P\right)  \subseteq\ $supp$\left(  Q\right)  $. Otherwise,
(\ref{eq:limiting-quantity}) is infinite. The relative entropy $D(P\Vert
Q)$\ is non-negative if Tr$\left\{  P\right\}  \geq$ Tr$\left\{  Q\right\}  $,
a result known as Klein's inequality \cite{LR68}. Thus, for density operators
$\rho$ and $\sigma$, the relative entropy is non-negative, and furthermore, it
is equal to zero if and only if $\rho=\sigma$.

We defined the R\'{e}nyi relative entropy in (\ref{eq:Renyi-rel-ent}). This
definition is consistent with the following limit, so that for $\alpha
\in\lbrack0,1)\cup(1,\infty)$%
\begin{equation}
\lim_{\xi\searrow0}\frac{1}{\alpha-1}\log\text{Tr}\left\{  \left[
\text{Tr}\left\{  P\right\}  \right]  ^{-1}P^{\alpha}\left(  Q+\xi I\right)
^{1-\alpha}\right\}  =D_{\alpha}(P\Vert Q),\label{eq:supp-renyi-1}%
\end{equation}
as can be checked by a proof similar to \cite[Lemma 13]{MDSFT13}. The quantity
obeys the following monotonicity inequality for all $\alpha\in\lbrack
0,1)\cup(1,2]$:%
\begin{equation}
D_{\alpha}\left(  P\Vert Q\right)  \geq D_{\alpha}\left(  \mathcal{N}\left(
P\right)  \Vert\mathcal{N}\left(  Q\right)  \right)  ,
\end{equation}
where $P,Q\in\mathcal{B}\left(  \mathcal{H}\right)  _{+}$ and $\mathcal{N}$ is
a CPTP\ map \cite{P86}. Thus, by applying this, we find that $D_{\alpha
}\left(  P\Vert Q\right)  $ is non-negative for all $\alpha\in\lbrack
0,1)\cup(1,2]$ whenever Tr$\left\{  P\right\}  \geq$ Tr$\left\{  Q\right\}  $,
so that it is always non-negative for density operators $\rho$ and $\sigma$.
Furthermore, it is equal to zero if and only if $\rho=\sigma$.

We also defined the sandwiched R\'{e}nyi relative entropy in
(\ref{eq:def-sandwiched}). Similar to the above quantities, the definition is
consistent with the following limit, so that%
\begin{equation}
\lim_{\xi\searrow0}\frac{1}{\alpha-1}\log\left[  \left[  \text{Tr}\left\{
P\right\}  \right]  ^{-1}\text{Tr}\left\{  \left[  \left(  Q+\xi I\right)
^{\left(  1-\alpha\right)  /2\alpha}P\left(  Q+\xi I\right)  ^{\left(
1-\alpha\right)  /2\alpha}\right]  ^{\alpha}\right\}  \right]  =\widetilde
{D}_{\alpha}(P\Vert Q),\label{eq:supp-renyi-2}%
\end{equation}
as proved in \cite[Lemma 13]{MDSFT13}. Whenever supp$\left(  P\right)
\subseteq$ supp$\left(  Q\right)  $ or ($\alpha\in\left(  0,1\right)  $ and
$P\not \perp Q$), it admits the following alternate forms:%
\begin{align}
\widetilde{D}_{\alpha}\left(  P\Vert Q\right)   &  \equiv\frac{1}{\alpha
-1}\log\left[  \left[  \text{Tr}\left\{  P\right\}  \right]  ^{-1}%
\text{Tr}\left\{  \left(  Q^{\left(  1-\alpha\right)  /2\alpha}PQ^{\left(
1-\alpha\right)  /2\alpha}\right)  ^{\alpha}\right\}  \right]  \\
&  =\frac{\alpha}{\alpha-1}\log\left\Vert Q^{\left(  1-\alpha\right)
/2\alpha}PQ^{\left(  1-\alpha\right)  /2\alpha}\right\Vert _{\alpha}-\frac
{1}{\alpha-1}\log\text{Tr}\left\{  P\right\}  \\
&  =\frac{\alpha}{\alpha-1}\log\left\Vert P^{1/2}Q^{\left(  1-\alpha\right)
/\alpha}P^{1/2}\right\Vert _{\alpha}-\frac{1}{\alpha-1}\log\text{Tr}\left\{
P\right\}  .
\end{align}
It obeys the following monotonicity inequality for all $\alpha\in
\lbrack1/2,1)\cup(1,\infty)$:%
\begin{equation}
\widetilde{D}_{\alpha}\left(  P\Vert Q\right)  \geq\widetilde{D}_{\alpha
}\left(  \mathcal{N}\left(  P\right)  \Vert\mathcal{N}\left(  Q\right)
\right)  ,
\end{equation}
where $P,Q\in\mathcal{B}\left(  \mathcal{H}\right)  _{+}$ and $\mathcal{N}$ is
a CPTP\ map \cite{FL13} (see also \cite{B13,MO13,WWY13,MDSFT13}\ for other
proofs of this for more limited ranges of $\alpha$). Thus, by applying this,
we find that $\widetilde{D}_{\alpha}\left(  P\Vert Q\right)  $ is non-negative
for all $\alpha\in\lbrack1/2,1)\cup(1,\infty)$ whenever Tr$\left\{  P\right\}
\geq$ Tr$\left\{  Q\right\}  $, so that it is always non-negative for density
operators $\rho$ and $\sigma$. Furthermore, it is equal to zero if and only if
$\rho=\sigma$.

\section{Conditional quantum mutual information based on von Neumann entropy}

\label{sec:vn-cmi}In this section, we prove that the conditional quantum
mutual information has many seemingly different representations in terms of a
relative-entropy-like quantity (however all of them being equal). This paves
the way for designing different R\'{e}nyi generalizations of the conditional
quantum mutual information. Furthermore, we give a conceptually different
proof of the fact that the conditional quantum mutual information $I\left(
A;B|C\right)  $\ is monotone under local quantum operations on systems $A$
and~$B$. This alternate proof will be the basis for similar proofs when we
consider R\'{e}nyi generalizations in Sections~\ref{sec:Renyi-CMI} and
\ref{sec:sandwiched-Renyi-CMI}. Finally, we discuss how representing $I\left(
A;B|C\right)  $ as we do in Proposition~\ref{thm:CMI-formulations} allows for
a straightforward comparison of it with the minimum relative entropy
\textquotedblleft distance\textquotedblright\ to quantum Markov states, a
quantity originally considered in \cite{ILW08}.

\subsection{Various formulations of the conditional quantum mutual
information}

One of the core quantities that we consider in this paper is the following
function of four density operators $\rho_{ABC}\in\mathcal{S}\left(
\mathcal{H}_{ABC}\right)  $, $\tau_{AC}\in\mathcal{S}\left(  \mathcal{H}%
_{AC}\right)  $, $\theta_{BC}\in\mathcal{S}\left(  \mathcal{H}_{BC}\right)  $,
and $\omega_{C}\in\mathcal{S}\left(  \mathcal{H}_{C}\right)  $:%
\begin{equation}
\Delta\left(  \rho_{ABC},\tau_{AC},\theta_{BC},\omega_{C}\right)
\equiv\text{Tr}\left\{  \rho_{ABC}\left[  \log\rho_{ABC}-\log\tau_{AC}%
-\log\theta_{BC}+\log\omega_{C}\right]  \right\}  , \label{eq:Delta-quantity}%
\end{equation}
where logarithms of density operators are understood in the usual sense
described in Section~\ref{sec:notation}. Let$~I_{ABC}$ denote the identity
operator acting on $\mathcal{H}_{ABC}$. A sufficient condition for%
\begin{equation}
\lim_{\xi\searrow0}\Delta\left(  \rho_{ABC},\tau_{AC}+\xi I_{ABC},\theta
_{BC}+\xi I_{ABC},\omega_{C}+\xi I_{ABC}\right)  \label{eq:limit-xi-Delta}%
\end{equation}
to be finite and equal to (\ref{eq:Delta-quantity})\ is that%
\begin{equation}
\text{supp}\left(  \rho_{ABC}\right)  \subseteq\text{supp}\left(  \tau
_{AC}\right)  \text{, supp}\left(  \theta_{BC}\right)  \text{, supp}\left(
\omega_{C}\right)  , \label{eq:support-condition}%
\end{equation}
for the same reason given after (\ref{eq:rel-ent-consistency}). When comparing
with supp$\left(  \rho_{ABC}\right)  $, it is implicit throughout this paper
that supp$\left(  \tau_{AC}\right)  \equiv\ $supp$\left(  I_{B}\otimes
\tau_{AC}\right)  $, supp$\left(  \theta_{BC}\right)  \equiv\ $supp$\left(
I_{A}\otimes\theta_{BC}\right)  $, and supp$\left(  \omega_{C}\right)
\equiv\ $supp$\left(  I_{AB}\otimes\omega_{C}\right)  $. The condition in
\eqref{eq:support-condition} is equivalent to supp$\left(  \rho_{ABC}\right)
$ being in the intersection of the supports of $\tau_{AC}$, $\theta_{BC}$, and
$\omega_{C}$. Note that there are more general support conditions which lead
to a finite value for (\ref{eq:limit-xi-Delta}), but for simplicity, we focus
exclusively on the above support condition. If the support condition in
\eqref{eq:support-condition} holds, then by inspection we can write%
\begin{equation}
\Delta\left(  \rho_{ABC},\tau_{AC},\theta_{BC},\omega_{C}\right)  =D\left(
\rho_{ABC}\Vert\exp\left\{  \log\tau_{AC}+\log\theta_{BC}-\log\omega
_{C}\right\}  \right)  . \label{eq:general-delta}%
\end{equation}
Furthermore, observe that%
\begin{equation}
\lim_{\xi\searrow0}\Delta\left(  \rho_{ABC},\rho_{AC}+\xi I_{ABC},\rho
_{BC}+\xi I_{ABC},\rho_{C}+\xi I_{ABC}\right)  \label{eq:limit-xi-to-0-Delta}%
\end{equation}
is finite and equal to (\ref{eq:Delta-quantity})\ because the support
condition in \eqref{eq:support-condition} holds when choosing $\tau_{AC}$,
$\theta_{BC}$, and $\omega_{C}$ as the marginals of $\rho_{ABC}$ (see, e.g.,
\cite[Lemma~B.4.1]{RennerThesis}).

\begin{lemma}
\label{lem:rel-ent-CMI-helper}Let $\rho_{ABC}\in\mathcal{S}\left(
\mathcal{H}_{ABC}\right)  $, $\tau_{AC}\in\mathcal{S}\left(  \mathcal{H}%
_{AC}\right)  $, $\theta_{BC}\in\mathcal{S}\left(  \mathcal{H}_{BC}\right)  $,
and $\omega_{C}\in\mathcal{S}\left(  \mathcal{H}_{C}\right)  $ and suppose
that the support condition in \eqref{eq:support-condition}\ holds. Then%
\begin{equation}
\Delta\left(  \rho_{ABC},\tau_{AC},\theta_{BC},\omega_{C}\right)  =I\left(
A;B|C\right)  _{\rho}+D\left(  \rho_{AC}\Vert\tau_{AC}\right)  +D\left(
\rho_{BC}\Vert\theta_{BC}\right)  -D\left(  \rho_{C}\Vert\omega_{C}\right)  .
\label{eq:rel-ent-rewrite-1}%
\end{equation}

\end{lemma}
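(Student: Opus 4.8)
The plan is a direct computation, expanding both sides of \eqref{eq:rel-ent-rewrite-1} into traces of logarithms against the marginals of $\rho_{ABC}$ and matching terms. First I would write out the definition of $\Delta$ in \eqref{eq:Delta-quantity} as a sum of four traces, $\text{Tr}\{\rho_{ABC}\log\rho_{ABC}\} - \text{Tr}\{\rho_{ABC}\log\tau_{AC}\} - \text{Tr}\{\rho_{ABC}\log\theta_{BC}\} + \text{Tr}\{\rho_{ABC}\log\omega_{C}\}$. The key structural observation is that $\tau_{AC}$, $\theta_{BC}$, and $\omega_{C}$ act nontrivially only on their named subsystems (with the implicit identities stipulated in Section~\ref{sec:notation}), so each cross term collapses onto a marginal of $\rho_{ABC}$: for instance $\text{Tr}\{\rho_{ABC}(\log\tau_{AC}\otimes I_{B})\} = \text{Tr}\{\rho_{AC}\log\tau_{AC}\}$ by tracing out $B$, and analogously $\text{Tr}\{\rho_{ABC}\log\theta_{BC}\} = \text{Tr}\{\rho_{BC}\log\theta_{BC}\}$ and $\text{Tr}\{\rho_{ABC}\log\omega_{C}\} = \text{Tr}\{\rho_{C}\log\omega_{C}\}$. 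This rewrites $\Delta$ entirely in terms of the marginals.

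Next I would expand the right-hand side of \eqref{eq:rel-ent-rewrite-1}. I would substitute the von Neumann entropy expression \eqref{eq:CMI-definition} for $I(A;B|C)_{\rho}$, written as $\text{Tr}\{\rho_{ABC}\log\rho_{ABC}\} - \text{Tr}\{\rho_{AC}\log\rho_{AC}\} - \text{Tr}\{\rho_{BC}\log\rho_{BC}\} + \text{Tr}\{\rho_{C}\log\rho_{C}\}$, together with the definitions of the three relative entropies (using $\text{Tr}\{\rho\}=1$ for states, so that e.g. $D(\rho_{AC}\Vert\tau_{AC}) = \text{Tr}\{\rho_{AC}\log\rho_{AC}\} - \text{Tr}\{\rho_{AC}\log\tau_{AC}\}$). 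The three ``self-entropy'' terms $\text{Tr}\{\rho_{AC}\log\rho_{AC}\}$, $\text{Tr}\{\rho_{BC}\log\rho_{BC}\}$, and $\text{Tr}\{\rho_{C}\log\rho_{C}\}$ then cancel pairwise between the conditional-mutual-information term and the relative-entropy terms, leaving precisely the four-term marginal expression obtained for $\Delta$ in the previous step. Matching the two sides establishes the identity.

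The only point requiring care---rather than any genuine obstacle---is well-definedness. The support condition \eqref{eq:support-condition} guarantees, as noted after \eqref{eq:general-delta} and \eqref{eq:limit-xi-to-0-Delta}, that every logarithm appearing against $\rho_{ABC}$ and each of the four relative entropies is finite, so that all the traces are well-defined and the term-by-term rearrangement is legitimate. Given this, the argument is purely mechanical bookkeeping of four traces on each side, and no inequality, convexity, or limiting input is needed.
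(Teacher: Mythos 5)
Your proof is correct and takes essentially the same route as the paper: the paper's one-line argument of adding and subtracting $\operatorname{Tr}\{\rho_{ABC}\log\rho_{AC}\}$, $\operatorname{Tr}\{\rho_{ABC}\log\rho_{BC}\}$, and $\operatorname{Tr}\{\rho_{ABC}\log\rho_{C}\}$ and then applying the definitions is exactly your term-by-term expansion and cancellation, just packaged differently. Your explicit reduction of cross terms onto marginals (e.g.\ $\operatorname{Tr}\{\rho_{ABC}(\log\tau_{AC}\otimes I_{B})\}=\operatorname{Tr}\{\rho_{AC}\log\tau_{AC}\}$) and the finiteness remark via the support condition are the same ingredients the paper uses implicitly.
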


\begin{proof}
This follows simply by adding to and subtracting from $\Delta\left(
\rho_{ABC},\tau_{AC},\theta_{BC},\omega_{C}\right)  $ each of Tr$\left\{
\rho_{ABC}\log\rho_{AC}\right\}  $, Tr$\left\{  \rho_{ABC}\log\rho
_{BC}\right\}  $, and Tr$\left\{  \rho_{ABC}\log\rho_{C}\right\}  $. We then
apply the definitions of $I\left(  A;B|C\right)  _{\rho}$, $D\left(  \rho
_{AC}\Vert\tau_{AC}\right)  $, $D\left(  \rho_{BC}\Vert\theta_{BC}\right)  $,
and $D\left(  \rho_{C}\Vert\omega_{C}\right)  $.
\end{proof}

For the mutual information, there are four seemingly different ways of writing
it as a relative entropy \cite{CBR14}. However, for the conditional mutual
information, there are many ways of doing so, as summarized in the following
proposition. The significance of Proposition~\ref{thm:CMI-formulations} is
that it paves the way for designing many different R\'{e}nyi generalizations
of the conditional mutual information.

\begin{proposition}
\label{thm:CMI-formulations}Let $\rho_{ABC}\in\mathcal{S}\left(
\mathcal{H}_{ABC}\right)  $. Then%
\begin{align}
I\left(  A;B|C\right)  _{\rho}  &  =\Delta\left(  \rho_{ABC},\rho_{AC}%
,\rho_{BC},\rho_{C}\right)  =\inf_{\tau_{AC}}\Delta\left(  \rho_{ABC}%
,\tau_{AC},\rho_{BC},\rho_{C}\right) \\
&  =\inf_{\theta_{BC}}\Delta\left(  \rho_{ABC},\rho_{AC},\theta_{BC},\rho
_{C}\right)  =\sup_{\omega_{C}}\Delta\left(  \rho_{ABC},\rho_{AC},\rho
_{BC},\omega_{C}\right) \\
&  =\inf_{\tau_{AC}}\Delta\left(  \rho_{ABC},\tau_{AC},\rho_{BC},\tau
_{C}\right)  =\inf_{\tau_{AC}}\sup_{\omega_{C}}\Delta\left(  \rho_{ABC}%
,\tau_{AC},\rho_{BC},\omega_{C}\right) \\
&  =\inf_{\theta_{BC}}\Delta\left(  \rho_{ABC},\rho_{AC},\theta_{BC}%
,\theta_{C}\right)  =\inf_{\theta_{BC}}\sup_{\omega_{C}}\Delta\left(
\rho_{ABC},\rho_{AC},\theta_{BC},\omega_{C}\right) \\
&  =\inf_{\sigma_{ABC}}\Delta\left(  \rho_{ABC},\sigma_{AC},\sigma_{BC}%
,\rho_{C}\right)  =\inf_{\tau_{AC},\theta_{BC}}\Delta\left(  \rho_{ABC}%
,\tau_{AC},\theta_{BC},\rho_{C}\right) \\
&  =\inf_{\sigma_{ABC}}\Delta\left(  \rho_{ABC},\sigma_{AC},\sigma_{BC}%
,\sigma_{C}\right)  =\inf_{\tau_{AC},\theta_{BC}}\Delta\left(  \rho_{ABC}%
,\tau_{AC},\theta_{BC},\tau_{C}\right) \label{eq:CMI-rel-ent}\\
&  =\inf_{\tau_{AC},\theta_{BC}}\Delta\left(  \rho_{ABC},\tau_{AC},\theta
_{BC},\theta_{C}\right)  =\inf_{\sigma_{ABC}}\sup_{\omega_{C}}\Delta\left(
\rho_{ABC},\sigma_{AC},\sigma_{BC},\omega_{C}\right) \\
&  =\inf_{\tau_{AC},\theta_{BC}}\sup_{\omega_{C}}\Delta\left(  \rho_{ABC}%
,\tau_{AC},\theta_{BC},\omega_{C}\right)  , \label{eq:CMI-rel-ent-2}%
\end{align}
where the optimizations are over states on the indicated Hilbert spaces
obeying the support condition in \eqref{eq:support-condition} and over
$\sigma_{ABC}$ for which $\operatorname{supp}\left(  \rho_{ABC}\right)
\subseteq\operatorname{supp}\left(  \sigma_{ABC}\right)  $. The infima and
suprema can be interchanged in all of the above cases, are achieved by the
marginals of $\rho_{ABC}$, and can thus be replaced by minima and maxima.
\end{proposition}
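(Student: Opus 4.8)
The plan is to reduce every identity to Lemma~\ref{lem:rel-ent-CMI-helper}, which rewrites each instance of $\Delta$ as $I(A;B|C)_{\rho}$ plus a combination of relative entropies, and then to exploit only two elementary facts: non-negativity of the relative entropy for states (equal to zero if and only if its two arguments coincide), and monotonicity of the relative entropy under partial trace, i.e.\ $D(\rho_{AC}\Vert\sigma_{AC})\geq D(\rho_{C}\Vert\sigma_{C})$ and the analogous bound with $B$ in place of $A$. The base identity $I(A;B|C)_{\rho}=\Delta(\rho_{ABC},\rho_{AC},\rho_{BC},\rho_{C})$ is immediate from the Lemma, since each of the three relative-entropy terms vanishes when its free argument is set to the corresponding marginal of $\rho_{ABC}$.

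First I would dispose of the single-variable optimizations. By Lemma~\ref{lem:rel-ent-CMI-helper} one has $\Delta(\rho_{ABC},\tau_{AC},\rho_{BC},\rho_{C})=I(A;B|C)_{\rho}+D(\rho_{AC}\Vert\tau_{AC})$, and since the added term is non-negative and vanishes exactly at $\tau_{AC}=\rho_{AC}$, the infimum over $\tau_{AC}$ equals $I(A;B|C)_{\rho}$ and is attained; the $\theta_{BC}$ cases are identical with $B$ in place of $A$. For the supremum over $\omega_{C}$ the dependence enters as $-D(\rho_{C}\Vert\omega_{C})\leq 0$, which vanishes at $\omega_{C}=\rho_{C}$, so the supremum is again $I(A;B|C)_{\rho}$ and is attained. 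The jointly-free cases, in which $\tau_{AC}$, $\theta_{BC}$, and $\omega_{C}$ are varied independently, then follow term by term, since the relative entropies involve disjoint arguments and may be optimized separately.

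The genuinely non-trivial cases are those in which the $C$-argument is tied to a marginal of one of the other optimization variables. For $\Delta(\rho_{ABC},\tau_{AC},\rho_{BC},\tau_{C})$ the Lemma gives $I(A;B|C)_{\rho}+\left[D(\rho_{AC}\Vert\tau_{AC})-D(\rho_{C}\Vert\tau_{C})\right]$, and here I would invoke monotonicity under the partial trace $\operatorname{Tr}_{A}$ to conclude that the bracketed difference is non-negative, so the infimum equals $I(A;B|C)_{\rho}$, attained at $\tau_{AC}=\rho_{AC}$; using $\operatorname{Tr}_{B}$ instead handles the $\theta_{BC},\theta_{C}$ variant, and adjoining an independent non-negative term $D(\rho_{BC}\Vert\theta_{BC})$ (or $D(\rho_{AC}\Vert\tau_{AC})$) covers the two-variable versions. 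The optimizations over a full $\sigma_{ABC}$ are handled by noting that $\Delta$ depends on $\sigma_{ABC}$ only through its marginals; the most delicate instance is $\Delta(\rho_{ABC},\sigma_{AC},\sigma_{BC},\sigma_{C})=I(A;B|C)_{\rho}+D(\rho_{AC}\Vert\sigma_{AC})+D(\rho_{BC}\Vert\sigma_{BC})-D(\rho_{C}\Vert\sigma_{C})$, where applying monotonicity to just one of the first two terms cancels the negative term and leaves a non-negative remainder, so the infimum is $I(A;B|C)_{\rho}$, attained at $\sigma_{ABC}=\rho_{ABC}$.

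Finally, for the min--max formulas I would observe that the infimum variables and the supremum variable $\omega_{C}$ occur in separate relative-entropy terms with no shared arguments, so the objective splits as a function of the infimum variables plus the single term $-D(\rho_{C}\Vert\omega_{C})$; the two optimizations are therefore independent, each attaining its optimum $0$ at the corresponding marginal, and the order of $\inf$ and $\sup$ is immaterial. Since in every case the optimum is attained at the marginals of $\rho_{ABC}$, all infima and suprema are in fact minima and maxima and may be freely interchanged, as claimed. The main obstacle throughout is the correct bookkeeping of signs together with the repeated use of monotonicity under partial trace to neutralize the single negative ($\omega_{C}$) contribution whenever it is forced to be a marginal of another optimization variable.
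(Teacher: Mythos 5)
Your proposal is correct and follows essentially the same route as the paper's proof: decompose each $\Delta$ via Lemma~\ref{lem:rel-ent-CMI-helper} and then use non-negativity of the relative entropy together with its monotonicity under partial trace to neutralize the $-D\left(\rho_{C}\Vert\cdot\right)$ term whenever the $C$-argument is tied to another optimization variable. The paper proves only two representative cases (the full min--max formula and the $\sigma_{ABC}$ case) and notes the rest are similar; your write-up simply carries out those "similar ideas" for all the remaining variants, correctly.
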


\begin{proof}
We only prove two of these relations, noting that the rest follow from similar
ideas. We first prove (\ref{eq:CMI-rel-ent-2}). Invoking
Lemma~\ref{lem:rel-ent-CMI-helper}, we have that%
\begin{multline}
\inf_{\tau_{AC},\theta_{BC}}\sup_{\omega_{C}}\Delta\left(  \rho_{ABC}%
,\tau_{AC},\theta_{BC},\omega_{C}\right)  =I\left(  A;B|C\right)  _{\rho}\\
+\inf_{\tau_{AC}}D\left(  \rho_{AC}\Vert\tau_{AC}\right)  +\inf_{\theta_{BC}%
}D\left(  \rho_{BC}\Vert\theta_{BC}\right)  -\inf_{\omega_{C}}D\left(
\rho_{C}\Vert\omega_{C}\right)  .
\end{multline}
Invoking the fact that the relative entropy is minimized and equal to zero
when its first argument is equal to its second, we see that the right hand
side is equal to $I\left(  A;B|C\right)  _{\rho}$.

We now prove the first equality in (\ref{eq:CMI-rel-ent}). Let $\sigma_{ABC}$
denote some tripartite state for which $\operatorname{supp}\left(  \rho
_{ABC}\right)  \subseteq\operatorname{supp}\left(  \sigma_{ABC}\right)  $. By
Lemma~\ref{lem:rel-ent-CMI-helper}, we have that%
\begin{equation}
\Delta\left(  \rho_{ABC},\sigma_{AC},\sigma_{BC},\sigma_{C}\right)  =I\left(
A;B|C\right)  _{\rho}+D\left(  \rho_{AC}\Vert\sigma_{AC}\right)  +D\left(
\rho_{BC}\Vert\sigma_{BC}\right)  -D\left(  \rho_{C}\Vert\sigma_{C}\right)  .
\end{equation}
But it is known that the relative entropy is monotone under a partial trace,
so that
\begin{equation}
D\left(  \rho_{AC}\Vert\sigma_{AC}\right)  \geq D\left(  \rho_{C}\Vert
\sigma_{C}\right)  .
\end{equation}
Thus, we have that%
\begin{equation}
D\left(  \rho_{AC}\Vert\sigma_{AC}\right)  +D\left(  \rho_{BC}\Vert\sigma
_{BC}\right)  -D\left(  \rho_{C}\Vert\sigma_{C}\right)  \geq0.
\end{equation}
This implies that%
\begin{equation}
\inf_{\sigma_{ABC}}\Delta\left(  \rho_{ABC},\sigma_{AC},\sigma_{BC},\sigma
_{C}\right)  =I\left(  A;B|C\right)  _{\rho}+\inf_{\sigma_{ABC}}\left[
D\left(  \rho_{AC}\Vert\sigma_{AC}\right)  +D\left(  \rho_{BC}\Vert\sigma
_{BC}\right)  -D\left(  \rho_{C}\Vert\sigma_{C}\right)  \right]  .
\end{equation}
The three rightmost terms are non-negative (as shown above), so that we can
minimize them (to their absolute minimum of zero) by picking a state
$\sigma_{ABC}$\ such that%
\begin{equation}
\sigma_{AC}=\rho_{AC},\ \ \ \ \log\sigma_{BC}-\log\sigma_{C}=\log\rho
_{BC}-\log\rho_{C},
\end{equation}
or by symmetry, one such that%
\begin{equation}
\sigma_{BC}=\rho_{BC},\ \ \ \ \log\sigma_{AC}-\log\sigma_{C}=\log\rho
_{AC}-\log\rho_{C}.
\end{equation}
One clear choice satisfying this is $\sigma_{ABC}=\rho_{ABC}$, but there could
be others.
\end{proof}

\begin{remark}
A priori, we require infima and suprema in the above proposition because the
sets over which the optimizations occur are not compact. More explicitly,
suppose that $\rho_{ABC}=\omega_{AB}\otimes\theta_{C}$ for $\omega_{AB}%
\in\mathcal{S}\left(  \mathcal{H}_{AB}\right)  $ and $\theta_{C}\in
\mathcal{S}\left(  \mathcal{H}_{C}\right)  $. Then the sequence of states%
\begin{equation}
\omega_{AB}\left(  n\right)  \equiv\frac{1}{n}\frac{\omega_{AB}^{0}%
}{\operatorname{Tr}\left\{  \omega_{AB}^{0}\right\}  }+\left(  1-\frac{1}%
{n}\right)  \frac{I_{AB}-\omega_{AB}^{0}}{\operatorname{Tr}\left\{
I_{AB}-\omega_{AB}^{0}\right\}  },
\end{equation}
is such that $\operatorname{supp}\left(  \rho_{ABC}\right)  \subseteq
\operatorname{supp}\left(  \omega_{AB}\left(  n\right)  \right)  $ for all
$n\geq1$, but $\operatorname{supp}\left(  \rho_{ABC}\right)  \not \subseteq
\operatorname{supp}\left(  \omega_{AB}\left(  \infty\right)  \right)  $.
\end{remark}

\begin{corollary}
\label{cor:pinsker-exp-log}Let $\rho_{ABC}\in\mathcal{S}\left(  \mathcal{H}%
_{ABC}\right)  $. Then there is a Pinsker-like lower bound on the conditional
mutual information$~I\left(  A;B|C\right)  _{\rho}$:%
\begin{equation}
I\left(  A;B|C\right)  _{\rho}\geq\tfrac{1}{4}\left\Vert \rho_{ABC}%
-\exp\left\{  \log\rho_{AC}+\log\rho_{BC}-\log\rho_{C}\right\}  \right\Vert
_{1}^{2}.
\end{equation}

\end{corollary}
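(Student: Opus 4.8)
The plan is to derive this Pinsker-type bound by combining the relative-entropy representation of $I(A;B|C)_\rho$ established in Proposition~\ref{thm:CMI-formulations} with the standard quantum Pinsker inequality for the Umegaki relative entropy. The crucial observation is that one of the formulations in Proposition~\ref{thm:CMI-formulations}, namely $I(A;B|C)_\rho = \Delta(\rho_{ABC},\rho_{AC},\rho_{BC},\rho_C)$, combined with \eqref{eq:general-delta}, tells us that
\begin{equation}
I(A;B|C)_\rho = D\left(\rho_{ABC}\,\middle\Vert\,\exp\left\{\log\rho_{AC}+\log\rho_{BC}-\log\rho_C\right\}\right).
\end{equation}

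\textbf{First step.} I would verify that the operator appearing as the second argument of the relative entropy is admissible for applying Pinsker's inequality. Note that $\sigma \equiv \exp\{\log\rho_{AC}+\log\rho_{BC}-\log\rho_C\}$ is a positive semidefinite operator, and since $D(\rho_{ABC}\Vert\sigma)=I(A;B|C)_\rho$ is finite (which is guaranteed by the support condition discussed after \eqref{eq:limit-xi-to-0-Delta}), we have $\operatorname{supp}(\rho_{ABC})\subseteq\operatorname{supp}(\sigma)$. The one point requiring a moment's care is that $\sigma$ need not have unit trace; it is not a priori a density operator. This is where I expect the main (though modest) obstacle to lie.

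\textbf{Second step.} The cleanest route is to invoke the version of Pinsker's inequality that holds for the relative entropy $D(P\Vert Q)$ between a density operator $P$ and a general positive semidefinite $Q$ with $\operatorname{Tr}\{P\}\geq\operatorname{Tr}\{Q\}$, which yields $D(P\Vert Q)\geq\tfrac{1}{4}\Vert P-Q\Vert_1^2$. Here $P=\rho_{ABC}$ has unit trace, so it suffices to check that $\operatorname{Tr}\{\sigma\}\leq 1$. This follows from the non-negativity half of the argument: since $D(\rho_{ABC}\Vert\sigma)=I(A;B|C)_\rho\geq 0$ and the relative entropy $D(P\Vert Q)$ for a density operator $P$ is bounded below by the trace difference (by Klein's inequality, as recalled after \eqref{eq:limiting-quantity}, $D(P\Vert Q)\geq\operatorname{Tr}\{P\}-\operatorname{Tr}\{Q\}$ up to normalization), one can conclude $\operatorname{Tr}\{\sigma\}\leq 1$; alternatively, one observes directly that $\operatorname{Tr}\{\sigma\}\leq 1$ is a known consequence of strong subadditivity in this exponentiated form. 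With $\operatorname{Tr}\{\sigma\}\leq 1 = \operatorname{Tr}\{\rho_{ABC}\}$, the hypothesis of the generalized Pinsker inequality is met.

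\textbf{Assembling the bound.} Substituting $P=\rho_{ABC}$ and $Q=\sigma=\exp\{\log\rho_{AC}+\log\rho_{BC}-\log\rho_C\}$ into $D(P\Vert Q)\geq\tfrac14\Vert P-Q\Vert_1^2$ and using the relative-entropy representation of $I(A;B|C)_\rho$ from the first step immediately gives
\begin{equation}
I(A;B|C)_\rho \geq \tfrac14\left\Vert\rho_{ABC}-\exp\left\{\log\rho_{AC}+\log\rho_{BC}-\log\rho_C\right\}\right\Vert_1^2,
\end{equation}
which is the claimed inequality. The only genuinely substantive ingredient beyond Proposition~\ref{thm:CMI-formulations} is the generalized Pinsker inequality for a density operator against a subnormalized positive operator, together with the verification that $\operatorname{Tr}\{\sigma\}\leq 1$; everything else is a direct substitution.
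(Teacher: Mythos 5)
Your overall route---writing $I(A;B|C)_{\rho}=D(\rho_{ABC}\Vert\sigma)$ with $\sigma\equiv\exp\left\{\log\rho_{AC}+\log\rho_{BC}-\log\rho_{C}\right\}$ and then applying a Pinsker-type inequality valid for a subnormalized second argument---is genuinely different from the paper's proof, which instead bounds $D\geq D_{1/2}$ by monotonicity in the R\'{e}nyi parameter, rewrites $D_{1/2}$ as $-2\log\operatorname{Tr}\{\sqrt{\rho_{ABC}}\sqrt{\sigma}\}$, and finishes with norm estimates following Zhang and Carlen--Lieb. However, as written your proposal has a genuine gap at exactly the point you identify as the ``main obstacle'': the claim $\operatorname{Tr}\{\sigma\}\leq1$. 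The argument via Klein's inequality is invalid. Klein gives the \emph{lower} bound $D(P\Vert Q)\geq\operatorname{Tr}\{P\}-\operatorname{Tr}\{Q\}$; combining this with $D(\rho_{ABC}\Vert\sigma)=I(A;B|C)_{\rho}\geq0$ yields no upper bound on $\operatorname{Tr}\{\sigma\}$ whatsoever, since when $\operatorname{Tr}\{\sigma\}>1$ the Klein bound is simply weaker than $D\geq0$ and no contradiction arises. The inequality $\operatorname{Tr}\{\sigma\}\leq1$ is in fact true, but it is a deep fact: it is the specialization of Lieb's triple matrix inequality that powers the Lieb--Ruskai proof of strong subadditivity, so calling it ``a known consequence of strong subadditivity'' inverts the usual logical order and in any case supplies no argument. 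If you want to deduce it from material already in the paper, you can, at least for states with full-rank marginals: by the Gibbs variational principle,
\begin{equation}
\log\operatorname{Tr}\left\{e^{H}\right\}=\sup_{\omega_{ABC}}\left[\operatorname{Tr}\left\{\omega_{ABC}H\right\}+H(ABC)_{\omega}\right],
\end{equation}
and with $H=\log\rho_{AC}+\log\rho_{BC}-\log\rho_{C}$ the bracketed quantity equals $-\Delta\left(\omega_{ABC},\rho_{AC},\rho_{BC},\rho_{C}\right)$, which is $\leq0$ for every state $\omega_{ABC}$ by Lemma~\ref{lem:rel-ent-CMI-helper} together with strong subadditivity and monotonicity of relative entropy under partial trace (the same argument used in the proof of Proposition~\ref{thm:CMI-formulations}). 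Either way, this step must be supplied; it cannot be extracted from non-negativity of the conditional mutual information alone.

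The remaining ingredient, the constant-$\tfrac{1}{4}$ Pinsker inequality for $\operatorname{Tr}\{Q\}\leq\operatorname{Tr}\{P\}=1$, you assert without proof, but it is correct and elementary once the trace bound is in hand: setting $\hat{Q}\equiv Q/\operatorname{Tr}\{Q\}$, $a\equiv\Vert P-\hat{Q}\Vert_{1}$, and $b\equiv1-\operatorname{Tr}\{Q\}$, one has $D(P\Vert Q)=D(P\Vert\hat{Q})-\log\operatorname{Tr}\{Q\}\geq\tfrac{1}{2}a^{2}+b$, while $\Vert P-Q\Vert_{1}\leq a+b$, and $\tfrac{1}{2}a^{2}+b\geq\tfrac{1}{4}(a+b)^{2}$ holds for all $a\in[0,2]$, $b\in[0,1]$. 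With those two repairs your proof goes through and is arguably more self-contained than the paper's citation-based chain. It is worth noting for comparison that the paper's own intermediate steps (bounding the overlap $\operatorname{Tr}\{\sqrt{\rho_{ABC}}\sqrt{\sigma}\}$ by the $2$-norm expression and the final Powers--St{\o}rmer-type estimate) also quietly rely on $\operatorname{Tr}\{\sigma\}\leq1$, so this trace inequality is the crux of the corollary under either approach.
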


\begin{proof}
The corollary results from the following chain of inequalities:%
\begin{align}
&  \!\!\!\!\!\!D\left(  \rho_{ABC}\Vert\exp\left\{  \log\rho_{AC}+\log
\rho_{BC}-\log\rho_{C}\right\}  \right) \nonumber\\
&  \geq D_{1/2}\left(  \rho_{ABC}\Vert\exp\left\{  \log\rho_{AC}+\log\rho
_{BC}-\log\rho_{C}\right\}  \right) \\
&  =-2\log\text{Tr}\left\{  \sqrt{\rho_{ABC}}\sqrt{\exp\left\{  \log\rho
_{AC}+\log\rho_{BC}-\log\rho_{C}\right\}  }\right\} \\
&  \geq-2\log\left(  1-\tfrac{1}{2}\left\Vert \sqrt{\rho_{ABC}}-\sqrt
{\exp\left\{  \log\rho_{AC}+\log\rho_{BC}-\log\rho_{C}\right\}  }\right\Vert
_{2}^{2}\right) \\
&  \geq\left\Vert \sqrt{\rho_{ABC}}-\sqrt{\exp\left\{  \log\rho_{AC}+\log
\rho_{BC}-\log\rho_{C}\right\}  }\right\Vert _{2}^{2}\\
&  \geq\tfrac{1}{4}\left\Vert \rho_{ABC}-\exp\left\{  \log\rho_{AC}+\log
\rho_{BC}-\log\rho_{C}\right\}  \right\Vert _{1}^{2}.
\end{align}
The first step follows from monotonicity of the R\'{e}nyi relative entropy
with respect to the R\'{e}nyi parameter (see
(\ref{eq:monotonicity-in-alpha-intro})). The rest are from a line of reasoning
similar to that in the proofs of \cite[Theorem~2.1 and Corollary~2.2]{Z14},
which in turn follows from some of the development in \cite{CL14}.
\end{proof}

\subsection{Monotonicity of the conditional quantum mutual information under
local quantum operations}

In this section, we show that the $\Delta$ quantity in
(\ref{eq:Delta-quantity}) obeys monotonicity under tensor-product quantum
operations acting on the systems $A$ and $B$, thus establishing it as a
fundamental information measure upon which the conditional mutual information
is based. Later we also establish a R\'enyi generalization of this quantity,
which is the core quantity underlying our various R\'enyi generalizations 
of conditional mutual information.

\begin{lemma}
\label{thm:vn-rel-ent-local-mono}Let $\rho_{ABC}\in\mathcal{S}\left(
\mathcal{H}_{ABC}\right)  $, $\tau_{AC}\in\mathcal{S}\left(  \mathcal{H}%
_{AC}\right)  $, $\theta_{BC}\in\mathcal{S}\left(  \mathcal{H}_{BC}\right)  $,
and $\omega_{C}\in\mathcal{S}\left(  \mathcal{H}_{C}\right)  $ and suppose
that the support condition in \eqref{eq:support-condition}\ holds. Let
$\mathcal{N}_{A\rightarrow A^{\prime}}$ and $\mathcal{M}_{B\rightarrow
B^{\prime}}$ be CPTP\ maps acting on the systems $A$ and $B$, respectively.
Then the following monotonicity inequality holds%
\begin{equation}
\Delta\left(  \rho_{ABC},\tau_{AC},\theta_{BC},\omega_{C}\right)  \geq
\Delta\left(  \left(  \mathcal{N}_{A\rightarrow A^{\prime}}\otimes
\mathcal{M}_{B\rightarrow B^{\prime}}\right)  \left(  \rho_{ABC}\right)
,\mathcal{N}_{A\rightarrow A^{\prime}}\left(  \tau_{AC}\right)  ,\mathcal{M}%
_{B\rightarrow B^{\prime}}\left(  \theta_{BC}\right)  ,\omega_{C}\right)  .
\end{equation}

\end{lemma}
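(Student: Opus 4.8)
The plan is to reduce the claim to monotonicity facts that are already on the table, by first rewriting both sides of the asserted inequality via Lemma~\ref{lem:rel-ent-CMI-helper}. Writing $\rho'_{A'B'C}\equiv(\mathcal{N}_{A\rightarrow A'}\otimes\mathcal{M}_{B\rightarrow B'})(\rho_{ABC})$, $\tau'_{A'C}\equiv\mathcal{N}_{A\rightarrow A'}(\tau_{AC})$, and $\theta'_{B'C}\equiv\mathcal{M}_{B\rightarrow B'}(\theta_{BC})$, an application of that lemma to each side turns the goal into the statement that
\[
I(A;B|C)_\rho + D(\rho_{AC}\Vert\tau_{AC}) + D(\rho_{BC}\Vert\theta_{BC}) - D(\rho_C\Vert\omega_C)
\]
dominates the same expression with every object replaced by its primed counterpart. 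I would then prove this term by term.

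First I would dispatch the marginal bookkeeping. Since $\mathcal{N}$ and $\mathcal{M}$ are trace preserving and act only on $A$ and $B$, we have $\rho'_C=\rho_C$, so the two $-D(\cdot\Vert\omega_C)$ contributions are literally equal and cancel in the difference; likewise $\rho'_{A'C}=\mathcal{N}_{A\rightarrow A'}(\rho_{AC})$ and $\rho'_{B'C}=\mathcal{M}_{B\rightarrow B'}(\rho_{BC})$. The heart of the argument is then three monotonicity inputs. For the $AC$ term, the channel $\mathcal{N}_{A\rightarrow A'}$ (extended by the identity on $C$) sends $(\rho_{AC},\tau_{AC})$ to $(\rho'_{A'C},\tau'_{A'C})$, so data processing for the relative entropy gives $D(\rho_{AC}\Vert\tau_{AC})\geq D(\rho'_{A'C}\Vert\tau'_{A'C})$; the $BC$ term is identical with $\mathcal{M}_{B\rightarrow B'}$ in place of $\mathcal{N}_{A\rightarrow A'}$. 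For the conditional-mutual-information term I would invoke the monotonicity of $I(A;B|C)$ under local operations on $A$ and $B$ recorded in the Introduction, giving $I(A;B|C)_\rho\geq I(A';B'|C)_{\rho'}$. Adding the three inequalities (positive signs on the first three terms, cancellation of the fourth) yields exactly the desired $\Delta(\rho_{ABC},\tau_{AC},\theta_{BC},\omega_C)\geq\Delta(\rho'_{A'B'C},\tau'_{A'C},\theta'_{B'C},\omega_C)$.

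The one place I would be careful is the support bookkeeping: Lemma~\ref{lem:rel-ent-CMI-helper} requires the support condition \eqref{eq:support-condition} to hold for the primed operators too. Because completely positive maps preserve support inclusions, the needed inclusions on the output follow from those assumed on the input; but to sidestep boundary issues I would run the whole computation with the $+\xi I$ regularization used before the lemma and take $\xi\searrow 0$ as in \eqref{eq:limit-xi-Delta}.

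I would flag one conceptual caveat about this route, which is also where I expect the real difficulty to sit if one wants more. The decomposition above leans on the \emph{already-known} monotonicity of $I(A;B|C)$; a genuinely self-contained proof would instead apply data processing for the relative entropy directly to the representation $\Delta=D(\rho_{ABC}\Vert\sigma)$ with $\sigma\equiv\exp\{\log\tau_{AC}+\log\theta_{BC}-\log\omega_C\}$ under the product channel $\mathcal{N}\otimes\mathcal{M}$. The obstacle there is that the second argument is not covariant: $(\mathcal{N}\otimes\mathcal{M})(\sigma)\neq\sigma'$, where $\sigma'\equiv\exp\{\log\tau'_{A'C}+\log\theta'_{B'C}-\log\omega_C\}$, so after the single application of monotonicity one is left needing the operator-level estimate $\text{Tr}\{\rho'\log\sigma'\}\geq\text{Tr}\{\rho'\log((\mathcal{N}\otimes\mathcal{M})(\sigma))\}$, which is precisely the nontrivial content to be controlled.
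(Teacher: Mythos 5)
Your proof is correct, but it takes a genuinely different route from the paper's. The paper never decomposes $\Delta$ into $I(A;B|C)_{\rho}+D(\rho_{AC}\Vert\tau_{AC})+D(\rho_{BC}\Vert\theta_{BC})-D(\rho_{C}\Vert\omega_{C})$; instead it groups the terms as
\begin{equation*}
\Delta\left(  \rho_{ABC},\tau_{AC},\theta_{BC},\omega_{C}\right)  =D\left(
\rho_{ABC}\Vert\tau_{AC}\otimes I_{B}\right)  -\operatorname{Tr}\left\{  \rho_{BC}
\log\theta_{BC}\right\}  +\operatorname{Tr}\left\{  \rho_{C}\log\omega_{C}\right\},
\end{equation*}
observes that a channel acting only on $A$ leaves the two linear trace terms invariant (the $BC$ and $C$ marginals are untouched), and applies data processing once, $D(\rho_{ABC}\Vert\tau_{AC}\otimes I_{B})\geq D(\mathcal{N}(\rho_{ABC})\Vert\mathcal{N}(\tau_{AC})\otimes I_{B})$; a symmetric step handles $\mathcal{M}$ on $B$, and composing the two one-channel inequalities gives the lemma. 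This grouping is exactly how the paper evades the non-covariance obstruction you correctly diagnose in your final paragraph: rather than trying to push the channel through $\exp\{\log\tau_{AC}+\log\theta_{BC}-\log\omega_{C}\}$, it isolates the one operator the channel acts on inside the relative entropy and treats the rest additively. The important caveat with your route is that it invokes the monotonicity $I(A;B|C)_{\rho}\geq I(A';B'|C)_{\rho'}$ as an input, while the paper's stated purpose for this lemma is precisely to \emph{rederive} that monotonicity: the theorem immediately following is proved as a corollary of the lemma together with $I(A;B|C)_{\rho}=\Delta(\rho_{ABC},\rho_{AC},\rho_{BC},\rho_{C})$. So within the paper's architecture your argument would make that corollary circular, and it also requires the full strength of strong subadditivity through \cite{CW04}, whereas the paper's proof is self-contained modulo data processing for the relative entropy. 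As a standalone proof your argument is valid (CMI monotonicity is independently established in the literature), your marginal bookkeeping ($\rho'_{C}=\rho_{C}$, $\rho'_{A'C}=\mathcal{N}(\rho_{AC})$, $\rho'_{B'C}=\mathcal{M}(\rho_{BC})$) is right, and your support-preservation remark matches the paper's own appeal to the fact that CP maps preserve support inclusions; the $+\xi I$ regularization you propose is a harmless extra precaution but is not needed once the primed support conditions are in hand.
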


\begin{proof}
We first prove the inequality%
\begin{equation}
\Delta\left(  \rho_{ABC},\tau_{AC},\theta_{BC},\omega_{C}\right)  \geq
\Delta\left(  \mathcal{N}_{A\rightarrow A^{\prime}}\left(  \rho_{ABC}\right)
,\mathcal{N}_{A\rightarrow A^{\prime}}\left(  \tau_{AC}\right)  ,\theta
_{BC},\omega_{C}\right)  . \label{eq:vn-monotonicity-A}%
\end{equation}
To prove this, we simply expand out the terms:%
\begin{equation}
\Delta\left(  \rho_{ABC},\tau_{AC},\theta_{BC},\omega_{C}\right)  =D\left(
\rho_{ABC}\Vert\tau_{AC}\otimes I_{B}\right)  -\text{Tr}\left\{  \rho_{BC}%
\log\theta_{BC}\right\}  +\text{Tr}\left\{  \rho_{C}\log\omega_{C}\right\}  .
\end{equation}
Noting that $\operatorname{supp}\left(  \mathcal{N}_{A\rightarrow A^{\prime}%
}\left(  \rho_{ABC}\right)  \right)  \subseteq\operatorname{supp}\left(
\mathcal{N}_{A\rightarrow A^{\prime}}\left(  \tau_{AC}\right)  \right)  $ if
$\operatorname{supp}\left(  \rho_{ABC}\right)  \subseteq\operatorname{supp}%
\left(  \tau_{AC}\right)  $ (see, e.g., \cite[Lemma~B.4.2]{RennerThesis}), we
similarly have that%
\begin{multline}
\Delta\left(  \mathcal{N}_{A\rightarrow A^{\prime}}\left(  \rho_{ABC}\right)
,\mathcal{N}_{A\rightarrow A^{\prime}}\left(  \tau_{AC}\right)  ,\theta
_{BC},\omega_{C}\right)  =D\left(  \mathcal{N}_{A\rightarrow A^{\prime}%
}\left(  \rho_{ABC}\right)  \Vert\mathcal{N}_{A\rightarrow A^{\prime}}\left(
\tau_{AC}\right)  \otimes I_{B}\right) \\
-\text{Tr}\left\{  \rho_{BC}\log\theta_{BC}\right\}  +\text{Tr}\left\{
\rho_{C}\log\omega_{C}\right\}  .
\end{multline}
Then the inequality in (\ref{eq:vn-monotonicity-A}) follows from the ordinary
monotonicity of relative entropy:%
\begin{equation}
D\left(  \rho_{ABC}\Vert\tau_{AC}\otimes I_{B}\right)  \geq D\left(
\mathcal{N}_{A\rightarrow A^{\prime}}\left(  \rho_{ABC}\right)  \Vert
\mathcal{N}_{A\rightarrow A^{\prime}}\left(  \tau_{AC}\right)  \otimes
I_{B}\right)  .
\end{equation}
An essentially identical approach gives us the following inequality:%
\begin{equation}
\Delta\left(  \rho_{ABC},\tau_{AC},\theta_{BC},\omega_{C}\right)  \geq
\Delta\left(  \mathcal{M}_{B\rightarrow B^{\prime}}\left(  \rho_{ABC}\right)
,\tau_{AC},\mathcal{M}_{B\rightarrow B^{\prime}}\left(  \theta_{BC}\right)
,\omega_{C}\right)  .
\end{equation}
Combining this one with (\ref{eq:vn-monotonicity-A}) gives us the inequality
in the statement of the lemma.
\end{proof}

One of the crucial properties of the conditional quantum mutual information is
that it is monotone under CPTP\ maps acting on the systems $A$ and $B$,
respectively. That is,%
\begin{equation}
I\left(  A;B|C\right)  _{\rho}\geq I(A^{\prime};B^{\prime}|C)_{\xi},
\end{equation}
where $\xi_{A^{\prime}B^{\prime}C}\equiv\left(  \mathcal{N}_{A\rightarrow
A^{\prime}}\otimes\mathcal{M}_{B\rightarrow B^{\prime}}\right)  \left(
\rho_{ABC}\right)  $. From the statement of
Lemma~\ref{thm:vn-rel-ent-local-mono}, we can conclude with a conceptually
different proof (other than directly making use of strong subadditivity as
done in \cite[Proposition~3]{CW04}) that the conditional mutual information is
monotone under tensor-product maps acting on systems $A$ and $B$. The following theorem
is a straightforward consequence of Lemma~\ref{thm:vn-rel-ent-local-mono} and the fact that
$I\left(  A;B|C\right)  _{\rho}    =\Delta\left(  \rho_{ABC},\rho_{AC}
,\rho_{BC},\rho_{C}\right)$.

\begin{theorem}
[{\cite[Proposition~3]{CW04}}]Let $\rho_{ABC}\in\mathcal{S}\left(
\mathcal{H}_{ABC}\right)  $, $\mathcal{N}_{A\rightarrow A^{\prime}}$ and
$\mathcal{M}_{B\rightarrow B^{\prime}}$ be CPTP\ maps acting on the systems
$A$ and $B$, respectively, and $\xi_{A^{\prime}B^{\prime}C}\equiv\left(
\mathcal{N}_{A\rightarrow A^{\prime}}\otimes\mathcal{M}_{B\rightarrow
B^{\prime}}\right)  \left(  \rho_{ABC}\right)  $. Then the following
inequality holds%
\begin{equation}
I(A;B|C)_{\rho}\geq I(A^{\prime};B^{\prime}|C)_{\xi}.
\end{equation}

\end{theorem}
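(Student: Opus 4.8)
The plan is to reduce the theorem directly to the two facts the excerpt hands us: the representation $I(A;B|C)_\rho = \Delta(\rho_{ABC},\rho_{AC},\rho_{BC},\rho_C)$ from Proposition~\ref{thm:CMI-formulations}, and the monotonicity of the $\Delta$ quantity under local maps from Lemma~\ref{thm:vn-rel-ent-local-mono}. The key observation is that when we specialize the four arguments of $\Delta$ to be the marginals of $\rho_{ABC}$, the outputs prescribed by Lemma~\ref{thm:vn-rel-ent-local-mono} are exactly the marginals of the transformed state $\xi_{A'B'C}$, so the right-hand side of the monotonicity inequality becomes a $\Delta$ expression that again equals a conditional mutual information.

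First I would start from the identity
\begin{equation}
I(A;B|C)_\rho = \Delta(\rho_{ABC},\rho_{AC},\rho_{BC},\rho_C),
\end{equation}
which holds by Proposition~\ref{thm:CMI-formulations} and automatically satisfies the support condition in \eqref{eq:support-condition}, since the marginals of $\rho_{ABC}$ always verify it. Next I would apply Lemma~\ref{thm:vn-rel-ent-local-mono} with the choices $\tau_{AC}=\rho_{AC}$, $\theta_{BC}=\rho_{BC}$, and $\omega_C=\rho_C$, obtaining
\begin{equation}
\Delta(\rho_{ABC},\rho_{AC},\rho_{BC},\rho_C) \geq \Delta\!\left(\xi_{A'B'C},\,\mathcal{N}_{A\rightarrow A'}(\rho_{AC}),\,\mathcal{M}_{B\rightarrow B'}(\theta_{BC}),\,\rho_C\right),
\end{equation}
where $\xi_{A'B'C}=(\mathcal{N}_{A\rightarrow A'}\otimes\mathcal{M}_{B\rightarrow B'})(\rho_{ABC})$.

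The step requiring a small verification is that the three prescribed output operators are precisely the marginals of $\xi_{A'B'C}$: because $\mathcal{N}$ acts only on $A$ and $\mathcal{M}$ only on $B$, tracing out the untouched systems commutes with the maps, giving $\mathcal{N}_{A\rightarrow A'}(\rho_{AC})=\xi_{A'C}$, $\mathcal{M}_{B\rightarrow B'}(\rho_{BC})=\xi_{B'C}$, and $\rho_C=\xi_C$ (the last because neither channel touches $C$). Hence the right-hand $\Delta$ is $\Delta(\xi_{A'B'C},\xi_{A'C},\xi_{B'C},\xi_C)$, which by Proposition~\ref{thm:CMI-formulations} applied to $\xi$ equals $I(A';B'|C)_\xi$. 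Chaining the equalities and the inequality yields $I(A;B|C)_\rho \geq I(A';B'|C)_\xi$, as claimed. I do not anticipate a genuine obstacle here, since all the analytic work (the concavity/monotonicity input) is already absorbed into Lemma~\ref{thm:vn-rel-ent-local-mono}; the only thing to be careful about is confirming the commutation of the local channels with the partial traces, which is immediate from their tensor-product structure.
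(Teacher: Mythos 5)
Your proposal is correct and follows essentially the same route as the paper, which derives the theorem as a direct consequence of Lemma~\ref{thm:vn-rel-ent-local-mono} together with the identity $I(A;B|C)_{\rho}=\Delta(\rho_{ABC},\rho_{AC},\rho_{BC},\rho_{C})$ from Proposition~\ref{thm:CMI-formulations}; you have simply made explicit the (correct) verification that the channel outputs are the marginals of $\xi_{A'B'C}$. The only blemish is a typo in your displayed inequality, where $\mathcal{M}_{B\rightarrow B'}(\theta_{BC})$ should read $\mathcal{M}_{B\rightarrow B'}(\rho_{BC})$, as your surrounding text already indicates.
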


\subsection{Comparison with the minimum relative entropy to quantum Markov
states}

\label{sec:rel-ent-to-markov}In classical information theory, a tripartite
probability distribution $p_{A,B,C}\left(  a,b,c\right)  $ has conditional
mutual information $I\left(  A;B|C\right)  $\ equal to zero if and only if it
can be written as a Markov distribution $p_{C}\left(  c\right)  p_{A|C}\left(
a|c\right)  p_{B|C}\left(  b|c\right)  $. Equivalently, it is equal to zero if
and only if the distribution $p_{A,B,C}\left(  a,b,c\right)  $ is recoverable
after marginalizing over the random variable $A$, that is, if there exists a
classical channel $q\left(  a|c\right)  $ such that $p_{A,B,C}\left(
a,b,c\right)  =q\left(  a|c\right)  p_{B,C}\left(  b,c\right)  $. Furthermore,
the classical conditional mutual information of $p_{A,B,C}$ can be written as
the relative entropy distance between $p_{A,B,C}$ and the nearest Markov
distribution \cite[Section~II]{ILW08}.

The generalization of these ideas to quantum information theory is not so
straightforward, and we briefly review what is known from \cite{HJPW04} and
\cite{ILW08}. Our main aim in doing so is to set the stage for
establishing a R\'enyi generalization of conditional mutual information and
the subsequent discussion in Section~\ref{sec:small-CMI}.

An important class of quantum states are the quantum Markov states, introduced
in \cite{AF83} and studied for finite-dimensional tripartite states in
\cite{HJPW04}. Following \cite{HJPW04}, we define a state $\rho_{ABC}$ to be a
quantum Markov state if $I\left(  A;B|C\right)  _{\rho}=0$. Let $\mathcal{M}%
_{A-C-B}$ denote this class of states. The main result of \cite{HJPW04} is
that such a state has the following explicit form:%
\begin{equation}
\rho_{ABC}=\bigoplus\limits_{j}q\left(  j\right)  \sigma_{AC_{j}^{L}}%
\otimes\sigma_{C_{j}^{R}B},\label{eq:QMS-1}%
\end{equation}
for some probability distribution $q\left(  j\right)  $, density operators
$\{\sigma_{AC_{j}^{L}},\sigma_{C_{j}^{R}B}\}$, and a decomposition of the
Hilbert space for $C$ as
$
\mathcal{H}_{C}=\bigoplus\limits_{j}\mathcal{H}_{C_{j}^{L}}\otimes
\mathcal{H}_{C_{j}^{R}}
$.
We also know that a state $\rho_{ABC}$ is a quantum Markov state if any of the
following conditions hold \cite{Petz03,R02}:%
\begin{align}
\rho_{ABC} &  =\rho_{AC}^{1/2}\rho_{C}^{-1/2}\rho_{BC}\rho_{C}^{-1/2}\rho
_{AC}^{1/2},\label{eq:QMS-2}\\
\rho_{ABC} &  =\rho_{BC}^{1/2}\rho_{C}^{-1/2}\rho_{AC}\rho_{C}^{-1/2}\rho
_{BC}^{1/2},\label{eq:QMS-3}\\
\rho_{ABC} &  =\exp\left\{  \log\rho_{AC}+\log\rho_{BC}-\log\rho_{C}\right\}
.\label{eq:QMS-4}%
\end{align}
Interestingly, if $\rho_{C}$ is positive definite, then the map $\left(
\cdot\right)  \rightarrow\rho_{AC}^{1/2}\rho_{C}^{-1/2}\left(  \cdot\right)
\rho_{C}^{-1/2}\rho_{AC}^{1/2}$ is a quantum channel from system $C$ to $AC$,
as one can verify by observing that it is completely positive and trace
preserving. Otherwise, the map is trace non-increasing. These same statements
also obviously apply to the map $\left(  \cdot\right)  \rightarrow\rho
_{BC}^{1/2}\rho_{C}^{-1/2}\left(  \cdot\right)  \rho_{C}^{-1/2}\rho_{BC}%
^{1/2}$. See \cite{Jencova2006,JR10} for more conditions for a tripartite state to be a
quantum Markov state.

Let $M\left(  \rho_{ABC}\right)  $ denote the relative entropy
\textquotedblleft distance\textquotedblright\ to quantum Markov states
\cite{ILW08}:%
\begin{equation}
M\left(  \rho_{ABC}\right)  \equiv\inf_{\sigma_{ABC}\in\mathcal{M}_{A-C-B}%
}D\left(  \rho_{ABC}\Vert\sigma_{ABC}\right)  ,
\end{equation}
where $\mathcal{M}_{A-C-B}$ is the set of quantum Markov states defined above.
Clearly, it suffices to restrict the above infimum to the set of Markov states
$\sigma_{ABC}$\ for which $\operatorname{supp}\left(  \rho_{ABC}\right)
\subseteq\operatorname{supp}\left(  \sigma_{ABC}\right)  $. We can now easily
compare $I\left(  A;B|C\right)  $ with $M\left(  \rho_{ABC}\right)  $, as done in
\cite{ILW08}. First,
since every quantum Markov state satisfies the condition $\sigma_{ABC}%
=\exp\left\{  \log\sigma_{AC}+\log\sigma_{BC}-\log\sigma_{C}\right\}  $, we
see that this formula is equivalent to%
\begin{equation}
M\left(  \rho_{ABC}\right)  =\inf_{\sigma_{ABC}\in\mathcal{M}_{A-C-B}}D\left(
\rho_{ABC}\Vert\exp\left\{  \log\sigma_{AC}+\log\sigma_{BC}-\log\sigma
_{C}\right\}  \right)  ,
\end{equation}
from which we obtain the following inequality:%
\begin{align}
M\left(  \rho_{ABC}\right)   &  \geq\inf_{\omega_{ABC}}D\left(  \rho
_{ABC}\Vert\exp\left\{  \log\omega_{AC}+\log\omega_{BC}-\log\omega
_{C}\right\}  \right) \\
&  =\inf_{\omega_{ABC}}\Delta\left(  \rho_{ABC},\omega_{AC},\omega_{BC}%
,\omega_{C}\right) \\
&  =I\left(  A;B|C\right)  _{\rho},
\end{align}
where the infimum is over all states $\omega_{ABC}$ satisfying
$\operatorname{supp}\left(  \rho_{ABC}\right)  \subseteq\operatorname{supp}%
\left(  \omega_{ABC}\right)  $. The above inequality was already stated in
\cite[Theorem~4]{ILW08} (and with the simpler proof along the lines above
given by Jen\v cov\'a at the end of
\cite{ILW08}), but one of the main contributions of \cite{ILW08} was to show that there
are tripartite states $\omega_{ABC}$\ for which there is a strict inequality
$M\left(  \omega_{ABC}\right)  >I\left(  A;B|C\right)  _{\omega}$, and in fact
\cite[Section~VI]{ILW08} showed that the gap can be arbitrarily large.

Thus, from the results in \cite{ILW08}, we can already conclude that taking the R\'enyi relative entropy
distance to quantum Markov states will not lead to a useful
R\'enyi generalization of conditional
mutual information as one might hope. After the completion of the present paper,
we were informed that this matter was pursued independently in \cite{E14}.

\section{R\'{e}nyi conditional mutual information}

\label{sec:Renyi-CMI}In this section, we establish many R\'{e}nyi
generalizations of the conditional mutual information that bear some
properties similar to its properties. Furthermore, we can prove that some of
these generalizations converge to it in the limit as the R\'{e}nyi parameter
$\alpha\rightarrow1$. We are motivated to define a R\'{e}nyi conditional
mutual information by considering the generalized Lie-Trotter product formula
\cite{S85}:%
\begin{equation}
\exp\left\{  \log\tau_{AC}+\log\theta_{BC}-\log\omega_{C}\right\}
=\lim_{\alpha\rightarrow1}\left[  \tau_{AC}^{\left(  1-\alpha\right)
/2}\omega_{C}^{\left(  \alpha-1\right)  /2}\theta_{BC}^{1-\alpha}\omega
_{C}^{\left(  \alpha-1\right)  /2}\tau_{AC}^{\left(  1-\alpha\right)
/2}\right]  ^{1/\left(  1-\alpha\right)  }, \label{eq:lie-trotter}%
\end{equation}
where the equality holds when $\tau_{AC}\in\mathcal{S}\left(  \mathcal{H}%
_{AC}\right)  _{++}$, $\theta_{BC}\in\mathcal{S}\left(  \mathcal{H}%
_{BC}\right)  _{++}$, and $\omega_{C}\in\mathcal{S}\left(  \mathcal{H}%
_{C}\right)  _{++}$. By plugging the RHS\ above (before the limit is taken)
into the R\'{e}nyi relative entropy formula defined in (\ref{eq:Renyi-rel-ent}%
), we obtain the following expression:%
\begin{equation}
\frac{1}{\alpha-1}\log\text{Tr}\left\{  \rho_{ABC}^{\alpha}\tau_{AC}^{\left(
1-\alpha\right)  /2}\omega_{C}^{\left(  \alpha-1\right)  /2}\theta
_{BC}^{1-\alpha}\omega_{C}^{\left(  \alpha-1\right)  /2}\tau_{AC}^{\left(
1-\alpha\right)  /2}\right\}  . \label{eq:general-delta-renyi}%
\end{equation}
We can evaluate the above expression even in the case when $\tau_{AC}%
\in\mathcal{S}\left(  \mathcal{H}_{AC}\right)  $, $\theta_{BC}\in
\mathcal{S}\left(  \mathcal{H}_{BC}\right)  $, and $\omega_{C}\in
\mathcal{S}\left(  \mathcal{H}_{C}\right)  $ (considering instead the
generalized inverse mentioned in Section~\ref{sec:notation}). With this, we
consider the formula in (\ref{eq:general-delta-renyi}) to be a R\'{e}nyi
generalization of the formula in (\ref{eq:general-delta}).

The development above motivates some other core quantities that we consider in
this paper. Let $\rho_{ABC}\in\mathcal{S}\left(  \mathcal{H}_{ABC}\right)  $,
$\tau_{AC}\in\mathcal{S}\left(  \mathcal{H}_{AC}\right)  $, $\theta_{BC}%
\in\mathcal{S}\left(  \mathcal{H}_{BC}\right)  $, and $\omega_{C}%
\in\mathcal{S}\left(  \mathcal{H}_{C}\right)  $. We define the following
quantities for $\alpha\in\lbrack0,1)\cup\left(  1,\infty\right)  $:%
\begin{align}
Q_{\alpha}\left(  \rho_{ABC},\tau_{AC},\omega_{C},\theta_{BC}\right)   &
\equiv\text{Tr}\left\{  \rho_{ABC}^{\alpha}\tau_{AC}^{\left(  1-\alpha\right)
/2}\omega_{C}^{\left(  \alpha-1\right)  /2}\theta_{BC}^{1-\alpha}\omega
_{C}^{\left(  \alpha-1\right)  /2}\tau_{AC}^{\left(  1-\alpha\right)
/2}\right\}  ,\label{eq:Q_a_1}\\
\Delta_{\alpha}\left(  \rho_{ABC},\tau_{AC},\omega_{C},\theta_{BC}\right)   &
\equiv\frac{1}{\alpha-1}\log Q_{\alpha}\left(  \rho_{ABC},\tau_{AC},\omega
_{C},\theta_{BC}\right)  . \label{eq:Delta_a_1}%
\end{align}
We stress that the formula in (\ref{eq:Delta_a_1}) is to be interpreted in the
sense of generalized inverse, so that it is always finite if%
\begin{equation}
\rho_{ABC}\not \perp \left\vert \tau_{AC}^{\left(  1-\alpha\right)  /2}%
\omega_{C}^{\left(  \alpha-1\right)  /2}\theta_{BC}^{\left(  1-\alpha\right)
/2}\right\vert ^{2}. \label{eq:non-orthogonal}%
\end{equation}
The non-orthogonality condition in (\ref{eq:non-orthogonal})\ is satisfied,
e.g., if the support condition in (\ref{eq:support-condition}) holds, so that
(\ref{eq:non-orthogonal})\ is satisfied when $\tau_{AC}=\rho_{AC}$,
$\omega_{C}=\rho_{C}$, and $\theta_{BC}=\rho_{BC}$. It remains largely open to
determine support conditions under which%
\begin{equation}
\lim_{\xi\searrow0}\Delta_{\alpha}\left(  \rho_{ABC},\tau_{AC}+\xi
I_{ABC},\omega_{C}+\xi I_{ABC},\theta_{BC}+\xi I_{ABC}\right)
\label{eq:limit-xi-Delta-a-1}%
\end{equation}
is finite and equal to (\ref{eq:Delta_a_1}), with complications being due to
the fact that (\ref{eq:Q_a_1}) features the multiplication of several
non-commuting operators which can interact in non-trivial ways. We can also
consider five other different operator orderings for the last three arguments
of $Q_{\alpha}$, i.e.,%
\begin{align}
Q_{\alpha}\left(  \rho_{ABC},\theta_{BC},\omega_{C},\tau_{AC}\right)   &
\equiv\text{Tr}\left\{  \rho_{ABC}^{\alpha}\theta_{BC}^{\left(  1-\alpha
\right)  /2}\omega_{C}^{\left(  \alpha-1\right)  /2}\tau_{AC}^{1-\alpha}%
\omega_{C}^{\left(  \alpha-1\right)  /2}\theta_{BC}^{\left(  1-\alpha\right)
/2}\right\}  ,\label{eq:Q_a_2}\\
Q_{\alpha}\left(  \rho_{ABC},\omega_{C},\tau_{AC},\theta_{BC}\right)   &
\equiv\text{Tr}\left\{  \rho_{ABC}^{\alpha}\omega_{C}^{\left(  \alpha
-1\right)  /2}\tau_{AC}^{\left(  1-\alpha\right)  /2}\theta_{BC}^{1-\alpha
}\tau_{AC}^{\left(  1-\alpha\right)  /2}\omega_{C}^{\left(  \alpha-1\right)
/2}\right\}  ,\\
Q_{\alpha}\left(  \rho_{ABC},\omega_{C},\theta_{BC},\tau_{AC}\right)   &
\equiv\text{Tr}\left\{  \rho_{ABC}^{\alpha}\omega_{C}^{\left(  \alpha
-1\right)  /2}\theta_{BC}^{\left(  1-\alpha\right)  /2}\tau_{AC}^{1-\alpha
}\theta_{BC}^{\left(  1-\alpha\right)  /2}\omega_{C}^{\left(  \alpha-1\right)
/2}\right\}  ,\\
Q_{\alpha}\left(  \rho_{ABC},\tau_{AC},\theta_{BC},\omega_{C}\right)   &
\equiv\text{Tr}\left\{  \rho_{ABC}^{\alpha}\tau_{AC}^{\left(  1-\alpha\right)
/2}\theta_{BC}^{\left(  1-\alpha\right)  /2}\omega_{C}^{\alpha-1}\theta
_{BC}^{\left(  1-\alpha\right)  /2}\tau_{AC}^{\left(  1-\alpha\right)
/2}\right\}  ,\\
Q_{\alpha}\left(  \rho_{ABC},\theta_{BC},\tau_{AC},\omega_{C}\right)   &
\equiv\text{Tr}\left\{  \rho_{ABC}^{\alpha}\theta_{BC}^{\left(  1-\alpha
\right)  /2}\tau_{AC}^{\left(  1-\alpha\right)  /2}\omega_{C}^{\alpha-1}%
\tau_{AC}^{\left(  1-\alpha\right)  /2}\theta_{BC}^{\left(  1-\alpha\right)
/2}\right\}  . \label{eq:Q_a_6}%
\end{align}
In the above, we are abusing notation by always having the power $\left(
\alpha-1\right)  /2$ associated with $\omega_{C}$ and the power $\left(
1-\alpha\right)  /2$ associated with $\tau_{AC}$ and $\theta_{BC}$, but we
take the convention that the different $Q_{\alpha}$ quantities are uniquely
identified by the operator ordering of its last three arguments. These
different $Q_{\alpha}$ functions lead to different $\Delta_{\alpha}$
quantities, again uniquely identified by the operator ordering of the last
three arguments.

We can then use the above observations, the observation in
Proposition~\ref{thm:CMI-formulations}, and the definition of the R\'{e}nyi
relative entropy to define R\'{e}nyi generalizations of the conditional mutual
information. There are many definitions that we could take for a R\'{e}nyi
conditional mutual information by using the different optimizations summarized
in Proposition~\ref{thm:CMI-formulations} and the different orderings of
operators as suggested above.

In spite of the many possibilities suggested above, we choose to define the
\textit{R\'{e}nyi conditional mutual information} as the following quantity
because it obeys some additional properties (beyond those satisfied by many of
the above generalizations)\ which we would expect to hold for a R\'{e}nyi
generalization of the conditional mutual information.

\begin{definition}
\label{def:Renyi-CMI}Let $\rho_{ABC}\in\mathcal{S}\left(  \mathcal{H}%
_{ABC}\right)  $. The R\'{e}nyi conditional mutual information of $\rho_{ABC}$
is defined for $\alpha\in\lbrack0,1)\cup(1,\infty)$ as%
\begin{equation}
I_{\alpha}\left(  A;B|C\right)  _{\rho}\equiv\inf_{\sigma_{BC}}\Delta_{\alpha
}\left(  \rho_{ABC},\rho_{AC},\rho_{C},\sigma_{BC}\right)  ,
\label{eq:renyi-cmi}%
\end{equation}
where the optimization is over density operators $\sigma_{BC}$ such that
$\operatorname{supp}\left(  \rho_{ABC}\right)  \subseteq\operatorname{supp}%
\left(  \sigma_{BC}\right)  $.
\end{definition}

Note that unlike the conditional mutual information, this definition is not
symmetric with respect to $A$ and $B$. Thus one might also call it the
R\'{e}nyi information that $B$ has about $A$ from the perspective of $C$. Note
also that, for trivial $C$, the definition reduces to the usual definition of
R\'{e}nyi mutual information in~\eqref{eq:alpha_mutual}.

One advantage of the above definition is that we can identify an explicit form
for the minimizing $\sigma_{BC}$ and thus for $I_{\alpha}\left(  A;B|C\right)
_{\rho}$, as captured by the following proposition:

\begin{proposition}
\label{prop:sibson}Let $\rho_{ABC}\in\mathcal{S}\left(  \mathcal{H}%
_{ABC}\right)  $. The R\'{e}nyi conditional mutual information of $\rho_{ABC}$
has the following explicit form for $\alpha\in\left(  0,1\right)  \cup\left(
1,\infty\right)  $:%
\begin{equation}
I_{\alpha}\left(  A;B|C\right)  _{\rho}=\frac{\alpha}{\alpha-1}\log
\operatorname{Tr}\left\{  \left(  \rho_{C}^{\left(  \alpha-1\right)
/2}\operatorname{Tr}_{A}\left\{  \rho_{AC}^{\left(  1-\alpha\right)  /2}%
\rho_{ABC}^{\alpha}\rho_{AC}^{\left(  1-\alpha\right)  /2}\right\}  \rho
_{C}^{\left(  \alpha-1\right)  /2}\right)  ^{1/\alpha}\right\}  .
\end{equation}
This follows because the infimum in \eqref{eq:renyi-cmi}\ can be replaced by a
minimum and the minimum $\sigma_{BC}$\ is unique with an explicit form.
\end{proposition}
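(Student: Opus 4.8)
The plan is to carry out the minimization over $\sigma_{BC}$ explicitly by first isolating $\sigma_{BC}$ inside a single trace, and then invoking a H\"older inequality whose equality case pins down the optimizer. First I would take $\tau_{AC}=\rho_{AC}$, $\omega_C=\rho_C$, and $\theta_{BC}=\sigma_{BC}$ in \eqref{eq:Q_a_1} and use cyclicity of the trace to move the two rightmost factors $\rho_C^{(\alpha-1)/2}\rho_{AC}^{(1-\alpha)/2}$ to the front. Since $\sigma_{BC}^{1-\alpha}$ acts as the identity on system $A$, I can then trace out $A$, and since $\rho_C^{(\alpha-1)/2}$ acts only on $C$ it pulls outside $\operatorname{Tr}_A$. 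This yields
\[
Q_\alpha\left(\rho_{ABC},\rho_{AC},\rho_C,\sigma_{BC}\right)=\operatorname{Tr}\left\{G_{BC}\,\sigma_{BC}^{1-\alpha}\right\},
\]
where $G_{BC}\equiv\rho_C^{(\alpha-1)/2}\operatorname{Tr}_A\{\rho_{AC}^{(1-\alpha)/2}\rho_{ABC}^{\alpha}\rho_{AC}^{(1-\alpha)/2}\}\rho_C^{(\alpha-1)/2}\geq0$ is independent of $\sigma_{BC}$ and is exactly the operator raised to the power $1/\alpha$ in the statement.

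It then remains to optimize $\tfrac{1}{\alpha-1}\log\operatorname{Tr}\{G_{BC}\sigma_{BC}^{1-\alpha}\}$ over density operators $\sigma_{BC}$. Because the sign of $1/(\alpha-1)$ is positive for $\alpha>1$ and negative for $\alpha\in(0,1)$, taking the infimum of $\Delta_\alpha$ is equivalent to minimizing $\operatorname{Tr}\{G_{BC}\sigma_{BC}^{1-\alpha}\}$ in the first case and maximizing it in the second. In both cases I would apply the trace H\"older inequality with the conjugate exponents $s=1/\alpha$ and $s'=1/(1-\alpha)$, which satisfy $1/s+1/s'=1$. The key simplification is that $(\sigma_{BC}^{1-\alpha})^{s'}=\sigma_{BC}$ has unit trace, so the factor associated with $\sigma_{BC}$ collapses to $\|\sigma_{BC}^{1-\alpha}\|_{s'}=1$, leaving the fixed quantity $\|G_{BC}\|_{1/\alpha}=(\operatorname{Tr}\{G_{BC}^{1/\alpha}\})^{\alpha}$. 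For $\alpha\in(0,1)$ one has $s,s'>1$ and the ordinary H\"older inequality gives $\operatorname{Tr}\{G_{BC}\sigma_{BC}^{1-\alpha}\}\leq(\operatorname{Tr}\{G_{BC}^{1/\alpha}\})^{\alpha}$; for $\alpha>1$ one has $s\in(0,1)$ and $s'<0$, and the reverse H\"older inequality gives the opposite bound $\operatorname{Tr}\{G_{BC}\sigma_{BC}^{1-\alpha}\}\geq(\operatorname{Tr}\{G_{BC}^{1/\alpha}\})^{\alpha}$. Feeding either extremal value through $\tfrac{1}{\alpha-1}\log(\cdot)$ produces $\tfrac{\alpha}{\alpha-1}\log\operatorname{Tr}\{G_{BC}^{1/\alpha}\}$ in both regimes, which is the asserted formula.

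Uniqueness of the optimizing $\sigma_{BC}$ comes from the equality condition of H\"older's inequality: equality forces $\sigma_{BC}$ to be proportional to $G_{BC}^{1/\alpha}$, and the normalization $\operatorname{Tr}\{\sigma_{BC}\}=1$ then fixes $\sigma_{BC}^{\ast}=G_{BC}^{1/\alpha}/\operatorname{Tr}\{G_{BC}^{1/\alpha}\}$; substituting this candidate confirms directly that it attains the bound, so the infimum in \eqref{eq:renyi-cmi} is achieved and is a minimum. The step I expect to demand the most care is the bookkeeping of supports and generalized inverses: I must verify that $\sigma_{BC}^{\ast}$ is feasible, i.e.\ that $\operatorname{supp}(\rho_{ABC})\subseteq\operatorname{supp}(\sigma_{BC}^{\ast})$ (equivalently that $\operatorname{supp}(G_{BC})$ contains $\operatorname{supp}(\rho_{BC})$), and, especially for $\alpha>1$, that the reverse H\"older inequality together with its equality case is applied correctly when $G_{BC}$ or $\sigma_{BC}$ is singular and the negative power $\sigma_{BC}^{1-\alpha}$ is read in the generalized-inverse sense of Section~\ref{sec:notation}. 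Restricting the entire computation to the common support of $G_{BC}$ and $\sigma_{BC}$ resolves these subtleties.
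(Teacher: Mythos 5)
Your proposal is correct, and its first half coincides with the paper's own computation: the paper's Appendix~\ref{sec:sibson} performs exactly the same cyclicity-and-partial-trace reduction to write $Q_{\alpha}\left(  \rho_{ABC},\rho_{AC},\rho_{C},\sigma_{BC}\right)  =\operatorname{Tr}\left\{  G_{BC}\,\sigma_{BC}^{1-\alpha}\right\}  $ with your $G_{BC}$. Where you genuinely diverge is in how the optimization over $\sigma_{BC}$ is resolved. The paper proves a quantum Sibson identity (Lemma~\ref{lem:unique-min}) by pure algebra: it exhibits the decomposition $\Delta_{\alpha}\left(  \rho_{ABC},\rho_{AC},\rho_{C},\sigma_{BC}\right)  =\Delta_{\alpha}\left(  \rho_{ABC},\rho_{AC},\rho_{C},\sigma_{BC}^{\ast}\right)  +D_{\alpha}\left(  \sigma_{BC}^{\ast}\Vert\sigma_{BC}\right)  $ and then invokes non-negativity and faithfulness of the R\'{e}nyi relative entropy ($D_{\alpha}\left(  \sigma^{\ast}\Vert\sigma\right)  \geq0$ with equality iff $\sigma=\sigma^{\ast}$), which handles both regimes $\alpha\in(0,1)$ and $\alpha>1$ in one stroke and delivers uniqueness of the minimizer for free. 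Your route instead optimizes $\operatorname{Tr}\left\{  G_{BC}\,\sigma_{BC}^{1-\alpha}\right\}  $ directly by H\"older duality; for $\alpha\in(0,1)$ this is elementary, since both exponents $1/\alpha$ and $1/\left(  1-\alpha\right)  $ exceed one. The one caveat is the case $\alpha>1$: the reverse trace H\"older inequality $\operatorname{Tr}\left\{  XY\right\}  \geq\Vert X\Vert_{s}\Vert Y\Vert_{s^{\prime}}$ with $s\in(0,1)$, $s^{\prime}<0$ is \emph{not} the naive commutative statement for non-commuting positive operators — it is true, but requires its own proof (e.g., ordinary H\"older applied to $\operatorname{Tr}\left\{  G^{1/\alpha}\right\}  =\operatorname{Tr}\left\{  \sigma^{\left(  \alpha-1\right)  /2\alpha}G^{1/\alpha}\sigma^{\left(  \alpha-1\right)  /2\alpha}\cdot\sigma^{\left(  1-\alpha\right)  /\alpha}\right\}  $ combined with the Araki--Lieb--Thirring inequality) or a citation, and its equality condition — which you lean on for uniqueness — requires the same care in the matrix setting. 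Your observation that attainment can be checked by direct substitution of $\sigma_{BC}^{\ast}=G_{BC}^{1/\alpha}/\operatorname{Tr}\left\{  G_{BC}^{1/\alpha}\right\}  $ shows one inequality direction for free, but the opposite direction for $\alpha>1$ genuinely needs the reverse-H\"older machinery, so you should either prove it or cite it; the paper's Sibson decomposition sidesteps this entirely, which is what its approach buys. Your support bookkeeping is at the same level of rigor as the paper's, which likewise does not verify explicitly that $\sigma_{BC}^{\ast}$ satisfies $\operatorname{supp}\left(  \rho_{ABC}\right)  \subseteq\operatorname{supp}\left(  \sigma_{BC}^{\ast}\right)  $.
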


A proof of Proposition~\ref{prop:sibson} appears in Appendix~\ref{sec:sibson}.

\subsection{Limit of the R\'{e}nyi conditional mutual information as
$\alpha\rightarrow1$}

In this section, we consider the limit of the $\Delta_{\alpha}$ quantity as
the R\'{e}nyi parameter $\alpha\rightarrow1$. This allows us to prove that
some variations of the R\'{e}nyi conditional mutual information converge to
the conditional mutual information in the limit as $\alpha\rightarrow1$.

\begin{theorem}
\label{thm:conv-vN}Let $\rho_{ABC}\in\mathcal{S}\left(  \mathcal{H}%
_{ABC}\right)  $, $\tau_{AC}\in\mathcal{S}\left(  \mathcal{H}_{AC}\right)  $,
$\theta_{BC}\in\mathcal{S}\left(  \mathcal{H}_{BC}\right)  $, and $\omega
_{C}\in\mathcal{S}\left(  \mathcal{H}_{C}\right)  $ and suppose that the
support condition in \eqref{eq:support-condition}\ holds. Then%
\begin{equation}
\lim_{\alpha\rightarrow1}\Delta_{\alpha}\left(  \rho_{ABC},\tau_{AC}%
,\omega_{C},\theta_{BC}\right)  =\Delta\left(  \rho_{ABC},\tau_{AC},\omega
_{C},\theta_{BC}\right)  . \label{eq:delta-alpha-to-delta}%
\end{equation}
The same limiting relation holds for the other $\Delta_{\alpha}$ quantities
defined from \eqref{eq:Q_a_2}-\eqref{eq:Q_a_6}.
\end{theorem}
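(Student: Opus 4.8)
The plan is to recognize this as a standard ``Rényi-to-von-Neumann'' limit and reduce it to a single derivative computation. Write $Q_\alpha \equiv Q_\alpha(\rho_{ABC},\tau_{AC},\omega_C,\theta_{BC})$ for the trace in \eqref{eq:Q_a_1}. First I would observe that, near $\alpha=1$, each operator power $X^{f(\alpha)}$ appearing in $Q_\alpha$ (with $X\in\{\rho_{ABC},\tau_{AC},\omega_C,\theta_{BC}\}$ and $f$ affine in $\alpha$) is real-analytic in $\alpha$: in the eigenbasis $X=\sum_{i:x_i>0}x_i\,|i\rangle\langle i|$ the spectral projections do not depend on $\alpha$, so $X^{f(\alpha)}=\sum_{i:x_i>0}x_i^{f(\alpha)}\,|i\rangle\langle i|$ and $\tfrac{d}{d\alpha}X^{f(\alpha)}=f'(\alpha)\,X^{f(\alpha)}\log X$, with no Fréchet-derivative subtleties. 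Hence $Q_\alpha$ is real-analytic near $\alpha=1$, and it is real and nonnegative because the middle factor $\tau_{AC}^{(1-\alpha)/2}\omega_C^{(\alpha-1)/2}\theta_{BC}^{1-\alpha}\omega_C^{(\alpha-1)/2}\tau_{AC}^{(1-\alpha)/2}$ has the Hermitian positive form $X^{\dag}\theta_{BC}^{1-\alpha}X$ with $X=\omega_C^{(\alpha-1)/2}\tau_{AC}^{(1-\alpha)/2}$.

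Next I would evaluate $Q_1$. At $\alpha=1$ every power $X^{f(1)}$ collapses to the support projection $\Pi_X\equiv X^0$, and the support condition \eqref{eq:support-condition} yields $\Pi_X\rho_{ABC}=\rho_{ABC}\Pi_X=\rho_{ABC}$ for $X\in\{\tau_{AC},\theta_{BC},\omega_C\}$. Peeling the projections off one at a time via cyclicity of the trace then gives $Q_1=\operatorname{Tr}\{\rho_{ABC}\}=1$. Since $Q_1=1>0$ and $Q_\alpha$ is continuous, $\log Q_\alpha$ is well defined near $\alpha=1$ and $\Delta_\alpha=\log Q_\alpha/(\alpha-1)$ is a $0/0$ indeterminate form there. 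L'Hôpital's rule (equivalently, a first-order Taylor expansion of $Q_\alpha$) gives $\lim_{\alpha\to1}\Delta_\alpha=Q_1'/Q_1=Q_1'$, where $Q_1'\equiv\tfrac{d}{d\alpha}Q_\alpha|_{\alpha=1}$, so everything reduces to this one derivative.

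To compute $Q_1'$ I would apply the product rule to the six factors of $Q_\alpha$, using $\tfrac{d}{d\alpha}X^{f(\alpha)}=f'(\alpha)X^{f(\alpha)}\log X$. In each of the six resulting terms the differentiated factor contributes a single $\log$ (the exponent derivatives being $+1,-\tfrac12,+\tfrac12,-1,+\tfrac12,-\tfrac12$ for $\rho_{ABC},\tau_{AC},\omega_C,\theta_{BC},\omega_C,\tau_{AC}$ respectively), while the remaining five factors collapse to projections $\Pi_X$ at $\alpha=1$. As in the evaluation of $Q_1$, I would repeatedly cycle each surviving projection to the front of the trace, where it meets $\rho_{ABC}$ and disappears via $\Pi_X\rho_{ABC}=\rho_{ABC}$; this leaves exactly $c_i\operatorname{Tr}\{\rho_{ABC}\log X_i\}$ for each term, with $X_i$ the differentiated operator and $c_i$ its exponent-derivative coefficient. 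Summing the six contributions (the two $\tau_{AC}$ terms combine to coefficient $-1$, the two $\omega_C$ terms to $+1$) yields
\begin{equation}
Q_1'=\operatorname{Tr}\left\{\rho_{ABC}\left[\log\rho_{ABC}-\log\tau_{AC}-\log\theta_{BC}+\log\omega_C\right]\right\}=\Delta\left(\rho_{ABC},\tau_{AC},\omega_C,\theta_{BC}\right),
\end{equation}
which is the claimed limit.

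Finally, the extension to the orderings in \eqref{eq:Q_a_2}--\eqref{eq:Q_a_6} is immediate by the same argument: permuting the last three arguments changes neither the total exponent attached to each of $\tau_{AC}$, $\omega_C$, $\theta_{BC}$ (always $-1$, $+1$, $-1$ respectively) nor the fact that the undifferentiated factors collapse to support projections that wash out against $\rho_{ABC}$. I expect the only genuinely delicate point to be the bookkeeping with these support projections -- verifying that for each ordering the projections can indeed be cycled away under the trace so that the non-differentiated factors act as the identity on $\operatorname{supp}(\rho_{ABC})$. This is precisely where the support condition \eqref{eq:support-condition} is indispensable, and correctly handling the noncommutativity of each $\log X_i$ with the surrounding projections (rather than naively treating it as a projection) is the step most prone to error.
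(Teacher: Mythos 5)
Your proposal is correct and matches the paper's own proof essentially step for step: both establish differentiability of $Q_{\alpha}$, apply L'H\^{o}pital's rule to the $0/0$ form $\log Q_{\alpha}/(\alpha-1)$, compute the six product-rule terms with exponent-derivative coefficients summing to $+1,-1,+1,-1$ for $\rho_{ABC},\tau_{AC},\omega_{C},\theta_{BC}$, and use the support condition \eqref{eq:support-condition} together with relations like $\rho_{ABC}^{0}\tau_{AC}^{0}=\rho_{ABC}^{0}$ to collapse the undifferentiated factors, giving $Q_{1}=1$ and $Q_{1}'=\Delta$. The only cosmetic difference is that the paper writes out $\frac{d}{d\alpha}Q_{\alpha}$ at general $\alpha$ and then takes limits, whereas you evaluate directly at $\alpha=1$ via analyticity; the projection bookkeeping you flag as delicate is handled identically in both arguments.
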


\begin{proof}
We will consider L'H\^{o}pital's rule in order to evaluate the limit of
$\Delta_{\alpha}$ as $\alpha\rightarrow1$, due to the presence of the
denominator term $\alpha-1$ in $\Delta_{\alpha}$. To this end, we compute the
following derivative with respect to $\alpha$%
\begin{multline}
\frac{d}{d\alpha}Q_{\alpha}\left(  \rho_{ABC},\tau_{AC},\omega_{C},\theta
_{BC}\right)  =\text{Tr}\left\{  \left(  \log\rho_{ABC}\right)  \rho
_{ABC}^{\alpha}\tau_{AC}^{\left(  1-\alpha\right)  /2}\omega_{C}^{\left(
\alpha-1\right)  /2}\theta_{BC}^{1-\alpha}\omega_{C}^{\left(  \alpha-1\right)
/2}\tau_{AC}^{\left(  1-\alpha\right)  /2}\right\} \\
-\frac{1}{2}\text{Tr}\left\{  \rho_{ABC}^{\alpha}\left(  \log\tau_{AC}\right)
\tau_{AC}^{\left(  1-\alpha\right)  /2}\omega_{C}^{\left(  \alpha-1\right)
/2}\theta_{BC}^{1-\alpha}\omega_{C}^{\left(  \alpha-1\right)  /2}\tau
_{AC}^{\left(  1-\alpha\right)  /2}\right\} \\
+\frac{1}{2}\text{Tr}\left\{  \rho_{ABC}^{\alpha}\tau_{AC}^{\left(
1-\alpha\right)  /2}\left(  \log\omega_{C}\right)  \omega_{C}^{\left(
\alpha-1\right)  /2}\theta_{BC}^{1-\alpha}\omega_{C}^{\left(  \alpha-1\right)
/2}\tau_{AC}^{\left(  1-\alpha\right)  /2}\right\} \\
-\text{Tr}\left\{  \rho_{ABC}^{\alpha}\tau_{AC}^{\left(  1-\alpha\right)
/2}\omega_{C}^{\left(  \alpha-1\right)  /2}\left(  \log\theta_{BC}\right)
\theta_{BC}^{1-\alpha}\omega_{C}^{\left(  \alpha-1\right)  /2}\tau
_{AC}^{\left(  1-\alpha\right)  /2}\right\} \\
+\frac{1}{2}\text{Tr}\left\{  \rho_{ABC}^{\alpha}\tau_{AC}^{\left(
1-\alpha\right)  /2}\omega_{C}^{\left(  \alpha-1\right)  /2}\theta
_{BC}^{1-\alpha}\left(  \log\omega_{C}\right)  \omega_{C}^{\left(
\alpha-1\right)  /2}\tau_{AC}^{\left(  1-\alpha\right)  /2}\right\} \\
-\frac{1}{2}\text{Tr}\left\{  \rho_{ABC}^{\alpha}\tau_{AC}^{\left(
1-\alpha\right)  /2}\omega_{C}^{\left(  \alpha-1\right)  /2}\theta
_{BC}^{1-\alpha}\omega_{C}^{\left(  \alpha-1\right)  /2}\left(  \log\tau
_{AC}\right)  \tau_{AC}^{\left(  1-\alpha\right)  /2}\right\}  .
\end{multline}
Thus, the function $Q_{\alpha}\left(  \rho_{ABC},\tau_{AC},\omega_{C}%
,\theta_{BC}\right)  $ is differentiable for $\alpha\in\left(  0,\infty
\right)  $. Applying L'H\^{o}pital's rule, we consider%
\begin{equation}
\lim_{\alpha\rightarrow1}\Delta_{\alpha}\left(  \rho_{ABC},\tau_{AC}%
,\omega_{C},\theta_{BC}\right)  =\lim_{\alpha\rightarrow1}\frac{1}{Q_{\alpha
}\left(  \rho_{ABC},\tau_{AC},\omega_{C},\theta_{BC}\right)  }\frac{d}%
{d\alpha}Q_{\alpha}\left(  \rho_{ABC},\tau_{AC},\omega_{C},\theta_{BC}\right)
. \label{eq:lhospital-renyi-cmi}%
\end{equation}
We can evaluate the limits separately to find that%
\begin{equation}
\lim_{\alpha\rightarrow1}Q_{\alpha}\left(  \rho_{ABC},\tau_{AC},\omega
_{C},\theta_{BC}\right)  =\text{Tr}\left\{  \rho_{ABC}\tau_{AC}^{0}\omega
_{C}^{0}\theta_{BC}^{0}\omega_{C}^{0}\tau_{AC}^{0}\right\}  ,
\end{equation}%
\begin{multline}
\lim_{\alpha\rightarrow1}\frac{d}{d\alpha}Q_{\alpha}\left(  \rho_{ABC}%
,\tau_{AC},\omega_{C},\theta_{BC}\right)  =\text{Tr}\left\{  \left(  \log
\rho_{ABC}\right)  \rho_{ABC}\tau_{AC}^{0}\omega_{C}^{0}\theta_{BC}^{0}%
\omega_{C}^{0}\tau_{AC}^{0}\right\} \\
-\frac{1}{2}\text{Tr}\left\{  \rho_{ABC}\left(  \log\tau_{AC}\right)
\tau_{AC}^{0}\omega_{C}^{0}\theta_{BC}^{0}\omega_{C}^{0}\tau_{AC}^{0}\right\}
+\frac{1}{2}\text{Tr}\left\{  \rho_{ABC}\tau_{AC}^{0}\left(  \log\omega
_{C}\right)  \omega_{C}^{0}\theta_{BC}^{0}\omega_{C}^{0}\tau_{AC}^{0}\right\}
\\
-\text{Tr}\left\{  \rho_{ABC}\tau_{AC}^{0}\omega_{C}^{0}\left(  \log
\theta_{BC}\right)  \theta_{BC}^{0}\omega_{C}^{0}\tau_{AC}^{0}\right\}
+\frac{1}{2}\text{Tr}\left\{  \rho_{ABC}\tau_{AC}^{0}\omega_{C}^{0}\theta
_{BC}^{0}\left(  \log\omega_{C}\right)  \omega_{C}^{0}\tau_{AC}^{0}\right\} \\
-\frac{1}{2}\text{Tr}\left\{  \rho_{ABC}\tau_{AC}^{0}\omega_{C}^{0}\theta
_{BC}^{0}\omega_{C}^{0}\left(  \log\tau_{AC}\right)  \tau_{AC}^{0}\right\}  .
\end{multline}
Since by assumption supp$\left(  \rho_{ABC}\right)  $ is contained in each of
supp$\left(  \tau_{AC}\right)  $, supp$\left(  \omega_{C}\right)  $, and
supp$\left(  \theta_{BC}\right)  $, we exploit the relations $\rho_{ABC}%
=\rho_{ABC}^{0}\rho_{ABC}\rho_{ABC}^{0}$, $\rho_{ABC}^{0}\tau_{AC}^{0}%
=\rho_{ABC}^{0}$, $\rho_{ABC}^{0}\theta_{BC}^{0}=\rho_{ABC}^{0}$, $\rho
_{ABC}^{0}\omega_{C}^{0}=\rho_{ABC}^{0}$ and their Hermitian conjugates to
find that%
\begin{align}
\lim_{\alpha\rightarrow1}Q_{\alpha}\left(  \rho_{ABC},\tau_{AC},\omega
_{C},\theta_{BC}\right)   &  =1,\\
\lim_{\alpha\rightarrow1}\frac{d}{d\alpha}Q_{\alpha}\left(  \rho_{ABC}%
,\tau_{AC},\omega_{C},\theta_{BC}\right)   &  =\Delta\left(  \rho_{ABC}%
,\tau_{AC},\omega_{C},\theta_{BC}\right)  ,
\end{align}
which when combined with (\ref{eq:lhospital-renyi-cmi}) leads to
(\ref{eq:delta-alpha-to-delta}). Essentially the same proof establishes the
limiting relation for the other $\Delta_{\alpha}$ quantities defined from \eqref{eq:Q_a_2}-\eqref{eq:Q_a_6}.
\end{proof}

\begin{corollary}
Let $\rho_{ABC}\in\mathcal{S}\left(  \mathcal{H}_{ABC}\right)  $. Then the
following limiting relation holds%
\begin{equation}
\lim_{\alpha\rightarrow1}\Delta_{\alpha}\left(  \rho_{ABC},\rho_{AC},\rho
_{C},\rho_{BC}\right)  =I\left(  A;B|C\right)  _{\rho}.
\end{equation}

\end{corollary}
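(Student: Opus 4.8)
The plan is to read this off as a direct specialization of Theorem~\ref{thm:conv-vN}, using the relative-entropy formulation of the conditional mutual information already recorded in Proposition~\ref{thm:CMI-formulations}. First I would verify that the substitution $\tau_{AC}=\rho_{AC}$, $\omega_{C}=\rho_{C}$, $\theta_{BC}=\rho_{BC}$ is admissible, i.e.\ that the support condition in \eqref{eq:support-condition} holds for this choice. This is precisely the observation made around \eqref{eq:limit-xi-to-0-Delta}: the marginals of $\rho_{ABC}$ always satisfy $\operatorname{supp}\left(\rho_{ABC}\right)\subseteq\operatorname{supp}\left(\rho_{AC}\right),\operatorname{supp}\left(\rho_{BC}\right),\operatorname{supp}\left(\rho_{C}\right)$ (see \cite[Lemma~B.4.1]{RennerThesis}), so nothing extra needs to be proved here.

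With admissibility established, I would apply Theorem~\ref{thm:conv-vN} with these choices, which gives immediately
\begin{equation}
\lim_{\alpha\rightarrow1}\Delta_{\alpha}\left(\rho_{ABC},\rho_{AC},\rho_{C},\rho_{BC}\right)=\Delta\left(\rho_{ABC},\rho_{AC},\rho_{BC},\rho_{C}\right),
\end{equation}
where the right-hand side is the $\Delta$ quantity of \eqref{eq:Delta-quantity} evaluated at the marginals of $\rho_{ABC}$. To finish, I would invoke the first equality of Proposition~\ref{thm:CMI-formulations}, namely $I\left(A;B|C\right)_{\rho}=\Delta\left(\rho_{ABC},\rho_{AC},\rho_{BC},\rho_{C}\right)$, thereby identifying the limit with the conditional mutual information.

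I do not anticipate any genuine obstacle, since all of the analytic work---differentiability of $Q_{\alpha}$, the L'H\^{o}pital evaluation, and the support-driven collapse of the limiting derivative---has already been carried out inside the proof of Theorem~\ref{thm:conv-vN}. The only point requiring a moment's care is the bookkeeping of argument positions: $\Delta_{\alpha}$ lists its arguments in the order $\left(\rho_{ABC},\tau_{AC},\omega_{C},\theta_{BC}\right)$, whereas the original $\Delta$ of \eqref{eq:Delta-quantity} lists them as $\left(\rho_{ABC},\tau_{AC},\theta_{BC},\omega_{C}\right)$. Since in both conventions the $+\log$ is attached to the $C$-marginal and the $-\log$ to the $AC$- and $BC$-marginals, substituting the marginals yields the unambiguous value $\operatorname{Tr}\left\{\rho_{ABC}\left[\log\rho_{ABC}-\log\rho_{AC}-\log\rho_{BC}+\log\rho_{C}\right]\right\}$, confirming that the limit is exactly $I\left(A;B|C\right)_{\rho}$ and not a permuted variant.
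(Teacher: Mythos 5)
Your proposal is correct and is essentially identical to the paper's own proof: both verify the support condition for the marginals via \cite[Lemma~B.4.1]{RennerThesis}, specialize Theorem~\ref{thm:conv-vN} to $\tau_{AC}=\rho_{AC}$, $\omega_{C}=\rho_{C}$, $\theta_{BC}=\rho_{BC}$, and then identify $\Delta\left(\rho_{ABC},\rho_{AC},\rho_{C},\rho_{BC}\right)$ with $I\left(A;B|C\right)_{\rho}$ via Proposition~\ref{thm:CMI-formulations}. Your extra remark on the argument-ordering convention is a harmless (and correct) clarification of a point the paper glosses over, since the signs in \eqref{eq:Delta-quantity} are tied to the subsystem labels rather than to argument positions.
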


\begin{proof}
This follows from the fact that supp$\left(  \rho_{ABC}\right)  \subseteq
\ $supp$\left(  \rho_{AC}\right)  $, supp$\left(  \rho_{C}\right)  $,
supp$\left(  \rho_{BC}\right)  $ (see, e.g., \cite[Lemma~B.4.1]{RennerThesis}%
), from the above theorem, and by recalling that $\Delta\left(  \rho
_{ABC},\rho_{AC},\rho_{C},\rho_{BC}\right)  =I\left(  A;B|C\right)  _{\rho}$.
\end{proof}

\begin{theorem}
\label{thm:renyi-cmi-a-to-1}Let $\rho_{ABC}\in\mathcal{S}\left(
\mathcal{H}_{ABC}\right)  _{++}$. Then the R\'{e}nyi conditional mutual
information converges to the conditional mutual information in the limit as
$\alpha\rightarrow1$:%
\begin{equation}
\lim_{\alpha\rightarrow1}I_{\alpha}\left(  A;B|C\right)  _{\rho}=I\left(
A;B|C\right)  _{\rho}.
\end{equation}

\end{theorem}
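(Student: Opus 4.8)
The plan is to avoid any interchange of limit and infimum and instead work directly with the closed form for $I_{\alpha}(A;B|C)_{\rho}$ supplied by Proposition~\ref{prop:sibson}. This is legitimate because the hypothesis $\rho_{ABC}\in\mathcal{S}(\mathcal{H}_{ABC})_{++}$ forces $\rho_{AC}$, $\rho_{BC}$, and $\rho_{C}$ to be positive definite, so every operator power occurring in that formula is a real-analytic function of $\alpha$ on a neighborhood of $\alpha=1$. Writing $g_{\alpha}\equiv\rho_{C}^{(\alpha-1)/2}\operatorname{Tr}_{A}\{\rho_{AC}^{(1-\alpha)/2}\rho_{ABC}^{\alpha}\rho_{AC}^{(1-\alpha)/2}\}\rho_{C}^{(\alpha-1)/2}$, a positive definite operator on $\mathcal{H}_{BC}$, and $f(\alpha)\equiv\operatorname{Tr}\{g_{\alpha}^{1/\alpha}\}$, Proposition~\ref{prop:sibson} gives $I_{\alpha}(A;B|C)_{\rho}=\frac{\alpha}{\alpha-1}\log f(\alpha)$. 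Since $g_{1}=\operatorname{Tr}_{A}\{\rho_{ABC}\}=\rho_{BC}$ and hence $f(1)=\operatorname{Tr}\{\rho_{BC}\}=1$, this is a $0/0$ indeterminate form; expanding $\log f(\alpha)=f'(1)(\alpha-1)+o(\alpha-1)$ immediately yields $\lim_{\alpha\rightarrow1}I_{\alpha}(A;B|C)_{\rho}=f'(1)$. The whole problem thus reduces to computing $f'(1)$ and identifying it with $I(A;B|C)_{\rho}$.

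To evaluate $f'(1)$, I would write $g_{\alpha}^{1/\alpha}=\exp(\alpha^{-1}\log g_{\alpha})$ and use that, under the trace, the Duhamel ordering collapses, so that $f'(\alpha)=\operatorname{Tr}\{g_{\alpha}^{1/\alpha}\,\frac{d}{d\alpha}[\alpha^{-1}\log g_{\alpha}]\}$. Setting $\alpha=1$ with $g_{1}=\rho_{BC}$ produces two terms, $f'(1)=-\operatorname{Tr}\{\rho_{BC}\log\rho_{BC}\}+\operatorname{Tr}\{\rho_{BC}\,\tfrac{d}{d\alpha}\log g_{\alpha}|_{\alpha=1}\}$. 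For the second term I would invoke the Fréchet derivative of the logarithm together with cyclicity of the trace to rewrite it as $\operatorname{Tr}\{[\int_{0}^{\infty}g_{1}(g_{1}+s)^{-2}\,ds]\,g_{1}'\}$; because $\int_{0}^{\infty}\lambda(\lambda+s)^{-2}\,ds=1$ for every eigenvalue $\lambda>0$ of $g_{1}$, the bracket is the identity and this term collapses to $\operatorname{Tr}\{g_{1}'\}=\frac{d}{d\alpha}\operatorname{Tr}\{g_{\alpha}\}|_{\alpha=1}$.

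It then remains to compute $\frac{d}{d\alpha}\operatorname{Tr}\{g_{\alpha}\}|_{\alpha=1}$, which by cyclicity equals $\frac{d}{d\alpha}\operatorname{Tr}\{\rho_{AC}^{(1-\alpha)/2}\rho_{C}^{\alpha-1}\rho_{AC}^{(1-\alpha)/2}\rho_{ABC}^{\alpha}\}|_{\alpha=1}$. Differentiating each factor via $\frac{d}{d\alpha}X^{c(\alpha)}=c'(\alpha)X^{c(\alpha)}\log X$ and setting $\alpha=1$, at which point every power factor becomes the support projection and hence the identity, gives $\operatorname{Tr}\{\rho_{ABC}[\log\rho_{ABC}-\log\rho_{AC}+\log\rho_{C}]\}$. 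Combining the pieces and using $\operatorname{Tr}\{\rho_{BC}\log\rho_{BC}\}=\operatorname{Tr}\{\rho_{ABC}\log\rho_{BC}\}$ yields $f'(1)=\operatorname{Tr}\{\rho_{ABC}[\log\rho_{ABC}-\log\rho_{AC}-\log\rho_{BC}+\log\rho_{C}]\}=\Delta(\rho_{ABC},\rho_{AC},\rho_{BC},\rho_{C})=I(A;B|C)_{\rho}$, the last equality being the first identity of Proposition~\ref{thm:CMI-formulations}. As an independent sanity check on one direction, $\limsup_{\alpha\rightarrow1}I_{\alpha}(A;B|C)_{\rho}\leq I(A;B|C)_{\rho}$ also follows directly from Theorem~\ref{thm:conv-vN} by bounding $I_{\alpha}(A;B|C)_{\rho}\leq\Delta_{\alpha}(\rho_{ABC},\rho_{AC},\rho_{C},\rho_{BC})$ and letting $\alpha\rightarrow1$.

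The main obstacle is the operator calculus in the second step: rigorously justifying the interchange of $\frac{d}{d\alpha}$ with the trace and with the operator functions, and in particular the two collapses, namely the Duhamel/trace simplification and the Fréchet-derivative identity $\int_{0}^{\infty}g_{1}(g_{1}+s)^{-2}\,ds=I$. Each of these is valid precisely because positive definiteness of $\rho_{ABC}$ keeps $g_{\alpha}$ invertible and smooth near $\alpha=1$; without the $\mathcal{S}(\mathcal{H})_{++}$ hypothesis both these manipulations and even the reduction $\lim_{\alpha\rightarrow1}I_{\alpha}=f'(1)$ would require a more delicate treatment of supports.
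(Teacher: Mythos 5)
Your proposal is correct and takes essentially the same route as the paper's own proof in Appendix~\ref{sec:renyi-cmi-a-to-1}: both start from the explicit Sibson form of Proposition~\ref{prop:sibson}, resolve the $0/0$ indeterminacy by differentiating $\operatorname{Tr}\{G(\alpha)^{1/\alpha}\}$ at $\alpha=1$ (your first-order Taylor expansion of $\log f(\alpha)$ is just L'H\^{o}pital's rule in disguise), and identify the resulting derivative with $\Delta\left(  \rho_{ABC},\rho_{AC},\rho_{C},\rho_{BC}\right)  =I\left(  A;B|C\right)  _{\rho}$. The only cosmetic difference is the differentiation toolkit: the paper proves and applies Lemma~\ref{lem:matrix-der-lemma}, whereas you use Duhamel's formula under the trace together with the integral representation of the Fr\'{e}chet derivative of the logarithm, and your collapse $\operatorname{Tr}\bigl\{  g_{1}\,\tfrac{d}{d\alpha}\log g_{\alpha}\big|_{\alpha=1}\bigr\}  =\operatorname{Tr}\{g_{1}'\}$ is precisely the paper's term $f(\alpha)\operatorname{Tr}\{Z(\alpha)^{f(\alpha)-1}\tfrac{d}{d\alpha}Z(\alpha)\}$ evaluated at $\alpha=1$ with $Z(1)$ positive definite, so the two computations agree term by term.
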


The idea behind the proof of Theorem~\ref{thm:renyi-cmi-a-to-1} is the same as
that behind the proof of Theorem~\ref{thm:conv-vN}. However, we have the
explicit form for $I_{\alpha}\left(  A;B|C\right)  _{\rho}$ from
Proposition~\ref{prop:sibson}, which allows us to evaluate the limit without
the need for uniform convergence of $\Delta_{\alpha}\left(  \rho_{ABC}%
,\tau_{AC},\omega_{C},\theta_{BC}\right)  $ in $\tau_{AC}$, $\omega_{C}$, and
$\theta_{BC}$ as $\alpha\rightarrow1$. A proof of
Theorem~\ref{thm:renyi-cmi-a-to-1} appears in
Appendix~\ref{sec:renyi-cmi-a-to-1}.

\begin{remark}
\label{conj:uniform-renyi-converge}Let $\rho_{ABC}\in\mathcal{S}\left(
\mathcal{H}_{ABC}\right)  $, $\tau_{AC}\in\mathcal{S}\left(  \mathcal{H}%
_{AC}\right)  $, $\theta_{BC}\in\mathcal{S}\left(  \mathcal{H}_{BC}\right)  $,
and $\omega_{C}\in\mathcal{S}\left(  \mathcal{H}_{C}\right)  $ and suppose
that the support condition in \eqref{eq:support-condition}\ holds. If
$\Delta_{\alpha}\left(  \rho_{ABC},\tau_{AC},\omega_{C},\theta_{BC}\right)  $
converges uniformly in $\tau_{AC}$, $\omega_{C}$, and $\theta_{BC}$\ to
$\Delta\left(  \rho_{ABC},\tau_{AC},\omega_{C},\theta_{BC}\right)  $ as
$\alpha\rightarrow1$, then we could conclude that all R\'{e}nyi
generalizations of the conditional mutual information (as proposed at the
beginning of Section~\ref{sec:Renyi-CMI})\ converge to it in the limit as
$\alpha\rightarrow1$.
\end{remark}

\subsection{Monotonicity with respect to local quantum operations on one
system}

The following lemma is the critical one which will allow us to conclude that
the R\'{e}nyi conditional mutual information is monotone non-increasing with
respect to local quantum operations acting on one system for $\alpha\in
\lbrack0,1)\cup(1,2]$.

\begin{lemma}
\label{lem:monotone-lemma}Let $\rho_{ABC}\in\mathcal{S}\left(  \mathcal{H}%
_{ABC}\right)  $, $\tau_{AC}\in\mathcal{S}\left(  \mathcal{H}_{AC}\right)  $,
$\theta_{BC}\in\mathcal{S}\left(  \mathcal{H}_{BC}\right)  $, and $\omega
_{C}\in\mathcal{S}\left(  \mathcal{H}_{C}\right)  $ and suppose that the
non-orthogonality condition in \eqref{eq:non-orthogonal} holds. Let
$\mathcal{N}_{A\rightarrow A^{\prime}}$ and $\mathcal{M}_{B\rightarrow
B^{\prime}}$\ denote quantum operations acting on systems $A$ and $B$,
respectively. Then the following monotonicity inequalities hold for $\alpha
\in\lbrack0,1)\cup(1,2]$:%
\begin{align}
\Delta_{\alpha}\left(  \rho_{ABC},\tau_{AC},\omega_{C},\theta_{BC}\right)   &
\geq\Delta_{\alpha}\left(  \mathcal{M}_{B\rightarrow B^{\prime}}\left(
\rho_{ABC}\right)  ,\tau_{AC},\omega_{C},\mathcal{M}_{B\rightarrow B^{\prime}%
}\left(  \theta_{BC}\right)  \right)  ,\label{eq:mono-1}\\
\Delta_{\alpha}\left(  \rho_{ABC},\omega_{C},\tau_{AC},\theta_{BC}\right)   &
\geq\Delta_{\alpha}\left(  \mathcal{M}_{B\rightarrow B^{\prime}}\left(
\rho_{ABC}\right)  ,\omega_{C},\tau_{AC},\mathcal{M}_{B\rightarrow B^{\prime}%
}\left(  \theta_{BC}\right)  \right)  ,\label{eq:mono-other-quant}\\
\Delta_{\alpha}\left(  \rho_{ABC},\omega_{C},\theta_{BC},\tau_{AC}\right)   &
\geq\Delta_{\alpha}\left(  \mathcal{N}_{A\rightarrow A^{\prime}}\left(
\rho_{ABC}\right)  ,\omega_{C},\theta_{BC},\mathcal{N}_{A\rightarrow
A^{\prime}}\left(  \tau_{AC}\right)  \right)  ,\\
\Delta_{\alpha}\left(  \rho_{ABC},\theta_{BC},\omega_{C},\tau_{AC}\right)   &
\geq\Delta_{\alpha}\left(  \mathcal{N}_{A\rightarrow A^{\prime}}\left(
\rho_{ABC}\right)  ,\theta_{BC},\omega_{C},\mathcal{N}_{A\rightarrow
A^{\prime}}\left(  \tau_{AC}\right)  \right)  . \label{eq:mono-other-quant-3}%
\end{align}

\end{lemma}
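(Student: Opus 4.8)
The plan is to prove the representative inequality \eqref{eq:mono-1} in full and then observe that the other three follow from the identical argument with the systems and operators permuted. The starting point is that every $Q_\alpha$ in \eqref{eq:Q_a_1} and \eqref{eq:Q_a_2}--\eqref{eq:Q_a_6} can be recast, using only cyclicity of the trace, in the canonical ``Lieb--Ando'' form $Q_\alpha = \operatorname{Tr}\{K^\dagger \rho_{ABC}^\alpha\, K\, \theta_{BC}^{1-\alpha}\}$ for the $B$-type orderings (with $K = \tau_{AC}^{(1-\alpha)/2}\omega_C^{(\alpha-1)/2}$ in the case of \eqref{eq:mono-1}), or with $\tau_{AC}^{1-\alpha}$ in place of $\theta_{BC}^{1-\alpha}$ for the $A$-type orderings. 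The crucial structural point --- and precisely the reason one obtains monotonicity on only a single system --- is that the operator $K$, assembled from the two \emph{non}-middle arguments, acts trivially on the system being processed: in \eqref{eq:mono-1} and \eqref{eq:mono-other-quant} the middle operator is $\theta_{BC}$ and $K$ is supported on $AC$, whereas for \eqref{eq:mono-other-quant-3} and its companion the middle operator is $\tau_{AC}$ and $K$ is supported on $BC$.

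First I would record the two operator-theoretic inputs. By the Lieb concavity theorem \cite{L73}, for $\alpha\in[0,1)$ the map $(\rho_{ABC},\theta_{BC})\mapsto \operatorname{Tr}\{K^\dagger \rho_{ABC}^\alpha K \theta_{BC}^{1-\alpha}\}$ is jointly concave; by the Ando convexity theorem \cite{A79}, for $\alpha\in(1,2]$ --- where the exponents $\alpha\in(1,2]$ and $1-\alpha\in[-1,0)$ sum to $1$ --- the same map is jointly convex. To pass from a general quantum operation to a partial trace, I would take a Stinespring dilation $\mathcal{M}_{B\to B'}(\cdot)=\operatorname{Tr}_E\{V(\cdot)V^\dagger\}$ with $V$ an isometry on $B$. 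Since $K$ commutes with $V$ (they act on disjoint systems) and $V$ is an isometry, the identities $V^\dagger V = I$ and $(VXV^\dagger)^\beta = VX^\beta V^\dagger$ give at once that $Q_\alpha$ is invariant under the simultaneous conjugation of $\rho_{ABC}$ and $\theta_{BC}$ by $V$; the same computation shows $Q_\alpha$ is invariant under any unitary supported on $E$.

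The heart of the argument is monotonicity of $Q_\alpha$ under $\operatorname{Tr}_E$. I would invoke the twirl identity $\tfrac{1}{d_E^2}\sum_j (I\otimes W_j) X (I\otimes W_j^\dagger) = (\operatorname{Tr}_E X)\otimes I_E/d_E$ over the Heisenberg--Weyl unitaries $\{W_j\}$ on $E$. Applying joint concavity (resp.\ convexity) across this average, together with the unitary invariance just noted, yields $Q_\alpha\big((\operatorname{Tr}_E \sigma)\otimes I_E/d_E,\ldots\big)\geq Q_\alpha(\sigma,\ldots)$ for $\alpha\in[0,1)$ and the reverse inequality for $\alpha\in(1,2]$, where $\sigma = V\rho_{ABC}V^\dagger$. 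Because $K$ is trivial on $E$ and $(I_E/d_E)^\alpha (I_E/d_E)^{1-\alpha}=I_E/d_E$ has unit trace, the factor $I_E/d_E$ drops out, so the left-hand side equals $Q_\alpha(\operatorname{Tr}_E\sigma,\ldots)$. Finally I would translate back through $\Delta_\alpha = \tfrac{1}{\alpha-1}\log Q_\alpha$: for $\alpha\in[0,1)$ the prefactor $\tfrac{1}{\alpha-1}$ is negative while $Q_\alpha$ increases, and for $\alpha\in(1,2]$ the prefactor is positive while $Q_\alpha$ decreases, so in both regimes $\Delta_\alpha$ decreases. Combined with the isometric invariance (which gives $\Delta_\alpha(\sigma,\ldots)=\Delta_\alpha(\rho_{ABC},\ldots)$) and $\operatorname{Tr}_E\sigma = \mathcal{M}_{B\to B'}(\rho_{ABC})$, this is exactly \eqref{eq:mono-1}; the permuted versions give the remaining three.

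The step I expect to be the main obstacle is making the operator-power manipulations rigorous under the non-orthogonality condition \eqref{eq:non-orthogonal} rather than strict positivity: the Lieb and Ando theorems are cleanest for positive-definite inputs, and the negative exponent $\theta_{BC}^{1-\alpha}$ for $\alpha>1$ is defined only via a generalized inverse. I would therefore first establish the inequalities assuming $\rho_{ABC},\tau_{AC},\theta_{BC},\omega_C>0$, and then recover the general case by a perturbation $X\mapsto X+\xi I$ followed by $\xi\searrow 0$, verifying that the relevant traces converge so that \eqref{eq:Delta_a_1} is reproduced in the limit. A secondary point needing care is citing the exact boundary case of Ando's theorem ($p+q=1$ with $-1\le q<0$) that covers the range $\alpha\in(1,2]$.
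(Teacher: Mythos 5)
Your proposal is correct and takes essentially the same route as the paper's proof: joint concavity of $Q_{\alpha}$ in $\left(  \rho_{ABC},\theta_{BC}\right)  $ (or $\left(  \rho_{ABC},\tau_{AC}\right)  $ for the $A$-type orderings) via the Lieb concavity theorem for $\alpha\in\lbrack0,1)$ and joint convexity via the Ando convexity theorem for $\alpha\in(1,2]$, followed by Uhlmann's standard reduction---Stinespring dilation, isometric invariance of $Q_{\alpha}$, and the Heisenberg--Weyl twirl on the traced-out system with the maximally mixed factor cancelling---after which the sign of $1/\left(  \alpha-1\right)  $ converts the $Q_{\alpha}$ inequalities into the claimed $\Delta_{\alpha}$ inequalities in both regimes. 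One caution on your closing paragraph: the paper does not use the $\xi\searrow0$ perturbation you suggest (it applies the Lieb/Ando theorems directly, with generalized inverses, under the non-orthogonality condition \eqref{eq:non-orthogonal}), and indeed it explicitly flags the convergence of the $\xi$-regularized quantity \eqref{eq:limit-xi-Delta-a-1} to \eqref{eq:Delta_a_1} as largely open because of the non-commuting operator interactions, so that limiting argument should not be relied upon.
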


\begin{proof}
We begin by proving \eqref{eq:mono-1}. Consider that $Q_{\alpha}\left(
\rho_{ABC},\tau_{AC},\omega_{C},\theta_{BC}\right)  $ is jointly concave in
$\rho_{ABC}$ and $\theta_{BC}$ when $\alpha\in\lbrack0,1)$. This is a result
of Lieb's concavity theorem \cite{L73}, a special case of which is the
statement that the function%
\begin{equation}
\left(  S,R\right)  \in\mathcal{B}(\mathcal{H})_{+}\times\mathcal{B}%
(\mathcal{H})_{+}\rightarrow\text{Tr}\left\{  S^{\lambda}XR^{1-\lambda}%
X^{\dag}\right\}
\end{equation}
is jointly concave in $S$ and $R$ when $\lambda\in\left[  0,1\right]  $. (We
apply the theorem by choosing $S=\rho_{ABC}$, $R=\theta_{BC}$, and
$X=\tau_{AC}^{\left(  1-\alpha\right)  /2}\omega_{C}^{\left(  \alpha-1\right)
/2}$.) Furthermore, by an application of Ando's convexity theorem \cite{A79},
we know that $Q_{\alpha}\left(  \rho_{ABC},\tau_{AC},\omega_{C},\theta
_{BC}\right)  $ is jointly convex in $\rho_{ABC}$ and $\theta_{BC}$ when
$\alpha\in(1,2]$.

By a standard (well known)\ argument due to Uhlmann \cite{U73}, the
monotonicity inequality in (\ref{eq:mono-1}) holds. For completeness, we
detail this standard argument here for the case when $\alpha\in\lbrack0,1)$.
Note that it suffices to prove the following monotonicity under partial trace:%
\begin{equation}
Q_{\alpha}\left(  \rho_{AB_{1}B_{2}C},\tau_{AC},\omega_{C},\theta_{B_{1}%
B_{2}C}\right)  \leq Q_{\alpha}\left(  \rho_{AB_{1}C},\tau_{AC},\omega
_{C},\theta_{B_{1}C}\right)  ,
\end{equation}
because the $Q_{\alpha}$ quantity is clearly invariant under isometries acting
on system $B$ and the Stinespring representation theorem \cite{S55}\ states
that any quantum channel can be modeled as an isometry followed by a partial
trace. To this end, let $\left\{  U_{B_{2}}^{i}\right\}  _{i=0}^{d_{B_{2}}%
^{2}-1}$ denote the set of Heisenberg-Weyl operators acting on the system
$B_{2}$, with $d_{B_{2}}$ the dimension of system $B_{2}$. Then%
\begin{multline}
Q_{\alpha}\left(  \rho_{AB_{1}B_{2}C},\tau_{AC},\omega_{C},\theta_{B_{1}%
B_{2}C}\right) \\
=\frac{1}{d_{B_{2}}^{2}}\sum_{i=0}^{d_{B_{2}}^{2}-1}Q_{\alpha}\left(
U_{B_{2}}^{i}\rho_{AB_{1}B_{2}C}\left(  U_{B_{2}}^{i}\right)  ^{\dag}%
,\tau_{AC},\omega_{C},U_{B_{2}}^{i}\theta_{B_{1}B_{2}C}\left(  U_{B_{2}}%
^{i}\right)  ^{\dag}\right)  .
\end{multline}
We can then invoke the Lieb concavity theorem to conclude that%
\begin{align}
&  Q_{\alpha}\left(  \rho_{AB_{1}B_{2}C},\tau_{AC},\omega_{C},\theta
_{B_{1}B_{2}C}\right) \nonumber\\
&  \leq Q_{\alpha}\left(  \frac{1}{d_{B_{2}}^{2}}\sum_{i}U_{B_{2}}^{i}%
\rho_{AB_{1}B_{2}C}\left(  U_{B_{2}}^{i}\right)  ^{\dag},\tau_{AC},\omega
_{C},\frac{1}{d_{B_{2}}^{2}}\sum_{i}U_{B_{2}}^{i}\theta_{B_{1}B_{2}C}\left(
U_{B_{2}}^{i}\right)  ^{\dag}\right) \\
&  =Q_{\alpha}\left(  \rho_{AB_{1}C}\otimes\pi_{B_{2}},\tau_{AC},\omega
_{C},\theta_{B_{1}C}\otimes\pi_{B_{2}}\right) \\
&  =Q_{\alpha}\left(  \rho_{AB_{1}C},\tau_{AC},\omega_{C},\theta_{B_{1}%
C}\right)  ,
\end{align}
where $\pi$ is the maximally mixed state. After taking logarithms and dividing
by $\alpha-1$, we can conclude the monotonicity for $\alpha\in\lbrack0,1)$. A
similar development with Ando's convexity theorem gets the monotonicity for
$\alpha\in(1,2]$. The inequalities in (\ref{eq:mono-other-quant}%
)-(\ref{eq:mono-other-quant-3}) follow from a similar line of reasoning.
\end{proof}

\begin{remark}
Let $\rho_{ABC}\in\mathcal{S}\left(  \mathcal{H}_{ABC}\right)  $, $\tau
_{AC}\in\mathcal{S}\left(  \mathcal{H}_{AC}\right)  $, $\theta_{BC}%
\in\mathcal{S}\left(  \mathcal{H}_{BC}\right)  $, and $\omega_{C}%
\in\mathcal{S}\left(  \mathcal{H}_{C}\right)  $ and suppose that the
non-orthogonality condition in \eqref{eq:non-orthogonal} holds. It is an open
question to determine whether the $\Delta_{\alpha}$\ quantities defined from
\eqref{eq:Q_a_1}, \eqref{eq:Q_a_2}-\eqref{eq:Q_a_6} are monotone
non-increasing with respect to quantum operations acting on either systems $A$
or $B$ for $\alpha\in\lbrack0,1)\cup(1,2]$. In particular, it is an open
question to determine whether $\Delta_{\alpha}\left(  \rho_{ABC},\rho
_{AC},\rho_{C},\rho_{BC}\right)  $ and $\inf_{\theta_{BC}}\Delta_{\alpha
}\left(  \rho_{ABC},\rho_{AC},\rho_{C},\theta_{BC}\right)  $\ are monotone
non-increasing with respect to quantum operations acting on system $A$ for
$\alpha\in\lbrack0,1)\cup(1,2]$.
\end{remark}

\begin{corollary}
\label{thm:Renyi-CMI-monotone}Let $\rho_{ABC}\in\mathcal{S}\left(
\mathcal{H}_{ABC}\right)  $, $\tau_{AC}\in\mathcal{S}\left(  \mathcal{H}%
_{AC}\right)  $, $\theta_{BC}\in\mathcal{S}\left(  \mathcal{H}_{BC}\right)  $,
and $\omega_{C}\in\mathcal{S}\left(  \mathcal{H}_{C}\right)  $. All R\'{e}nyi
generalizations of the conditional mutual information derived from%
\begin{equation}
\Delta_{\alpha}\left(  \rho_{ABC},\tau_{AC},\omega_{C},\theta_{BC}\right)
,\ \ \ \ \Delta_{\alpha}\left(  \rho_{ABC},\omega_{C},\tau_{AC},\theta
_{BC}\right)  , \label{eq:mono-B-1}%
\end{equation}
are monotone non-increasing with respect to quantum operations acting on
system $B$, for $\alpha\in\lbrack0,1)\cup(1,2]$. All R\'{e}nyi generalizations
of the conditional mutual information derived from%
\begin{equation}
\Delta_{\alpha}\left(  \rho_{ABC},\omega_{C},\theta_{BC},\tau_{AC}\right)
,\ \ \ \ \Delta_{\alpha}\left(  \rho_{ABC},\theta_{BC},\omega_{C},\tau
_{AC}\right)  , \label{eq:mono-B-4}%
\end{equation}
are monotone non-increasing with respect to quantum operations acting on
system $A$, for $\alpha\in\lbrack0,1)\cup(1,2]$. The derived R\'{e}nyi
generalizations are optimized with respect to $\tau_{AC}$, $\omega_{C}$, and
$\theta_{BC}$ satisfying the support condition in \eqref{eq:support-condition}
(which implies the non-orthogonality condition in \eqref{eq:non-orthogonal}).
\end{corollary}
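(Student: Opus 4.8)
The plan is to lift the pointwise monotonicity inequalities of Lemma~\ref{lem:monotone-lemma} to the optimized R\'enyi generalizations, exploiting the elementary fact that an infimum or supremum of monotone quantities is again monotone. The structural observation that drives everything is that a channel $\mathcal{M}_{B\rightarrow B^{\prime}}$ acting only on $B$ fixes the $AC$- and $C$-marginals, so that $\left( \mathcal{M}_{B\rightarrow B^{\prime}}(\rho_{ABC})\right) _{AC}=\rho_{AC}$ and $\left( \mathcal{M}_{B\rightarrow B^{\prime}}(\rho_{ABC})\right) _{C}=\rho_{C}$, and that the optimization variables $\tau_{AC}$ and $\omega_{C}$ live on systems untouched by $\mathcal{M}_{B\rightarrow B^{\prime}}$, so their feasible sets are unchanged by its action. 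The only argument of $\Delta_{\alpha}$ that the channel genuinely transforms is $\theta_{BC}$, which is precisely the one appearing together with $\rho_{ABC}$ in the left-most inequalities \eqref{eq:mono-1} and \eqref{eq:mono-other-quant}.

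I would carry out the representative case first, namely the generalization obtained from $\Delta_{\alpha}(\rho_{ABC},\tau_{AC},\omega_{C},\theta_{BC})$ by optimizing over $\theta_{BC}$ (and possibly over $\tau_{AC}$, $\omega_{C}$). Fix any feasible choice of the optimization variables, write $\xi_{AB^{\prime}C}\equiv\mathcal{M}_{B\rightarrow B^{\prime}}(\rho_{ABC})$, and apply \eqref{eq:mono-1} to obtain
\begin{equation}
\Delta_{\alpha}\left( \rho_{ABC},\tau_{AC},\omega_{C},\theta_{BC}\right) \geq\Delta_{\alpha}\left( \xi_{AB^{\prime}C},\tau_{AC},\omega_{C},\mathcal{M}_{B\rightarrow B^{\prime}}(\theta_{BC})\right) .
\end{equation}
Because $\mathcal{M}_{B\rightarrow B^{\prime}}(\theta_{BC})$ is itself a feasible density operator on $B^{\prime}C$ (the support condition propagating under the channel by \cite[Lemma~B.4.2]{RennerThesis}), the right-hand side is bounded below by the same expression minimized over all feasible $\theta_{B^{\prime}C}$. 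Taking on the left the infimum over $\theta_{BC}$, and, when present, the infima or suprema over $\tau_{AC}$ and $\omega_{C}$---these last two in the same order on both sides, which is legitimate since $\mathcal{M}_{B\rightarrow B^{\prime}}$ leaves those variables alone---yields the claimed monotonicity of the optimized quantity. The second quantity in \eqref{eq:mono-B-1} is handled identically, using \eqref{eq:mono-other-quant} in place of \eqref{eq:mono-1}.

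The two quantities in \eqref{eq:mono-B-4} are treated by the mirror-image argument: here $\mathcal{N}_{A\rightarrow A^{\prime}}$ fixes the $BC$- and $C$-marginals, the transformed argument is $\tau_{AC}$ rather than $\theta_{BC}$, and the relevant pointwise inequalities are the third displayed inequality of Lemma~\ref{lem:monotone-lemma} and \eqref{eq:mono-other-quant-3}. I do not expect any genuine analytic difficulty, since all of the convexity/concavity content resides in Lemma~\ref{lem:monotone-lemma}; the only real care needed is bookkeeping---matching each derived quantity to the inequality whose transformed argument ($\theta_{BC}$ or $\tau_{AC}$) sits on the system being acted upon, verifying that feasibility (the support condition, hence the non-orthogonality condition \eqref{eq:non-orthogonal}) is preserved under the channel, and checking that the interchange of the infimum over the transformed variable with the remaining infima/suprema goes through because those remaining variables are channel-invariant.
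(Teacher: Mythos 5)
Your proposal is correct and follows essentially the same route as the paper's proof: apply the pointwise inequalities of Lemma~\ref{lem:monotone-lemma} for fixed arguments, observe that the channel image of the transformed argument (e.g., $\mathcal{M}_{B\rightarrow B^{\prime}}\left(\theta_{BC}\right)$) is a feasible point of the corresponding optimization so the right-hand side is bounded below by the fully optimized post-channel quantity, and then take the remaining infima/suprema on the left, which commute harmlessly with the channel since it leaves $\tau_{AC}$ and $\omega_{C}$ untouched. Your explicit remarks on feasibility (support preservation via \cite[Lemma~B.4.2]{RennerThesis}) and on marginal-invariance under local channels, which also cover the fixed-marginal variants such as Definition~\ref{def:Renyi-CMI}, are exactly the ``similar ideas'' the paper leaves implicit after treating the representative case derived from \eqref{eq:CMI-rel-ent-2}.
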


\begin{proof}
We prove that a variation derived from (\ref{eq:CMI-rel-ent-2}) obeys the
monotonicity (with the others mentioned above following from similar ideas).
Beginning with the inequality in Lemma~\ref{lem:monotone-lemma}, we find that%
\begin{align}
\sup_{\omega_{C}}\Delta_{\alpha}\left(  \rho_{ABC},\tau_{AC},\omega_{C}%
,\theta_{BC}\right)   &  \geq\sup_{\omega_{C}}\Delta_{\alpha}\left(
\mathcal{M}_{B\rightarrow B^{\prime}}\left(  \rho_{ABC}\right)  ,\tau
_{AC},\omega_{C},\mathcal{M}_{B\rightarrow B^{\prime}}\left(  \theta
_{BC}\right)  \right)  ,\\
&  \geq\inf_{\tau_{AC}^{\prime},\theta_{BC}^{\prime}}\sup_{\omega_{C}}%
\Delta_{\alpha}\left(  \mathcal{M}_{B\rightarrow B^{\prime}}\left(  \rho
_{ABC}\right)  ,\tau_{AC}^{\prime},\omega_{C},\theta_{BC}^{\prime}\right)  .
\end{align}
Since this inequality holds for all $\tau_{AC}$ and $\theta_{BC}$, it holds in
particular for the infimum of the first line over all such states,
establishing monotonicity for the R\'{e}nyi generalization of the conditional
mutual information derived from (\ref{eq:CMI-rel-ent-2}).
\end{proof}

\begin{corollary}
\label{cor:Renyi-CMI-positive}We can employ the monotonicity inequalities from
Lemma~\ref{lem:monotone-lemma} to conclude that some R\'{e}nyi generalizations
of the conditional mutual information derived from
\eqref{eq:mono-B-1}-\eqref{eq:mono-B-4} and
Proposition~\ref{thm:CMI-formulations} are non-negative for all $\alpha
\in\lbrack0,1)\cup(1,2]$. This includes $\Delta_{\alpha}\left(  \rho
_{ABC},\rho_{AC},\rho_{C},\rho_{BC}\right)  $ and the one from
Definition~\ref{def:Renyi-CMI}.
\end{corollary}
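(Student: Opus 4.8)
The plan is to exploit the one-sided monotonicity of Lemma~\ref{lem:monotone-lemma} by choosing the local operation to be the partial-trace channel over the very system on which monotonicity holds. The guiding observation is that the operator ordering singled out in \eqref{eq:mono-B-1} and \eqref{eq:mono-B-4} --- with the ``$BC$'' factor $\theta_{BC}$ placed in the middle for the $B$-monotone quantities --- is precisely what makes the post-trace-out $Q_{\alpha}$ telescope into a R\'enyi relative entropy of marginals, which is manifestly non-negative on $[0,1)\cup(1,2]$.

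Concretely, for $\Delta_{\alpha}(\rho_{ABC},\rho_{AC},\rho_{C},\rho_{BC})$ I would apply \eqref{eq:mono-1} with $\mathcal{M}_{B\rightarrow B^{\prime}}=\operatorname{Tr}_{B}$, which sends $\rho_{ABC}\mapsto\rho_{AC}$ and $\theta_{BC}=\rho_{BC}\mapsto\rho_{C}$, yielding $\Delta_{\alpha}(\rho_{ABC},\rho_{AC},\rho_{C},\rho_{BC})\geq\Delta_{\alpha}(\rho_{AC},\rho_{AC},\rho_{C},\rho_{C})$. The right-hand side is then evaluated directly: cyclically moving the trailing factor $\rho_{AC}^{(1-\alpha)/2}$ to the front and using $\rho_{AC}^{(1-\alpha)/2}\rho_{AC}^{\alpha}\rho_{AC}^{(1-\alpha)/2}=\rho_{AC}$, together with $\rho_{C}^{(\alpha-1)/2}\rho_{C}^{1-\alpha}\rho_{C}^{(\alpha-1)/2}=\rho_{C}^{0}$ and the support inclusion $\operatorname{supp}(\rho_{AC})\subseteq\operatorname{supp}(I_{A}\otimes\rho_{C})$, one finds $Q_{\alpha}(\rho_{AC},\rho_{AC},\rho_{C},\rho_{C})=\operatorname{Tr}\{\rho_{AC}\}=1$, hence $\Delta_{\alpha}(\rho_{AC},\rho_{AC},\rho_{C},\rho_{C})=0$. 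This gives non-negativity of $\Delta_{\alpha}(\rho_{ABC},\rho_{AC},\rho_{C},\rho_{BC})$ for $\alpha\in[0,1)\cup(1,2]$.

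For the quantity of Definition~\ref{def:Renyi-CMI} I would run the same reduction \emph{inside} the optimization. Applying \eqref{eq:mono-1} with $\operatorname{Tr}_{B}$ to each feasible $\sigma_{BC}$ gives $\Delta_{\alpha}(\rho_{ABC},\rho_{AC},\rho_{C},\sigma_{BC})\geq\Delta_{\alpha}(\rho_{AC},\rho_{AC},\rho_{C},\sigma_{C})$ with $\sigma_{C}=\operatorname{Tr}_{B}\sigma_{BC}$, and the identical algebraic collapse (now keeping $\sigma_{C}$ generic) yields the clean identity $\Delta_{\alpha}(\rho_{AC},\rho_{AC},\rho_{C},\sigma_{C})=\frac{1}{\alpha-1}\log\operatorname{Tr}\{\rho_{C}^{\alpha}\sigma_{C}^{1-\alpha}\}=D_{\alpha}(\rho_{C}\Vert\sigma_{C})$. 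Taking the infimum over $\sigma_{BC}$ then gives $I_{\alpha}(A;B|C)_{\rho}\geq\inf_{\sigma_{BC}}D_{\alpha}(\rho_{C}\Vert\sigma_{C})\geq0$, since the R\'enyi relative entropy between density operators is non-negative in this range of $\alpha$ (no minimizer need be exhibited, as the bound is pointwise). The optimized forms drawn from Proposition~\ref{thm:CMI-formulations} are handled identically by pushing the one-sided inequality through each infimum/supremum, and the mirror-image $A$-monotone quantity in \eqref{eq:mono-B-4} --- the one carrying $\tau_{AC}$ in the middle with $\rho_{BC}$ in the flanking positions --- collapses to $D_{\alpha}(\rho_{C}\Vert\,\cdot\,)$ upon tracing out $A$, giving the claimed ``some'' of the generalizations.

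The main obstacle --- really the only delicate point --- is verifying that the post-trace-out $\Delta_{\alpha}$ genuinely telescopes to a R\'enyi relative entropy. Because $\rho_{AC}$ and $\rho_{C}$ need not commute, this cannot be read off termwise; it requires the cyclicity of the trace to bring the two flanking copies of $\rho_{AC}^{(1-\alpha)/2}$ together, and the support inclusion above to absorb the leftover support projection $\rho_{C}^{0}$. I would also take care with the generalized-inverse conventions and with the non-orthogonality and support hypotheses of Lemma~\ref{lem:monotone-lemma} when $\alpha>1$, so that all the fractional and negative powers are well defined and the manipulations remain valid throughout.
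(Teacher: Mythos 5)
Your proposal is correct and follows essentially the same route as the paper's proof: both apply Lemma~\ref{lem:monotone-lemma} with the trace-out channel on the system for which monotonicity holds and then show that the reduced quantity is non-negative (the paper illustrates with the $\inf_{\theta_{BC}}\sup_{\omega_{C}}$ variant, choosing $\omega_{C}=\theta_{C}$ so that everything collapses via cyclicity and the support condition to $\frac{1}{\alpha-1}\log\operatorname{Tr}\{\rho_{AC}\theta_{C}^{0}\}=0$). Your identity $\Delta_{\alpha}\left(\rho_{AC},\rho_{AC},\rho_{C},\sigma_{C}\right)=D_{\alpha}\left(\rho_{C}\Vert\sigma_{C}\right)$ for handling Definition~\ref{def:Renyi-CMI} is a correct instantiation of the same telescoping mechanism and is verified accurately.
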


\begin{proof}
Let $\rho_{ABC}\in\mathcal{S}\left(  \mathcal{H}_{ABC}\right)  $, $\tau
_{AC}\in\mathcal{S}\left(  \mathcal{H}_{AC}\right)  $, $\theta_{BC}%
\in\mathcal{S}\left(  \mathcal{H}_{BC}\right)  $, and $\omega_{C}%
\in\mathcal{S}\left(  \mathcal{H}_{C}\right)  $ and suppose that the support
condition in (\ref{eq:support-condition}) holds. A common proof technique
applies to reach the conclusions stated above. We illustrate with an example
for%
\begin{equation}
\inf_{\theta_{BC}}\sup_{\omega_{C}}\Delta_{\alpha}\left(  \rho_{ABC},\rho
_{AC},\omega_{C},\theta_{BC}\right)  . \label{eq:renyi-gen-non-neg}%
\end{equation}
We apply Lemma~\ref{lem:monotone-lemma}, choosing the local map on system $B$
to be a trace-out map, to conclude that%
\begin{equation}
\Delta_{\alpha}\left(  \rho_{ABC},\rho_{AC},\omega_{C},\theta_{BC}\right)
\geq\Delta_{\alpha}\left(  \rho_{AC},\rho_{AC},\omega_{C},\theta_{C}\right)  .
\end{equation}
Then, we can conclude that%
\begin{align}
\sup_{\omega_{C}}\Delta_{\alpha}\left(  \rho_{ABC},\rho_{AC},\omega_{C}%
,\theta_{BC}\right)   &  \geq\sup_{\omega_{C}}\Delta_{\alpha}\left(  \rho
_{AC},\rho_{AC},\omega_{C},\theta_{C}\right) \\
&  \geq\Delta_{\alpha}\left(  \rho_{AC},\rho_{AC},\theta_{C},\theta_{C}\right)
\\
&  =\frac{1}{\alpha-1}\log\text{Tr}\left\{  \rho_{AC}^{\alpha}\rho
_{AC}^{\left(  1-\alpha\right)  /2}\theta_{C}^{\left(  \alpha-1\right)
/2}\theta_{C}^{1-\alpha}\theta_{C}^{\left(  \alpha-1\right)  /2}\rho
_{AC}^{\left(  1-\alpha\right)  /2}\right\} \\
&  =\frac{1}{\alpha-1}\log\text{Tr}\left\{  \rho_{AC}\theta_{C}^{0}\right\} \\
&  =0,
\end{align}
with the last inequality following from the support condition supp$\left(
\rho_{ABC}\right)  \subseteq\ $supp$\left(  \theta_{BC}\right)  $ implying the
support condition supp$\left(  \rho_{AC}\right)  \subseteq\ $supp$\left(
\theta_{C}\right)  $ \cite[Lemma~B.4.2]{RennerThesis}. Since the inequality
holds for all $\theta_{BC}$ satisfying the support condition, we can conclude
that the quantity in (\ref{eq:renyi-gen-non-neg}) is non-negative. A similar
technique can be used to conclude that other R\'{e}nyi generalizations of the
conditional mutual information are non-negative (including the one in
Definition~\ref{def:Renyi-CMI}).
\end{proof}

\begin{remark}
If the system $C$ is classical, then the R\'{e}nyi conditional mutual
information given in Definition~\ref{def:Renyi-CMI} is monotone with respect
to local operations on both $A$ and $B$. This is because the optimizing state
is classical on system $C$ and then we have the commutation%
\begin{equation}
\rho_{AC}^{\left(  1-\alpha\right)  /2}\rho_{C}^{\left(  \alpha-1\right)
/2}\sigma_{BC}^{1-\alpha}\rho_{C}^{\left(  \alpha-1\right)  /2}\rho
_{AC}^{\left(  1-\alpha\right)  /2}=\sigma_{BC}^{\left(  1-\alpha\right)
/2}\rho_{C}^{\left(  \alpha-1\right)  /2}\rho_{AC}^{1-\alpha}\rho_{C}^{\left(
\alpha-1\right)  /2}\sigma_{BC}^{\left(  1-\alpha\right)  /2}.
\end{equation}

\end{remark}

\begin{remark}
It is an open question to determine whether all R\'{e}nyi generalizations of
the conditional mutual information designed from the different optimizations
in Proposition~\ref{thm:CMI-formulations}\ and the different orderings in
\eqref{eq:Q_a_1}, \eqref{eq:Q_a_2}-\eqref{eq:Q_a_6} are non-negative for
$\alpha\in\lbrack0,1)\cup(1,2]$.
\end{remark}

\section{Sandwiched R\'{e}nyi conditional mutual information}

\label{sec:sandwiched-Renyi-CMI}As in the previous section, there are many
ways in which we can define a sandwiched R\'{e}nyi conditional mutual
information. Let $\rho_{ABC}\in\mathcal{S}\left(  \mathcal{H}_{ABC}\right)  $,
$\tau_{AC}\in\mathcal{S}\left(  \mathcal{H}_{AC}\right)  $, $\theta_{BC}%
\in\mathcal{S}\left(  \mathcal{H}_{BC}\right)  $, and $\omega_{C}%
\in\mathcal{S}\left(  \mathcal{H}_{C}\right)  $. We define the following core
quantities for $\alpha\in\left(  0,1\right)  \cup\left(  1,\infty\right)  $:%
\begin{align}
\widetilde{Q}_{\alpha}\left(  \rho_{ABC},\tau_{AC},\omega_{C},\theta
_{BC}\right)   &  \equiv\text{Tr}\left\{  \left(  \rho_{ABC}^{1/2}\tau
_{AC}^{\left(  1-\alpha\right)  /2\alpha}\omega_{C}^{\left(  \alpha-1\right)
/2\alpha}\theta_{BC}^{\left(  1-\alpha\right)  /\alpha}\omega_{C}^{\left(
\alpha-1\right)  /2\alpha}\tau_{AC}^{\left(  1-\alpha\right)  /2\alpha}%
\rho_{ABC}^{1/2}\right)  ^{\alpha}\right\}  ,\label{eq:Q_SW_a_1}\\
\widetilde{\Delta}_{\alpha}\left(  \rho_{ABC},\tau_{AC},\omega_{C},\theta
_{BC}\right)   &  \equiv\frac{1}{\alpha-1}\log\widetilde{Q}_{\alpha}\left(
\rho_{ABC},\tau_{AC},\omega_{C},\theta_{BC}\right)  . \label{eq:Delta_SW_a_1}%
\end{align}
We stress again that the formula above is to be interpreted in terms of
generalized inverses. By employing (\ref{eq:a-norm}) and (\ref{eq:Q_SW_a_1}),
we can write%
\begin{equation}
\widetilde{Q}_{\alpha}\left(  \rho_{ABC},\tau_{AC},\omega_{C},\theta
_{BC}\right)  =\left\Vert \rho_{ABC}^{1/2}\tau_{AC}^{\left(  1-\alpha\right)
/2\alpha}\omega_{C}^{\left(  \alpha-1\right)  /2\alpha}\theta_{BC}^{\left(
1-\alpha\right)  /2\alpha}\right\Vert _{2\alpha}^{2\alpha},
\end{equation}
and we see that $\widetilde{Q}_{\alpha}\left(  \rho_{ABC},\tau_{AC},\omega
_{C},\theta_{BC}\right)  =0$ if and only if%
\begin{equation}
\rho_{ABC}^{1/2}\tau_{AC}^{\left(  1-\alpha\right)  /2\alpha}\omega
_{C}^{\left(  \alpha-1\right)  /2\alpha}\theta_{BC}^{\left(  1-\alpha\right)
/2\alpha}=0.
\end{equation}
So $\widetilde{Q}_{\alpha}\left(  \rho_{ABC},\tau_{AC},\omega_{C},\theta
_{BC}\right)  >0$ if%
\begin{equation}
\rho_{ABC}^{1/2}\not \perp \tau_{AC}^{\left(  1-\alpha\right)  /2\alpha}%
\omega_{C}^{\left(  \alpha-1\right)  /2\alpha}\theta_{BC}^{\left(
1-\alpha\right)  /2\alpha}. \label{eq:non-orthogonality-SW}%
\end{equation}
The non-orthogonality condition in (\ref{eq:non-orthogonality-SW})\ is
satisfied, e.g., if the support condition in (\ref{eq:support-condition})
holds, so that (\ref{eq:non-orthogonality-SW})\ is satisfied when $\tau
_{AC}=\rho_{AC}$, $\omega_{C}=\rho_{C}$, and $\theta_{BC}=\rho_{BC}$. It
remains largely open to determine support conditions under which%
\begin{equation}
\lim_{\xi\searrow0}\widetilde{\Delta}_{\alpha}\left(  \rho_{ABC},\tau_{AC}+\xi
I_{ABC},\omega_{C}+\xi I_{ABC},\theta_{BC}+\xi I_{ABC}\right)
\label{eq:finite-Delta-SW}%
\end{equation}
is finite and equal to (\ref{eq:Delta_SW_a_1}), with complications being due
to the fact that (\ref{eq:Q_SW_a_1}) features the multiplication of several
non-commuting operators which can interact in non-trivial ways. As before, we
define five other different $\widetilde{Q}_{\alpha}$ quantities, again
uniquely identified by the order of the last three arguments:%
\begin{align}
\widetilde{Q}_{\alpha}\left(  \rho_{ABC},\theta_{BC},\omega_{C},\tau
_{AC}\right)   &  \equiv\left\Vert \rho_{ABC}^{1/2}\theta_{BC}^{\left(
1-\alpha\right)  /2\alpha}\omega_{C}^{\left(  \alpha-1\right)  /2\alpha}%
\tau_{AC}^{\left(  1-\alpha\right)  /2\alpha}\right\Vert _{2\alpha}^{2\alpha
},\label{eq:Q_SW_a_2}\\
\widetilde{Q}_{\alpha}\left(  \rho_{ABC},\omega_{C},\tau_{AC},\theta
_{BC}\right)   &  \equiv\left\Vert \rho_{ABC}^{1/2}\omega_{C}^{\left(
\alpha-1\right)  /2\alpha}\tau_{AC}^{\left(  1-\alpha\right)  /2\alpha}%
\theta_{BC}^{\left(  1-\alpha\right)  /2\alpha}\right\Vert _{2\alpha}%
^{2\alpha},\\
\widetilde{Q}_{\alpha}\left(  \rho_{ABC},\omega_{C},\theta_{BC},\tau
_{AC}\right)   &  \equiv\left\Vert \rho_{ABC}^{1/2}\omega_{C}^{\left(
\alpha-1\right)  /2\alpha}\theta_{BC}^{\left(  1-\alpha\right)  /2\alpha}%
\tau_{AC}^{\left(  1-\alpha\right)  /2\alpha}\right\Vert _{2\alpha}^{2\alpha
},\\
\widetilde{Q}_{\alpha}\left(  \rho_{ABC},\tau_{AC},\theta_{BC},\omega
_{C}\right)   &  \equiv\left\Vert \rho_{ABC}^{1/2}\tau_{AC}^{\left(
1-\alpha\right)  /2\alpha}\theta_{BC}^{\left(  1-\alpha\right)  /2\alpha
}\omega_{C}^{\left(  \alpha-1\right)  /2\alpha}\right\Vert _{2\alpha}%
^{2\alpha},\\
\widetilde{Q}_{\alpha}\left(  \rho_{ABC},\theta_{BC},\tau_{AC},\omega
_{C}\right)   &  \equiv\left\Vert \rho_{ABC}^{1/2}\theta_{BC}^{\left(
1-\alpha\right)  /2\alpha}\tau_{AC}^{\left(  1-\alpha\right)  /2\alpha}%
\omega_{C}^{\left(  \alpha-1\right)  /2\alpha}\right\Vert _{2\alpha}^{2\alpha
}. \label{eq:Q_SW_a_6}%
\end{align}
These then lead to different $\widetilde{\Delta}_{\alpha}$ quantities. We call
the quantities above \textquotedblleft sandwiched\textquotedblright\ because
they can be viewed as having their root in the sandwiched R\'{e}nyi relative
entropy, i.e., for $\rho_{ABC}\in\mathcal{S(H}_{ABC})_{++}$, $\tau_{AC}%
\in\mathcal{S(H}_{AC})_{++}$, $\theta_{BC}\in\mathcal{S(H}_{BC})_{++}$, and
$\omega_{C}\in\mathcal{S(H}_{C})_{++}$:%
\begin{equation}
\widetilde{\Delta}_{\alpha}\left(  \rho_{ABC},\tau_{AC},\omega_{C},\theta
_{BC}\right)  =\widetilde{D}_{\alpha}\left(  \rho_{ABC}\middle\Vert\left[
\tau_{AC}^{\left(  1-\alpha\right)  /2\alpha}\omega_{C}^{\left(
\alpha-1\right)  /2\alpha}\theta_{BC}^{\left(  1-\alpha\right)  /\alpha}%
\omega_{C}^{\left(  \alpha-1\right)  /2\alpha}\tau_{AC}^{\left(
1-\alpha\right)  /2\alpha}\right]  ^{\alpha/\left(  1-\alpha\right)  }\right)
.
\end{equation}
\ 

Although there are many different possible sandwiched R\'{e}nyi
generalizations of the conditional mutual information, found by combining the
different $\widetilde{\Delta}_{\alpha}$ quantities discussed above with the
different optimizations summarized in Proposition~\ref{thm:CMI-formulations},
we choose the definition given below because it obeys many of the properties
that the conditional mutual information does.

\begin{definition}
\label{def:renyicmi_sand}Let $\rho_{ABC}\in\mathcal{S}\left(  \mathcal{H}%
_{ABC}\right)  $. The sandwiched R\'{e}nyi conditional mutual information is
defined as%
\begin{equation}
\widetilde{I}_{\alpha}\left(  A;B|C\right)  _{\rho}\equiv\inf_{\sigma_{BC}%
}\sup_{\omega_{C}}\widetilde{\Delta}_{\alpha}\left(  \rho_{ABC},\rho
_{AC},\omega_{C},\sigma_{BC}\right)  , \label{eq:sw-Renyi-CMI}%
\end{equation}
where the optimizations are over states obeying the support conditions in \eqref{eq:support-condition}.
\end{definition}

Again, unlike the conditional mutual information, this definition is not
symmetric with respect to $A$ and $B$.Thus one might also call it the
sandwiched R\'{e}nyi information that $B$ has about $A$ from the perspective
of $C$. Also, for trivial $C$, the definition reduces to the usual definition
of sandwiched R\'{e}nyi mutual information (see, e.g., \cite{WWY13,GW13,CMW14}).

\subsection{Limit of the sandwiched R\'{e}nyi conditional mutual information
as $\alpha\rightarrow1$}

This section considers the limit of the $\widetilde{\Delta}_{\alpha}$
quantities as $\alpha\rightarrow1$. For technical reasons, we restrict the
development to positive definite density operators. It remains open to
determine whether the following theorems hold under less restrictive conditions.

\begin{theorem}
\label{thm:conv-vN-sandwiched}Let $\rho_{ABC}\in\mathcal{S}\left(
\mathcal{H}_{ABC}\right)  _{++}$, $\tau_{AC}\in\mathcal{S}\left(
\mathcal{H}_{AC}\right)  _{++}$, $\theta_{BC}\in\mathcal{S}\left(
\mathcal{H}_{BC}\right)  _{++}$, and $\omega_{C}\in\mathcal{S}\left(
\mathcal{H}_{C}\right)  _{++}$. Then%
\begin{equation}
\lim_{\alpha\rightarrow1}\widetilde{\Delta}_{\alpha}\left(  \rho_{ABC}%
,\tau_{AC},\omega_{C},\theta_{BC}\right)  =\Delta\left(  \rho_{ABC},\tau
_{AC},\omega_{C},\theta_{BC}\right)  .
\end{equation}
The same limiting relation holds for the other $\widetilde{\Delta}_{\alpha}$
quantities defined from \eqref{eq:Q_SW_a_2}-\eqref{eq:Q_SW_a_6}.
\end{theorem}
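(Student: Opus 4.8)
The plan is to mirror the proof of Theorem~\ref{thm:conv-vN} by applying L'H\^{o}pital's rule to $\widetilde{\Delta}_{\alpha}=\frac{1}{\alpha-1}\log\widetilde{Q}_{\alpha}$. First I would fix the abbreviation
\begin{equation}
M_{\alpha}\equiv\rho_{ABC}^{1/2}\tau_{AC}^{(1-\alpha)/2\alpha}\omega_{C}^{(\alpha-1)/2\alpha}\theta_{BC}^{(1-\alpha)/\alpha}\omega_{C}^{(\alpha-1)/2\alpha}\tau_{AC}^{(1-\alpha)/2\alpha}\rho_{ABC}^{1/2},
\end{equation}
so that $\widetilde{Q}_{\alpha}=\operatorname{Tr}\{M_{\alpha}^{\alpha}\}$ by \eqref{eq:Q_SW_a_1}. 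The positive-definiteness hypothesis is precisely what makes this tractable: writing $M_{\alpha}=\rho_{ABC}^{1/2}N_{\alpha}\rho_{ABC}^{1/2}$ with $N_{\alpha}=(\theta_{BC}^{(1-\alpha)/2\alpha}\omega_{C}^{(\alpha-1)/2\alpha}\tau_{AC}^{(1-\alpha)/2\alpha})^{\dag}(\theta_{BC}^{(1-\alpha)/2\alpha}\omega_{C}^{(\alpha-1)/2\alpha}\tau_{AC}^{(1-\alpha)/2\alpha})\geq0$, whose exponents are smooth in $\alpha$ on $(0,\infty)$, the operator $M_{\alpha}$ is positive definite and real-analytic in $\alpha$ on a neighborhood of $\alpha=1$; hence $\alpha\mapsto\operatorname{Tr}\{\exp(\alpha\log M_{\alpha})\}$ is differentiable there. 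At $\alpha=1$ every fractional exponent vanishes, so $M_{1}=\rho_{ABC}$ and $\widetilde{Q}_{1}=\operatorname{Tr}\{\rho_{ABC}\}=1$; thus both the numerator and denominator of $\widetilde{\Delta}_{\alpha}$ vanish at $\alpha=1$, and L'H\^{o}pital's rule gives $\lim_{\alpha\rightarrow1}\widetilde{\Delta}_{\alpha}=\frac{d}{d\alpha}\widetilde{Q}_{\alpha}\big|_{\alpha=1}$, the prefactor $1/\widetilde{Q}_{\alpha}$ tending to $1$.

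Next I would evaluate this derivative. Using $\frac{d}{d\alpha}\operatorname{Tr}\{e^{G(\alpha)}\}=\operatorname{Tr}\{e^{G(\alpha)}G'(\alpha)\}$ with $G(\alpha)=\alpha\log M_{\alpha}$, and $M_{1}=\rho_{ABC}$, one obtains the splitting $\frac{d}{d\alpha}\widetilde{Q}_{\alpha}\big|_{\alpha=1}=\operatorname{Tr}\{\rho_{ABC}\log\rho_{ABC}\}+\operatorname{Tr}\{\rho_{ABC}\,\tfrac{d}{d\alpha}\log M_{\alpha}\big|_{\alpha=1}\}$, where the first term arises from differentiating the outer power $\alpha$. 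The crux is the second term, and here the key simplification is that it collapses to a derivative of $\operatorname{Tr}\{M_{\alpha}\}$. Indeed, from the integral representation $\frac{d}{d\alpha}\log M_{\alpha}=\int_{0}^{\infty}(M_{\alpha}+s)^{-1}M_{\alpha}'(M_{\alpha}+s)^{-1}\,ds$ together with cyclicity of the trace, and the fact that $M_{1}=\rho_{ABC}$ commutes with its own resolvent, the factors reduce via $\int_{0}^{\infty}M_{1}(M_{1}+s)^{-2}\,ds=I$, yielding $\operatorname{Tr}\{\rho_{ABC}\,\tfrac{d}{d\alpha}\log M_{\alpha}\big|_{\alpha=1}\}=\operatorname{Tr}\{M_{\alpha}'\big|_{\alpha=1}\}=\frac{d}{d\alpha}\operatorname{Tr}\{M_{\alpha}\}\big|_{\alpha=1}$.

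It then remains to differentiate $\operatorname{Tr}\{M_{\alpha}\}=\operatorname{Tr}\{\rho_{ABC}\tau_{AC}^{(1-\alpha)/2\alpha}\omega_{C}^{(\alpha-1)/2\alpha}\theta_{BC}^{(1-\alpha)/\alpha}\omega_{C}^{(\alpha-1)/2\alpha}\tau_{AC}^{(1-\alpha)/2\alpha}\}$ (using cyclicity to bring one $\rho_{ABC}^{1/2}$ around). Since each fractional factor equals the identity at $\alpha=1$, the product rule leaves only a sum of single-factor derivatives; with $\frac{d}{d\alpha}X^{(1-\alpha)/2\alpha}\big|_{\alpha=1}=-\tfrac12\log X$ and $\frac{d}{d\alpha}X^{(1-\alpha)/\alpha}\big|_{\alpha=1}=-\log X$, the five factors contribute $-\log\tau_{AC}+\log\omega_{C}-\log\theta_{BC}$ after tracing against $\rho_{ABC}$. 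Adding the two terms gives exactly $\operatorname{Tr}\{\rho_{ABC}[\log\rho_{ABC}-\log\tau_{AC}-\log\theta_{BC}+\log\omega_{C}]\}=\Delta(\rho_{ABC},\tau_{AC},\omega_{C},\theta_{BC})$, as claimed.

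The main obstacle is purely the justification of these term-by-term manipulations: interchanging $\frac{d}{d\alpha}$ with the trace and with the integral in the log-derivative formula, and verifying the collapse $\int_{0}^{\infty}M_{1}(M_{1}+s)^{-2}\,ds=I$. All of these are legitimate precisely because positive-definiteness keeps $M_{\alpha}$ bounded away from $0$ near $\alpha=1$, which is also why the theorem is stated only for $\mathcal{S}(\mathcal{H})_{++}$; relaxing to general states would require controlling the fractional powers on the kernel, the same difficulty flagged around \eqref{eq:finite-Delta-SW}. Finally, the identical argument applies verbatim to the five other orderings from \eqref{eq:Q_SW_a_2}--\eqref{eq:Q_SW_a_6}, since each has the form $\operatorname{Tr}\{(\rho_{ABC}^{1/2}P_{\alpha}\rho_{ABC}^{1/2})^{\alpha}\}$ with $P_{\alpha}\rightarrow I$ and $\frac{d}{d\alpha}P_{\alpha}\big|_{\alpha=1}$ again a signed sum of the same three logarithms, producing the same $\Delta$.
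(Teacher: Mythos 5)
Your proposal is correct and follows essentially the same route as the paper's proof in Appendix~\ref{sec:sandwiched-renyi-a-to-1}: L'H\^{o}pital's rule applied to $\widetilde{\Delta}_{\alpha}=\frac{1}{\alpha-1}\log\widetilde{Q}_{\alpha}$, with $\widetilde{Q}_{1}=1$ and $\frac{d}{d\alpha}\widetilde{Q}_{\alpha}\big|_{\alpha=1}$ evaluating to $\Delta\left(  \rho_{ABC},\tau_{AC},\omega_{C},\theta_{BC}\right)$ through the term-by-term logarithmic derivatives of the sandwiching factors. The only (valid) variation is presentational: where the paper invokes its general matrix-derivative formula (Lemma~\ref{lem:matrix-der-lemma}) and writes out $\frac{d}{d\alpha}Z_{ABC}\left(  \alpha\right)$ in full, you inline the same computation at the single point $\alpha=1$, using the resolvent representation of $\frac{d}{d\alpha}\log M_{\alpha}$ together with $M_{1}=\rho_{ABC}$ to collapse the middle term to $\frac{d}{d\alpha}\operatorname{Tr}\left\{  M_{\alpha}\right\}  \big|_{\alpha=1}$ — a shortcut that is legitimate precisely because of the positive-definiteness hypothesis, which is also what the paper relies on (there via $\lim_{\alpha\rightarrow1}\alpha\left[  Z_{ABC}\left(  \alpha\right)  \right]  ^{\alpha-1}=\left[  Z_{ABC}\left(  1\right)  \right]  ^{0}$).
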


The proof of Theorem~\ref{thm:conv-vN-sandwiched} is very similar to the proof
of Theorem~\ref{thm:conv-vN} and presented in
Appendix~\ref{sec:sandwiched-renyi-a-to-1}.

\begin{corollary}
Let $\rho_{ABC}\in\mathcal{S}\left(  \mathcal{H}_{ABC}\right)  _{++}$. The
following limiting relation holds%
\begin{equation}
\lim_{\alpha\rightarrow1}\widetilde{\Delta}_{\alpha}\left(  \rho_{ABC}%
,\rho_{AC},\rho_{C},\rho_{BC}\right)  =I\left(  A;B|C\right)  _{\rho}.
\end{equation}

\end{corollary}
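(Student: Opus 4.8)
The plan is to obtain this as an immediate specialization of Theorem~\ref{thm:conv-vN-sandwiched}, exactly paralleling the argument used to pass from Theorem~\ref{thm:conv-vN} to its corollary in the unsandwiched case. First I would observe that the hypothesis $\rho_{ABC}\in\mathcal{S}\left(\mathcal{H}_{ABC}\right)_{++}$ guarantees that every marginal inherits strict positive-definiteness: tracing out a subsystem of a full-rank state yields a full-rank state, since for any nonzero vector $|\phi\rangle$ the quantity $\langle\phi|\rho_{AC}|\phi\rangle=\sum_i\left(\langle\phi|\langle i|\right)\rho_{ABC}\left(|i\rangle|\phi\rangle\right)$ is a sum of strictly positive terms. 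Hence $\rho_{AC}\in\mathcal{S}\left(\mathcal{H}_{AC}\right)_{++}$, $\rho_{BC}\in\mathcal{S}\left(\mathcal{H}_{BC}\right)_{++}$, and $\rho_{C}\in\mathcal{S}\left(\mathcal{H}_{C}\right)_{++}$, so the three operators $\tau_{AC}=\rho_{AC}$, $\omega_{C}=\rho_{C}$, $\theta_{BC}=\rho_{BC}$ satisfy the positive-definiteness hypotheses of Theorem~\ref{thm:conv-vN-sandwiched}.

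Next I would apply Theorem~\ref{thm:conv-vN-sandwiched} with precisely this choice of arguments, which yields
\begin{equation}
\lim_{\alpha\rightarrow1}\widetilde{\Delta}_{\alpha}\left(\rho_{ABC},\rho_{AC},\rho_{C},\rho_{BC}\right)=\Delta\left(\rho_{ABC},\rho_{AC},\rho_{C},\rho_{BC}\right).
\end{equation}
Finally I would identify the right-hand side with the conditional mutual information. Since the $\Delta$ quantity in (\ref{eq:Delta-quantity}) depends only on the three operators entering its logarithms and not on the order in which they are listed, the value $\Delta\left(\rho_{ABC},\rho_{AC},\rho_{C},\rho_{BC}\right)$ coincides with $\Delta\left(\rho_{ABC},\rho_{AC},\rho_{BC},\rho_{C}\right)$, which equals $I\left(A;B|C\right)_{\rho}$ by the first equality of Proposition~\ref{thm:CMI-formulations}. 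Chaining these two facts gives the claimed limit.

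There is no substantive obstacle here: all of the analytic work---the differentiability of $\widetilde{Q}_{\alpha}$ in $\alpha$ and the L'H\^{o}pital evaluation of the limit---has already been carried out in Theorem~\ref{thm:conv-vN-sandwiched}. The only points that require a line of verification are the inheritance of strict positive-definiteness by the marginals (needed so that the theorem applies) and the purely cosmetic reordering of the arguments of $\Delta$. It is worth noting that the positive-definiteness hypothesis cannot be dropped at this stage, precisely because Theorem~\ref{thm:conv-vN-sandwiched} is itself stated only for operators in the strictly positive-definite sets; extending the corollary to a general state $\rho_{ABC}$ would first require relaxing the positive-definiteness hypotheses of that theorem, a matter the paper explicitly leaves open.
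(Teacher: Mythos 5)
Your proposal is correct and takes essentially the same route as the paper: invoke Theorem~\ref{thm:conv-vN-sandwiched} with $\tau_{AC}=\rho_{AC}$, $\omega_{C}=\rho_{C}$, $\theta_{BC}=\rho_{BC}$, and recall that $\Delta\left(\rho_{ABC},\rho_{AC},\rho_{C},\rho_{BC}\right)=I\left(A;B|C\right)_{\rho}$. If anything, your explicit verification that the marginals of a positive-definite state are themselves positive definite matches the hypotheses of Theorem~\ref{thm:conv-vN-sandwiched} more directly than the paper's citation of the support condition, and your remark that $\Delta$ is determined by the system labels of its arguments (so the reordering relative to \eqref{eq:Delta-quantity} is harmless) is a legitimate, if pedantic, point the paper glosses over.
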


\begin{proof}
This follows from the fact that supp$\left(  \rho_{ABC}\right)  \subseteq
\ $supp$\left(  \rho_{AC}\right)  $, supp$\left(  \rho_{C}\right)  $,
supp$\left(  \rho_{BC}\right)  $ (see, e.g., \cite[Lemma~B.4.1]{RennerThesis}%
), Theorem~\ref{thm:conv-vN-sandwiched}, and by recalling that $\Delta\left(
\rho_{ABC},\rho_{AC},\rho_{C},\rho_{BC}\right)  =I\left(  A;B|C\right)
_{\rho}$.
\end{proof}

\begin{remark}
\label{conj:uniform-SW-renyi-converge}Let $\rho_{ABC}\in\mathcal{S}\left(
\mathcal{H}_{ABC}\right)  _{++}$, $\tau_{AC}\in\mathcal{S}\left(
\mathcal{H}_{AC}\right)  _{++}$, $\theta_{BC}\in\mathcal{S}\left(
\mathcal{H}_{BC}\right)  _{++}$, and $\omega_{C}\in\mathcal{S}\left(
\mathcal{H}_{C}\right)  _{++}$. If $\widetilde{\Delta}_{\alpha}\left(
\rho_{ABC},\tau_{AC},\omega_{C},\theta_{BC}\right)  $ converges uniformly in
$\tau_{AC}$, $\omega_{C}$, $\theta_{BC}$\ to $\Delta\left(  \rho_{ABC}%
,\tau_{AC},\omega_{C},\theta_{BC}\right)  $ as $\alpha\rightarrow1$, then we
could conclude that all sandwiched R\'{e}nyi generalizations of the
conditional mutual information (as proposed at the beginning of
Section~\ref{sec:sandwiched-Renyi-CMI})\ converge to it in the limit as
$\alpha\rightarrow1$. In particular, uniform convergence implies that
$\widetilde{I}_{\alpha}\left(  A;B|C\right)  _{\rho}$ converges to $I\left(
A;B|C\right)  _{\rho}$ as $\alpha\rightarrow1$.
\end{remark}

\subsection{Monotonicity under local quantum operations on one system}

This section considers monotonicity of the $\widetilde{\Delta}_{\alpha}$
quantities under local quantum operations. For technical reasons, we restrict
the development to positive definite density operators. It remains open to
determine whether the following theorems hold under less restrictive conditions.

\begin{lemma}
\label{lem:monotone-lemma-sandwiched}Let $\rho_{ABC}\in\mathcal{S(H}%
_{ABC})_{++}$, $\tau_{AC}\in\mathcal{S(H}_{AC})_{++}$, $\theta_{BC}%
\in\mathcal{S(H}_{BC})_{++}$, and $\omega_{C}\in\mathcal{S(H}_{C})_{++}$. Let
$\mathcal{N}_{A\rightarrow A^{\prime}}$ and $\mathcal{M}_{B\rightarrow
B^{\prime}}$\ denote quantum operations acting on systems $A$ and $B$,
respectively. Then the following monotonicity inequalities hold for all
$\alpha\in\lbrack1/2,1)\cup(1,\infty)$:%
\begin{align}
\widetilde{\Delta}_{\alpha}\left(  \rho_{ABC},\tau_{AC},\omega_{C},\theta
_{BC}\right)   &  \geq\widetilde{\Delta}_{\alpha}\left(  \mathcal{M}%
_{B\rightarrow B^{\prime}}\left(  \rho_{ABC}\right)  ,\tau_{AC},\omega
_{C},\mathcal{M}_{B\rightarrow B^{\prime}}\left(  \theta_{BC}\right)  \right)
,\label{eq:mono-sandwiched-ops-on-B}\\
\widetilde{\Delta}_{\alpha}\left(  \rho_{ABC},\omega_{C},\tau_{AC},\theta
_{BC}\right)   &  \geq\widetilde{\Delta}_{\alpha}\left(  \mathcal{M}%
_{B\rightarrow B^{\prime}}\left(  \rho_{ABC}\right)  ,\omega_{C},\tau
_{AC},\mathcal{M}_{B\rightarrow B^{\prime}}\left(  \theta_{BC}\right)
\right)  ,\label{eq:other-mono-sandwiched}\\
\widetilde{\Delta}_{\alpha}\left(  \rho_{ABC},\omega_{C},\theta_{BC},\tau
_{AC}\right)   &  \geq\widetilde{\Delta}_{\alpha}\left(  \mathcal{N}%
_{A\rightarrow A^{\prime}}\left(  \rho_{ABC}\right)  ,\omega_{C},\theta
_{BC},\mathcal{N}_{A\rightarrow A^{\prime}}\left(  \tau_{AC}\right)  \right)
,\\
\widetilde{\Delta}_{\alpha}\left(  \rho_{ABC},\theta_{BC},\omega_{C},\tau
_{AC}\right)   &  \geq\widetilde{\Delta}_{\alpha}\left(  \mathcal{N}%
_{A\rightarrow A^{\prime}}\left(  \rho_{ABC}\right)  ,\theta_{BC},\omega
_{C},\mathcal{N}_{A\rightarrow A^{\prime}}\left(  \tau_{AC}\right)  \right)  .
\label{eq:other-mono-sandwiched-3}%
\end{align}

\end{lemma}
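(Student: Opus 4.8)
The plan is to follow the proof of Lemma~\ref{lem:monotone-lemma} essentially verbatim, replacing the Lieb concavity and Ando convexity theorems (which govern the unsandwiched $Q_\alpha$) by a joint concavity/convexity statement for $\widetilde{Q}_\alpha$ supplied by Hiai~\cite{Hiai20131568} and Frank--Lieb~\cite{FL13}. I would prove \eqref{eq:mono-sandwiched-ops-on-B} in detail; the remaining inequalities \eqref{eq:other-mono-sandwiched}--\eqref{eq:other-mono-sandwiched-3} follow by relabeling, since in each of them the operator ``placed in the middle'' ($\theta_{BC}$ in the first two, $\tau_{AC}$ in the last two) is exactly the one transformed by the channel, alongside $\rho_{ABC}$.

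First I would isolate the fixed operators. Setting $K \equiv \tau_{AC}^{(1-\alpha)/2\alpha}\omega_C^{(\alpha-1)/2\alpha}$, definition \eqref{eq:Q_SW_a_1} reads $\widetilde{Q}_\alpha(\rho_{ABC},\tau_{AC},\omega_C,\theta_{BC}) = \operatorname{Tr}\{(\rho_{ABC}^{1/2}K\theta_{BC}^{(1-\alpha)/\alpha}K^{\dagger}\rho_{ABC}^{1/2})^{\alpha}\}$, using $K^{\dagger} = \omega_C^{(\alpha-1)/2\alpha}\tau_{AC}^{(1-\alpha)/2\alpha}$. The central claim is that $(\rho_{ABC},\theta_{BC}) \mapsto \widetilde{Q}_\alpha(\rho_{ABC},\tau_{AC},\omega_C,\theta_{BC})$ is jointly concave for $\alpha \in [1/2,1)$ and jointly convex for $\alpha \in (1,\infty)$. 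This fits the trace functional $(A,B) \mapsto \operatorname{Tr}\{(A^{p/2}KB^{q}K^{\dagger}A^{p/2})^{s}\}$ with $A = \rho_{ABC}$, $B = \theta_{BC}$, $p = 1$, $q = (1-\alpha)/\alpha$, and $s = \alpha$. Since $p+q = 1/\alpha$, one has $s(p+q)=1$ identically, which is exactly the boundary exponent at which Hiai's theorem delivers joint concavity when the inner exponent $q$ lies in $(0,1]$ (that is, $\alpha \in [1/2,1)$) and joint convexity when $q$ lies in $(-1,0)$ (that is, $\alpha \in (1,\infty)$). The hypothesis that all four operators are positive definite guarantees that the generalized inverses are genuine inverses, so these statements apply without support complications.

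Given this concavity/convexity, I would run the Uhlmann argument~\cite{U73} as in Lemma~\ref{lem:monotone-lemma}. One first checks that $\widetilde{Q}_\alpha$ is invariant under an isometry $V$ on system $B$: because $K$ acts trivially on $B$ it commutes with $V$, so $(V\rho_{ABC}V^{\dagger})^{1/2} = V\rho_{ABC}^{1/2}V^{\dagger}$ and $(V\theta_{BC}V^{\dagger})^{(1-\alpha)/\alpha} = V\theta_{BC}^{(1-\alpha)/\alpha}V^{\dagger}$, whence the argument of the $\alpha$-power becomes $V(\rho_{ABC}^{1/2}K\theta_{BC}^{(1-\alpha)/\alpha}K^{\dagger}\rho_{ABC}^{1/2})V^{\dagger}$ and the trace is unchanged. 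By the Stinespring representation theorem~\cite{S55} it then suffices to prove monotonicity under a partial trace over a subsystem $B_2$. Writing that partial trace as a uniform twirl over the Heisenberg--Weyl operators $\{U_{B_2}^{i}\}$ and invoking isometric invariance gives $\widetilde{Q}_\alpha(\rho_{AB_1B_2C},\tau_{AC},\omega_C,\theta_{B_1B_2C}) = \tfrac{1}{d_{B_2}^{2}}\sum_i \widetilde{Q}_\alpha(U^{i}\rho U^{i\dagger},\tau_{AC},\omega_C,U^{i}\theta U^{i\dagger})$. Applying joint concavity (for $\alpha<1$), then using $\tfrac{1}{d_{B_2}^{2}}\sum_i U^{i}(\cdot)U^{i\dagger} = (\cdot)_{B_1C}\otimes\pi_{B_2}$ together with the invariance of $\widetilde{Q}_\alpha$ under the adjoined $\pi_{B_2}$, yields $\widetilde{Q}_\alpha(\rho_{AB_1B_2C},\ldots) \le \widetilde{Q}_\alpha(\rho_{AB_1C},\ldots)$. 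Since $\tfrac{1}{\alpha-1} < 0$ for $\alpha<1$ and $\log$ is increasing, this flips to $\widetilde{\Delta}_\alpha(\rho_{AB_1B_2C},\ldots) \ge \widetilde{\Delta}_\alpha(\rho_{AB_1C},\ldots)$, the desired direction; for $\alpha>1$, convexity reverses the $\widetilde{Q}_\alpha$ inequality while $\tfrac{1}{\alpha-1}>0$ preserves the direction, giving the same conclusion.

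The main obstacle is the joint concavity/convexity of $\widetilde{Q}_\alpha$ at the boundary $s(p+q)=1$. This is precisely the delicate regime that the Lieb/Ando theorems used for the unsandwiched case do not reach, and it is the source of the improved range $\alpha \in [1/2,1)\cup(1,\infty)$; verifying that the hypotheses of Hiai's theorem~\cite{Hiai20131568} hold for the exponents $p=1$, $q=(1-\alpha)/\alpha$, $s=\alpha$ is the only nonroutine step, everything else being the Uhlmann template already employed for Lemma~\ref{lem:monotone-lemma}.
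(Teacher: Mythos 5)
Your reduction machinery is sound and matches the paper's: the isometric invariance of $\widetilde{Q}_{\alpha}$, the Stinespring reduction to a partial trace, the Heisenberg--Weyl twirl, and the sign analysis of the prefactor $1/(\alpha-1)$ are all exactly the Uhlmann template the paper runs, and your treatment of $\alpha\in[1/2,1)$ coincides with the paper's proof: part 1) of \cite[Theorem~1.1]{Hiai20131568} applied with $\Phi=\mathrm{id}$, $p=1$, $\Psi(\cdot)=K(\cdot)K^{\dagger}$, $q=(1-\alpha)/\alpha\in(0,1]$, $s=\alpha=1/(p+q)$, which sits inside the allowed window $1/2\leq s\leq 1/(p+q)$ precisely because $\alpha\geq 1/2$. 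Your observation that the other three inequalities follow with a relabeled middle operator is also how the paper disposes of them.

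The gap is the case $\alpha\in(1,\infty)$. You assert that the same Hiai theorem ``delivers joint convexity when $q$ lies in $(-1,0)$'' at the boundary exponent $s=1/(p+q)$, but the result invoked from \cite{Hiai20131568} is a \emph{concavity} statement whose hypotheses require $0<p,q\leq1$; it does not cover the mixed-sign regime $p=1$, $q=(1-\alpha)/\alpha<0$, and no off-the-shelf joint convexity theorem at this level of generality (with the extra sandwiching maps) was available to cite. This is exactly why the paper's proof for $\alpha>1$ does not cite Hiai at all but instead generalizes the variational technique of \cite[Proposition~3]{FL13}: one establishes the representation formula \eqref{eq:sup-form},
\begin{equation}
\operatorname{Tr}\left\{  \left[  \rho_{ABC}^{1/2}K\left(  \alpha\right)
\rho_{ABC}^{1/2}\right]  ^{\alpha}\right\}  =\sup_{H\geq0}\ \alpha
\operatorname{Tr}\left\{  H\rho_{ABC}\right\}  -\left(  \alpha-1\right)
\operatorname{Tr}\left\{  \left[  H^{1/2}L\left(  \alpha\right)
H^{1/2}\right]  ^{\alpha/\left(  \alpha-1\right)  }\right\}  ,
\end{equation}
with $L(\alpha)=K(\alpha)^{-1}$, proved via the H\"{o}lder-type inequality \eqref{eq:holder-like-ineq} together with its equality condition $X^{p}=Y^{q}$ (the optimizer being $L(\alpha)^{-1/2}[K(\alpha)^{1/2}\rho_{ABC}K(\alpha)^{1/2}]^{\alpha-1}L(\alpha)^{-1/2}$). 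Each term inside the supremum is jointly convex: it is linear in $\rho_{ABC}$, and since $\theta_{BC}$ enters $L(\alpha)$ through $\theta_{BC}^{(\alpha-1)/\alpha}$ with the outer power $\alpha/(\alpha-1)$ its reciprocal, \cite[Lemma~5]{FL13} (concavity of $S\mapsto\operatorname{Tr}\{[T^{\dagger}S^{p}T]^{1/p}\}$ for $-1\leq p\leq1$) makes the second trace concave in $\theta_{BC}$, so its $-(\alpha-1)$ multiple is convex; a supremum of jointly convex functions is jointly convex, after which your Uhlmann argument applies verbatim. The joint convexity you wanted to quote is in fact true (it was established in full generality only later, in the Carlen--Frank--Lieb classification of such trace functionals), but it is not in the cited reference, so as written the single load-bearing step of your $\alpha>1$ argument---which you yourself flag as ``the only nonroutine step''---is unsupported and needs the variational proof above.
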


\begin{proof}
We first focus on establishing the inequality in
\eqref{eq:mono-sandwiched-ops-on-B} for $\alpha\in\lbrack1/2,1)$. From part 1)
of \cite[Theorem~1.1]{Hiai20131568}, we know that the following function is
jointly concave in $S$ and $T$:%
\begin{equation}
\left(  S,T\right)  \in\mathcal{B}(\mathcal{H})_{++}\times\mathcal{B}%
(\mathcal{H})_{++}\mapsto\text{Tr}\left\{  \left[  \Phi\left(  S^{p}\right)
^{1/2}\Psi\left(  T^{q}\right)  \Phi\left(  S^{p}\right)  ^{1/2}\right]
^{s}\right\}  ,
\end{equation}
for strictly positive maps $\Phi\left(  \cdot\right)  $ and $\Psi\left(
\cdot\right)  $, $0<p,q\leq1$, and $1/2\leq s\leq1/\left(  p+q\right)  $. We
can then see that $\widetilde{Q}_{\alpha}\left(  \rho_{ABC},\tau_{AC}%
,\omega_{C},\theta_{BC}\right)  $ is of this form, with%
\begin{align}
\Psi &  =\tau_{AC}^{\left(  1-\alpha\right)  /2\alpha}\omega_{C}^{\left(
\alpha-1\right)  /2\alpha}\left(  \cdot\right)  \omega_{C}^{\left(
\alpha-1\right)  /2\alpha}\tau_{AC}^{\left(  1-\alpha\right)  /2\alpha},\\
q  &  =\frac{1-\alpha}{\alpha},\\
\Phi\left(  \cdot\right)   &  =\text{id},\\
p  &  =1,\\
s  &  =\alpha.
\end{align}
For the range $\alpha\in\lbrack1/2,1)$, we have that $p\in(0,1]$ and
$1/\left(  p+q\right)  =\alpha$, so that the conditions of part 1) of
\cite[Theorem~1.1]{Hiai20131568} are satisfied. We conclude that
$\widetilde{Q}_{\alpha}\left(  \rho_{ABC},\tau_{AC},\omega_{C},\theta
_{BC}\right)  $ is jointly concave in $\theta_{BC}$ and $\rho_{ABC}$. From
this, we can conclude the monotonicity in \eqref{eq:mono-sandwiched-ops-on-B}
for $\alpha\in\lbrack1/2,1)$. A similar proof establishes the inequalities in
(\ref{eq:other-mono-sandwiched})-(\ref{eq:other-mono-sandwiched-3}) for
$\alpha\in\lbrack1/2,1)$.

The proof of \eqref{eq:mono-sandwiched-ops-on-B} for $\alpha\in(1,\infty)$ is
a straightforward generalization of the technique used for
\cite[Proposition~3]{FL13}. To prove \eqref{eq:mono-sandwiched-ops-on-B}, it
suffices to prove that the following function%
\begin{equation}
\left(  \rho_{ABC},\theta_{BC}\right)  \in\mathcal{S}(\mathcal{H}_{ABC}%
)_{++}\times\mathcal{S}(\mathcal{H}_{ABC})_{++}\mapsto\text{Tr}\left\{
\left[  \rho_{ABC}^{1/2}K\left(  \alpha\right)  \rho_{ABC}^{1/2}\right]
^{\alpha}\right\}  \label{eq:trace-function-sw-a1}%
\end{equation}
is jointly convex for $\alpha\in\left(  1,\infty\right)  $, where%
\begin{equation}
K\left(  \alpha\right)  \equiv\tau_{AC}^{\left(  1-\alpha\right)  /2\alpha
}\omega_{C}^{\left(  \alpha-1\right)  /2\alpha}\theta_{BC}^{\left(
1-\alpha\right)  /\alpha}\omega_{C}^{\left(  \alpha-1\right)  /2\alpha}%
\tau_{AC}^{\left(  1-\alpha\right)  /2\alpha}.
\end{equation}
To this end, consider that we can write the trace function in
(\ref{eq:trace-function-sw-a1}) as%
\begin{equation}
\text{Tr}\left\{  \left[  \rho_{ABC}^{1/2}K\left(  \alpha\right)  \rho
_{ABC}^{1/2}\right]  ^{\alpha}\right\}  =\sup_{H\geq0}\alpha\text{Tr}\left\{
H\rho_{ABC}\right\}  -\left(  \alpha-1\right)  \text{Tr}\left\{  \left[
H^{1/2}L\left(  \alpha\right)  H^{1/2}\right]  ^{\alpha/\left(  \alpha
-1\right)  }\right\}  , \label{eq:sup-form}%
\end{equation}
where%
\begin{equation}
L\left(  \alpha\right)  \equiv\tau_{AC}^{\left(  \alpha-1\right)  /2\alpha
}\omega_{C}^{\left(  1-\alpha\right)  /2\alpha}\theta_{BC}^{\left(
\alpha-1\right)  /\alpha}\omega_{C}^{\left(  1-\alpha\right)  /2\alpha}%
\tau_{AC}^{\left(  \alpha-1\right)  /2\alpha},
\end{equation}
so that $\left[  L\left(  \alpha\right)  \right]  ^{-1}=K\left(
\alpha\right)  $. From the fact that the following map%
\begin{equation}
S\in\mathcal{B}(\mathcal{H})_{+}\mapsto\text{Tr}\left\{  \left[  T^{\dag}%
S^{p}T\right]  ^{1/p}\right\}
\end{equation}
is concave in $S$ for a fixed $T\in\mathcal{B}(\mathcal{H})$ and for $-1\leq
p\leq1$ \cite[Lemma~5]{FL13}\ and the representation formula given in
(\ref{eq:sup-form}), we can then conclude that the function in
(\ref{eq:trace-function-sw-a1}) is jointly convex in $\rho_{ABC}$ and
$\theta_{BC}$ for $\alpha\in\left(  1,\infty\right)  $.

So it remains to prove the representation formula in (\ref{eq:sup-form}).
Recall from the alternative proof of \cite[Lemma~4]{FL13}\ that for positive
semi-definite operators $X$ and $Y$ and $1<p,q<\infty$ with $1/p+1/q=1$, the
following inequality holds%
\begin{equation}
\text{Tr}\left\{  XY\right\}  \leq\frac{1}{p}\text{Tr}\left\{  X^{p}\right\}
+\frac{1}{q}\text{Tr}\left\{  Y^{q}\right\}  , \label{eq:holder-like-ineq}%
\end{equation}
with equality holding if $X^{p}=Y^{q}$. To apply the inequality in
(\ref{eq:holder-like-ineq}), we set%
\begin{align}
X  &  =K\left(  \alpha\right)  ^{1/2}\rho_{ABC}K\left(  \alpha\right)
^{1/2},\\
Y  &  =L\left(  \alpha\right)  ^{1/2}HL\left(  \alpha\right)  ^{1/2},\\
p  &  =\alpha,\\
q  &  =\frac{\alpha}{\alpha-1}.
\end{align}
Applying (\ref{eq:holder-like-ineq}), we find that%
\begin{equation}
\text{Tr}\left\{  H\rho_{ABC}\right\}  \leq\frac{1}{\alpha}\text{Tr}\left\{
\left[  \rho_{ABC}^{1/2}K\left(  \alpha\right)  \rho_{ABC}^{1/2}\right]
^{\alpha}\right\}  +\frac{\alpha-1}{\alpha}\text{Tr}\left\{  \left[
H^{1/2}L\left(  \alpha\right)  H^{1/2}\right]  ^{\alpha/\left(  \alpha
-1\right)  }\right\}  ,
\end{equation}
which can be rewritten as%
\begin{equation}
\alpha\text{Tr}\left\{  H\rho_{ABC}\right\}  -\left(  \alpha-1\right)
\text{Tr}\left\{  \left[  H^{1/2}L\left(  \alpha\right)  H^{1/2}\right]
^{\alpha/\left(  \alpha-1\right)  }\right\}  \leq\text{Tr}\left\{  \left[
\rho_{ABC}^{1/2}K\left(  \alpha\right)  \rho_{ABC}^{1/2}\right]  ^{\alpha
}\right\}  .
\end{equation}
From the equality condition $X^{p}=Y^{q}$, we can see that the optimal $H$
attaining equality is%
\begin{equation}
L\left(  \alpha\right)  ^{-1/2}\left[  K\left(  \alpha\right)  ^{1/2}%
\ \rho_{ABC}\ K\left(  \alpha\right)  ^{1/2}\right]  ^{\alpha-1}L\left(
\alpha\right)  ^{-1/2}.
\end{equation}
This proves the representation formula in (\ref{eq:sup-form}). A proof similar
to the above one demonstrates (\ref{eq:other-mono-sandwiched}%
)-(\ref{eq:other-mono-sandwiched-3}) for $\alpha\in\left(  1,\infty\right)  $.
\end{proof}

\begin{remark}
It is open to determine whether Lemma~\ref{lem:monotone-lemma-sandwiched}
applies to $\rho_{ABC}\in\mathcal{S}\left(  \mathcal{H}_{ABC}\right)  $,
$\tau_{AC}\in\mathcal{S}\left(  \mathcal{H}_{AC}\right)  $, $\theta_{BC}%
\in\mathcal{S}\left(  \mathcal{H}_{BC}\right)  $, and $\omega_{C}%
\in\mathcal{S}\left(  \mathcal{H}_{C}\right)  $. That is, it is not clear to
us whether Lemma~\ref{lem:monotone-lemma-sandwiched} can be extended by a
straightforward continuity argument as was the case in \cite[Proposition~3]%
{FL13}, due to the fact that $\widetilde{\Delta}_{\alpha}$ features many
non-commutative matrix multiplications which can interact in non-trivial ways.
\end{remark}

\begin{remark}
Let $\rho_{ABC}\in\mathcal{S}\left(  \mathcal{H}_{ABC}\right)  _{++}$,
$\tau_{AC}\in\mathcal{S}\left(  \mathcal{H}_{AC}\right)  _{++}$, $\theta
_{BC}\in\mathcal{S}\left(  \mathcal{H}_{BC}\right)  _{++}$, and $\omega_{C}%
\in\mathcal{S}\left(  \mathcal{H}_{C}\right)  _{++}$. It is an open question
to determine whether the $\widetilde{\Delta}_{\alpha}$ quantities defined from
\eqref{eq:Q_SW_a_1}, \eqref{eq:Q_SW_a_2}-\eqref{eq:Q_SW_a_6} are monotone
non-increasing with respect to quantum operations acting on either systems $A$
or $B$ for $\alpha\in\lbrack1/2,1)\cup(1,\infty)$. It is also an open question
to determine whether $\widetilde{I}_{\alpha}\left(  A;B|C\right)  _{\rho}$ is
monotone non-increasing with respect to local quantum operations acting on the
system $A$ for $\alpha\in\lbrack1/2,1)\cup(1,\infty)$.
\end{remark}

\begin{corollary}
\label{thm:SW-Renyi-CMI-monotone}Let $\rho_{ABC}\in\mathcal{S}\left(
\mathcal{H}_{ABC}\right)  _{++}$, $\tau_{AC}\in\mathcal{S}\left(
\mathcal{H}_{AC}\right)  _{++}$, $\theta_{BC}\in\mathcal{S}\left(
\mathcal{H}_{BC}\right)  _{++}$, and $\omega_{C}\in\mathcal{S}\left(
\mathcal{H}_{C}\right)  _{++}$. All sandwiched R\'{e}nyi generalizations of
the conditional mutual information derived from%
\begin{equation}
\widetilde{\Delta}_{\alpha}\left(  \rho_{ABC},\tau_{AC},\omega_{C},\theta
_{BC}\right)  ,\ \ \ \ \widetilde{\Delta}_{\alpha}\left(  \rho_{ABC}%
,\omega_{C},\tau_{AC},\theta_{BC}\right)  , \label{eq:mono-B-1-SW}%
\end{equation}
are monotone non-increasing with respect to quantum operations on system $B$,
for $\alpha\in\lbrack1/2,1)\cup(1,\infty)$. All sandwiched R\'{e}nyi
generalizations of the conditional mutual information derived from%
\begin{equation}
\widetilde{\Delta}_{\alpha}\left(  \rho_{ABC},\omega_{C},\theta_{BC},\tau
_{AC}\right)  ,\ \ \ \widetilde{\Delta}_{\alpha}\left(  \rho_{ABC},\theta
_{BC},\omega_{C},\tau_{AC}\right)  , \label{eq:mono-B-4-SW}%
\end{equation}
are monotone non-increasing with respect to quantum operations on system $A$,
for $\alpha\in\lbrack1/2,1)\cup(1,\infty)$.
\end{corollary}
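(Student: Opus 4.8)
The plan is to derive this corollary from Lemma~\ref{lem:monotone-lemma-sandwiched} by precisely the same reasoning used to obtain Corollary~\ref{thm:Renyi-CMI-monotone} from Lemma~\ref{lem:monotone-lemma}: the lemma already supplies the pointwise monotonicity of each $\widetilde{\Delta}_{\alpha}$ quantity under the relevant single-system channel, so all that remains is to commute the channel past the infima and suprema that define the optimized generalizations. Since the analytic content (joint concavity for $\alpha\in[1/2,1)$ via \cite{Hiai20131568} and joint convexity for $\alpha\in(1,\infty)$ via the variational representation of \cite{FL13}) is entirely contained in Lemma~\ref{lem:monotone-lemma-sandwiched}, the present argument is purely a matter of order-of-optimization bookkeeping.

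I would illustrate with the quantity of Definition~\ref{def:renyicmi_sand}, namely $\widetilde{I}_{\alpha}(A;B|C)_{\rho}=\inf_{\sigma_{BC}}\sup_{\omega_{C}}\widetilde{\Delta}_{\alpha}(\rho_{ABC},\rho_{AC},\omega_{C},\sigma_{BC})$, which is built from the first quantity appearing in~\eqref{eq:mono-B-1-SW}, and prove monotonicity under a channel $\mathcal{M}_{B\rightarrow B^{\prime}}$. Fixing a feasible $\sigma_{BC}$ and applying \eqref{eq:mono-sandwiched-ops-on-B} with $\theta_{BC}=\sigma_{BC}$ for each $\omega_{C}$, then taking the supremum over $\omega_{C}$ and lower-bounding the right-hand side by the infimum over output states $\sigma_{B^{\prime}C}^{\prime}$ satisfying the support condition, yields
\begin{align}
\sup_{\omega_{C}}\widetilde{\Delta}_{\alpha}\left(  \rho_{ABC},\rho_{AC},\omega_{C},\sigma_{BC}\right)  &  \geq\sup_{\omega_{C}}\widetilde{\Delta}_{\alpha}\left(  \mathcal{M}_{B\rightarrow B^{\prime}}\left(  \rho_{ABC}\right)  ,\rho_{AC},\omega_{C},\mathcal{M}_{B\rightarrow B^{\prime}}\left(  \sigma_{BC}\right)  \right) \nonumber\\
&  \geq\inf_{\sigma_{B^{\prime}C}^{\prime}}\sup_{\omega_{C}}\widetilde{\Delta}_{\alpha}\left(  \mathcal{M}_{B\rightarrow B^{\prime}}\left(  \rho_{ABC}\right)  ,\rho_{AC},\omega_{C},\sigma_{B^{\prime}C}^{\prime}\right)  =\widetilde{I}_{\alpha}\left(  A;B^{\prime}|C\right)  _{\mathcal{M}\left(  \rho\right)  },
\end{align}
where the last step uses that the $AC$-marginal of $\mathcal{M}_{B\rightarrow B^{\prime}}(\rho_{ABC})$ is still $\rho_{AC}$ and that $\mathcal{M}_{B\rightarrow B^{\prime}}(\sigma_{BC})$ is a feasible output state, its support containing that of $\mathcal{M}_{B\rightarrow B^{\prime}}(\rho_{ABC})$ by \cite[Lemma~B.4.2]{RennerThesis}. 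Since this holds for every feasible $\sigma_{BC}$, taking the infimum of the left-hand side over $\sigma_{BC}$ gives $\widetilde{I}_{\alpha}(A;B|C)_{\rho}\geq\widetilde{I}_{\alpha}(A;B^{\prime}|C)_{\mathcal{M}(\rho)}$. The second quantity in \eqref{eq:mono-B-1-SW} is handled identically using \eqref{eq:other-mono-sandwiched}, while the two quantities in \eqref{eq:mono-B-4-SW} follow by the same chain with the roles of $A$ and $B$ interchanged, invoking \eqref{eq:other-mono-sandwiched-3} (and its companion) and optimizing over the $\tau_{AC}$ slot instead.

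The one point that genuinely requires care is ensuring that every argument fed into Lemma~\ref{lem:monotone-lemma-sandwiched} lies in the positive definite cone, since the lemma is proven only there. This is automatic here: because $\rho_{ABC}\in\mathcal{S}(\mathcal{H}_{ABC})_{++}$, the support condition \eqref{eq:support-condition} forces every feasible $\sigma_{BC}$ and $\omega_{C}$ to be positive definite, and $\rho_{AC}$ is positive definite as the partial trace of a positive definite operator; hence the lemma applies at the first inequality above. On the output side no such requirement is needed, as I only use feasibility (the support condition) of $\mathcal{M}_{B\rightarrow B^{\prime}}(\sigma_{BC})$, which is preserved under the channel even when $\mathcal{M}_{B\rightarrow B^{\prime}}(\rho_{ABC})$ is rank-deficient. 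I therefore do not expect a serious obstacle beyond this verification; the substantive difficulty was already discharged in Lemma~\ref{lem:monotone-lemma-sandwiched}, and in particular the open questions flagged in the preceding remarks (extending that lemma off the positive definite cone, and monotonicity under operations on the \emph{other} system) lie outside what this corollary claims.
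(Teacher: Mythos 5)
Your proposal is correct and takes essentially the same route as the paper: the paper's proof of Corollary~\ref{thm:SW-Renyi-CMI-monotone} simply states that the argument is identical to that of Corollary~\ref{thm:Renyi-CMI-monotone}, which is exactly the inf/sup bookkeeping chain you wrote out from Lemma~\ref{lem:monotone-lemma-sandwiched}. Your additional check that positive definiteness of $\rho_{ABC}$ forces all feasible $\sigma_{BC}$, $\omega_{C}$ (and $\rho_{AC}$) into the positive definite cone, so that the lemma genuinely applies, is a detail the paper leaves implicit, and you handle it correctly.
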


\begin{proof}
The argument is exactly the same as that in the proof of
Corollary~\ref{thm:Renyi-CMI-monotone}.
\end{proof}

\begin{corollary}
\label{cor:SW-Renyi-CMI-positive}We can employ the monotonicity inequalities
from Lemma~\ref{lem:monotone-lemma} to conclude that some R\'{e}nyi
generalizations of the conditional mutual information derived from
\eqref{eq:mono-B-1-SW}-\eqref{eq:mono-B-4-SW} and
Proposition~\ref{thm:CMI-formulations} are non-negative for all $\alpha
\in\lbrack1/2,1)\cup(1,\infty)$. This includes $\widetilde{\Delta}_{\alpha
}\left(  \rho_{ABC},\rho_{AC},\rho_{C},\rho_{BC}\right)  $ and the one from \eqref{eq:sw-Renyi-CMI}.
\end{corollary}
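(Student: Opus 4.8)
The plan is to run the argument of Corollary~\ref{cor:Renyi-CMI-positive} essentially verbatim, replacing the Lieb/Ando-based monotonicity of Lemma~\ref{lem:monotone-lemma} by the sandwiched monotonicity of Lemma~\ref{lem:monotone-lemma-sandwiched} (which is the inequality actually intended here, since the quantities \eqref{eq:mono-B-1-SW}--\eqref{eq:mono-B-4-SW} are the sandwiched ones). I would illustrate the technique on the quantity from Definition~\ref{def:renyicmi_sand}, namely $\widetilde{I}_{\alpha}(A;B|C)_{\rho}=\inf_{\sigma_{BC}}\sup_{\omega_{C}}\widetilde{\Delta}_{\alpha}(\rho_{ABC},\rho_{AC},\omega_{C},\sigma_{BC})$, and note at the end that the remaining cases follow identically.

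First I would fix an arbitrary $\sigma_{BC}$ and apply inequality \eqref{eq:mono-sandwiched-ops-on-B} with $\mathcal{M}_{B\rightarrow B^{\prime}}$ chosen to be the trace-out map on system $B$. Since this map sends $\rho_{ABC}\mapsto\rho_{AC}$ and $\sigma_{BC}\mapsto\sigma_{C}$, it yields $\widetilde{\Delta}_{\alpha}(\rho_{ABC},\rho_{AC},\omega_{C},\sigma_{BC})\geq\widetilde{\Delta}_{\alpha}(\rho_{AC},\rho_{AC},\omega_{C},\sigma_{C})$ for every $\omega_{C}$. Taking the supremum over $\omega_{C}$ on both sides and then keeping only the particular choice $\omega_{C}=\sigma_{C}$ on the right gives the lower bound $\sup_{\omega_{C}}\widetilde{\Delta}_{\alpha}(\rho_{ABC},\rho_{AC},\omega_{C},\sigma_{BC})\geq\widetilde{\Delta}_{\alpha}(\rho_{AC},\rho_{AC},\sigma_{C},\sigma_{C})$.

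The key step is then to evaluate $\widetilde{\Delta}_{\alpha}(\rho_{AC},\rho_{AC},\sigma_{C},\sigma_{C})$ and show it is zero. Writing out $\widetilde{Q}_{\alpha}$ from \eqref{eq:Q_SW_a_1}, the two innermost factors $\sigma_{C}^{(\alpha-1)/2\alpha}\sigma_{C}^{(1-\alpha)/2\alpha}$ collapse to the support projection $\sigma_{C}^{0}$; invoking $\operatorname{supp}(\rho_{AC})\subseteq\operatorname{supp}(\sigma_{C})$ (Lemma~B.4.2 of \cite{RennerThesis}), the product reduces to $\rho_{AC}^{1/2}\rho_{AC}^{(1-\alpha)/2\alpha}=\rho_{AC}^{1/(2\alpha)}$, since the exponents add to $1/(2\alpha)$. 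Hence $\widetilde{Q}_{\alpha}(\rho_{AC},\rho_{AC},\sigma_{C},\sigma_{C})=\bigl\Vert\rho_{AC}^{1/(2\alpha)}\bigr\Vert_{2\alpha}^{2\alpha}=\operatorname{Tr}\{\rho_{AC}\}=1$, so that $\widetilde{\Delta}_{\alpha}=0$. As the resulting bound holds for every admissible $\sigma_{BC}$, the infimum over $\sigma_{BC}$ is non-negative, establishing $\widetilde{I}_{\alpha}(A;B|C)_{\rho}\geq 0$. The quantity $\widetilde{\Delta}_{\alpha}(\rho_{ABC},\rho_{AC},\rho_{C},\rho_{BC})$ is handled even more directly: the same trace-out step gives $\widetilde{\Delta}_{\alpha}(\rho_{ABC},\rho_{AC},\rho_{C},\rho_{BC})\geq\widetilde{\Delta}_{\alpha}(\rho_{AC},\rho_{AC},\rho_{C},\rho_{C})=0$ by the identical collapse computation, and the generalizations coming from \eqref{eq:mono-B-1-SW}--\eqref{eq:mono-B-4-SW} follow by tracing out the appropriate system and using the matching inequality of Lemma~\ref{lem:monotone-lemma-sandwiched}.

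The computation is routine once the monotonicity lemma is available, so I do not expect a genuine obstacle; the one point demanding care is compatibility with the restriction to $\mathcal{S}(\mathcal{H})_{++}$ imposed by Lemma~\ref{lem:monotone-lemma-sandwiched}. I must check that the trace-out map keeps all arguments positive definite (which it does, since marginals of positive definite states are positive definite) and that the support-collapse identities remain valid in this regime (where they are in fact automatic). I anticipate the only real subtlety to be notational bookkeeping: tracking how the operator ordering and the system labels rename under the partial trace so that the collapsed expression is correctly identified as $\operatorname{Tr}\{\rho_{AC}\}=1$.
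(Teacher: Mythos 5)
Your proposal is correct and follows essentially the same route as the paper: the paper's proof of this corollary is just the remark that the argument proceeds as in Corollary~\ref{cor:Renyi-CMI-positive}, and you carry out exactly that argument in the sandwiched setting (trace out $B$, fix $\omega_{C}=\sigma_{C}$, and evaluate $\widetilde{\Delta}_{\alpha}\left(  \rho_{AC},\rho_{AC},\sigma_{C},\sigma_{C}\right)  =0$). You also rightly read the citation of Lemma~\ref{lem:monotone-lemma} in the statement as intending Lemma~\ref{lem:monotone-lemma-sandwiched} and correctly handle the positive-definiteness restriction that lemma imposes.
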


\begin{proof}
The argument proceeds similarly to that in the proof of
Corollary~\ref{cor:Renyi-CMI-positive}.
\end{proof}

\begin{remark}
It is an open question to determine whether all sandwiched R\'{e}nyi
generalizations of the conditional mutual information designed from the
different optimizations in Proposition~\ref{thm:CMI-formulations}\ and the
different orderings in \eqref{eq:Q_SW_a_1},
\eqref{eq:Q_SW_a_2}-\eqref{eq:Q_SW_a_6} are non-negative for $\alpha\in
\lbrack1/2,1)\cup(1,\infty)$.
\end{remark}

\subsection{Max- and min-conditional mutual information}

Let $\rho_{ABC}\in\mathcal{S}\left(  \mathcal{H}_{ABC}\right)  _{++}$,
$\tau_{AC}\in\mathcal{S}\left(  \mathcal{H}_{AC}\right)  _{++}$, $\theta
_{BC}\in\mathcal{S}\left(  \mathcal{H}_{BC}\right)  _{++}$, and $\omega_{C}%
\in\mathcal{S}\left(  \mathcal{H}_{C}\right)  _{++}$. In this section, we
define a max- and min-conditional mutual information from the following two
core quantities:%
\begin{align}
\Delta_{\max}\left(  \rho_{ABC},\tau_{AC},\omega_{C},\theta_{BC}\right)   &
\equiv\log\left\Vert \rho_{ABC}^{1/2}\tau_{AC}^{-1/2}\omega_{C}^{1/2}%
\theta_{BC}^{-1}\omega_{C}^{1/2}\tau_{AC}^{-1/2}\rho_{ABC}^{1/2}\right\Vert
_{\infty}\\
&  =\inf\left\{  \lambda:\rho_{ABC}\leq\exp\left(  \lambda\right)  \tau
_{AC}^{1/2}\omega_{C}^{-1/2}\theta_{BC}\omega_{C}^{-1/2}\tau_{AC}%
^{1/2}\right\}  ,\\
\Delta_{\min}\left(  \rho_{ABC},\tau_{AC},\omega_{C},\theta_{BC}\right)   &
\equiv\widetilde{\Delta}_{1/2}\left(  \rho_{ABC},\tau_{AC},\omega_{C}%
,\theta_{BC}\right) \\
&  =-\log\left\Vert \sqrt{\rho_{ABC}}\sqrt{\tau_{AC}^{1/2}\omega_{C}%
^{-1/2}\theta_{BC}\omega_{C}^{-1/2}\tau_{AC}^{1/2}}\right\Vert _{1}^{2}\\
&  =-\log F\left(  \rho_{ABC},\tau_{AC}^{1/2}\omega_{C}^{-1/2}\theta
_{BC}\omega_{C}^{-1/2}\tau_{AC}^{1/2}\right)  .
\end{align}
Also, the fidelity between $P\in\mathcal{B}\left(  \mathcal{H}\right)  _{+}$
and $Q\in\mathcal{B}\left(  \mathcal{H}\right)  _{+}$ is defined as $F\left(
P,Q\right)  \equiv\Vert\sqrt{P}\sqrt{Q}\Vert_{1}^{2}$. These quantities are
inspired by the max-relative entropy from \cite{D09}, defined as%
\begin{equation}
D_{\max}\left(  \rho\Vert\sigma\right)  \equiv\inf\left\{  \lambda:\rho
\leq\exp\left(  \lambda\right)  \sigma\right\}  ,
\end{equation}
when $\operatorname{supp}\left(  \rho\right)  \subseteq\operatorname{supp}%
\left(  \sigma\right)  $ and $+\infty$ otherwise, and the min-relative entropy
from \cite{KRS09}, defined as%
\begin{equation}
D_{\min}\left(  \rho\Vert\sigma\right)  \equiv\widetilde{D}_{1/2}\left(
\rho\Vert\sigma\right)  =-\log F\left(  \rho,\sigma\right)  .
\end{equation}

We first state a generalization of the result that $\lim_{\alpha
\rightarrow\infty}\widetilde{D}_{\alpha}\left(  \rho\Vert\sigma\right)
=D_{\max}\left(  \rho\Vert\sigma\right)  $ \cite[Theorem~5]{MDSFT13}:

\begin{proposition}
\label{prop:convergence-Delta_max}Let $\rho_{ABC}\in\mathcal{S}\left(
\mathcal{H}_{ABC}\right)  _{++}$, $\tau_{AC}\in\mathcal{S}\left(
\mathcal{H}_{AC}\right)  _{++}$, $\theta_{BC}\in\mathcal{S}\left(
\mathcal{H}_{BC}\right)  _{++}$, and $\omega_{C}\in\mathcal{S}\left(
\mathcal{H}_{C}\right)  _{++}$. Then%
\begin{equation}
\lim_{\alpha\rightarrow\infty}\widetilde{\Delta}_{\alpha}\left(  \rho
_{ABC},\tau_{AC},\omega_{C},\theta_{BC}\right)  =\Delta_{\max}\left(
\rho_{ABC},\tau_{AC},\omega_{C},\theta_{BC}\right)  .
\end{equation}

\end{proposition}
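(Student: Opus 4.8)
The plan is to rewrite $\widetilde{Q}_{\alpha}$ as a Schatten norm and then take the limit by isolating the two separate sources of $\alpha$-dependence. First I would set $K(\alpha)\equiv\tau_{AC}^{(1-\alpha)/2\alpha}\omega_{C}^{(\alpha-1)/2\alpha}\theta_{BC}^{(1-\alpha)/\alpha}\omega_{C}^{(\alpha-1)/2\alpha}\tau_{AC}^{(1-\alpha)/2\alpha}$ and $P_{\alpha}\equiv\rho_{ABC}^{1/2}K(\alpha)\rho_{ABC}^{1/2}$. Since the four input operators are positive definite by hypothesis, $K(\alpha)$ is positive semidefinite, as it has the form $M^{\dagger}\theta_{BC}^{(1-\alpha)/\alpha}M$ with $M=\omega_{C}^{(\alpha-1)/2\alpha}\tau_{AC}^{(1-\alpha)/2\alpha}$, and hence so is $P_{\alpha}$. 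Recalling from (\ref{eq:a-norm}) that $\left\Vert P\right\Vert_{\alpha}^{\alpha}=\operatorname{Tr}\{P^{\alpha}\}$ for $P\geq0$, the definition (\ref{eq:Delta_SW_a_1}) becomes $\widetilde{\Delta}_{\alpha}(\rho_{ABC},\tau_{AC},\omega_{C},\theta_{BC})=\tfrac{\alpha}{\alpha-1}\log\left\Vert P_{\alpha}\right\Vert_{\alpha}$.

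Next I would identify the candidate limit. As $\alpha\rightarrow\infty$ the exponents converge, $(1-\alpha)/2\alpha\rightarrow-1/2$, $(\alpha-1)/2\alpha\rightarrow1/2$, and $(1-\alpha)/\alpha\rightarrow-1$, so $K(\alpha)\rightarrow K(\infty)\equiv\tau_{AC}^{-1/2}\omega_{C}^{1/2}\theta_{BC}^{-1}\omega_{C}^{1/2}\tau_{AC}^{-1/2}$. Because every operator here is positive definite and fixed while only the real exponents move, the map $s\mapsto X^{s}$ is continuous in the exponent on this domain, so $K(\alpha)\rightarrow K(\infty)$ and therefore $P_{\alpha}\rightarrow P_{\infty}\equiv\rho_{ABC}^{1/2}K(\infty)\rho_{ABC}^{1/2}$ in operator norm. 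One checks directly that $\log\left\Vert P_{\infty}\right\Vert_{\infty}$ is exactly $\Delta_{\max}(\rho_{ABC},\tau_{AC},\omega_{C},\theta_{BC})$, and that $\tfrac{\alpha}{\alpha-1}\rightarrow1$; this parallels the relation $\lim_{\alpha\rightarrow\infty}\widetilde{D}_{\alpha}=D_{\max}$ of \cite[Theorem~5]{MDSFT13}.

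The crux is then to show $\left\Vert P_{\alpha}\right\Vert_{\alpha}\rightarrow\left\Vert P_{\infty}\right\Vert_{\infty}$, which is the main obstacle precisely because both the Schatten index and the operator move with $\alpha$ at once, so neither the continuity of $\alpha\mapsto\left\Vert P\right\Vert_{\alpha}$ nor the continuity of $P\mapsto\left\Vert P\right\Vert_{\infty}$ can be applied in isolation. I would handle this by a squeeze argument built from two elementary facts valid for $\alpha\geq1$ on a space of dimension $d$: the norm equivalence $\left\Vert P\right\Vert_{\infty}\leq\left\Vert P\right\Vert_{\alpha}\leq d^{1/\alpha}\left\Vert P\right\Vert_{\infty}$, and the triangle inequality $\left\Vert P_{\alpha}-P_{\infty}\right\Vert_{\alpha}\leq d^{1/\alpha}\left\Vert P_{\alpha}-P_{\infty}\right\Vert_{\infty}$. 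Combining these yields $\left\Vert P_{\infty}\right\Vert_{\infty}\leq\left\Vert P_{\alpha}\right\Vert_{\alpha}+d^{1/\alpha}\left\Vert P_{\alpha}-P_{\infty}\right\Vert_{\infty}$ together with $\left\Vert P_{\alpha}\right\Vert_{\alpha}\leq d^{1/\alpha}(\left\Vert P_{\infty}\right\Vert_{\infty}+\left\Vert P_{\alpha}-P_{\infty}\right\Vert_{\infty})$. Since $d^{1/\alpha}\rightarrow1$ and $\left\Vert P_{\alpha}-P_{\infty}\right\Vert_{\infty}\rightarrow0$, the squeeze theorem gives $\left\Vert P_{\alpha}\right\Vert_{\alpha}\rightarrow\left\Vert P_{\infty}\right\Vert_{\infty}>0$, where positivity of $P_{\infty}$ follows from strict positivity of the inputs and ensures the logarithm is finite. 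Taking logarithms and using $\tfrac{\alpha}{\alpha-1}\rightarrow1$ then gives $\lim_{\alpha\rightarrow\infty}\widetilde{\Delta}_{\alpha}=\log\left\Vert P_{\infty}\right\Vert_{\infty}=\Delta_{\max}$, and the identical reasoning applies to the other orderings defined from (\ref{eq:Q_SW_a_2})--(\ref{eq:Q_SW_a_6}).
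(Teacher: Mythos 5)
Your proof is correct and takes essentially the same route as the paper's: both rewrite $\widetilde{\Delta}_{\alpha}$ as $\tfrac{\alpha}{\alpha-1}\log\Vert P_{\alpha}\Vert_{\alpha}$, use $P_{\alpha}\rightarrow P_{\infty}$ in operator norm, and control $\Vert P_{\alpha}\Vert_{\alpha}$ via the (reverse) triangle inequality combined with a dimension-dependent comparison between the $\alpha$-norm and the operator norm. Your explicit $d^{1/\alpha}$ squeeze is merely a slightly tidier packaging of the paper's upper and lower bounds, which carry the factor $\dim\left(\mathcal{H}_{ABC}\right)$ and invoke $\lim_{\alpha\rightarrow\infty}\Vert P_{\infty}\Vert_{\alpha}=\Vert P_{\infty}\Vert_{\infty}$ instead.
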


The idea for the proof is the same as that for the proof of \cite[Theorem~5]%
{MDSFT13}, and we provide it in Appendix~\ref{sec:convergence-Delta_max}.
Next, we turn to monotonicity of $\Delta_{\max}$ under local quantum operations:

\begin{proposition}
Let $\rho_{ABC}\in\mathcal{S}\left(  \mathcal{H}_{ABC}\right)  _{++}$,
$\tau_{AC}\in\mathcal{S}\left(  \mathcal{H}_{AC}\right)  _{++}$, $\theta
_{BC}\in\mathcal{S}\left(  \mathcal{H}_{BC}\right)  _{++}$, and $\omega_{C}%
\in\mathcal{S}\left(  \mathcal{H}_{C}\right)  _{++}$. Let $\mathcal{N}%
_{A\rightarrow A^{\prime}}$ and $\mathcal{M}_{B\rightarrow B^{\prime}}%
$\ denote local quantum operations acting on systems $A$ and $B$,
respectively. Then the following monotonicity inequalities hold:%
\begin{align}
\Delta_{\max}\left(  \rho_{ABC},\tau_{AC},\omega_{C},\theta_{BC}\right)   &
\geq\Delta_{\max}\left(  \mathcal{M}_{B\rightarrow B^{\prime}}\left(
\rho_{ABC}\right)  ,\tau_{AC},\omega_{C},\mathcal{M}_{B\rightarrow B^{\prime}%
}\left(  \theta_{BC}\right)  \right)  ,\label{eq:Delta_max_mono}\\
\Delta_{\max}\left(  \rho_{ABC},\omega_{C},\tau_{AC},\theta_{BC}\right)   &
\geq\Delta_{\max}\left(  \mathcal{M}_{B\rightarrow B^{\prime}}\left(
\rho_{ABC}\right)  ,\omega_{C},\tau_{AC},\mathcal{M}_{B\rightarrow B^{\prime}%
}\left(  \theta_{BC}\right)  \right)  ,\label{eq:Delta_max_mono-2}\\
\Delta_{\max}\left(  \rho_{ABC},\omega_{C},\theta_{BC},\tau_{AC}\right)   &
\geq\Delta_{\max}\left(  \mathcal{N}_{A\rightarrow A^{\prime}}\left(
\rho_{ABC}\right)  ,\omega_{C},\theta_{BC},\mathcal{N}_{A\rightarrow
A^{\prime}}\left(  \tau_{AC}\right)  \right)  ,\\
\Delta_{\max}\left(  \rho_{ABC},\theta_{BC},\omega_{C},\tau_{AC}\right)   &
\geq\Delta_{\max}\left(  \mathcal{N}_{A\rightarrow A^{\prime}}\left(
\rho_{ABC}\right)  ,\theta_{BC},\omega_{C},\mathcal{N}_{A\rightarrow
A^{\prime}}\left(  \tau_{AC}\right)  \right)  . \label{eq:Delta_max_mono-4}%
\end{align}

\end{proposition}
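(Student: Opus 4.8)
The plan is to recognize each $\Delta_{\max}$ quantity as a genuine max-relative entropy and then combine the monotonicity of $D_{\max}$ under CPTP maps with a covariance (commutation) property. Concretely, the second line of the definition of $\Delta_{\max}$ exhibits
\[
\Delta_{\max}\left(\rho_{ABC},\tau_{AC},\omega_{C},\theta_{BC}\right)=D_{\max}\left(\rho_{ABC}\middle\Vert \tau_{AC}^{1/2}\omega_{C}^{-1/2}\theta_{BC}\omega_{C}^{-1/2}\tau_{AC}^{1/2}\right),
\]
and the analogous identity holds for each of the other three orderings, writing $\Delta_{\max}$ as $D_{\max}\left(\rho_{ABC}\Vert\sigma\right)$ with $\sigma$ the indicated operator sandwich (whose innermost, non-inverted factor is $\theta_{BC}$ in the first two cases and $\tau_{AC}$ in the last two). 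Since all four operators are positive definite by hypothesis, $\sigma$ is positive definite and $\operatorname{supp}\left(\rho_{ABC}\right)\subseteq\operatorname{supp}\left(\sigma\right)$ holds trivially, so each such $D_{\max}$ is finite and the manipulations below are legitimate.

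First I would establish \eqref{eq:Delta_max_mono}. Writing $\sigma=X\theta_{BC}X^{\dag}$ with $X\equiv\tau_{AC}^{1/2}\omega_{C}^{-1/2}$ acting only on systems $AC$, I invoke monotonicity of the max-relative entropy under the CPTP map $\mathcal{M}_{B\rightarrow B^{\prime}}$ (which follows immediately from its definition: $\rho\leq\exp(\lambda)\sigma$ implies $\mathcal{M}(\rho)\leq\exp(\lambda)\mathcal{M}(\sigma)$ by positivity of the map, see \cite{D09}), giving
\[
D_{\max}\left(\rho_{ABC}\Vert\sigma\right)\geq D_{\max}\left(\mathcal{M}_{B\rightarrow B^{\prime}}\left(\rho_{ABC}\right)\,\middle\Vert\,\mathcal{M}_{B\rightarrow B^{\prime}}\left(\sigma\right)\right).
\]
The crucial step is the covariance identity $\mathcal{M}_{B\rightarrow B^{\prime}}\left(X\theta_{BC}X^{\dag}\right)=X\,\mathcal{M}_{B\rightarrow B^{\prime}}\left(\theta_{BC}\right)\,X^{\dag}$, valid because $X$ acts as $X\otimes I_{B}$ and therefore commutes with every Kraus operator of $\mathcal{M}_{B\rightarrow B^{\prime}}$ (which act only on $B$). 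Substituting this into the right-hand side and re-identifying it as a $\Delta_{\max}$ quantity yields exactly \eqref{eq:Delta_max_mono}.

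The remaining three inequalities follow from the identical three-step template, the only bookkeeping being to match the map applied to the innermost factor of the sandwich while checking that the outer conjugating factors live on the complementary systems. In \eqref{eq:Delta_max_mono-2} the innermost factor is again $\theta_{BC}$ and the outer factor $\omega_{C}^{-1/2}\tau_{AC}^{1/2}$ acts on $AC$, so $\mathcal{M}_{B\rightarrow B^{\prime}}$ again passes through the conjugation; in the third inequality and in \eqref{eq:Delta_max_mono-4} the innermost factor is $\tau_{AC}$ and the outer factors live on $BC$, so one instead applies $\mathcal{N}_{A\rightarrow A^{\prime}}$ and uses the corresponding covariance. The step I expect to be the main (though modest) obstacle is verifying the covariance identity cleanly at the operator level rather than merely for channels: one must confirm that conjugation by an $AC$-operator genuinely commutes with a $B$-channel (and symmetrically an $BC$-operator with an $A$-channel), which is immediate from a Kraus decomposition since $X\otimes I_{B}$ commutes with $I_{AC}\otimes K_{k}$. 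The only additional care needed is the positive-definiteness bookkeeping, ensuring the inverse powers in $\sigma$ are well defined and that $\operatorname{supp}\left(\mathcal{M}\left(\rho_{ABC}\right)\right)\subseteq\operatorname{supp}\left(\mathcal{M}\left(\sigma\right)\right)$ survives the channel, which follows from monotonicity of support under CPTP maps.
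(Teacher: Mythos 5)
Your proposal is correct and is essentially the paper's own argument: the paper likewise sets $\lambda^{\ast}=\Delta_{\max}$, applies the CPTP map to both sides of the operator inequality $\rho_{ABC}\leq\exp(\lambda^{\ast})\,\tau_{AC}^{1/2}\omega_{C}^{-1/2}\theta_{BC}\omega_{C}^{-1/2}\tau_{AC}^{1/2}$, and uses exactly the commutation of the $B$-channel (resp.\ $A$-channel) with conjugation by $AC$-operators (resp.\ $BC$-operators) to land back in $\Delta_{\max}$ form. Your packaging of these steps as monotonicity of $D_{\max}$ plus a covariance identity is just a restatement of the same mechanics, since, as you note, $D_{\max}$ monotonicity is itself proved by applying the positive map to the defining operator inequality.
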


\begin{proof}
We begin by establishing\ (\ref{eq:Delta_max_mono}). Let $\lambda^{\ast
}=\Delta_{\max}\left(  \rho_{ABC},\tau_{AC},\omega_{C},\theta_{BC}\right)  $,
so that%
\begin{equation}
\rho_{ABC}\leq\exp\left(  \lambda^{\ast}\right)  \tau_{AC}^{1/2}\omega
_{C}^{-1/2}\theta_{BC}\omega_{C}^{-1/2}\tau_{AC}^{1/2}%
.\label{eq:start-op-ineq-mono}%
\end{equation}
For any CPTP\ map $\mathcal{M}_{B\rightarrow B^{\prime}}$, the inequality in
(\ref{eq:start-op-ineq-mono}) implies the following operator inequality%
\begin{equation}
\mathcal{M}_{B\rightarrow B^{\prime}}\left(  \rho_{ABC}\right)  \leq
\exp\left(  \lambda^{\ast}\right)  \tau_{AC}^{1/2}\omega_{C}^{-1/2}%
\mathcal{M}_{B\rightarrow B^{\prime}}\left(  \theta_{BC}\right)  \omega
_{C}^{-1/2}\tau_{AC}^{1/2}.
\end{equation}
From the definition of $\Delta_{\max}$, we can conclude that%
\begin{equation}
\lambda^{\ast}\geq\Delta_{\max}\left(  \mathcal{M}_{B\rightarrow B^{\prime}%
}\left(  \rho_{ABC}\right)  ,\tau_{AC},\omega_{C},\mathcal{M}_{B\rightarrow
B^{\prime}}\left(  \theta_{BC}\right)  \right)  ,
\end{equation}
which is equivalent to (\ref{eq:Delta_max_mono}). The inequalities in
(\ref{eq:Delta_max_mono-2})-(\ref{eq:Delta_max_mono-4}) follow from a similar
line of reasoning.
\end{proof}

We define a max-conditional mutual information as follows:%
\begin{equation}
I_{\max}\left(  A;B|C\right)  _{\rho|\rho}\equiv\Delta_{\max}\left(
\rho_{ABC},\rho_{AC},\rho_{C},\rho_{BC}\right)  . \label{eq:Imax}%
\end{equation}
This generalizes the max-mutual information, defined in \cite{BCR09}, and its
variations~\cite{CBR14}. We define a min-conditional mutual information as
follows:%
\begin{equation}
I_{\min}\left(  A;B|C\right)  _{\rho|\rho}\equiv\Delta_{\min}\left(
\rho_{ABC},\rho_{AC},\rho_{C},\rho_{BC}\right)  . \label{eq:Imin}%
\end{equation}
The forms given above seem quite natural, as the operators $\rho_{AC}%
^{1/2}\rho_{C}^{-1/2}\rho_{BC}\rho_{C}^{-1/2}\rho_{AC}^{1/2}$ appear in our
review of quantum Markov states in Section~\ref{sec:notation} (however, note
again that this operator is not a Markov state unless $\rho_{ABC}=\rho
_{AC}^{1/2}\rho_{C}^{-1/2}\rho_{BC}\rho_{C}^{-1/2}\rho_{AC}^{1/2}$). Note that
other min- and max-conditional mutual information quantities are possible by
considering the other orderings and optimizations for the last three arguments
to $\Delta_{\max}$ and $\Delta_{\min}$, but it is our impression that the
above choice is natural.

\section{Duality}

\label{sec:duality}A fundamental property of the conditional mutual
information is a duality relation:\ For a four-party pure state $\psi_{ABCD}$,
the following equality holds%
\begin{equation}
I\left(  A;B|C\right)  _{\psi}=I\left(  A;B|D\right)  _{\psi}.
\label{eq:duality}%
\end{equation}
This can easily be verified by considering Schmidt decompositions of
$\psi_{ABCD}$ for the different possible bipartite cuts of $ABCD$ (see
\cite{DY08,YD09} for an operational interpretation of this duality in terms of
the state redistribution protocol). Furthermore, since the conditional mutual
information is symmetric under the exchange of $A$ and $B$, we have the
following equalities:%
\begin{equation}
I\left(  B;A|C\right)  _{\psi}=I\left(  A;B|C\right)  _{\psi}=I\left(
A;B|D\right)  _{\psi}=I\left(  B;A|D\right)  _{\psi}.
\end{equation}

In this section, we prove that the R\'{e}nyi conditional mutual information in
Definition~\ref{def:Renyi-CMI}\ and the sandwiched quantity in
Definition~\ref{def:renyicmi_sand} obey a duality relation of the above form.
However, note that other (but not all) variations satisfy duality as well. In
order to prove these results, we make use of the following standard lemma:

\begin{lemma}
\label{lem:dual_simple}For any bipartite pure state $\psi_{AB}$, any Hermitian
operator $M_{A}$ acting on system $A$, and the maximally entangled vector
$\left\vert \Gamma\right\rangle _{AB}\equiv\sum_{j}\left\vert j\right\rangle
_{A}\left\vert j\right\rangle _{B}$ (with $\left\{  \left\vert j\right\rangle
_{A}\right\}  $ and $\left\{  \left\vert j\right\rangle _{B}\right\}  $
orthonormal bases), we have that%
\begin{align}
\left(  M_{A}\otimes I_{B}\right)  \left\vert \Gamma\right\rangle _{AB}  &
=\left(  I_{A}\otimes M_{B}^{T}\right)  \left\vert \Gamma\right\rangle
_{AB},\label{eq:mes_trick}\\
\psi_{A}\left\vert \psi\right\rangle _{AB}  &  =\psi_{B}\left\vert
\psi\right\rangle _{AB},\label{eq:for-duality}\\
\langle\psi|M_{A}\otimes I_{B}|\psi\rangle_{AB}  &  =\langle\psi|I_{A}\otimes
M_{B}^{T}|\psi\rangle_{AB}, \label{eq:transpose-trick-var}%
\end{align}
where the transpose is with respect to the Schmidt basis.
\end{lemma}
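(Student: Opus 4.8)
The plan is to expand everything in a Schmidt basis of $\psi_{AB}$ and verify each of the three identities by direct computation, since they are all elementary consequences of the ``ricochet'' (transpose) property of the maximally entangled vector. Write the Schmidt decomposition as $\left\vert \psi\right\rangle _{AB}=\sum_{i}\sqrt{\lambda_{i}}\,\left\vert i\right\rangle _{A}\left\vert i\right\rangle _{B}$, with $\{\left\vert i\right\rangle _{A}\}$ and $\{\left\vert i\right\rangle _{B}\}$ the orthonormal Schmidt bases, and take these same bases as the orthonormal bases defining $\left\vert \Gamma\right\rangle _{AB}=\sum_{i}\left\vert i\right\rangle _{A}\left\vert i\right\rangle _{B}$ and the transpose $(\cdot)^{T}$.

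First I would establish \eqref{eq:mes_trick}. Expanding the left-hand side gives $(M_{A}\otimes I_{B})\left\vert \Gamma\right\rangle =\sum_{j,k}\left\vert k\right\rangle _{A}\langle k|M|j\rangle\left\vert j\right\rangle _{B}$, while the right-hand side gives $(I_{A}\otimes M_{B}^{T})\left\vert \Gamma\right\rangle =\sum_{j,k}\left\vert j\right\rangle _{A}\langle j|M|k\rangle\left\vert k\right\rangle _{B}$, using $\langle k|M^{T}|j\rangle=\langle j|M|k\rangle$. Relabelling the dummy indices $j\leftrightarrow k$ in one of the two expressions shows that they coincide; this step uses only the definition of the transpose in the chosen basis and works for an arbitrary (not necessarily Hermitian) operator $M$.

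Next, for \eqref{eq:for-duality} I would use that $\psi_{A}=\sum_{i}\lambda_{i}\left\vert i\right\rangle _{A}\langle i|_{A}$ and $\psi_{B}=\sum_{i}\lambda_{i}\left\vert i\right\rangle _{B}\langle i|_{B}$ are both diagonal in the Schmidt bases. Applying each to $\left\vert \psi\right\rangle $ and invoking orthonormality yields $\psi_{A}\left\vert \psi\right\rangle =\psi_{B}\left\vert \psi\right\rangle =\sum_{i}\lambda_{i}^{3/2}\left\vert i\right\rangle _{A}\left\vert i\right\rangle _{B}$, which proves the claim. Finally, \eqref{eq:transpose-trick-var} follows by taking an inner product: since $\langle\psi|M_{A}\otimes I_{B}|\psi\rangle=\text{Tr}\{M_{A}\psi_{A}\}$ and $\langle\psi|I_{A}\otimes M_{B}^{T}|\psi\rangle=\text{Tr}\{M_{B}^{T}\psi_{B}\}$, and both traces evaluate to $\sum_{i}\lambda_{i}\langle i|M|i\rangle$ (the diagonal matrix elements being unchanged under transpose), the two sides agree. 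Alternatively, writing $\left\vert \psi\right\rangle =(\sqrt{\psi_{A}}\otimes I_{B})\left\vert \Gamma\right\rangle $ and inserting \eqref{eq:mes_trick} applied to the Hermitian operator $\sqrt{\psi_{A}}\,M_{A}\sqrt{\psi_{A}}$ gives the same conclusion.

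There is no substantial obstacle here; the only point requiring care is fixing the transpose convention consistently with the Schmidt bases used to define $\left\vert \Gamma\right\rangle $, so that the index relabelling in \eqref{eq:mes_trick} is legitimate and the diagonal-element invariance used for \eqref{eq:transpose-trick-var} is correctly interpreted.
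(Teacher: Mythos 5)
Your proof is correct: the paper states this as a ``standard lemma'' and gives no proof of its own, and your direct Schmidt-basis computation (the ricochet/transpose trick for \eqref{eq:mes_trick}, diagonality of $\psi_A$ and $\psi_B$ in the Schmidt bases for \eqref{eq:for-duality}, and the reduction of \eqref{eq:transpose-trick-var} to $\operatorname{Tr}\{M_A\psi_A\}=\operatorname{Tr}\{M_B^T\psi_B\}$) is exactly the standard argument the authors are implicitly invoking. You also handle the one genuine subtlety correctly, namely fixing the transpose convention to the same Schmidt bases that define $\left\vert\Gamma\right\rangle$, and your observation that \eqref{eq:mes_trick} needs no Hermiticity of $M$ is accurate.
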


\begin{theorem}
\label{thm:duality-renyi}The following duality relation holds for all
$\alpha\in(0,1)\cup(1,\infty)$ for a pure four-party state $\psi_{ABCD}$:%
\begin{equation}
I_{\alpha}\left(  A;B|C\right)  _{\psi}=I_{\alpha}\left(  B;A|D\right)
_{\psi}.
\end{equation}

\end{theorem}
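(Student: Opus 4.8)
The plan is to reduce the identity to an isospectrality statement for two explicit operators and then to establish that isospectrality using purity of $\psi_{ABCD}$ together with the transpose/purity relations collected in Lemma~\ref{lem:dual_simple}. First I would apply Proposition~\ref{prop:sibson} to both sides. Instantiated at the marginals of $\psi$, it gives $I_{\alpha}(A;B|C)_{\psi}=\tfrac{\alpha}{\alpha-1}\log\operatorname{Tr}\{M_{BC}^{1/\alpha}\}$ with $M_{BC}=\psi_{C}^{(\alpha-1)/2}\operatorname{Tr}_{A}\{\psi_{AC}^{(1-\alpha)/2}\psi_{ABC}^{\alpha}\psi_{AC}^{(1-\alpha)/2}\}\psi_{C}^{(\alpha-1)/2}$, and (swapping $A\leftrightarrow B$ and $C\to D$) $I_{\alpha}(B;A|D)_{\psi}=\tfrac{\alpha}{\alpha-1}\log\operatorname{Tr}\{N_{AD}^{1/\alpha}\}$ with $N_{AD}=\psi_{D}^{(\alpha-1)/2}\operatorname{Tr}_{B}\{\psi_{BD}^{(1-\alpha)/2}\psi_{ABD}^{\alpha}\psi_{BD}^{(1-\alpha)/2}\}\psi_{D}^{(\alpha-1)/2}$. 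Since the prefactor $\tfrac{\alpha}{\alpha-1}$ and the power $1/\alpha$ are common, and since the infimum in Definition~\ref{def:Renyi-CMI} is attained at the unique minimizer of Proposition~\ref{prop:sibson} (so there is no ambiguity in the optimizations), it suffices to prove that the positive operators $M_{BC}$ and $N_{AD}$ have the same nonzero spectrum.

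The engine is purity. Across the three bipartitions $ABC\,|\,D$, $AC\,|\,BD$ and $C\,|\,ABD$, relation \eqref{eq:for-duality} (and its adjoint) yields $f(\psi_{ABC})|\psi\rangle=f(\psi_{D})|\psi\rangle$, $f(\psi_{AC})|\psi\rangle=f(\psi_{BD})|\psi\rangle$ and $f(\psi_{C})|\psi\rangle=f(\psi_{ABD})|\psi\rangle$ for every function $f$. I would use these to transfer each marginal appearing in $M_{BC}$ to its complement, exploiting that marginals on disjoint subsystems commute so that a given factor can be slid next to $|\psi\rangle$ or $\langle\psi|$ before it is swapped. Writing $M_{BC}=\psi_{C}^{(\alpha-1)/2}\operatorname{Tr}_{A}\{YY^{\dagger}\}\psi_{C}^{(\alpha-1)/2}$ with $Y=\psi_{AC}^{(1-\alpha)/2}\psi_{ABC}^{\alpha/2}$, the decisive move is to combine these substitutions with the Gram-matrix identity $\operatorname{Tr}\{(YY^{\dagger})^{s}\}=\operatorname{Tr}\{(Y^{\dagger}Y)^{s}\}$: the flip $YY^{\dagger}\leftrightarrow Y^{\dagger}Y$ interchanges the outermost and innermost factors, which is precisely what is needed to reconcile the exponents on the two sides.

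I expect the main obstacle to be twofold. First, complementation alone does not match the operators by inspection: it sends $\{\psi_{ABC}^{\alpha},\psi_{AC}^{(1-\alpha)/2},\psi_{C}^{(\alpha-1)/2}\}$ to $\{\psi_{D}^{\alpha},\psi_{BD}^{(1-\alpha)/2},\psi_{ABD}^{(\alpha-1)/2}\}$, whereas $N_{AD}$ is built from $\psi_{ABD}^{\alpha}$, $\psi_{BD}^{(1-\alpha)/2}$, $\psi_{D}^{(\alpha-1)/2}$, so the exponents carried by $\psi_{D}$ and $\psi_{ABD}$ are interchanged; it is exactly the Gram-flip together with trace cyclicity that must absorb this interchange. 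Second, and more seriously, both $M_{BC}$ and $N_{AD}$ contain a partial trace sitting under the fractional power $1/\alpha$, which blocks a purely linear vector manipulation. To handle this I would linearize the partial trace by introducing an auxiliary maximally entangled vector and invoking the transpose trick \eqref{eq:mes_trick}, \eqref{eq:transpose-trick-var}, so that $\operatorname{Tr}_{A}$ becomes an inner product through which the purity substitutions can be pushed; the remaining bookkeeping is to check that the transposes incurred across the several incompatible Schmidt decompositions combine into a single global transpose, which preserves both trace and spectrum and therefore leaves $\operatorname{Tr}\{M_{BC}^{1/\alpha}\}$ unchanged. Carrying this through in the full range $\alpha\in(0,1)\cup(1,\infty)$ then gives $\operatorname{Tr}\{M_{BC}^{1/\alpha}\}=\operatorname{Tr}\{N_{AD}^{1/\alpha}\}$ and hence the stated duality.
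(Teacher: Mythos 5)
Your plan is correct and, at its core, it is the paper's proof: invoke Proposition~\ref{prop:sibson} on both sides, reduce the duality to the statement that $M_{BC}$ and $N_{AD}$ have the same nonzero spectrum, and obtain that isospectrality from the purity relations \eqref{eq:for-duality} combined with commutation of marginals on disjoint systems. Where you diverge is in the execution of the isospectrality step, and there the paper's route is lighter than what you anticipate: writing $\psi_{ABC}^{\alpha}=\psi_{ABC}^{(\alpha-1)/2}\,\psi_{ABC}\,\psi_{ABC}^{(\alpha-1)/2}$ and $\psi_{ABC}=\operatorname{Tr}_{D}\{\psi_{ABCD}\}$ exhibits the entire operator under the $1/\alpha$ power as $M_{BC}=\operatorname{Tr}_{AD}\{K\,\psi_{ABCD}\,K^{\dagger}\}$ with $K=\psi_{C}^{(\alpha-1)/2}\psi_{AC}^{(1-\alpha)/2}\psi_{ABC}^{(\alpha-1)/2}$, i.e.\ as a reduction of the rank-one operator $K|\psi\rangle\langle\psi|K^{\dagger}$; the Schmidt decomposition of the vector $K|\psi\rangle$ across the cut $BC|AD$ then yields isospectrality with $\operatorname{Tr}_{BC}\{K\,\psi_{ABCD}\,K^{\dagger}\}$ in one stroke, so the fractional power never has to be opened and no linearization of the partial trace (maximally entangled auxiliary vector, transpose bookkeeping) is needed --- in the paper those tools appear only in the proof of the \emph{sandwiched} duality, where independently optimized states force the transpose trick of \eqref{eq:mes_trick}. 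Your two anticipated obstacles then dissolve simultaneously: the purity swaps applied successively to the innermost factor give $K|\psi\rangle=\psi_{D}^{(\alpha-1)/2}\psi_{BD}^{(1-\alpha)/2}\psi_{ABD}^{(\alpha-1)/2}|\psi\rangle$, and upon tracing out $BC$ the factors $\psi_{ABD}^{(\alpha-1)/2}$ re-sandwich $\psi_{ABD}=\operatorname{Tr}_{C}\{\psi_{ABCD}\}$ to reform $\psi_{ABD}^{\alpha}$, so the apparent interchange of the exponents carried by $\psi_{D}$ and $\psi_{ABD}$ is absorbed by this split-and-reform bookkeeping rather than by a Gram flip with trace cyclicity (cyclicity is in any case unavailable under a partial trace). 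One reassurance: your exponent pattern, with $(1-\alpha)/2$ on $\psi_{AC}$, is the one consistent with Proposition~\ref{prop:sibson}, and the chain of swaps works verbatim for it, since \eqref{eq:for-duality} holds for arbitrary functions of the marginals.
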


\begin{proof}
Our proof exploits ideas used in the proof of \cite[Lemma~6]{TCR09} and
\cite[Theorem~2]{TBH13}. We know from Proposition~\ref{prop:sibson}\ that%
\begin{align}
I_{\alpha}\left(  A;B|C\right)  _{\psi}  &  =\frac{\alpha}{\alpha-1}%
\log\text{Tr}\left\{  \left(  \text{Tr}_{A}\left\{  \psi_{C}^{\left(
\alpha-1\right)  /2}\psi_{AC}^{\left(  \alpha-1\right)  /2}\psi_{ABC}^{\alpha
}\psi_{AC}^{\left(  \alpha-1\right)  /2}\psi_{C}^{\left(  \alpha-1\right)
/2}\right\}  \right)  ^{1/\alpha}\right\}  ,\\
I_{\alpha}\left(  B;A|D\right)  _{\psi}  &  =\frac{\alpha}{\alpha-1}%
\log\text{Tr}\left\{  \left(  \text{Tr}_{B}\left\{  \psi_{D}^{\left(
\alpha-1\right)  /2}\psi_{BD}^{\left(  \alpha-1\right)  /2}\psi_{ABD}^{\alpha
}\psi_{BD}^{\left(  \alpha-1\right)  /2}\psi_{D}^{\left(  \alpha-1\right)
/2}\right\}  \right)  ^{1/\alpha}\right\}  .
\end{align}
Thus, we will have proved the theorem if we can show that the eigenvalues of%
\begin{equation}
\text{Tr}_{A}\left\{  \psi_{C}^{\left(  \alpha-1\right)  /2}\psi_{AC}^{\left(
\alpha-1\right)  /2}\psi_{ABC}^{\alpha}\psi_{AC}^{\left(  \alpha-1\right)
/2}\psi_{C}^{\left(  \alpha-1\right)  /2}\right\}
\end{equation}
and%
\begin{equation}
\text{Tr}_{B}\left\{  \psi_{D}^{\left(  \alpha-1\right)  /2}\psi_{BD}^{\left(
\alpha-1\right)  /2}\psi_{ABD}^{\alpha}\psi_{BD}^{\left(  \alpha-1\right)
/2}\psi_{D}^{\left(  \alpha-1\right)  /2}\right\}
\end{equation}
are the same. To show this, consider that%
\begin{align}
&  \text{Tr}_{A}\left\{  \psi_{C}^{\left(  \alpha-1\right)  /2}\psi
_{AC}^{\left(  \alpha-1\right)  /2}\psi_{ABC}^{\alpha}\psi_{AC}^{\left(
\alpha-1\right)  /2}\psi_{C}^{\left(  \alpha-1\right)  /2}\right\} \nonumber\\
&  =\text{Tr}_{A}\left\{  \psi_{C}^{\left(  \alpha-1\right)  /2}\psi
_{AC}^{\left(  \alpha-1\right)  /2}\psi_{ABC}^{\left(  \alpha-1\right)
/2}\psi_{ABC}\psi_{ABC}^{\left(  \alpha-1\right)  /2}\psi_{AC}^{\left(
\alpha-1\right)  /2}\psi_{C}^{\left(  \alpha-1\right)  /2}\right\} \\
&  =\text{Tr}_{AD}\left\{  \psi_{C}^{\left(  \alpha-1\right)  /2}\psi
_{AC}^{\left(  \alpha-1\right)  /2}\psi_{ABC}^{\left(  \alpha-1\right)
/2}\psi_{ABCD}\psi_{ABC}^{\left(  \alpha-1\right)  /2}\psi_{AC}^{\left(
\alpha-1\right)  /2}\psi_{C}^{\left(  \alpha-1\right)  /2}\right\}  .
\end{align}
The eigenvalues of the operator in the last line are the same as those of the
operator in the first line of what follows (from the Schmidt decomposition):%
\begin{align}
&  \text{Tr}_{BC}\left\{  \psi_{C}^{\left(  \alpha-1\right)  /2}\psi
_{AC}^{\left(  \alpha-1\right)  /2}\psi_{ABC}^{\left(  \alpha-1\right)
/2}\psi_{ABCD}\psi_{ABC}^{\left(  \alpha-1\right)  /2}\psi_{AC}^{\left(
\alpha-1\right)  /2}\psi_{C}^{\left(  \alpha-1\right)  /2}\right\} \nonumber\\
&  =\text{Tr}_{BC}\left\{  \psi_{C}^{\left(  \alpha-1\right)  /2}\psi
_{AC}^{\left(  \alpha-1\right)  /2}\psi_{D}^{\left(  \alpha-1\right)  /2}%
\psi_{ABCD}\psi_{D}^{\left(  \alpha-1\right)  /2}\psi_{AC}^{\left(
\alpha-1\right)  /2}\psi_{C}^{\left(  \alpha-1\right)  /2}\right\} \\
&  =\text{Tr}_{BC}\left\{  \psi_{D}^{\left(  \alpha-1\right)  /2}\psi
_{C}^{\left(  \alpha-1\right)  /2}\psi_{AC}^{\left(  \alpha-1\right)  /2}%
\psi_{ABCD}\psi_{AC}^{\left(  \alpha-1\right)  /2}\psi_{C}^{\left(
\alpha-1\right)  /2}\psi_{D}^{\left(  \alpha-1\right)  /2}\right\} \\
&  =\text{Tr}_{BC}\left\{  \psi_{D}^{\left(  \alpha-1\right)  /2}\psi
_{C}^{\left(  \alpha-1\right)  /2}\psi_{BD}^{\left(  \alpha-1\right)  /2}%
\psi_{ABCD}\psi_{BD}^{\left(  \alpha-1\right)  /2}\psi_{C}^{\left(
\alpha-1\right)  /2}\psi_{D}^{\left(  \alpha-1\right)  /2}\right\} \\
&  =\text{Tr}_{BC}\left\{  \psi_{D}^{\left(  \alpha-1\right)  /2}\psi
_{BD}^{\left(  \alpha-1\right)  /2}\psi_{C}^{\left(  \alpha-1\right)  /2}%
\psi_{ABCD}\psi_{C}^{\left(  \alpha-1\right)  /2}\psi_{BD}^{\left(
\alpha-1\right)  /2}\psi_{D}^{\left(  \alpha-1\right)  /2}\right\} \\
&  =\text{Tr}_{BC}\left\{  \psi_{D}^{\left(  \alpha-1\right)  /2}\psi
_{BD}^{\left(  \alpha-1\right)  /2}\psi_{ABD}^{\left(  \alpha-1\right)
/2}\psi_{ABCD}\psi_{ABD}^{\left(  \alpha-1\right)  /2}\psi_{BD}^{\left(
\alpha-1\right)  /2}\psi_{D}^{\left(  \alpha-1\right)  /2}\right\} \\
&  =\text{Tr}_{B}\left\{  \psi_{D}^{\left(  \alpha-1\right)  /2}\psi
_{BD}^{\left(  \alpha-1\right)  /2}\psi_{ABD}^{\left(  \alpha-1\right)
/2}\psi_{ABD}\psi_{ABD}^{\left(  \alpha-1\right)  /2}\psi_{BD}^{\left(
\alpha-1\right)  /2}\psi_{D}^{\left(  \alpha-1\right)  /2}\right\} \\
&  =\text{Tr}_{B}\left\{  \psi_{D}^{\left(  \alpha-1\right)  /2}\psi
_{BD}^{\left(  \alpha-1\right)  /2}\psi_{ABD}^{\alpha}\psi_{BD}^{\left(
\alpha-1\right)  /2}\psi_{D}^{\left(  \alpha-1\right)  /2}\right\}  .
\end{align}
In the above, we have applied (\ref{eq:for-duality}) several times.
\end{proof}

\begin{theorem}
The following duality relation holds for all $\alpha\in(0,1)\cup(1,\infty)$
for a pure four-party state $\psi_{ABCD}$:
\begin{equation}
\widetilde{I}_{\alpha}\left(  A;B|C\right)  _{\psi}=\widetilde{I}_{\alpha
}\left(  B;A|D\right)  _{\psi}.
\end{equation}

\end{theorem}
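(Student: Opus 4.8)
The plan is to follow the architecture of the proof of Theorem~\ref{thm:duality-renyi}: reduce the asserted identity to the claim that two positive operators built solely from the marginals of $\psi_{ABCD}$ have the same spectrum, and then deduce isospectrality from the purity of $\psi_{ABCD}$ via Lemma~\ref{lem:dual_simple}. The new feature relative to Theorem~\ref{thm:duality-renyi} is that Definition~\ref{def:renyicmi_sand} carries a nested optimization $\inf_{\sigma_{BC}}\sup_{\omega_{C}}$, so before any spectral comparison is possible I must first remove both optimizations and exhibit an explicit closed form, i.e.\ a sandwiched analogue of Proposition~\ref{prop:sibson}.

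To remove the optimizations I would start from \eqref{eq:Q_SW_a_1} and use $\operatorname{Tr}\{(XX^{\dagger})^{\alpha}\}=\operatorname{Tr}\{(X^{\dagger}X)^{\alpha}\}$ to rewrite, for fixed $\omega_{C}$,
\[
\widetilde{Q}_{\alpha}\!\left(\psi_{ABC},\psi_{AC},\omega_{C},\sigma_{BC}\right)=\operatorname{Tr}\!\left\{\left(\sigma_{BC}^{(1-\alpha)/2\alpha}\,Z\,\sigma_{BC}^{(1-\alpha)/2\alpha}\right)^{\alpha}\right\},
\]
where $Z\equiv\omega_{C}^{(\alpha-1)/2\alpha}\psi_{AC}^{(1-\alpha)/2\alpha}\psi_{ABC}\psi_{AC}^{(1-\alpha)/2\alpha}\omega_{C}^{(\alpha-1)/2\alpha}$ acts on $ABC$. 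The inner infimum over $\sigma_{BC}$ is then exactly a sandwiched R\'enyi mutual-information optimization of a state on the subsystem $BC$ against the fixed operator $Z$, which has an explicit Sibson-type minimizer proportional to a power of $\operatorname{Tr}_{A}$ of a function of $Z$; this is the sandwiched counterpart of Proposition~\ref{prop:sibson}. I would then address the remaining $\sup_{\omega_{C}}$, arguing that it is attained and that the saddle point occurs at $\omega_{C}^{\ast}$ equal to the $C$-marginal of the optimal $\sigma_{BC}^{\ast}$ (mirroring the von~Neumann formulations in Proposition~\ref{thm:CMI-formulations}, where the supremizing $\omega_{C}$ is always the relevant marginal). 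The upshot is a formula $\widetilde{I}_{\alpha}(A;B|C)_{\psi}=\tfrac{\alpha}{\alpha-1}\log\operatorname{Tr}\{G_{C}\}$ in which $G_{C}$ (incorporating the outer power) is a fixed operator on $ABC$ assembled from $\psi_{ABC}$, $\psi_{AC}$, and $\psi_{C}$ alone, and symmetrically $\widetilde{I}_{\alpha}(B;A|D)_{\psi}=\tfrac{\alpha}{\alpha-1}\log\operatorname{Tr}\{G_{D}\}$ with $G_{D}$ on $ABD$ built from $\psi_{ABD}$, $\psi_{BD}$, and $\psi_{D}$.

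With the optimizations eliminated, the remainder is the spectral comparison and is essentially identical to the chain of equalities in Theorem~\ref{thm:duality-renyi}. I would write the relevant operator as $\operatorname{Tr}_{AD}\{M|\psi\rangle\langle\psi|M^{\dagger}\}$ for a suitable $M$ assembled from powers of $\psi_{AC}$, $\psi_{C}$, and $\psi_{ABC}$, note that $\operatorname{Tr}_{AD}\{M|\psi\rangle\langle\psi|M^{\dagger}\}$ and $\operatorname{Tr}_{BC}\{M|\psi\rangle\langle\psi|M^{\dagger}\}$ are isospectral (complementary reductions of the pure vector $M|\psi\rangle$), and then convert the $A,C$-marginal factors into $B,D$-marginal factors one at a time using the substitutions $\psi_{ABC}^{x}|\psi\rangle=\psi_{D}^{x}|\psi\rangle$, $\psi_{AC}^{x}|\psi\rangle=\psi_{BD}^{x}|\psi\rangle$, and $\psi_{C}^{x}|\psi\rangle=\psi_{ABD}^{x}|\psi\rangle$, each an instance of \eqref{eq:for-duality} for the cuts $ABC\,|\,D$, $AC\,|\,BD$, and $C\,|\,ABD$, together with the fact that operators supported on disjoint subsystems commute in their action on $|\psi\rangle$. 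Matching the resulting operator to $G_{D}$ completes the argument.

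The main obstacle is the first step: rigorously resolving the $\inf_{\sigma_{BC}}\sup_{\omega_{C}}$ and checking that the joint optimizer is a function of the marginals alone, so that the dual optimizers $\sigma_{AD}^{\ast}$ and $\omega_{D}^{\ast}$ on the $D$-side are precisely the isospectral images of $(\sigma_{BC}^{\ast},\omega_{C}^{\ast})$. The $\sigma_{BC}$-minimization is a routine sandwiched-Sibson computation, but because $\widetilde{Q}_{\alpha}$ involves several non-commuting factors, verifying that the supremum over $\omega_{C}$ is attained at a marginal, and that the two optimizations do not obstruct one another, is the delicate point; this is exactly the extra subtlety absent in Theorem~\ref{thm:duality-renyi}, where no supremum over $\omega_{C}$ appears. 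If a clean joint closed form proves elusive, a fallback is to establish the equality by two matching inequalities, using the explicit $\sigma_{BC}$-optimizer on each side and the isospectral relations to transfer a near-optimal $\omega_{C}$ to a feasible $\omega_{D}$ and conversely.
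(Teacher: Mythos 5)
Your proposal has a genuine gap at its first and load-bearing step: the ``sandwiched counterpart of Proposition~\ref{prop:sibson}'' you invoke does not exist, and the paper never claims it does. The Sibson identity of Lemma~\ref{lem:unique-min} works for $\Delta_{\alpha}$ precisely because $Q_{\alpha}$ is \emph{linear} in $\sigma_{BC}^{1-\alpha}$, so the trace can be rearranged into $\operatorname{Tr}\{(\sigma_{BC}^{\ast})^{\alpha}\sigma_{BC}^{1-\alpha}\}$ plus a constant. In $\widetilde{Q}_{\alpha}$ from \eqref{eq:Q_SW_a_1} the powers of $\sigma_{BC}$ sit \emph{inside} the outer $\alpha$-power, so for your fixed operator $Z$ the first-order optimality condition for $\sigma_{BC}\mapsto\operatorname{Tr}\{(\sigma_{BC}^{(1-\alpha)/2\alpha}Z\sigma_{BC}^{(1-\alpha)/2\alpha})^{\alpha}\}$ is a fixed-point equation in which $\sigma_{BC}$ appears on both sides; it yields no explicit minimizer ``proportional to a power of $\operatorname{Tr}_{A}$ of a function of $Z$.'' This is the same reason no closed form is known for the sandwiched R\'{e}nyi mutual information (the quantity your inner infimum reduces to when $C$ is trivial). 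Your further claim that the $\sup_{\omega_{C}}$ is attained at the $C$-marginal of $\sigma_{BC}^{\ast}$ is likewise unsupported: the marginal-optimality facts in Proposition~\ref{thm:CMI-formulations} are special to the von Neumann case and fail already for the non-sandwiched R\'{e}nyi quantity (the minimizer in \eqref{eq:optimal-sibson} is generally not $\rho_{BC}$). Since no closed forms $G_{C}$, $G_{D}$ become available, your isospectrality step --- which is itself sound and mirrors Theorem~\ref{thm:duality-renyi} --- has nothing to act on.

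The paper's actual proof circumvents the optimizations rather than solving them. It linearizes the outer $\alpha$-power via the dual representation of the Schatten norm, $\Vert X\Vert_{\alpha}=\sup_{\tau_{ABC}}\operatorname{Tr}\{X\tau_{ABC}^{(\alpha-1)/\alpha}\}$ for $\alpha>1$ (\cite[Lemma~12]{MDSFT13}), introducing a third variational parameter so that the objective becomes a trace of a product of operator powers. That expression is multilinear enough for the flip identities of Lemma~\ref{lem:dual_simple} to be applied factor by factor, giving a \emph{pointwise} identity: for every feasible triple, the $ABC$-side functional at $(\sigma_{BC},\omega_{C},\tau_{ABC})$ equals the $ABD$-side functional at $(\sigma_{AD}^{T},\tau_{D}^{T},\omega_{ABD}^{T})$, with the roles permuted (the H\"{o}lder-dual variable on one side becomes the sandwiching $\sup$ variable on the other). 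The optimized values then coincide without ever identifying any optimizer. Your fallback --- transferring near-optimal $\omega_{C}$ across the cut --- gestures at this strategy, but as written it still presumes ``the explicit $\sigma_{BC}$-optimizer on each side''; to repair the proof you should drop all appeals to optimizers, dualize the norm to get the extra supremum, and transfer all three variables simultaneously as above (with the analogous infimum-based dual of the $\alpha$-norm handling $\alpha\in(0,1)$).
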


\begin{proof}
Our proof uses ideas similar to those in the proof of~\cite[Theorem
10]{MDSFT13}. We start by considering the case $\alpha>1$. We recall that it
is possible to express the $\alpha$-norm with its dual norm (see, e.g.,
\cite[Lemma 12]{MDSFT13}):
\begin{multline}
\inf_{\sigma_{BC}}\sup_{\omega_{C}}\left\Vert \psi_{ABC}^{1/2}\psi
_{AC}^{\left(  1-\alpha\right)  /2\alpha}\omega_{C}^{\left(  \alpha-1\right)
/2\alpha}\sigma_{BC}^{\left(  1-\alpha\right)  /\alpha}\omega_{C}^{\left(
\alpha-1\right)  /2\alpha}\psi_{AC}^{\left(  1-\alpha\right)  /2\alpha}%
\psi_{ABC}^{1/2}\right\Vert _{\alpha}=\\
\inf_{\sigma_{BC}}\sup_{\omega_{C}}\sup_{\tau_{ABC}}\text{Tr}\left\{
\psi_{ABC}^{1/2}\psi_{AC}^{\left(  1-\alpha\right)  /2\alpha}\omega
_{C}^{\left(  \alpha-1\right)  /2\alpha}\sigma_{BC}^{\left(  1-\alpha\right)
/\alpha}\omega_{C}^{\left(  \alpha-1\right)  /2\alpha}\psi_{AC}^{\left(
1-\alpha\right)  /2\alpha}\psi_{ABC}^{1/2}\tau_{ABC}^{\left(  \alpha-1\right)
/\alpha}\right\}  .
\end{multline}
So it suffices to prove the following relation:%
\begin{multline}
\inf_{\sigma_{BC}}\sup_{\omega_{C}}\sup_{\tau_{ABC}}\text{Tr}\left\{
\psi_{ABC}^{1/2}\psi_{AC}^{\left(  1-\alpha\right)  /2\alpha}\omega
_{C}^{\left(  \alpha-1\right)  /2\alpha}\sigma_{BC}^{\left(  1-\alpha\right)
/\alpha}\omega_{C}^{\left(  \alpha-1\right)  /2\alpha}\psi_{AC}^{\left(
1-\alpha\right)  /2\alpha}\psi_{ABC}^{1/2}\tau_{ABC}^{\left(  \alpha-1\right)
/\alpha}\right\}  =\label{eq:duality-eq-to-prove}\\
\inf_{\sigma_{AD}}\sup_{\tau_{D}}\sup_{\omega_{ABD}}\text{Tr}\left\{
\psi_{ABD}^{1/2}\psi_{BD}^{\left(  1-\alpha\right)  /2\alpha}\tau_{D}^{\left(
\alpha-1\right)  /2\alpha}\sigma_{AD}^{\left(  1-\alpha\right)  /\alpha}%
\tau_{D}^{\left(  \alpha-1\right)  /2\alpha}\psi_{AC}^{\left(  1-\alpha
\right)  /2\alpha}\psi_{ABD}^{1/2}\omega_{ABD}^{\left(  \alpha-1\right)
/\alpha}\right\}  ,
\end{multline}
because%
\begin{multline}
\widetilde{I}_{\alpha}\left(  B;A|D\right)  _{\psi}\\
=\inf_{\sigma_{AD}}\sup_{\tau_{D}}\sup_{\omega_{ABD}}\frac{\alpha}{\alpha
-1}\log\text{Tr}\left\{  \psi_{ABD}^{1/2}\psi_{BD}^{\left(  1-\alpha\right)
/2\alpha}\tau_{D}^{\left(  \alpha-1\right)  /2\alpha}\sigma_{AD}^{\left(
1-\alpha\right)  /\alpha}\tau_{D}^{\left(  \alpha-1\right)  /2\alpha}\psi
_{AC}^{\left(  1-\alpha\right)  /2\alpha}\psi_{ABD}^{1/2}\omega_{ABD}^{\left(
\alpha-1\right)  /\alpha}\right\}  .
\end{multline}
Indeed, we will prove that%
\begin{multline}
\text{Tr}\left\{  \psi_{ABC}^{1/2}\psi_{AC}^{\left(  1-\alpha\right)
/2\alpha}\omega_{C}^{\left(  \alpha-1\right)  /2\alpha}\sigma_{BC}^{\left(
1-\alpha\right)  /\alpha}\omega_{C}^{\left(  \alpha-1\right)  /2\alpha}%
\psi_{AC}^{\left(  1-\alpha\right)  /2\alpha}\psi_{ABC}^{1/2}\tau
_{ABC}^{\left(  \alpha-1\right)  /\alpha}\right\} \\
=\text{Tr}\left\{  \psi_{ABD}^{1/2}\psi_{BD}^{\left(  1-\alpha\right)
/2\alpha}\left(  \tau_{D}^{T}\right)  ^{\left(  \alpha-1\right)  /2\alpha
}\left(  \sigma_{AD}^{T}\right)  ^{\left(  1-\alpha\right)  /\alpha}\left(
\tau_{D}^{T}\right)  ^{\left(  \alpha-1\right)  /2\alpha}\psi_{BD}^{\left(
1-\alpha\right)  /2\alpha}\psi_{ABD}^{1/2}\left(  \omega_{ABD}^{T}\right)
^{\left(  \alpha-1\right)  /\alpha}\right\}  ,
\end{multline}
from which one can conclude (\ref{eq:duality-eq-to-prove}), which has the optimizations.

Proceeding, we observe that
\begin{align}
&  \text{Tr}\left\{  \psi_{ABC}^{1/2}\psi_{AC}^{\left(  1-\alpha\right)
/2\alpha}\omega_{C}^{\left(  \alpha-1\right)  /2\alpha}\sigma_{BC}^{\left(
1-\alpha\right)  /\alpha}\omega_{C}^{\left(  \alpha-1\right)  /2\alpha}%
\psi_{AC}^{\left(  1-\alpha\right)  /2\alpha}\psi_{ABC}^{1/2}\tau
_{ABC}^{\left(  \alpha-1\right)  /\alpha}\right\} \nonumber\\
&  =\left\langle \Gamma\right\vert \psi_{ABC}^{1/2}\psi_{AC}^{\left(
1-\alpha\right)  /2\alpha}\omega_{C}^{\left(  \alpha-1\right)  /2\alpha}%
\sigma_{BC}^{\left(  1-\alpha\right)  /\alpha}\omega_{C}^{\left(
\alpha-1\right)  /2\alpha}\psi_{AC}^{\left(  1-\alpha\right)  /2\alpha}%
\psi_{ABC}^{1/2}\tau_{ABC}^{\left(  \alpha-1\right)  /\alpha}\left\vert
\Gamma\right\rangle _{ABC|D}\\
&  =\left\langle \Gamma\right\vert \psi_{ABC}^{1/2}\psi_{AC}^{\left(
1-\alpha\right)  /2\alpha}\omega_{C}^{\left(  \alpha-1\right)  /2\alpha}%
\sigma_{BC}^{\left(  1-\alpha\right)  /\alpha}\omega_{C}^{\left(
\alpha-1\right)  /2\alpha}\psi_{AC}^{\left(  1-\alpha\right)  /2\alpha}%
\psi_{ABC}^{1/2}\left(  \tau_{D}^{T}\right)  ^{\left(  \alpha-1\right)
/\alpha}\left\vert \Gamma\right\rangle _{ABC|D}\\
&  =\left\langle \psi\right\vert \psi_{AC}^{\left(  1-\alpha\right)  /2\alpha
}\omega_{C}^{\left(  \alpha-1\right)  /2\alpha}\left(  \tau_{D}^{T}\right)
^{\left(  \alpha-1\right)  /2\alpha}\sigma_{BC}^{\left(  1-\alpha\right)
/\alpha}\left(  \tau_{D}^{T}\right)  ^{\left(  \alpha-1\right)  /2\alpha
}\omega_{C}^{\left(  \alpha-1\right)  /2\alpha}\psi_{AC}^{\left(
1-\alpha\right)  /2\alpha}\left\vert \psi\right\rangle _{ABCD}\\
&  =\left\langle \psi\right\vert \psi_{BD}^{\left(  1-\alpha\right)  /2\alpha
}\omega_{C}^{\left(  \alpha-1\right)  /2\alpha}\left(  \tau_{D}^{T}\right)
^{\left(  \alpha-1\right)  /2\alpha}\sigma_{BC}^{\left(  1-\alpha\right)
/\alpha}\left(  \tau_{D}^{T}\right)  ^{\left(  \alpha-1\right)  /2\alpha
}\omega_{C}^{\left(  \alpha-1\right)  /2\alpha}\psi_{BD}^{\left(
1-\alpha\right)  /2\alpha}\left\vert \psi\right\rangle _{ABCD}\\
&  =\left\langle \psi\right\vert \omega_{C}^{\left(  \alpha-1\right)
/2\alpha}\psi_{BD}^{\left(  1-\alpha\right)  /2\alpha}\left(  \tau_{D}%
^{T}\right)  ^{\left(  \alpha-1\right)  /2\alpha}\sigma_{BC}^{\left(
1-\alpha\right)  /\alpha}\left(  \tau_{D}^{T}\right)  ^{\left(  \alpha
-1\right)  /2\alpha}\psi_{BD}^{\left(  1-\alpha\right)  /2\alpha}\omega
_{C}^{\left(  \alpha-1\right)  /2\alpha}\left\vert \psi\right\rangle _{ABCD}\\
&  =\left\langle \Gamma\right\vert \psi_{ABD}^{1/2}\omega_{C}^{\left(
\alpha-1\right)  /2\alpha}\psi_{BD}^{\frac{1-\alpha}{2\alpha}}\left(  \tau
_{D}^{T}\right)  ^{\frac{\alpha-1}{2\alpha}}\sigma_{BC}^{\frac{1-\alpha
}{\alpha}}\left(  \tau_{D}^{T}\right)  ^{\frac{\alpha-1}{2\alpha}}\psi
_{BD}^{\frac{1-\alpha}{2\alpha}}\omega_{C}^{\left(  \alpha-1\right)  /2\alpha
}\psi_{ABD}^{1/2}\left\vert \Gamma\right\rangle _{ABD|C}\\
&  =\left\langle \Gamma\right\vert \omega_{C}^{\left(  \alpha-1\right)
/2\alpha}\psi_{ABD}^{1/2}\psi_{BD}^{\frac{1-\alpha}{2\alpha}}\left(  \tau
_{D}^{T}\right)  ^{\frac{\alpha-1}{2\alpha}}\sigma_{BC}^{\frac{1-\alpha
}{\alpha}}\left(  \tau_{D}^{T}\right)  ^{\frac{\alpha-1}{2\alpha}}\psi
_{BD}^{\frac{1-\alpha}{2\alpha}}\psi_{ABD}^{1/2}\omega_{C}^{\left(
\alpha-1\right)  /2\alpha}\left\vert \Gamma\right\rangle _{ABD|C}\\
&  =\left\langle \Gamma\right\vert \left(  \omega_{ABD}^{T}\right)
^{\frac{\alpha-1}{2\alpha}}\psi_{ABD}^{1/2}\psi_{BD}^{\frac{1-\alpha}{2\alpha
}}\left(  \tau_{D}^{T}\right)  ^{\frac{\alpha-1}{2\alpha}}\sigma_{BC}%
^{\frac{1-\alpha}{\alpha}}\left(  \tau_{D}^{T}\right)  ^{\frac{\alpha
-1}{2\alpha}}\psi_{BD}^{\frac{1-\alpha}{2\alpha}}\psi_{ABD}^{1/2}\left(
\omega_{ABD}^{T}\right)  ^{\frac{\alpha-1}{2\alpha}}\left\vert \Gamma
\right\rangle _{ABD|C},
\end{align}
where we used the standard transpose trick~\eqref{eq:mes_trick} for the
maximally entangled vector $\left\vert \Gamma\right\rangle _{ABD|C}$ and the
first identity from Lemma~\ref{lem:dual_simple}. For the vector
\begin{equation}
\left\vert \varphi\right\rangle _{ABCD}\equiv\left(  \tau_{D}^{T}\right)
^{\left(  \alpha-1\right)  /2\alpha}\psi_{BD}^{\left(  1-\alpha\right)
/2\alpha}\psi_{ABD}^{1/2}\left(  \omega_{ABD}^{T}\right)  ^{\left(
\alpha-1\right)  /2\alpha}\left\vert \Gamma\right\rangle _{ABD|C},
\end{equation}
we get from the second identity in Lemma~\ref{lem:dual_simple} that
\begin{align}
&  \left\langle \Gamma\right\vert \left(  \omega_{ABD}^{T}\right)
^{\frac{\alpha-1}{2\alpha}}\psi_{ABD}^{1/2}\psi_{BD}^{\frac{1-\alpha}{2\alpha
}}\left(  \tau_{D}^{T}\right)  ^{\frac{\alpha-1}{2\alpha}}\sigma_{BC}%
^{\frac{1-\alpha}{\alpha}}\left(  \tau_{D}^{T}\right)  ^{\frac{\alpha
-1}{2\alpha}}\psi_{BD}^{\frac{1-\alpha}{2\alpha}}\psi_{ABD}^{1/2}\left(
\omega_{ABD}^{T}\right)  ^{\frac{\alpha-1}{2\alpha}}\left\vert \Gamma
\right\rangle _{ABD|C}\nonumber\\
&  =\left\langle \varphi\right\vert \sigma_{BC}^{\frac{1-\alpha}{\alpha}%
}\left\vert \varphi\right\rangle _{ABCD}\\
&  =\left\langle \varphi\right\vert (\sigma_{AD}^{T})^{\frac{1-\alpha}{\alpha
}}\left\vert \varphi\right\rangle _{ABCD}\\
&  =\left\langle \Gamma\right\vert \left(  \omega_{ABD}^{T}\right)
^{\frac{\alpha-1}{2\alpha}}\psi_{ABD}^{1/2}\psi_{BD}^{\frac{1-\alpha}{2\alpha
}}\left(  \tau_{D}^{T}\right)  ^{\frac{\alpha-1}{2\alpha}}(\sigma_{AD}%
^{T})^{\frac{1-\alpha}{\alpha}}\left(  \tau_{D}^{T}\right)  ^{\frac{\alpha
-1}{2\alpha}}\psi_{BD}^{\frac{1-\alpha}{2\alpha}}\psi_{ABD}^{1/2}\left(
\omega_{ABD}^{T}\right)  ^{\frac{\alpha-1}{2\alpha}}\left\vert \Gamma
\right\rangle _{ABD|C}\\
&  =\text{Tr}\left\{  \left(  \omega_{ABD}^{T}\right)  ^{\frac{\alpha
-1}{2\alpha}}\psi_{ABD}^{1/2}\psi_{BD}^{\frac{1-\alpha}{2\alpha}}\left(
\tau_{D}^{T}\right)  ^{\frac{\alpha-1}{2\alpha}}\left(  \sigma_{AD}%
^{T}\right)  ^{(1-\alpha)/\alpha}\left(  \tau_{D}^{T}\right)  ^{\frac
{\alpha-1}{2\alpha}}\psi_{BD}^{\frac{1-\alpha}{2\alpha}}\psi_{ABD}%
^{1/2}\left(  \omega_{ABD}^{T}\right)  ^{\frac{\alpha-1}{2\alpha}}\right\} \\
&  =\text{Tr}\left\{  \psi_{ABD}^{1/2}\psi_{BD}^{\left(  1-\alpha\right)
/2\alpha}\left(  \tau_{D}^{T}\right)  ^{\left(  \alpha-1\right)  /2\alpha
}\left(  \sigma_{AD}^{T}\right)  ^{\left(  1-\alpha\right)  /\alpha}\left(
\tau_{D}^{T}\right)  ^{\left(  \alpha-1\right)  /2\alpha}\psi_{BD}^{\left(
1-\alpha\right)  /2\alpha}\psi_{ABD}^{1/2}\left(  \omega_{ABD}^{T}\right)
^{\left(  \alpha-1\right)  /\alpha}\right\}  .
\end{align}
For the case $\alpha\in(0,1)$ the proof is similar, where we also use
\cite[Lemma 12]{MDSFT13}. We omit the details for this case.
\end{proof}

\section{Monotonicity in $\alpha$}

\label{sec:monotonicity-in-alpha}From numerical evidence and proofs for some
special cases, we think it is natural to put forward the following conjecture:

\begin{conjecture}
\label{conj:monotone-alpha}Let $\rho_{ABC}\in\mathcal{S}(\mathcal{H}%
_{ABC})_{++}$, $\tau_{AC}\in\mathcal{S}(\mathcal{H}_{AC})_{++}$, $\theta
_{BC}\in\mathcal{S}(\mathcal{H}_{BC})_{++}$, and $\omega_{C}\in\mathcal{S}%
(\mathcal{H}_{C})_{++}$. Then all of the R\'{e}nyi core quantities
$\Delta_{\alpha}$ and $\widetilde{\Delta}_{\alpha}$ derived from
\eqref{eq:Q_a_1}, \eqref{eq:Q_a_2}-\eqref{eq:Q_a_6} and \eqref{eq:Q_SW_a_1},
\eqref{eq:Q_SW_a_2}-\eqref{eq:Q_SW_a_6}, respectively, are monotone
non-decreasing in$~\alpha$. That is, for $0\leq\alpha\leq\beta$, the following
inequalities hold%
\begin{align}
\Delta_{\alpha}\left(  \rho_{ABC},\tau_{AC},\omega_{C},\theta_{BC}\right)   &
\leq\Delta_{\beta}\left(  \rho_{ABC},\tau_{AC},\omega_{C},\theta_{BC}\right)
,\label{eq:conj-1}\\
\widetilde{\Delta}_{\alpha}\left(  \rho_{ABC},\tau_{AC},\omega_{C},\theta
_{BC}\right)   &  \leq\widetilde{\Delta}_{\beta}\left(  \rho_{ABC},\tau
_{AC},\omega_{C},\theta_{BC}\right)  , \label{eq:conj-3}%
\end{align}
and similar inequalities hold for all orderings of the last three arguments of
$\Delta_{\alpha}$ and $\widetilde{\Delta}_{\alpha}$.
\end{conjecture}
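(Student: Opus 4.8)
The plan is to reduce the desired monotonicity in $\alpha$ to a convexity statement. Writing $g(\alpha)\equiv\log Q_{\alpha}\left(\rho_{ABC},\tau_{AC},\omega_{C},\theta_{BC}\right)$, the computations in the proof of Theorem~\ref{thm:conv-vN} already show that $g(1)=0$ and $g'(1)=\Delta\left(\rho_{ABC},\tau_{AC},\omega_{C},\theta_{BC}\right)$, since $Q_{\alpha}$ is analytic near $\alpha=1$ with $Q_{1}=1$. Hence $\Delta_{\alpha}=\left[g(\alpha)-g(1)\right]/(\alpha-1)$ is exactly the slope of the chord of $g$ joining the abscissae $1$ and $\alpha$. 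For a convex function this chord slope is non-decreasing in the free endpoint, so $\Delta_{\alpha}$ is non-decreasing in $\alpha$ on $(0,\infty)$ if and only if $g$ is convex there; the same reduction applies verbatim to $\widetilde{g}(\alpha)\equiv\log\widetilde{Q}_{\alpha}$ and to every ordering of the last three arguments. This mirrors the standard argument behind \eqref{eq:monotonicity-in-alpha-intro} and isolates the real content of the conjecture as the convexity of $g$ and $\widetilde{g}$ in $\alpha$.

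Two special cases then fall out immediately. First, if $\tau_{AC}$, $\theta_{BC}$, and $\omega_{C}$ all commute with each other and with $\rho_{ABC}$ (in particular the classical case, or trivial $C$), then in a joint eigenbasis $Q_{\alpha}=\sum_{i}s_{i}\,e^{\alpha\log(r_{i}/s_{i})}$ for non-negative $r_{i},s_{i}$, so $g$ is a log-sum-exp and hence convex in $\alpha$; this yields the \emph{full} monotonicity, not merely a local one. Second, when a system is trivial, $\Delta_{\alpha}$ and $\widetilde{\Delta}_{\alpha}$ collapse onto $D_{\alpha}$ and $\widetilde{D}_{\alpha}$, and the result is exactly \eqref{eq:monotonicity-in-alpha-intro}.

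For the neighborhood of $\alpha=1$ I would Taylor expand. Since $g$ is analytic near $1$ with $g(1)=0$, a short computation gives $\frac{d}{d\alpha}\Delta_{\alpha}\big|_{\alpha=1}=\tfrac{1}{2}g''(1)$, so local monotonicity follows once $g''(1)\geq0$. The key observation is that, setting $t=\alpha-1$, we may factor $Q_{\alpha}=\operatorname{Tr}\{\rho_{ABC}^{1+t}M(t)\}$ where $M(t)$ is the palindromic product of operator powers appearing in \eqref{eq:Q_a_1} (and likewise for the other five orderings). Each such $M(t)$ is \emph{time-symmetric}, $M(-t)=M(t)^{-1}$, because reversing the order of a palindromic product of exponentials returns the same product; writing $M(t)=e^{\Phi(t)}$ forces $\Phi$ to be odd in $t$, so its second-order Trotter correction vanishes and $M''(0)=\Phi'(0)^{2}$. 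Feeding this into $g''(1)=Q''(1)-\left(Q'(1)\right)^{2}$ and using that $\rho_{ABC}$ commutes with $\log\rho_{ABC}$ collapses every non-commutative cross term, leaving
\begin{equation}
g''(1)=\operatorname{Tr}\{\rho_{ABC}L^{2}\}-\left(\operatorname{Tr}\{\rho_{ABC}L\}\right)^{2}=\operatorname{Var}_{\rho_{ABC}}(L)\geq0,
\end{equation}
where $L\equiv\log\rho_{ABC}-\log\tau_{AC}-\log\theta_{BC}+\log\omega_{C}$. When this variance is strictly positive, continuity of $g''$ gives convexity of $g$ on a neighborhood of $1$, hence the claimed monotonicity; the degenerate case is handled by passing to the next non-vanishing derivative. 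The sandwiched quantities are treated the same way, except that the outer power $\alpha$ in \eqref{eq:Q_SW_a_1} introduces extra terms in $\widetilde{g}''(1)$; these again organize into the same variance once the palindromic symmetry of the inner product is used, paralleling the second-order analysis of $\widetilde{D}_{\alpha}$ in \cite{MDSFT13}.

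The main obstacle is the global convexity of $g$ and $\widetilde{g}$ for genuinely non-commuting $\tau_{AC},\theta_{BC},\omega_{C}$. Away from $\alpha=1$ there is no log-sum-exp representation and no apparent operator-convexity principle (such as Lieb/Ando) that directly controls the $\alpha$-dependence of a trace of a product of several non-commuting fractional powers; this is precisely why the statement must remain a conjecture in general, and why a full proof would likely require a new interpolation/complex-analytic argument or a variational formula of the Fawzi--Lieb type exposing joint convexity in $\alpha$.
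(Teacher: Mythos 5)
Your neighborhood-of-one argument is, at its core, the same as the paper's own partial proof in Appendix~\ref{sec:proof-approach}: there too one Taylor-expands $\operatorname{Tr}\{Y(\gamma)\}$ about $\gamma=\alpha-1=0$, finds $\operatorname{Tr}\{Y(\gamma)\}=1+\gamma\Delta+\frac{\gamma^{2}}{2}[V+\Delta^{2}]+O(\gamma^{3})$ with $V=\operatorname{Tr}\{\rho_{ABC}[L-\Delta]^{2}\}$ for $L=\log\rho_{ABC}-\log\tau_{AC}-\log\theta_{BC}+\log\omega_{C}$, and concludes that the numerator of \eqref{eq:derivative-exp} equals $\frac{\gamma^{2}}{2}V+O(\gamma^{3})\geq0$ for small $\gamma$ when $V>0$. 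Your palindromic observation --- $M(-t)=M(t)^{-1}$ for the product in \eqref{eq:Q_a_1}, forcing $\log M(t)$ to be odd and hence $M''(0)=\Phi'(0)^{2}$ --- is a genuinely cleaner route to that second-order coefficient than the paper's self-described tedious expansion, and it is correct: together with trace cyclicity and $[\rho_{ABC},\log\rho_{ABC}]=0$, all cross terms do assemble into $\operatorname{Tr}\{\rho_{ABC}L^{2}\}$, so $g''(1)=V$ and $\frac{d}{d\alpha}\Delta_{\alpha}\big|_{\alpha=1}=V/2$ as you claim. Your commuting and trivial-$C$ special cases also match the paper's. Two pieces of the paper's treatment you miss, however: (i) the dual-pair special cases, namely \eqref{eq:conj-1} for $\alpha+\beta=2$ and \eqref{eq:conj-3} for $1/\alpha+1/\beta=2$, which follow because the bracketed operator is invariant under this pairing, reducing the claim to ordinary monotonicity of $D_{\alpha}$ and $\widetilde{D}_{\alpha}$ \cite{T14}; and (ii) for the sandwiched family the paper does not differentiate the outer power directly but instead uses the H\"{o}lder-dual representation $\widetilde{\Delta}_{\alpha}=\max_{\mu}\widetilde{D}_{\alpha}(\rho,\tau,\omega,\theta,\mu)$ and expands each fixed-$\mu$ function via a purification trick; the resulting second-order term is the variance of $H=\log\omega_{C}-\log\tau_{AC}-\log\theta_{BC}+\log\mu^{T}$ in the vector $\rho_{ABC}^{1/2}\otimes I\,|\Gamma\rangle$, which is \emph{not} literally the variance $V$ above. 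Your one-sentence assertion that the sandwiched case ``organizes into the same variance'' is stated, not derived, and as written it is not quite what comes out.

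Two concrete flaws beyond that. First, your ``if and only if'' between monotonicity of $\Delta_{\alpha}$ and convexity of $g$ is false: since $g(1)=0$, monotonicity of the chord slope from the \emph{single} base point $1$ is star-shapedness of $g$ with respect to $(1,0)$, which is strictly weaker than convexity; only the sufficiency direction (convexity implies monotonicity) is available, so convexity of $g$ does not ``isolate the real content'' of Conjecture~\ref{conj:monotone-alpha}. Second, the degenerate case $V=0$ is not ``handled by passing to the next non-vanishing derivative'': writing $g(\gamma)=\Delta\gamma+\frac{V}{2}\gamma^{2}+\frac{g'''(1)}{6}\gamma^{3}+\cdots$, the numerator $\gamma g'-g$ of the derivative equals $\frac{V}{2}\gamma^{2}+\frac{g'''(1)}{3}\gamma^{3}+O(\gamma^{4})$, so when $V=0$ the leading term is of \emph{odd} order and changes sign across $\gamma=0$ unless $g'''(1)=0$; a next non-vanishing derivative of odd order would make the expansion argument contradict monotonicity on one side rather than establish it. This is precisely why the paper explicitly concedes that its neighborhood-of-one argument fails when $V=0$ (a rigid situation in the positive definite setting, since $\rho_{ABC}>0$ and $V=0$ force $L=\Delta\cdot I$, but one that neither you nor the paper resolves). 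Your honest closing assessment --- that the global statement lies beyond Lieb/Ando-type convexity and remains open --- is consistent with the paper, which supplements the local proof only with numerical evidence.
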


If Conjecture~\ref{conj:monotone-alpha}\ is true, we could conclude that all
non-sandwiched and sandwiched R\'{e}nyi generalizations of the conditional
mutual information are monotone non-decreasing in $\alpha$ for positive
definite operators. Another implication of monotonicity in $\alpha\geq1/2$ for
$\widetilde{\Delta}_{\alpha}\left(  \rho_{ABC},\rho_{AC},\rho_{C},\rho
_{BC}\right)  $ would be that a tripartite quantum state $\rho_{ABC}$ is a
quantum Markov state if and only if%
\begin{equation}
\widetilde{\Delta}_{\alpha}\left(  \rho_{ABC},\rho_{AC},\rho_{C},\rho
_{BC}\right)  =0
\end{equation}
(with $\alpha\geq1/2$). This would generalize the results from~\cite{HJPW04}
to the case $\alpha\neq1$.

Note that this conjecture does not follow straightforwardly from the following
monotonicity%
\begin{align}
D_{\alpha}\left(  \rho\Vert\sigma\right)   &  \leq D_{\beta}\left(  \rho
\Vert\sigma\right)  ,\\
\widetilde{D}_{\alpha}\left(  \rho\Vert\sigma\right)   &  \leq\widetilde
{D}_{\beta}\left(  \rho\Vert\sigma\right)  , \label{eq:sandwiched-alpha-mono}%
\end{align}
which holds for $0\leq\alpha\leq\beta$ \cite{TCR09,MDSFT13}. However, for
classical states $\rho_{ABC}$, the conjecture is clearly true for
$\Delta_{\alpha}\left(  \rho_{ABC},\rho_{AC},\rho_{C},\rho_{BC}\right)  $ and
$\widetilde{\Delta}_{\alpha}\left(  \rho_{ABC},\rho_{AC},\rho_{C},\rho
_{BC}\right)  $\ by appealing to the above known inequalities.

Observe that some of the conjectured inequalities are redundant. For example,
if%
\begin{equation}
\Delta_{\alpha}\left(  \rho_{ABC},\tau_{AC},\theta_{BC},\omega_{C}\right)
\leq\Delta_{\beta}\left(  \rho_{ABC},\tau_{AC},\theta_{BC},\omega_{C}\right)
\end{equation}
holds for all $\rho_{ABC}\in\mathcal{S}(\mathcal{H}_{ABC})_{++}$, $\tau
_{AC}\in\mathcal{S}(\mathcal{H}_{AC})_{++}$, $\theta_{BC}\in\mathcal{S}%
(\mathcal{H}_{BC})_{++}$, and $\omega_{C}\in\mathcal{S}(\mathcal{H}_{C})_{++}%
$, then the following monotonicity holds as well%
\begin{equation}
\Delta_{\alpha}\left(  \rho_{ABC},\theta_{BC},\tau_{AC},\omega_{C}\right)
\leq\Delta_{\beta}\left(  \rho_{ABC},\theta_{BC},\tau_{AC},\omega_{C}\right)
,
\end{equation}
due to a symmetry under the exchange of systems $A$ and $B$. Similar
statements apply to other pairs of inequalities, so that it suffices to prove
only six of the 12 monotonicities discussed above in order to establish the
other six. However, as we will see below, a single proof of the monotonicity
for each kind of R\'{e}nyi conditional mutual information (non-sandwiched and
sandwiched) should suffice because we think one could easily generalize such a
proof to the other cases.

\subsection{Approaches for proving the conjecture}

\label{sec:proof-approach-outline}We briefly outline some approaches for
proving the conjecture. One idea is to follow a proof technique from
\cite[Lemma~3]{TCR09} and \cite[Theorem~7]{MDSFT13}. If the derivative of
$\Delta_{\alpha}\left(  \rho_{ABC},\tau_{AC},\omega_{C},\theta_{BC}\right)  $
and $\widetilde{\Delta}_{\alpha}\left(  \rho_{ABC},\tau_{AC},\omega_{C}%
,\theta_{BC}\right)  $ with respect to $\alpha$ is non-negative, then we can
conclude that these functions are monotone increasing with $\alpha$. It is
possible to prove that the derivatives are non-negative when $\alpha$ is in a
neighborhood of one, by computing Taylor expansions of these functions. We
explore this approach further in Appendix~\ref{sec:proof-approach}.

\subsection{Numerical evidence}

To test the conjecture in (\ref{eq:conj-1}) and its variations, we conducted
several numerical experiments. First, we selected states $\rho_{ABC}$,
$\tau_{AC}$, $\omega_{C}$, $\theta_{BC}$ at random \cite{CubittMatlab}, with
the dimensions of the local systems never exceeding six. We then computed the
numerator in (\ref{eq:derivative-exp}) for values of $\gamma$ ranging from
$-0.99$ to $10$ with a step size of $0.05$ (so that $\alpha=\gamma+1$ goes
from $0.01$ to $11$). For each value of $\gamma$, we conducted 1000 numerical
experiments. The result was that the numerator in (\ref{eq:derivative-exp})
was always non-negative. We then conducted the same set of experiments for the
various operator orderings and always found the numerator to be non-negative.

To test the conjecture in (\ref{eq:conj-3}) and its variations, we conducted
similar numerical experiments. First, we selected states $\rho_{ABC}$,
$\tau_{AC}$, $\omega_{C}$, $\theta_{BC}$, $\mu_{ABC}$ at random
\cite{CubittMatlab}, with the dimensions of the local systems never exceeding
six. We then computed the numerator in (\ref{eq:derivative-exp-1}) for values
of $\gamma$ ranging from $-10$ to $0.99$ with a step size of 0.05 (so that
$\alpha=1/\left(  1-\gamma\right)  $ goes from $\approx0.091$ to $\approx
100$). For each value of $\gamma$, we conducted 1000 numerical experiments.
The result was that the numerator in (\ref{eq:derivative-exp-1}) was always
non-negative. We then conducted the same set of experiments for the various
operator orderings and always found the numerator to be non-negative.

\subsection{Special cases of the conjecture}

We can prove that the conjecture is true in a number of cases, due to the
special form that the R\'{e}nyi conditional mutual information takes in these
cases. Let $\rho_{ABC}\in\mathcal{S}(\mathcal{H}_{ABC})_{++}$. We define the
following quantities, which are the same as (\ref{eq:renyi-gen-marginals}) and
(\ref{eq:renyi-sandwiched-marginals}), respectively:%
\begin{align}
I_{\alpha}\left(  A;B|C\right)  _{\rho|\rho}  &  \equiv\frac{1}{\alpha-1}%
\log\text{Tr}\left\{  \rho_{ABC}^{\alpha}\rho_{AC}^{\left(  1-\alpha\right)
/2}\rho_{C}^{\left(  \alpha-1\right)  /2}\rho_{BC}^{1-\alpha}\rho_{C}^{\left(
\alpha-1\right)  /2}\rho_{AC}^{\left(  1-\alpha\right)  /2}\right\}  ,\\
\widetilde{I}_{\alpha}\left(  A;B|C\right)  _{\rho|\rho}  &  \equiv
\frac{\alpha}{\alpha-1}\log\left\Vert \rho_{ABC}^{1/2}\rho_{AC}^{\left(
1-\alpha\right)  /2\alpha}\rho_{C}^{\left(  \alpha-1\right)  /2\alpha}%
\rho_{BC}^{\left(  1-\alpha\right)  /\alpha}\rho_{C}^{\left(  \alpha-1\right)
/2\alpha}\rho_{AC}^{\left(  1-\alpha\right)  /2\alpha}\rho_{ABC}%
^{1/2}\right\Vert _{\alpha}, \label{eq:max_end}%
\end{align}
so that%
\begin{align}
I_{0}\left(  A;B|C\right)  _{\rho|\rho}  &  =-\log\text{Tr}\left\{  \rho
_{ABC}^{0}\rho_{AC}^{1/2}\rho_{C}^{-1/2}\rho_{BC}\rho_{C}^{-1/2}\rho
_{AC}^{1/2}\right\}  ,\\
I_{2}\left(  A;B|C\right)  _{\rho|\rho}  &  =\log\text{Tr}\left\{  \rho
_{ABC}^{2}\left(  \rho_{AC}^{1/2}\rho_{C}^{-1/2}\rho_{BC}\rho_{C}^{-1/2}%
\rho_{AC}^{1/2}\right)  ^{-1}\right\}  .
\end{align}
Recall that the following inequality holds for all $\alpha\in(0,1)\cup
(1,\infty)$ \cite{DL13}:%
\begin{equation}
\widetilde{D}_{\alpha}\left(  \rho\Vert\sigma\right)  \leq D_{\alpha}\left(
\rho\Vert\sigma\right)  .
\end{equation}
Using the monotonicity given in (\ref{eq:sandwiched-alpha-mono}) and the above
inequality, we can conclude that%
\begin{align}
I_{0}\left(  A;B|C\right)  _{\rho|\rho}  &  \leq I_{2}\left(  A;B|C\right)
_{\rho|\rho},\label{eq:mono-obs-1}\\
I_{\min}\left(  A;B|C\right)  _{\rho|\rho}  &  \leq I_{\max}\left(
A;B|C\right)  _{\rho|\rho},\label{eq:mono-obs-2}\\
I_{\min}\left(  A;B|C\right)  _{\rho|\rho}  &  \leq I_{2}\left(  A;B|C\right)
_{\rho|\rho}, \label{eq:mono-obs-3}%
\end{align}
where $I_{\max}\left(  A;B|C\right)  _{\rho|\rho}$ and $I_{\min}\left(
A;B|C\right)  _{\rho|\rho}$ are defined in (\ref{eq:Imax}) and (\ref{eq:Imin}%
), respectively. However, we cannot relate to the (von Neumann entropy based)
conditional mutual information because its representation in terms of the
relative entropy does not feature the operator $\rho_{AC}^{1/2}\rho_{C}%
^{-1/2}\rho_{BC}\rho_{C}^{-1/2}\rho_{AC}^{1/2}$ as its second argument but
instead has $\exp\left\{  \log\rho_{BC}+\log\rho_{AC}-\log\rho_{C}\right\}  $.

Let $\rho_{ABC}\in\mathcal{S}(\mathcal{H}_{ABC})_{++}$, $\tau_{AC}%
\in\mathcal{S}(\mathcal{H}_{AC})_{++}$, $\omega_{C}\in\mathcal{S}%
(\mathcal{H}_{C})_{++}$, and $\theta_{BC}\in\mathcal{S}(\mathcal{H}_{BC}%
)_{++}$. Tomamichel has informed us that the inequality in (\ref{eq:conj-3})
and its variations are true for $0\leq\alpha\leq\beta$ and such that
$1/\alpha+1/\beta=2$ \cite{T14}. This is because in such a case, we have that
$\alpha/\left(  1-\alpha\right)  =-\beta\left(  1-\beta\right)  $, so that%
\begin{multline}
\left[  \tau_{AC}^{\left(  1-\alpha\right)  /2\alpha}\omega_{C}^{\left(
\alpha-1\right)  /2\alpha}\theta_{BC}^{\left(  1-\alpha\right)  /\alpha}%
\omega_{C}^{\left(  \alpha-1\right)  /2\alpha}\tau_{AC}^{\left(
1-\alpha\right)  /2\alpha}\right]  ^{\alpha/\left(  1-\alpha\right)  }\\
=\left[  \tau_{AC}^{\left(  1-\beta\right)  /2\beta}\omega_{C}^{\left(
\beta-1\right)  /2\beta}\theta_{BC}^{\left(  1-\beta\right)  /\beta}\omega
_{C}^{\left(  \beta-1\right)  /2\beta}\tau_{AC}^{\left(  1-\beta\right)
/2\beta}\right]  ^{\beta/\left(  1-\beta\right)  },
\end{multline}
and similar equalities hold for the five other operator orderings. Since this
is the case, the monotonicity follows directly from the ordinary monotonicity
of the sandwiched R\'{e}nyi relative entropy. By a similar line of reasoning,
the inequality in (\ref{eq:conj-1}) and its variations are true for
$0\leq\alpha\leq\beta$ and such that $\alpha+\beta=2$. Similarly, in such a
case, we have that $1-\alpha=-\left(  1-\beta\right)  $, so that%
\begin{equation}
\left[  \tau_{AC}^{\left(  1-\alpha\right)  /2}\omega_{C}^{\left(
\alpha-1\right)  /2}\theta_{BC}^{1-\alpha}\omega_{C}^{\left(  \alpha-1\right)
/2}\tau_{AC}^{\left(  1-\alpha\right)  /2}\right]  ^{1/\left(  1-\alpha
\right)  }=\left[  \tau_{AC}^{\left(  1-\beta\right)  /2}\omega_{C}^{\left(
\beta-1\right)  /2}\theta_{BC}^{1-\beta}\omega_{C}^{\left(  \beta-1\right)
/2}\tau_{AC}^{\left(  1-\beta\right)  /2}\right]  ^{1/\left(  1-\beta\right)
},
\end{equation}
and similar equalities hold for the five other operator orderings. Then the
monotonicity again follows from the ordinary monotonicity of the R\'{e}nyi
relative entropy. The observations in (\ref{eq:mono-obs-1}%
)-(\ref{eq:mono-obs-2}) are then special cases of the above observations.

\subsection{Implications for tripartite states with small conditional mutual
information}

\label{sec:small-CMI}It has been an open question since the work in
\cite{HJPW04}\ to characterize tripartite quantum states $\rho_{ABC}$\ with
small conditional mutual information $I\left(  A;B|C\right)  _{\rho}$. That
is, given that the various quantum Markov state conditions in (\ref{eq:QMS-1})
and (\ref{eq:QMS-2})-(\ref{eq:QMS-4}) are equivalent to $I\left(
A;B|C\right)  _{\rho}$ being equal to zero, we would like to understand what
happens when we perturb these various conditions. In this section, we pursue this direction
and explicitly show how Conjecture~\ref{conj:monotone-alpha} could be used to address this
important question. 

Several researchers have already considered what happens when perturbing the
quantum Markov state condition in (\ref{eq:QMS-1}), but we include a
discussion here for completeness. To begin with, we know that if there exists
a quantum Markov state $\mu_{ABC}\in\mathcal{M}_{A-C-B}$ such that%
\begin{equation}
\left\Vert \rho_{ABC}-\mu_{ABC}\right\Vert _{1}\leq\varepsilon
\end{equation}
then%
\begin{align}
I\left(  A;B|C\right)  _{\mu}  &  =0,\\
I\left(  A;B|C\right)  _{\rho}  &  \leq8\varepsilon\log\min\left\{
d_{A},d_{B}\right\}  +4h_{2}\left(  \varepsilon\right)  ,
\end{align}
where%
\begin{equation}
h_{2}\left(  x\right)  \equiv-x\log x-\left(  1-x\right)  \log\left(
1-x\right)  \label{eq:bin-entropy}%
\end{equation}
is the binary entropy, which obeys%
\begin{equation}
\lim_{\varepsilon\searrow0}h_{2}\left(  \varepsilon\right)  =0.
\end{equation}
The first line is by definition and the second follows from an application of
the Alicki-Fannes inequality \cite{AF04}. However, the example in \cite{CSW12}
and the subsequent development in \cite{E14} exclude a particular converse of
the above bound. That is, by \cite[Lemma 6]{CSW12}, there exists a sequence of
states $\rho_{ABC}^{d}$ such that%
\begin{equation}
I\left(  A;B|C\right)  _{\rho^{d}}=2\log\left(  \left(  d+2\right)  /d\right)
, \label{eq:counter-example-CMI}%
\end{equation}
which goes to zero as $d\rightarrow\infty$. However, for this same sequence of
states, the following constant lower bound is known%
\begin{equation}
\min_{\mu_{ABC}\in\mathcal{M}_{A-C-B}}D_{0}\left(  \rho_{ABC}^{d}%
\middle\Vert\mu_{ABC}\right)  \geq\log\sqrt{4/3}, \label{eq:erker-bound}%
\end{equation}
by \cite[Theorem~1]{E14}. By employing monotonicity of the R\'{e}nyi relative
entropy with respect to the R\'{e}nyi parameter, so that $D_{1/2}\geq D_{0}$,
and the well-known relation $1-\left\Vert \omega-\tau\right\Vert _{1}/2\leq
\ $Tr$\{\sqrt{\omega}\sqrt{\tau}\}$ for $\omega,\tau\in\mathcal{S}\left(
\mathcal{H}\right)  $ (see, e.g., \cite[Equation~(22)]{CMMAB08}), we can
readily translate the bound in (\ref{eq:erker-bound})\ to a constant lower
bound on the trace distance of $\rho_{ABC}^{d}$ to the set of quantum Markov
states:%
\begin{equation}
\left\Vert \rho_{ABC}^{d}-\mathcal{M}_{A-C-B}\right\Vert _{1}\equiv\min
_{\mu_{ABC}\in\mathcal{M}_{A-C-B}}\left\Vert \rho_{ABC}^{d}-\mu_{ABC}%
\right\Vert _{1}\geq2\left(  1-\left(  3/4\right)  ^{1/4}\right)
\approx0.139. \label{eq:counter-example-TD}%
\end{equation}
So (\ref{eq:counter-example-CMI}) and (\ref{eq:counter-example-TD}) imply that
a Pinsker-like bound of the form $I\left(  A;B|C\right)  _{\rho}\geq
K\left\Vert \rho_{ABC}-\mathcal{M}_{A-C-B}\right\Vert _{1}^{2}$ cannot hold in
general, with $K$ a dimension-independent constant.

We now focus on a perturbation of the conditions in (\ref{eq:QMS-2}%
)-(\ref{eq:QMS-3}). It appears that these cases will be promising for
applications if Conjecture~\ref{conj:monotone-alpha} is true. The following
proposition states that the conditional mutual information is small if it is
possible to recover the system $A$ from system $C$ alone (or by symmetry, if
one can get $B$ from $C$ alone). We note that \eqref{eq:cmi-rho} was proven
independently in \cite[Eq.~(8)]{FR14}.

\begin{proposition}
\label{prop:small-CMI}Let $\rho_{ABC}\in\mathcal{S}\left(  \mathcal{H}%
_{ABC}\right)  $, $\mathcal{R}_{C\rightarrow AC}$ be a CPTP\ \textquotedblleft
recovery\textquotedblright\ map, and $\varepsilon\in\left[  0,1\right]  $.
Suppose that it is possible to recover the system $A$ from system $C$ alone,
in the following sense%
\begin{equation}
\left\Vert \rho_{ABC}-\omega_{ABC}\right\Vert _{1}\leq\varepsilon,
\label{eq:recoverable-state}%
\end{equation}
where%
\begin{equation}
\omega_{ABC}\equiv\mathcal{R}_{C\rightarrow AC}\left(  \rho_{BC}\right)  .
\end{equation}
Then the conditional mutual informations $I\left(  A;B|C\right)  _{\rho}$ and
$I\left(  A;B|C\right)  _{\omega}$\ obey the following bounds:%
\begin{align}
I\left(  A;B|C\right)  _{\rho}  &  \leq4\varepsilon\log d_{B}+2h_{2}\left(
\varepsilon\right)  , \label{eq:cmi-rho}\\
I\left(  A;B|C\right)  _{\omega}  &  \leq4\varepsilon\log d_{B}+2h_{2}\left(
\varepsilon\right)  ,
\end{align}
where $d_{B}$ is the dimension of the $B$ system and $h_{2}\left(
\varepsilon\right)  $ is defined in \eqref{eq:bin-entropy}. By symmetry, a
related bound holds if one can recover system $B$ from system $C$ alone.
\end{proposition}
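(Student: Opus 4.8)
The plan is to reduce everything to the identity $I(A;B|C)=H(B|C)-H(B|AC)$ (valid for any state, and in particular for both $\rho_{ABC}$ and $\omega_{ABC}$), combined with one \emph{exact} data-processing inequality and a single application of the Alicki-Fannes continuity bound for conditional entropy. The key structural observation is that $\omega_{ABC}=\mathcal{R}_{C\rightarrow AC}(\rho_{BC})$ is obtained by a channel acting only on the conditioning system $C$ while leaving $B$ untouched, so the conditional entropy of $B$ given the conditioning register cannot decrease:
\begin{equation}
H(B|AC)_{\omega}\geq H(B|C)_{\rho}. \label{eq:dp-step}
\end{equation}
I would prove \eqref{eq:dp-step} by writing $H(B|AC)_{\omega}=H(B)_{\omega}-I(B;AC)_{\omega}$ and $H(B|C)_{\rho}=H(B)_{\rho}-I(B;C)_{\rho}$, noting that $\omega_{B}=\rho_{B}$ because $B$ is untouched, and invoking data processing $I(B;AC)_{\omega}\leq I(B;C)_{\rho}$ for the channel $\mathcal{R}_{C\rightarrow AC}$ applied to the system correlated with $B$. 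Equivalently, \eqref{eq:dp-step} is just monotonicity of conditional entropy under a channel on the conditioning register.

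First I would record the continuity input. Since the partial trace does not increase trace distance, the hypothesis \eqref{eq:recoverable-state} gives $\|\rho_{BC}-\omega_{BC}\|_{1}\leq\varepsilon$. The Alicki-Fannes inequality \cite{AF04}, applied to the conditional entropy of the $d_{B}$-dimensional system $B$, then yields both $|H(B|AC)_{\rho}-H(B|AC)_{\omega}|\leq 4\varepsilon\log d_{B}+2h_{2}(\varepsilon)$ and $|H(B|C)_{\rho}-H(B|C)_{\omega}|\leq 4\varepsilon\log d_{B}+2h_{2}(\varepsilon)$, where only $d_{B}$ enters because $B$ is the system whose conditional entropy is being perturbed.

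The two bounds then follow by combining \eqref{eq:dp-step} with one continuity estimate each. For $I(A;B|C)_{\rho}$, continuity gives $H(B|AC)_{\rho}\geq H(B|AC)_{\omega}-(4\varepsilon\log d_{B}+2h_{2}(\varepsilon))$, and then \eqref{eq:dp-step} gives $H(B|AC)_{\rho}\geq H(B|C)_{\rho}-(4\varepsilon\log d_{B}+2h_{2}(\varepsilon))$, so that $I(A;B|C)_{\rho}=H(B|C)_{\rho}-H(B|AC)_{\rho}\leq 4\varepsilon\log d_{B}+2h_{2}(\varepsilon)$. For $I(A;B|C)_{\omega}$, I would use \eqref{eq:dp-step} directly to write $I(A;B|C)_{\omega}=H(B|C)_{\omega}-H(B|AC)_{\omega}\leq H(B|C)_{\omega}-H(B|C)_{\rho}\leq 4\varepsilon\log d_{B}+2h_{2}(\varepsilon)$. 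The symmetric statement (recovering $B$ from $C$) follows by exchanging the roles of $A$ and $B$, which merely relabels which local dimension appears.

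There is no genuine obstacle here beyond careful bookkeeping; the one point that must be handled correctly is that the data-processing step \eqref{eq:dp-step} is \emph{exact}, carrying no factor of $\varepsilon$, so each bound consumes only a single application of Alicki-Fannes and therefore inherits the clean constant $4\varepsilon\log d_{B}+2h_{2}(\varepsilon)$ rather than twice that (contrast the coarser estimate $8\varepsilon\log\min\{d_{A},d_{B}\}+4h_{2}(\varepsilon)$ obtained earlier by perturbing both conditional-entropy terms). I would double-check the normalization convention in the cited Alicki-Fannes bound (whether $\varepsilon$ denotes $\|\cdot\|_{1}$ or half of it) to confirm the stated constants, and verify that the inequality in \eqref{eq:dp-step} is being applied with the conditioning register expanded from $C$ to $AC$, which is exactly the direction in which conditional entropy is monotone.
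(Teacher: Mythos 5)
Your proposal is correct and follows essentially the same route as the paper: the identity $I(A;B|C)=H(B|C)-H(B|AC)$, the exact monotonicity $H(B|AC)_{\omega}\geq H(B|C)_{\rho}$ from processing the conditioning register with $\mathcal{R}_{C\rightarrow AC}$, and a single Alicki--Fannes application per bound (using $\left\Vert \rho_{BC}-\omega_{BC}\right\Vert _{1}\leq\varepsilon$ for the $\omega$ case), with the steps merely applied in a different order within each chain. Your bookkeeping and constants match the paper's proof exactly.
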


\begin{proof}
Consider that%
\begin{align}
I\left(  A;B|C\right)  _{\rho}  &  =H\left(  B|C\right)  _{\rho}-H\left(
B|AC\right)  _{\rho}\\
&  \leq H\left(  B|AC\right)  _{\omega}-H\left(  B|AC\right)  _{\rho}\\
&  \leq H\left(  B|AC\right)  _{\omega}-H\left(  B|AC\right)  _{\omega
}+4\varepsilon\log d_{B}+2h_{2}\left(  \varepsilon\right) \\
&  =4\varepsilon\log d_{B}+2h_{2}\left(  \varepsilon\right)  .
\end{align}
The first inequality follows because the conditional entropy is monotone
increasing under quantum operations on the conditioning system (the map
$\mathcal{R}_{C\rightarrow AC}$ is applied to the system $C$ of state
$\rho_{ABC}$ to produce $\omega_{ABC}$ and the conditional entropy only
increases under such processing). The second inequality is a result of
(\ref{eq:recoverable-state}) and the Alicki-Fannes inequality \cite{AF04}
(continuity of conditional entropy). Similarly, consider that%
\begin{align}
I\left(  A;B|C\right)  _{\omega}  &  =H\left(  B|C\right)  _{\omega}-H\left(
B|AC\right)  _{\omega}\\
&  \leq H\left(  B|C\right)  _{\rho}-H\left(  B|AC\right)  _{\omega
}+4\varepsilon\log d_{B}+2h_{2}\left(  \varepsilon\right) \\
&  \leq H\left(  B|AC\right)  _{\omega}-H\left(  B|AC\right)  _{\omega
}+4\varepsilon\log d_{B}+2h_{2}\left(  \varepsilon\right) \\
&  =4\varepsilon\log d_{B}+2h_{2}\left(  \varepsilon\right)  .
\end{align}
The first inequality is from the fact that (\ref{eq:recoverable-state})
implies that
\begin{equation}
\left\Vert \rho_{BC}-\omega_{BC}\right\Vert _{1}\leq\varepsilon
\end{equation}
and the Alicki-Fannes' inequality. The second is again from monotonicity of
conditional entropy.
\end{proof}

The implications of Conjecture~\ref{conj:monotone-alpha}\ are nontrivial. For
example, if it were true, then we could conclude a converse of
Proposition~\ref{prop:small-CMI}, that if the conditional mutual information
is small, then it is possible to recover the system $A$ from system $C$ alone
(or by symmetry, that one can get $B$ from $C$ alone). That is, the following
relation would hold for $\rho_{ABC}\in\mathcal{S}(\mathcal{H}_{ABC})_{++}$:%
\begin{align}
I\left(  A;B|C\right)  _{\rho}  &  \geq I_{\min}\left(  A;B|C\right)
_{\rho|\rho}\label{eq:conq1}\\
&  =-\log F\left(  \rho_{ABC},\rho_{AC}^{1/2}\rho_{C}^{-1/2}\rho_{BC}\rho
_{C}^{-1/2}\rho_{AC}^{1/2}\right) \\
&  =-\log F\left(  \rho_{ABC},\mathcal{R}_{C\rightarrow AC}^{P}\left(
\rho_{BC}\right)  \right) \\
&  \geq-\log\left[  1-\left(  \frac{1}{2}\left\Vert \rho_{ABC}-\mathcal{R}%
_{C\rightarrow AC}^{P}\left(  \rho_{BC}\right)  \right\Vert _{1}\right)
^{2}\right] \\
&  \geq\frac{1}{4}\left\Vert \rho_{ABC}-\mathcal{R}_{C\rightarrow AC}%
^{P}\left(  \rho_{BC}\right)  \right\Vert _{1}^{2},
\end{align}
where $\mathcal{R}_{C\rightarrow AC}^{P}$ is Petz's transpose map discussed
in~\cite{HJPW04}%
\begin{equation}
\mathcal{R}_{C\rightarrow AC}^{P}(\cdot)\equiv\rho_{AC}^{1/2}\rho_{C}%
^{-1/2}(\cdot)\rho_{C}^{-1/2}\rho_{AC}^{1/2}.
\end{equation}
In the above, the first inequality would follow from
Conjecture~\ref{conj:monotone-alpha}, the second is a result of well known
relations between trace distance and fidelity \cite{FG98}, and the last is a
consequence of the inequality $-\log\left(  1-x\right)  \geq x$, valid for
$x\leq1$. Thus, the truth of Conjecture~\ref{conj:monotone-alpha} would
establish the truth of an open conjecture from \cite{K13conj} (up to a
constant). As pointed out in \cite{K13conj}, this would then imply that for
tripartite states $\rho_{ABC}$ with conditional mutual information
$I(A;B|C)_{\rho}$ small (i.e., states that fulfill strong subadditivity with
near equality), Petz's transpose map for the partial trace over $A$ is good
for recovering $\rho_{ABC}$ from $\rho_{BC}$. Hence, even though $\rho_{ABC}$
does not have to be close to a quantum Markov state if $I(A;B|C)_{\rho}$ is
small (as discussed above), $A$ would still be nearly independent of $B$ from
the perspective of $C$ in the sense that $\rho_{ABC}$ could be approximately
recovered from $\rho_{BC}$ alone. This would give an operationally useful
characterization of states that fulfill strong subadditivity with near
equality and would be helpful for answering some open questions concerning
squashed entanglement, as discussed in \cite{Winterconj}.

For the quantum Markov state condition in (\ref{eq:QMS-4}), for simplicity we
consider instead the \textquotedblleft relative entropy
distance\textquotedblright\ between $\rho_{ABC}$ and $\varsigma_{ABC}$, where%
\begin{equation}
\varsigma_{ABC}\equiv\exp\left\{  \log\rho_{AC}+\log\rho_{BC}-\log\rho
_{C}\right\}  .
\end{equation}
So if%
\begin{equation}
D\left(  \rho_{ABC}\Vert\varsigma_{ABC}\right)  \leq\varepsilon,
\end{equation}
then we can conclude that%
\begin{equation}
I\left(  A;B|C\right)  _{\rho}=D\left(  \rho_{ABC}\Vert\varsigma_{ABC}\right)
\leq\varepsilon.
\end{equation}
If desired, one can also obtain an $\varepsilon$-dependent upper bound on
$I\left(  A;B|C\right)  _{\varsigma^{\prime}}$, where $\varsigma_{ABC}%
^{\prime}\equiv\varsigma_{ABC}/$Tr$\left\{  \varsigma_{ABC}\right\}  $, which
vanishes in the limit as $\varepsilon$ goes to zero. This can be accomplished
by employing the bound in Corollary~\ref{cor:pinsker-exp-log} and by bounding
Tr$\left\{  \varsigma_{ABC}\right\}  $ from below by $1-\left\Vert \rho
_{ABC}-\varsigma_{ABC}\right\Vert _{1}$. The bound in
Corollary~\ref{cor:pinsker-exp-log} also serves as a converse of these bounds:
if the conditional mutual information is small, then the trace distance
between $\rho_{ABC}$ and $\varsigma_{ABC}$ is small. However, it is not clear
that a perturbation of the quantum Markov state condition in (\ref{eq:QMS-4})
will be as useful in applications as a perturbation of (\ref{eq:QMS-2}%
)-(\ref{eq:QMS-3}) would be, mainly because the map $\rho_{ABC}\rightarrow
\exp\left\{  \log\rho_{AC}+\log\rho_{BC}-\log\rho_{C}\right\}  $ is non-linear
(as discussed in \cite{K14a}).

\section{Discussion}

\label{sec:conclusion}This paper has defined several R\'{e}nyi generalizations
of the conditional quantum mutual information (CQMI) quantities that satisfy
properties that should find use in applications. Namely, we showed that these
generalizations are non-negative and are monotone under local quantum
operations on one of the systems $A$ or $B$. An important open question is to
prove that they are monotone under local quantum operations on both systems.
Some of the R\'{e}nyi generalizations satisfy a generalization of the duality
relation $I(A;B|C)=I(A;B|D)$, which holds for a four-party pure state
$\psi_{ABCD}$. We conjecture that these R\'{e}nyi generalizations of the
CQMI\ are monotone non-decreasing\ in the R\'{e}nyi parameter $\alpha$, and we
have proved that this conjecture is true when $\alpha$ is in a neighborhood of
one and in some other special cases. The truth of this conjecture in general
would have implications in condensed matter physics, as detailed in
\cite{K13conj}, and quantum communication complexity, as mentioned in
\cite{Touchette14}.

Based on the fact that the conditional mutual information can be written as%
\begin{equation}
I\left(  A;B|C\right)  _{\rho}   =
D\left(  \rho_{ABC}\Vert\exp\left\{  \log\rho_{AC}+\log\rho_{BC}-\log
\rho_{C}\right\}  \right)  ,
\end{equation}
one could consider another R\'{e}nyi generalization of the conditional mutual
information, such as%
\begin{equation}
  D_{\alpha}\left(  \rho_{ABC}\Vert\exp\left\{  \log\rho_{AC}+\log\rho
_{BC}-\log\rho_{C}\right\}  \right)  ,\label{eq:gen-3-editor}%
\end{equation}
or with the sandwiched variant. However, it is unclear to us whether (\ref{eq:gen-3-editor}) is
monotone under local operations, which we have argued is an important
property for a R\'{e}nyi generalization of conditional mutual information.

There are many directions to consider going forward from this paper. First,
one could improve many of the results here on a technical level. It would be
interesting to understand in depth the limits in (\ref{eq:limit-xi-to-0-Delta}%
), (\ref{eq:limit-xi-Delta-a-1}), and (\ref{eq:finite-Delta-SW}) in order to
establish the most general support conditions for the $\Delta$, $\Delta
_{\alpha}$, and $\widetilde{\Delta}_{\alpha}$ quantities, respectively, as has
been done for the quantum and R\'{e}nyi relative entropies, as recalled in
(\ref{eq:rel-ent-consistency}), (\ref{eq:supp-renyi-1}), and
(\ref{eq:supp-renyi-2}). Next, if one could establish uniform convergence of
the $\Delta_{\alpha}$ and $\widetilde{\Delta}_{\alpha}$ quantities as $\alpha$
goes to one, then we could conclude that the optimized versions of these
quantities converge to the conditional mutual information in this limit. One
might also attempt to extend Theorem~\ref{thm:renyi-cmi-a-to-1},
Theorem~\ref{thm:conv-vN-sandwiched}, and
Lemma~\ref{lem:monotone-lemma-sandwiched} to hold for positive semi-definite
density operators.

As far as applications are concerned, one could explore a R\'{e}nyi squashed
entanglement and determine if several properties hold which are analogous to
the squashed entanglement \cite{CW04}. Such a quantity might be helpful in
strengthening \cite[Proposition~10]{CW04}, so that the squashed entanglement
could be interpreted as a strong converse upper bound on distillable
entanglement. More generally, it might be helpful in strengthening the main
result of \cite{TGW13}, so that the upper bound established on the two-way
assisted quantum capacity could be interpreted as a strong converse rate. The
quantities defined here might be useful in the context of one-shot information
theory, for example, to establish a one-shot state redistribution protocol as
an extension of the main result of \cite{DY08}. Preliminary results on
R\'{e}nyi squashed entanglement and discord are discussed in our follow-up
paper \cite{SBW14}. One could also explore applications of the R\'{e}nyi
conditional mutual informations in the context of condensed matter physics or
high energy physics, as the R\'{e}nyi entropy has been employed extensively in
these contexts \cite{CC09}.

Finally, these potential applications in information theory and physics should
help in singling out some of our many possible definitions for R\'{e}nyi
conditional mutual information.

\bigskip

\textbf{Acknowledgments.} KS acknowledges support ffrom the DARPA Quiness
Program through US Army Research Office award W31P4Q-12-1-0019 and the
Graduate school, Louisiana State University. MMW\ is grateful to the Institute
for Quantum Information and Matter at Caltech for hospitality during a
research visit in July 2014. MMW\ acknowledges startup funds from the
Department of Physics and Astronomy at LSU, support from the NSF\ under Award
No.~CCF-1350397, and support from the DARPA Quiness Program through US Army
Research Office award W31P4Q-12-1-0019.

\appendix

\section{Sibson identity for the R\'{e}nyi conditional mutual information}

\label{sec:sibson}The R\'{e}nyi conditional mutual information in
Definition~\ref{def:Renyi-CMI}\ has an explicit form, much like other
R\'{e}nyi information quantities \cite{KW09,SW12,GW13,TBH13}. We prove this in
two steps, first by proving the following Sibson identity \cite{S69}.

\begin{lemma}
\label{lem:unique-min}The following quantum Sibson identity holds when
$\operatorname{supp}\left(  \rho_{ABC}\right)  \subseteq\operatorname{supp}%
\left(  \sigma_{BC}\right)  $ and for $\alpha\in(0,1)\cup(1,\infty)$:%
\begin{equation}
\Delta_{\alpha}\left(  \rho_{ABC},\rho_{AC},\rho_{C},\sigma_{BC}\right)
=\Delta_{\alpha}\left(  \rho_{ABC},\rho_{AC},\rho_{C},\sigma_{BC}^{\ast
}\right)  +D_{\alpha}\left(  \sigma_{BC}^{\ast}\Vert\sigma_{BC}\right)  ,
\end{equation}
with the state $\sigma_{BC}^{\ast}$ having the form%
\begin{equation}
\sigma_{BC}^{\ast}\equiv\frac{\left(  \operatorname{Tr}_{A}\left\{  \rho
_{C}^{\left(  \alpha-1\right)  /2}\rho_{AC}^{\left(  1-\alpha\right)  /2}%
\rho_{ABC}^{\alpha}\rho_{AC}^{\left(  1-\alpha\right)  /2}\rho_{C}^{\left(
\alpha-1\right)  /2}\right\}  \right)  ^{1/\alpha}}{\operatorname{Tr}\left\{
\left(  \operatorname{Tr}_{A}\left\{  \rho_{C}^{\left(  \alpha-1\right)
/2}\rho_{AC}^{\left(  1-\alpha\right)  /2}\rho_{ABC}^{\alpha}\rho
_{AC}^{\left(  1-\alpha\right)  /2}\rho_{C}^{\left(  \alpha-1\right)
/2}\right\}  \right)  ^{1/\alpha}\right\}  }. \label{eq:optimal-sibson}%
\end{equation}

\end{lemma}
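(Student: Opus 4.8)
The plan is to verify the identity by a direct algebraic computation whose engine is the observation that every term can be re-expressed through the single operator
\[
\Theta_{BC}\equiv\operatorname{Tr}_{A}\left\{  \rho_{C}^{\left(
\alpha-1\right)  /2}\rho_{AC}^{\left(  1-\alpha\right)  /2}\rho_{ABC}^{\alpha
}\rho_{AC}^{\left(  1-\alpha\right)  /2}\rho_{C}^{\left(  \alpha-1\right)
/2}\right\}  ,
\]
that is, the un-normalized numerator appearing in \eqref{eq:optimal-sibson}, together with the normalization constant $Z\equiv\operatorname{Tr}\{\Theta_{BC}^{1/\alpha}\}$, so that $\sigma_{BC}^{\ast}=\Theta_{BC}^{1/\alpha}/Z$. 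First I would rewrite the left-hand $Q_{\alpha}$ from \eqref{eq:Q_a_1} with $(\tau_{AC},\omega_{C},\theta_{BC})=(\rho_{AC},\rho_{C},\sigma_{BC})$, using cyclicity of the trace to collect the factors $\rho_{AC}^{(1-\alpha)/2}$ and $\rho_{C}^{(\alpha-1)/2}$ around $\rho_{ABC}^{\alpha}$. Since $\sigma_{BC}^{1-\alpha}$ acts as the identity on system $A$ (the convention of Section~\ref{sec:notation}), the partial trace over $A$ may then be taken against it, yielding the clean form $Q_{\alpha}(\rho_{ABC},\rho_{AC},\rho_{C},\sigma_{BC})=\operatorname{Tr}\{\Theta_{BC}\,\sigma_{BC}^{1-\alpha}\}$.

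Next I would evaluate the two right-hand terms in terms of $\Theta_{BC}$ and $Z$. Because $(\sigma_{BC}^{\ast})^{\alpha}=\Theta_{BC}/Z^{\alpha}$, the R\'enyi relative entropy \eqref{eq:Renyi-rel-ent} reads
\[
D_{\alpha}\left(  \sigma_{BC}^{\ast}\Vert\sigma_{BC}\right)  =\frac{1}
{\alpha-1}\log\operatorname{Tr}\{\Theta_{BC}\sigma_{BC}^{1-\alpha}\}-\frac
{\alpha}{\alpha-1}\log Z,
\]
which by the previous step equals $\Delta_{\alpha}(\rho_{ABC},\rho_{AC},\rho_{C},\sigma_{BC})-\tfrac{\alpha}{\alpha-1}\log Z$. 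For the remaining term, the same partial-trace manipulation gives $Q_{\alpha}(\rho_{ABC},\rho_{AC},\rho_{C},\sigma_{BC}^{\ast})=\operatorname{Tr}\{\Theta_{BC}(\sigma_{BC}^{\ast})^{1-\alpha}\}=Z^{\alpha-1}\operatorname{Tr}\{\Theta_{BC}^{1+(1-\alpha)/\alpha}\}$, and here the arithmetic identity $1+(1-\alpha)/\alpha=1/\alpha$ collapses the exponent, so that $Q_{\alpha}(\rho_{ABC},\rho_{AC},\rho_{C},\sigma_{BC}^{\ast})=Z^{\alpha-1}\operatorname{Tr}\{\Theta_{BC}^{1/\alpha}\}=Z^{\alpha}$ and hence $\Delta_{\alpha}(\rho_{ABC},\rho_{AC},\rho_{C},\sigma_{BC}^{\ast})=\tfrac{\alpha}{\alpha-1}\log Z$. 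Adding the two right-hand contributions cancels the $\log Z$ terms and reproduces $\Delta_{\alpha}(\rho_{ABC},\rho_{AC},\rho_{C},\sigma_{BC})$, which is the claim.

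The routine part is thus the bookkeeping above; the hard part will be justifying these formal steps under the generalized-inverse conventions and the single hypothesis $\operatorname{supp}(\rho_{ABC})\subseteq\operatorname{supp}(\sigma_{BC})$. In particular I would need to check that $\sigma_{BC}^{\ast}$ is a well-defined density operator whose support is compatible with $\sigma_{BC}$, so that $D_{\alpha}(\sigma_{BC}^{\ast}\Vert\sigma_{BC})$ and both $\Delta_{\alpha}$ quantities are finite, and that the cyclicity and partial-trace steps remain valid when the operators fail to be invertible. For $\alpha\in(0,1)$ this means working with the non-orthogonality condition \eqref{eq:non-orthogonal} rather than full support, while for $\alpha>1$ it means verifying $\operatorname{supp}(\Theta_{BC})\subseteq\operatorname{supp}(\sigma_{BC})$ so that the relative-entropy term does not diverge. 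These support verifications, rather than the algebra, are where the genuine care is required.
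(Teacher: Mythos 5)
Your computation is correct and is essentially the same as the paper's own proof of Lemma~\ref{lem:unique-min}: the paper likewise uses cyclicity of the trace and the partial trace over $A$ to reduce all three quantities to traces against your $\Theta_{BC}$ (encoded there via the relation $\bigl[\sigma_{BC}^{\ast}\operatorname{Tr}\{\Theta_{BC}^{1/\alpha}\}\bigr]^{\alpha}=\Theta_{BC}$), identifies the $D_{\alpha}\left(\sigma_{BC}^{\ast}\Vert\sigma_{BC}\right)$ term from the definition in \eqref{eq:Renyi-rel-ent}, and collapses the exponent $1+(1-\alpha)/\alpha=1/\alpha$ to evaluate $\Delta_{\alpha}$ at the optimizer, exactly as you do. Your closing caveats about generalized inverses and support conditions are prudent but go beyond the paper's proof, which carries out only the formal algebra you reproduced.
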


\begin{proof}
The relation for $\sigma_{BC}^{\ast}$ implies that%
\begin{multline}
\left[  \sigma_{BC}^{\ast}\text{Tr}\left\{  \left(  \text{Tr}_{A}\left\{
\rho_{C}^{\left(  \alpha-1\right)  /2}\rho_{AC}^{\left(  1-\alpha\right)
/2}\rho_{ABC}^{\alpha}\rho_{AC}^{\left(  1-\alpha\right)  /2}\rho_{C}^{\left(
\alpha-1\right)  /2}\right\}  \right)  ^{1/\alpha}\right\}  \right]  ^{\alpha
}\\
=\text{Tr}_{A}\left\{  \rho_{C}^{\left(  \alpha-1\right)  /2}\rho
_{AC}^{\left(  1-\alpha\right)  /2}\rho_{ABC}^{\alpha}\rho_{AC}^{\left(
1-\alpha\right)  /2}\rho_{C}^{\left(  \alpha-1\right)  /2}\right\}  .
\end{multline}
Then consider that%
\begin{align}
&  \Delta_{\alpha}\left(  \rho_{ABC},\rho_{AC},\rho_{C},\sigma_{BC}\right)
\nonumber\\
&  =\frac{1}{\alpha-1}\log\text{Tr}\left\{  \rho_{ABC}^{\alpha}\rho
_{AC}^{\left(  1-\alpha\right)  /2}\rho_{C}^{\left(  \alpha-1\right)
/2}\sigma_{BC}^{1-\alpha}\rho_{C}^{\left(  \alpha-1\right)  /2}\rho
_{AC}^{\left(  1-\alpha\right)  /2}\right\} \\
&  =\frac{1}{\alpha-1}\log\text{Tr}\left\{  \rho_{C}^{\left(  \alpha-1\right)
/2}\rho_{AC}^{\left(  1-\alpha\right)  /2}\rho_{ABC}^{\alpha}\rho
_{AC}^{\left(  1-\alpha\right)  /2}\rho_{C}^{\left(  \alpha-1\right)
/2}\sigma_{BC}^{1-\alpha}\right\} \\
&  =\frac{1}{\alpha-1}\log\text{Tr}\left\{  \text{Tr}_{A}\left\{  \rho
_{C}^{\left(  \alpha-1\right)  /2}\rho_{AC}^{\left(  1-\alpha\right)  /2}%
\rho_{ABC}^{\alpha}\rho_{AC}^{\left(  1-\alpha\right)  /2}\rho_{C}^{\left(
\alpha-1\right)  /2}\right\}  \sigma_{BC}^{1-\alpha}\right\} \\
&  =\frac{1}{\alpha-1}\log\text{Tr}\left\{  \left[  \sigma_{BC}^{\ast}\right]
^{\alpha}\sigma_{BC}^{1-\alpha}\right\}  +\frac{\alpha}{\alpha-1}\log
\text{Tr}\left\{  \left(  \text{Tr}_{A}\left\{  \rho_{C}^{\left(
\alpha-1\right)  /2}\rho_{AC}^{\left(  1-\alpha\right)  /2}\rho_{ABC}^{\alpha
}\rho_{AC}^{\left(  1-\alpha\right)  /2}\rho_{C}^{\left(  \alpha-1\right)
/2}\right\}  \right)  ^{1/\alpha}\right\}  .
\end{align}
Now consider expanding the following:%
\begin{align}
&  \Delta_{\alpha}\left(  \rho_{ABC},\rho_{AC},\rho_{C},\sigma_{BC}^{\ast
}\right) \nonumber\\
&  =\frac{1}{\alpha-1}\log\text{Tr}\left\{  \rho_{C}^{\left(  \alpha-1\right)
/2}\rho_{AC}^{\left(  1-\alpha\right)  /2}\rho_{ABC}^{\alpha}\rho
_{AC}^{\left(  1-\alpha\right)  /2}\rho_{C}^{\left(  \alpha-1\right)
/2}\left[  \sigma_{BC}^{\ast}\right]  ^{1-\alpha}\right\} \\
&  =\frac{1}{\alpha-1}\log\text{Tr}\left\{  \text{Tr}_{A}\left\{  \rho
_{C}^{\left(  \alpha-1\right)  /2}\rho_{AC}^{\left(  1-\alpha\right)  /2}%
\rho_{ABC}^{\alpha}\rho_{AC}^{\left(  1-\alpha\right)  /2}\rho_{C}^{\left(
\alpha-1\right)  /2}\right\}  \left[  \sigma_{BC}^{\ast}\right]  ^{1-\alpha
}\right\} \\
&  =\frac{1}{\alpha-1}\log\text{Tr}\left\{  \left[  \text{Tr}_{A}\left\{
\rho_{C}^{\left(  \alpha-1\right)  /2}\rho_{AC}^{\left(  1-\alpha\right)
/2}\rho_{ABC}^{\alpha}\rho_{AC}^{\left(  1-\alpha\right)  /2}\rho_{C}^{\left(
\alpha-1\right)  /2}\right\}  \right]  ^{1/\alpha}\right\} \\
&  +\log\text{Tr}\left\{  \left(  \text{Tr}_{A}\left\{  \rho_{C}^{\left(
\alpha-1\right)  /2}\rho_{AC}^{\left(  1-\alpha\right)  /2}\rho_{ABC}^{\alpha
}\rho_{AC}^{\left(  1-\alpha\right)  /2}\rho_{C}^{\left(  \alpha-1\right)
/2}\right\}  \right)  ^{1/\alpha}\right\} \\
&  =\frac{\alpha}{\alpha-1}\log\text{Tr}\left\{  \left(  \text{Tr}_{A}\left\{
\rho_{C}^{\left(  \alpha-1\right)  /2}\rho_{AC}^{\left(  1-\alpha\right)
/2}\rho_{ABC}^{\alpha}\rho_{AC}^{\left(  1-\alpha\right)  /2}\rho_{C}^{\left(
\alpha-1\right)  /2}\right\}  \right)  ^{1/\alpha}\right\}  .
\end{align}
Putting everything together, we can conclude the statement of the lemma.
\end{proof}

\begin{corollary}
\label{cor:sibson}The R\'{e}nyi conditional mutual information has the
following explicit form for $\alpha\in(0,1)\cup(1,\infty)$:%
\begin{equation}
I_{\alpha}\left(  A;B|C\right)  _{\rho}=\frac{\alpha}{\alpha-1}\log
\operatorname{Tr}\left\{  \left(  \rho_{C}^{\left(  \alpha-1\right)
/2}\operatorname{Tr}_{A}\left\{  \rho_{AC}^{\left(  1-\alpha\right)  /2}%
\rho_{ABC}^{\alpha}\rho_{AC}^{\left(  1-\alpha\right)  /2}\right\}  \rho
_{C}^{\left(  \alpha-1\right)  /2}\right)  ^{1/\alpha}\right\}  .
\end{equation}
The infimum in $I_{\alpha}\left(  A;B|C\right)  _{\rho}$ is achieved uniquely
by the state in \eqref{eq:optimal-sibson}, so that it can be replaced by a minimum.
\end{corollary}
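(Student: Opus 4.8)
The plan is to read Corollary~\ref{cor:sibson} off the quantum Sibson identity of Lemma~\ref{lem:unique-min}, since that lemma has already done the algebraic heavy lifting. The identity states that for every feasible $\sigma_{BC}$ (that is, $\operatorname{supp}(\rho_{ABC})\subseteq\operatorname{supp}(\sigma_{BC})$),
\[
\Delta_{\alpha}\left(\rho_{ABC},\rho_{AC},\rho_{C},\sigma_{BC}\right)=\Delta_{\alpha}\left(\rho_{ABC},\rho_{AC},\rho_{C},\sigma_{BC}^{\ast}\right)+D_{\alpha}\left(\sigma_{BC}^{\ast}\Vert\sigma_{BC}\right),
\]
where the first term on the right is independent of $\sigma_{BC}$. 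Thus the minimization defining $I_{\alpha}(A;B|C)_{\rho}$ in \eqref{eq:renyi-cmi} collapses to minimizing the Rényi relative entropy $D_{\alpha}(\sigma_{BC}^{\ast}\Vert\sigma_{BC})$ over $\sigma_{BC}$.

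First I would confirm that $\sigma_{BC}^{\ast}$ from \eqref{eq:optimal-sibson} is a legitimate density operator in the feasible set. Unit trace is built into the normalization, and positivity is immediate: the operator $\operatorname{Tr}_{A}\{\rho_{C}^{(\alpha-1)/2}\rho_{AC}^{(1-\alpha)/2}\rho_{ABC}^{\alpha}\rho_{AC}^{(1-\alpha)/2}\rho_{C}^{(\alpha-1)/2}\}$ is a partial trace of a Hermitian conjugation of the positive operator $\rho_{ABC}^{\alpha}$, hence positive semi-definite, so its $1/\alpha$ power is as well. The one point requiring care is the support condition $\operatorname{supp}(\rho_{ABC})\subseteq\operatorname{supp}(\sigma_{BC}^{\ast})$, which I would reduce to $\operatorname{supp}(\rho_{BC})\subseteq\operatorname{supp}(\sigma_{BC}^{\ast})$ and verify directly from the explicit form.

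Next I would invoke non-negativity of the Rényi relative entropy, $D_{\alpha}(\sigma_{BC}^{\ast}\Vert\sigma_{BC})\geq0$ with equality iff $\sigma_{BC}=\sigma_{BC}^{\ast}$, across the full range $\alpha\in(0,1)\cup(1,\infty)$. The excerpt records this for $\alpha\in[0,1)\cup(1,2]$; to cover $\alpha>2$ I would use monotonicity of $D_{\alpha}$ in $\alpha$ from \eqref{eq:monotonicity-in-alpha-intro}, giving $D_{\alpha}\geq D_{2}\geq0$, and noting that $D_{\alpha}=0$ forces $D_{2}=0$ and hence equality of the two states. Combined with feasibility of $\sigma_{BC}^{\ast}$, this shows the infimum is attained uniquely at $\sigma_{BC}^{\ast}$ (so it is a minimum), with value $\Delta_{\alpha}(\rho_{ABC},\rho_{AC},\rho_{C},\sigma_{BC}^{\ast})$. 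The latter was already computed in the proof of Lemma~\ref{lem:unique-min} to equal $\tfrac{\alpha}{\alpha-1}\log\operatorname{Tr}\{(\operatorname{Tr}_{A}\{\rho_{C}^{(\alpha-1)/2}\rho_{AC}^{(1-\alpha)/2}\rho_{ABC}^{\alpha}\rho_{AC}^{(1-\alpha)/2}\rho_{C}^{(\alpha-1)/2}\})^{1/\alpha}\}$.

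Finally, to match the form stated in the corollary I would commute $\rho_{C}$ through the partial trace over $A$: since $\rho_{C}$ acts as $I_{A}\otimes\rho_{C}$ and is therefore trivial on $A$, one has $\operatorname{Tr}_{A}\{\rho_{C}^{(\alpha-1)/2}X\rho_{C}^{(\alpha-1)/2}\}=\rho_{C}^{(\alpha-1)/2}\operatorname{Tr}_{A}\{X\}\rho_{C}^{(\alpha-1)/2}$, which converts the expression into the advertised one. The only genuinely delicate step is the support verification for $\sigma_{BC}^{\ast}$; everything else is routine bookkeeping, since the trace manipulations were discharged inside Lemma~\ref{lem:unique-min}.
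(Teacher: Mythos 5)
Your proof is correct and takes essentially the same route as the paper's: the paper likewise reads the corollary off the Sibson identity of Lemma~\ref{lem:unique-min}, noting that $\Delta_{\alpha}\left(\rho_{ABC},\rho_{AC},\rho_{C},\sigma_{BC}^{\ast}\right)$ is independent of $\sigma_{BC}$ and that $D_{\alpha}\left(\sigma_{BC}^{\ast}\Vert\sigma_{BC}\right)\geq 0$ with equality iff $\sigma_{BC}=\sigma_{BC}^{\ast}$, so the infimum is a minimum attained uniquely at the state in \eqref{eq:optimal-sibson}. Your additional bookkeeping (checking that $\sigma_{BC}^{\ast}$ is a feasible density operator, extending non-negativity and the equality condition of $D_{\alpha}$ to $\alpha>2$ via monotonicity in $\alpha$, and pulling $\rho_{C}$ through $\operatorname{Tr}_{A}$) merely makes explicit steps the paper leaves implicit.
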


\begin{proof}
This follows from the previous lemma:%
\begin{align}
I_{\alpha}\left(  A;B|C\right)  _{\rho}  &  =\inf_{\sigma_{BC}}\Delta_{\alpha
}\left(  \rho_{ABC},\rho_{AC},\rho_{C},\sigma_{BC}\right) \\
&  =\inf_{\sigma_{BC}}\left[  \Delta_{\alpha}\left(  \rho_{ABC},\rho_{AC}%
,\rho_{C},\sigma_{BC}^{\ast}\right)  +D_{\alpha}\left(  \sigma_{BC}^{\ast
}\Vert\sigma_{BC}\right)  \right] \\
&  =\Delta_{\alpha}\left(  \rho_{ABC},\rho_{AC},\rho_{C},\sigma_{BC}^{\ast
}\right) \\
&  =\frac{\alpha}{\alpha-1}\log\text{Tr}\left\{  \left(  \text{Tr}_{A}\left\{
\rho_{C}^{\left(  \alpha-1\right)  /2}\rho_{AC}^{\left(  1-\alpha\right)
/2}\rho_{ABC}^{\alpha}\rho_{AC}^{\left(  1-\alpha\right)  /2}\rho_{C}^{\left(
\alpha-1\right)  /2}\right\}  \right)  ^{1/\alpha}\right\}  .
\end{align}

\end{proof}

Other Sibson identities hold for other variations of the R\'{e}nyi conditional
mutual information (whenever the innermost operator is optimized over and the
others are the marginals of $\rho_{ABC}$). The proof for this is the same as
given above.

\section{Convergence of the R\'{e}nyi conditional mutual information}

\label{sec:renyi-cmi-a-to-1}Before giving a proof of
Theorem~\ref{thm:renyi-cmi-a-to-1}, we first establish the following lemma,
which is a slight extension of \cite[Proposition~15]{MDSFT13}.

\begin{lemma}
\label{lem:matrix-der-lemma}Let $Z\left(  \alpha\right)  \in\mathcal{B}\left(
\mathcal{H}\right)  _{++}$ be an operator-valued function and let $f\left(
\alpha\right)  $ be a function, both continuously differentiable in $\alpha$
for all $\alpha\in(0,\infty)$. Then the derivative $\frac{d}{d\alpha
}\operatorname{Tr}\{Z\left(  \alpha\right)  ^{f\left(  \alpha\right)  }\}$
exists and is equal to%
\begin{equation}
\frac{d}{d\alpha}\operatorname{Tr}\left\{  Z\left(  \alpha\right)  ^{f\left(
\alpha\right)  }\right\}  =\left(  \frac{d}{d\alpha}f\left(  \alpha\right)
\right)  \operatorname{Tr}\left\{  Z\left(  \alpha\right)  ^{f\left(
\alpha\right)  }\log Z\left(  \alpha\right)  \right\}  +f\left(
\alpha\right)  \operatorname{Tr}\left\{  Z\left(  \alpha\right)  ^{f\left(
\alpha\right)  -1}\frac{d}{d\alpha}Z\left(  \alpha\right)  \right\}  .
\end{equation}

\end{lemma}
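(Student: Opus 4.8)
The plan is to reduce the statement to the case of a \emph{fixed} exponent, which is exactly the result being extended, and then to account separately for the variation of $f\left(  \alpha\right)  $ through a chain-rule device. To this end I would introduce the two-variable function
\begin{equation}
F\left(  \alpha,\beta\right)  \equiv\operatorname{Tr}\left\{  Z\left(
\alpha\right)  ^{f\left(  \beta\right)  }\right\}  ,
\end{equation}
so that the quantity of interest is $G\left(  \alpha\right)  \equiv F\left(
\alpha,\alpha\right)  $. Since $Z\left(  \alpha\right)  \in\mathcal{B}\left(
\mathcal{H}\right)  _{++}$ is continuously differentiable, its spectrum stays
in a compact subset of $\left(  0,\infty\right)  $ locally in $\alpha$, so
that the maps $x\mapsto x^{f\left(  \beta\right)  }$ and $x\mapsto\log x$ are
analytic on a neighborhood of this spectrum; together with continuous
differentiability of $f$, this guarantees that $F$ is jointly continuously
differentiable, whence $G^{\prime}\left(  \alpha\right)  =\partial_{\alpha
}F\left(  \alpha,\alpha\right)  +\partial_{\beta}F\left(  \alpha
,\alpha\right)  $.

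For the $\beta$-derivative I would hold $Z\left(  \alpha\right)  $ fixed and
write $Z^{f\left(  \beta\right)  }=\exp\left(  f\left(  \beta\right)  \log
Z\right)  $; because the operators $f\left(  \beta\right)  \log Z$ all commute
as $\beta$ varies, the scalar derivative passes cleanly through the
exponential and cyclicity of the trace yields $\partial_{\beta}F\left(
\alpha,\beta\right)  =f^{\prime}\left(  \beta\right)  \operatorname{Tr}\left\{
Z\left(  \alpha\right)  ^{f\left(  \beta\right)  }\log Z\left(  \alpha\right)
\right\}  $, which at $\beta=\alpha$ is the first term of the claimed formula.
For the $\alpha$-derivative I would instead hold the exponent $f=f\left(
\beta\right)  $ fixed, so that $\partial_{\alpha}F$ is the derivative of
$\operatorname{Tr}\left\{  h\left(  Z\left(  \alpha\right)  \right)  \right\}
$ with $h\left(  x\right)  =x^{f}$. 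This is precisely the situation treated in
\cite[Proposition~15]{MDSFT13}, which gives $\partial_{\alpha}F\left(
\alpha,\beta\right)  =f\operatorname{Tr}\left\{  Z\left(  \alpha\right)
^{f-1}\frac{d}{d\alpha}Z\left(  \alpha\right)  \right\}  $; setting
$\beta=\alpha$ so that $f=f\left(  \alpha\right)  $ produces the second term.
Adding the two contributions establishes the lemma.

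The step carrying the real content --- and the one I would expect to be the
main obstacle if one wanted a fully self-contained argument --- is the
fixed-exponent trace identity $\frac{d}{d\alpha}\operatorname{Tr}\left\{
h\left(  Z\left(  \alpha\right)  \right)  \right\}  =\operatorname{Tr}\left\{
h^{\prime}\left(  Z\left(  \alpha\right)  \right)  \frac{d}{d\alpha}Z\left(
\alpha\right)  \right\}  $. The subtlety is that $Z\left(  \alpha\right)  $
and its derivative need not commute, so $\frac{d}{d\alpha}Z^{f}=fZ^{f-1}
Z^{\prime}$ is \emph{false} as an operator identity and becomes correct only
under the trace. One establishes it via the Daleckii--Krein (Duhamel)
representation of the Fr\'{e}chet derivative, $Dh\left(  Z\right)  \left[
Z^{\prime}\right]  =\sum_{i,j}h^{[1]}\left(  \lambda_{i},\lambda_{j}\right)
P_{i}Z^{\prime}P_{j}$ in terms of spectral projectors $P_{i}$ and divided
differences $h^{[1]}$, and then using cyclicity together with $P_{i}
P_{j}=\delta_{ij}P_{i}$ to collapse the double sum to its diagonal
$\sum_{i}h^{\prime}\left(  \lambda_{i}\right)  \operatorname{Tr}\left\{
P_{i}Z^{\prime}\right\}  =\operatorname{Tr}\left\{  h^{\prime}\left(  Z\right)
Z^{\prime}\right\}  $. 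Since this is exactly the content of
\cite[Proposition~15]{MDSFT13}, I would simply cite it; the only novelty is
the bookkeeping above that cleanly separates the $f$-dependence from the
$Z$-dependence.
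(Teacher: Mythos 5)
Your proposal is correct, but it reaches the formula by a genuinely different route than the paper. The paper's proof works directly with the difference quotient, using the Duhamel-type identity
\begin{equation}
Z\left(  \alpha+h\right)  ^{f\left(  \alpha+h\right)  }-Z\left(
\alpha\right)  ^{f\left(  \alpha\right)  }=\int_{0}^{1}ds\,\frac{d}{ds}\left[
Z\left(  \alpha+h\right)  ^{sf\left(  \alpha+h\right)  }Z\left(
\alpha\right)  ^{\left(  1-s\right)  f\left(  \alpha\right)  }\right]  ,
\end{equation}
taking the trace, splitting $\log Z\left(  \alpha+h\right)  ^{f\left(
\alpha+h\right)  }-\log Z\left(  \alpha\right)  ^{f\left(  \alpha\right)  }$
into a $Z$-variation and an $f$-variation inside a single integral, and then
passing to the limit $h\rightarrow0$; this yields the $f^{\prime}$ term
directly and leaves $f\left(  \alpha\right)  \operatorname{Tr}\{Z\left(
\alpha\right)  ^{f\left(  \alpha\right)  }\frac{d}{d\alpha}\log Z\left(
\alpha\right)  \}$, which is collapsed to $f\left(  \alpha\right)
\operatorname{Tr}\{Z\left(  \alpha\right)  ^{f\left(  \alpha\right)  -1}
Z^{\prime}\left(  \alpha\right)  \}$ by the arguments of
\cite[Theorem~2.7]{OZ99} and \cite[Proposition~15]{MDSFT13} (in essence the
resolvent-integral representation of $\frac{d}{d\alpha}\log Z$ used under the
trace). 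Your two-variable decomposition $F\left(  \alpha,\beta\right)
=\operatorname{Tr}\{Z\left(  \alpha\right)  ^{f\left(  \beta\right)  }\}$
performs this split \emph{before} any hard analysis: the $\beta$-derivative is
trivial because the operators $f\left(  \beta\right)  \log Z\left(
\alpha\right)  $ commute as $\beta$ varies, and all the noncommutative content
is isolated in the fixed-exponent identity $\frac{d}{d\alpha}
\operatorname{Tr}\{h\left(  Z\left(  \alpha\right)  \right)  \}
=\operatorname{Tr}\{h^{\prime}\left(  Z\left(  \alpha\right)  \right)
Z^{\prime}\left(  \alpha\right)  \}$, which you correctly prove via the
Daleckii--Krein representation with trace cyclicity collapsing the divided
differences to the diagonal --- the same mechanism the paper exercises through
the integral formula. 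What your route buys is modularity and a clean
identification of the single step that is false at the operator level but true
under the trace; what it costs is the need to justify the two-variable chain
rule, which your local-analyticity argument (spectrum of $Z\left(
\alpha\right)  $ compactly contained in $\left(  0,\infty\right)  $, hence
jointly $C^{1}$ partials) handles adequately. One attribution caveat:
\cite[Proposition~15]{MDSFT13} is the special case $f\left(  \alpha\right)
=\alpha$ of the full two-term formula, not the bare fixed-exponent trace
identity you cite it for; since you supply a self-contained Daleckii--Krein
argument for that identity anyway, this is a matter of citation precision
rather than a gap.
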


\begin{proof}
We proceed as in \cite[Theorem~2.7]{OZ99} or \cite[Proposition~15]{MDSFT13}.
Consider that%
\begin{align}
&  Z\left(  \alpha+h\right)  ^{f\left(  \alpha+h\right)  }-Z\left(
\alpha\right)  ^{f\left(  \alpha\right)  }\nonumber\\
&  =\int_{0}^{1}ds\frac{d}{ds}\left[  Z\left(  \alpha+h\right)  ^{sf\left(
\alpha+h\right)  }Z\left(  \alpha\right)  ^{\left(  1-s\right)  f\left(
\alpha\right)  }\right] \\
&  =\int_{0}^{1}ds\ Z\left(  \alpha+h\right)  ^{sf\left(  \alpha+h\right)
}\left[  \log Z\left(  \alpha+h\right)  ^{f\left(  \alpha+h\right)  }-\log
Z\left(  \alpha\right)  ^{f\left(  \alpha\right)  }\right]  Z\left(
\alpha\right)  ^{\left(  1-s\right)  f\left(  \alpha\right)  }.
\end{align}
Taking the trace, we get%
\begin{multline}
\text{Tr}\left\{  Z\left(  \alpha+h\right)  ^{f\left(  \alpha+h\right)
}\right\}  -\text{Tr}\left\{  Z\left(  \alpha\right)  ^{f\left(
\alpha\right)  }\right\} \\
=f\left(  \alpha+h\right)  \int_{0}^{1}ds\ \text{Tr}\left\{  Z\left(
\alpha\right)  ^{\left(  1-s\right)  f\left(  \alpha\right)  }Z\left(
\alpha+h\right)  ^{sf\left(  \alpha+h\right)  }\left[  \log Z\left(
\alpha+h\right)  -\log Z\left(  \alpha\right)  \right]  \right\} \\
\left(  f\left(  \alpha+h\right)  -f\left(  \alpha\right)  \right)  \int
_{0}^{1}ds\ \text{Tr}\left\{  Z\left(  \alpha\right)  ^{\left(  1-s\right)
f\left(  \alpha\right)  }Z\left(  \alpha+h\right)  ^{sf\left(  \alpha
+h\right)  }\log Z\left(  \alpha\right)  \right\}  .
\end{multline}
Dividing by $h$ and taking the limit as $h\rightarrow0$, we find%
\begin{multline}
\lim_{h\rightarrow0}\frac{1}{h}\left[  \text{Tr}\left\{  Z\left(
\alpha+h\right)  ^{f\left(  \alpha+h\right)  }\right\}  -\text{Tr}\left\{
Z\left(  \alpha\right)  ^{f\left(  \alpha\right)  }\right\}  \right] \\
=f\left(  \alpha\right)  \int_{0}^{1}ds\ \text{Tr}\left\{  Z\left(
\alpha\right)  ^{\left(  1-s\right)  f\left(  \alpha\right)  }Z\left(
\alpha\right)  ^{sf\left(  \alpha\right)  }\lim_{h\rightarrow0}\frac{1}%
{h}\left[  \log Z\left(  \alpha+h\right)  -\log Z\left(  \alpha\right)
\right]  \right\} \\
+\lim_{h\rightarrow0}\frac{f\left(  \alpha+h\right)  -f\left(  \alpha\right)
}{h}\int_{0}^{1}ds\ \text{Tr}\left\{  Z\left(  \alpha\right)  ^{\left(
1-s\right)  f\left(  \alpha\right)  }Z\left(  \alpha\right)  ^{sf\left(
\alpha\right)  }\log Z\left(  \alpha\right)  \right\}  ,
\end{multline}
which is equal to%
\begin{equation}
f\left(  \alpha\right)  \text{Tr}\left\{  Z\left(  \alpha\right)  ^{f\left(
\alpha\right)  }\frac{d}{d\alpha}\left[  \log Z\left(  \alpha\right)  \right]
\right\}  +\left(  \frac{d}{d\alpha}f\left(  \alpha\right)  \right)
\text{Tr}\left\{  Z\left(  \alpha\right)  ^{f\left(  \alpha\right)  }\log
Z\left(  \alpha\right)  \right\}  .
\end{equation}
Carrying out the same arguments as in \cite[Theorem~2.7]{OZ99} or
\cite[Proposition~15]{MDSFT13} in order to compute $\frac{d}{d\alpha}\left[
\log Z\left(  \alpha\right)  \right]  $, we recover the formula in the
statement of the lemma.
\end{proof}

\bigskip

We now provide a proof of Theorem~\ref{thm:renyi-cmi-a-to-1}. The idea is
similar to that in the proof of Theorem~\ref{thm:conv-vN}. To this end, we
again invoke L'H\^{o}pital's rule. We begin by defining%
\begin{equation}
G\left(  \alpha\right)  \equiv\rho_{C}^{\left(  \alpha-1\right)  /2}%
\text{Tr}_{A}\left\{  \rho_{AC}^{\left(  1-\alpha\right)  /2}\rho
_{ABC}^{\alpha}\rho_{AC}^{\left(  1-\alpha\right)  /2}\right\}  \rho
_{C}^{\left(  \alpha-1\right)  /2},
\end{equation}
which implies that%
\begin{equation}
I_{\alpha}\left(  A;B|C\right)  _{\rho}=\frac{1}{1-\frac{1}{\alpha}}%
\log\text{Tr}\left\{  G\left(  \alpha\right)  ^{1/\alpha}\right\}  .
\end{equation}
Applying Lemma~\ref{lem:matrix-der-lemma}\ to $G\left(  \alpha\right)  $ and
the function $1/\alpha$, we find that%
\begin{equation}
\frac{d}{d\alpha}\text{Tr}\left\{  G\left(  \alpha\right)  ^{1/\alpha
}\right\}  =-\frac{1}{\alpha^{2}}\text{Tr}\left\{  G\left(  \alpha\right)
^{1/\alpha}\log G\left(  \alpha\right)  \right\}  +\frac{1}{\alpha}%
\text{Tr}\left\{  G\left(  \alpha\right)  ^{\left(  1-\alpha\right)  /\alpha
}\frac{d}{d\alpha}G\left(  \alpha\right)  \right\}  .
\end{equation}
Also, we have that%
\begin{multline}
\frac{d}{d\alpha}G\left(  \alpha\right)  =\frac{d}{d\alpha}\left[  \rho
_{C}^{\left(  \alpha-1\right)  /2}\text{Tr}_{A}\left\{  \rho_{AC}^{\left(
1-\alpha\right)  /2}\rho_{ABC}^{\alpha}\rho_{AC}^{\left(  1-\alpha\right)
/2}\right\}  \rho_{C}^{\left(  \alpha-1\right)  /2}\right] \\
=\frac{1}{2}\left(  \log\rho_{C}\right)  \rho_{C}^{\left(  \alpha-1\right)
/2}\text{Tr}_{A}\left\{  \rho_{AC}^{\left(  1-\alpha\right)  /2}\rho
_{ABC}^{\alpha}\rho_{AC}^{\left(  1-\alpha\right)  /2}\right\}  \rho
_{C}^{\left(  \alpha-1\right)  /2}\\
-\frac{1}{2}\rho_{C}^{\left(  \alpha-1\right)  /2}\text{Tr}_{A}\left\{
\left(  \log\rho_{AC}\right)  \rho_{AC}^{\left(  1-\alpha\right)  /2}%
\rho_{ABC}^{\alpha}\rho_{AC}^{\left(  1-\alpha\right)  /2}\right\}  \rho
_{C}^{\left(  \alpha-1\right)  /2}\\
+\rho_{C}^{\left(  \alpha-1\right)  /2}\text{Tr}_{A}\left\{  \rho
_{AC}^{\left(  1-\alpha\right)  /2}\left(  \log\rho_{ABC}\right)  \rho
_{ABC}^{\alpha}\rho_{AC}^{\left(  1-\alpha\right)  /2}\right\}  \rho
_{C}^{\left(  \alpha-1\right)  /2}\\
-\frac{1}{2}\rho_{C}^{\left(  \alpha-1\right)  /2}\text{Tr}_{A}\left\{
\rho_{AC}^{\left(  1-\alpha\right)  /2}\rho_{ABC}^{\alpha}\left(  \log
\rho_{AC}\right)  \rho_{AC}^{\left(  1-\alpha\right)  /2}\right\}  \rho
_{C}^{\left(  \alpha-1\right)  /2}\\
+\frac{1}{2}\rho_{C}^{\left(  \alpha-1\right)  /2}\text{Tr}_{A}\left\{
\rho_{AC}^{\left(  1-\alpha\right)  /2}\rho_{ABC}^{\alpha}\rho_{AC}^{\left(
1-\alpha\right)  /2}\right\}  \left(  \log\rho_{C}\right)  \rho_{C}^{\left(
\alpha-1\right)  /2}.
\end{multline}
Applying L'H\^{o}pital's rule gives%
\begin{equation}
\lim_{\alpha\rightarrow1}I_{\alpha}\left(  A;B|C\right)  _{\rho}=\lim
_{\alpha\rightarrow1}\frac{-\text{Tr}\left\{  G\left(  \alpha\right)
^{1/\alpha}\log G\left(  \alpha\right)  \right\}  +\alpha\text{Tr}\left\{
G\left(  \alpha\right)  ^{\left(  1-\alpha\right)  /\alpha}\frac{d}{d\alpha
}G\left(  \alpha\right)  \right\}  }{\text{Tr}\left\{  G\left(  \alpha\right)
^{1/\alpha}\right\}  }. \label{eq:limit-a-1-sibson}%
\end{equation}
Consider that%
\begin{align}
\lim_{\alpha\rightarrow1}G\left(  \alpha\right)  ^{\left(  1-\alpha\right)
/\alpha}  &  =\left[  \rho_{C}^{0}\text{Tr}_{A}\left\{  \rho_{AC}^{0}%
\rho_{ABC}\rho_{AC}^{0}\right\}  \rho_{C}^{0}\right]  ^{0}\\
&  =\rho_{BC}^{0}.
\end{align}
Evaluating the limits above one at a time and using that supp$\left(
\rho_{ABC}\right)  \subseteq\ $supp$\left(  \rho_{AC}\right)  \subseteq
\ $supp$\left(  \rho_{C}\right)  $ (see, e.g., \cite[Lemma~B.4.1]%
{RennerThesis}), we find that%
\begin{align}
\lim_{\alpha\rightarrow1}\frac{1}{\text{Tr}\left\{  G\left(  \alpha\right)
^{1/\alpha}\right\}  }  &  =\frac{1}{\text{Tr}\left\{  \rho_{C}^{0}%
\text{Tr}_{A}\left\{  \rho_{AC}^{0}\rho_{ABC}\rho_{AC}^{0}\right\}  \rho
_{C}^{0}\right\}  }\\
&  =1,\\
\lim_{\alpha\rightarrow1}-\text{Tr}\left\{  G\left(  \alpha\right)
^{1/\alpha}\log G\left(  \alpha\right)  \right\}   &  =-\text{Tr}\left\{
\left[  \rho_{C}^{0}\text{Tr}_{A}\left\{  \rho_{AC}^{0}\rho_{ABC}\rho_{AC}%
^{0}\right\}  \rho_{C}^{0}\right]  \log\left[  \rho_{C}^{0}\text{Tr}%
_{A}\left\{  \rho_{AC}^{0}\rho_{ABC}\rho_{AC}^{0}\right\}  \rho_{C}%
^{0}\right]  \right\} \\
&  =-\text{Tr}\left\{  \rho_{BC}\log\rho_{BC}\right\}  ,
\end{align}%
\begin{multline}
\lim_{\alpha\rightarrow1}\frac{d}{d\alpha}G\left(  \alpha\right)  =\frac{1}%
{2}\left(  \log\rho_{C}\right)  \rho_{C}^{0}\text{Tr}_{A}\left\{  \rho
_{AC}^{0}\rho_{ABC}\rho_{AC}^{0}\right\}  \rho_{C}^{0}-\frac{1}{2}\rho_{C}%
^{0}\text{Tr}_{A}\left\{  \left(  \log\rho_{AC}\right)  \rho_{AC}^{0}%
\rho_{ABC}\rho_{AC}^{0}\right\}  \rho_{C}^{0}\\
+\rho_{C}^{0}\text{Tr}_{A}\left\{  \rho_{AC}^{0}\left(  \log\rho_{ABC}\right)
\rho_{ABC}\rho_{AC}^{0}\right\}  \rho_{C}^{0}-\frac{1}{2}\rho_{C}^{0}%
\text{Tr}_{A}\left\{  \rho_{AC}^{0}\rho_{ABC}\left(  \log\rho_{AC}\right)
\rho_{AC}^{0}\right\}  \rho_{C}^{0}\\
+\frac{1}{2}\rho_{C}^{0}\text{Tr}_{A}\left\{  \rho_{AC}^{0}\rho_{ABC}\rho
_{AC}^{0}\right\}  \left(  \log\rho_{C}\right)  \rho_{C}^{0}.
\end{multline}
Putting all of this together, we can see that the limit in
(\ref{eq:limit-a-1-sibson}) evaluates to%
\begin{align}
\lim_{\alpha\rightarrow1}I_{\alpha}\left(  A;B|C\right)  _{\rho}  &
=\Delta\left(  \rho_{ABC},\rho_{AC},\rho_{C},\rho_{BC}\right) \\
&  =I\left(  A;B|C\right)  _{\rho}.
\end{align}

\section{Convergence of the $\widetilde{\Delta}_{\alpha}$ quantities}

\label{sec:sandwiched-renyi-a-to-1}This section presents a proof of
Theorem~\ref{thm:conv-vN-sandwiched}. We will consider L'H\^{o}pital's rule in
order to evaluate the limit of $\widetilde{\Delta}_{\alpha}$ as $\alpha
\rightarrow1$, due to the presence of the denominator term $\alpha-1$ in
$\widetilde{\Delta}_{\alpha}$. Consider that%
\begin{equation}
\widetilde{Q}_{\alpha}\left(  \rho_{ABC},\tau_{AC},\omega_{C},\theta
_{BC}\right)  =\text{Tr}\left\{  \left[  Z_{ABC}\left(  \alpha\right)
\right]  ^{\alpha}\right\}  ,
\end{equation}
where%
\begin{equation}
Z_{ABC}\left(  \alpha\right)  \equiv\rho_{ABC}^{1/2}\tau_{AC}^{\left(
1-\alpha\right)  /2\alpha}\omega_{C}^{\left(  \alpha-1\right)  /2\alpha}%
\theta_{BC}^{\left(  1-\alpha\right)  /\alpha}\omega_{C}^{\left(
\alpha-1\right)  /2\alpha}\tau_{AC}^{\left(  1-\alpha\right)  /2\alpha}%
\rho_{ABC}^{1/2}.
\end{equation}
We begin by computing%
\begin{multline}
\frac{d}{d\alpha}Z_{ABC}\left(  \alpha\right)  =\left(  -\frac{1}{\alpha^{2}%
}\right)  \Bigg[\frac{1}{2}\rho_{ABC}^{1/2}\left(  \log\tau_{AC}\right)
\tau_{AC}^{\left(  1-\alpha\right)  /2\alpha}\omega_{C}^{\left(
\alpha-1\right)  /2\alpha}\theta_{BC}^{\left(  1-\alpha\right)  /\alpha}%
\omega_{C}^{\left(  \alpha-1\right)  /2\alpha}\tau_{AC}^{\left(
1-\alpha\right)  /2\alpha}\rho_{ABC}^{1/2}\\
-\frac{1}{2}\rho_{ABC}^{1/2}\tau_{AC}^{\left(  1-\alpha\right)  /2\alpha
}\left(  \log\omega_{C}\right)  \omega_{C}^{\left(  \alpha-1\right)  /2\alpha
}\theta_{BC}^{\left(  1-\alpha\right)  /\alpha}\omega_{C}^{\left(
\alpha-1\right)  /2\alpha}\tau_{AC}^{\left(  1-\alpha\right)  /2\alpha}%
\rho_{ABC}^{1/2}\\
+\rho_{ABC}^{1/2}\tau_{AC}^{\left(  1-\alpha\right)  /2\alpha}\omega
_{C}^{\left(  \alpha-1\right)  /2\alpha}\left(  \log\theta_{BC}\right)
\theta_{BC}^{\left(  1-\alpha\right)  /\alpha}\omega_{C}^{\left(
\alpha-1\right)  /2\alpha}\tau_{AC}^{\left(  1-\alpha\right)  /2\alpha}%
\rho_{ABC}^{1/2}\\
-\frac{1}{2}\rho_{ABC}^{1/2}\tau_{AC}^{\left(  1-\alpha\right)  /2\alpha
}\omega_{C}^{\left(  \alpha-1\right)  /2\alpha}\theta_{BC}^{\left(
1-\alpha\right)  /\alpha}\left(  \log\omega_{C}\right)  \omega_{C}^{\left(
\alpha-1\right)  /2\alpha}\tau_{AC}^{\left(  1-\alpha\right)  /2\alpha}%
\rho_{ABC}^{1/2}\\
+\frac{1}{2}\rho_{ABC}^{1/2}\tau_{AC}^{\left(  1-\alpha\right)  /2\alpha
}\omega_{C}^{\left(  \alpha-1\right)  /2\alpha}\theta_{BC}^{\left(
1-\alpha\right)  /\alpha}\omega_{C}^{\left(  \alpha-1\right)  /2\alpha}\left(
\log\tau_{AC}\right)  \tau_{AC}^{\left(  1-\alpha\right)  /2\alpha}\rho
_{ABC}^{1/2}\Bigg].
\end{multline}
Applying Lemma~\ref{lem:matrix-der-lemma}\ to $Z_{ABC}\left(  \alpha\right)  $
and the function $\alpha$, we find that%
\begin{equation}
\frac{d}{d\alpha}\text{Tr}\left\{  \left[  Z_{ABC}\left(  \alpha\right)
\right]  ^{\alpha}\right\}  =\text{Tr}\left\{  \left[  Z_{ABC}\left(
\alpha\right)  \right]  ^{\alpha}\log Z_{ABC}\left(  \alpha\right)  \right\}
+\alpha\text{Tr}\left\{  \left[  Z_{ABC}\left(  \alpha\right)  \right]
^{\alpha-1}\frac{d}{d\alpha}Z_{ABC}\left(  \alpha\right)  \right\}  ,
\end{equation}
and%
\begin{align}
\lim_{\alpha\rightarrow1}\alpha\left[  Z_{ABC}\left(  \alpha\right)  \right]
^{\alpha-1}  &  =\left[  \rho_{ABC}^{1/2}\tau_{AC}^{0}\omega_{C}^{0}%
\theta_{BC}^{0}\omega_{C}^{0}\tau_{AC}^{0}\rho_{ABC}^{1/2}\right]  ^{0}\\
&  \equiv\left[  Z_{ABC}\left(  1\right)  \right]  ^{0},
\end{align}
we find that%
\begin{multline}
\lim_{\alpha\rightarrow1}\frac{d}{d\alpha}\widetilde{Q}_{\alpha}\left(
\rho_{ABC},\tau_{AC},\omega_{C},\theta_{BC}\right) \\
=\text{Tr}\left\{  \rho_{ABC}^{1/2}\tau_{AC}^{0}\omega_{C}^{0}\theta_{BC}%
^{0}\omega_{C}^{0}\tau_{AC}^{0}\rho_{ABC}^{1/2}\log\rho_{ABC}^{1/2}\tau
_{AC}^{0}\omega_{C}^{0}\theta_{BC}^{0}\omega_{C}^{0}\tau_{AC}^{0}\rho
_{ABC}^{1/2}\right\} \\
-\frac{1}{2}\text{Tr}\left\{  \left[  Z_{ABC}\left(  1\right)  \right]
^{0}\rho_{ABC}^{1/2}\left(  \log\tau_{AC}\right)  \tau_{AC}^{0}\omega_{C}%
^{0}\theta_{BC}^{0}\omega_{C}^{0}\tau_{AC}^{0}\rho_{ABC}^{1/2}\right\} \\
+\frac{1}{2}\text{Tr}\left\{  \left[  Z_{ABC}\left(  1\right)  \right]
^{0}\rho_{ABC}^{1/2}\tau_{AC}^{0}\left(  \log\omega_{C}\right)  \omega_{C}%
^{0}\theta_{BC}^{0}\omega_{C}^{0}\tau_{AC}^{0}\rho_{ABC}^{1/2}\right\} \\
-\text{Tr}\left\{  \left[  Z_{ABC}\left(  1\right)  \right]  ^{0}\rho
_{ABC}^{1/2}\tau_{AC}^{0}\omega_{C}^{0}\left(  \log\theta_{BC}\right)
\theta_{BC}^{0}\omega_{C}^{0}\tau_{AC}^{0}\rho_{ABC}^{1/2}\right\} \\
+\frac{1}{2}\text{Tr}\left\{  \left[  Z_{ABC}\left(  1\right)  \right]
^{0}\rho_{ABC}^{1/2}\tau_{AC}^{0}\omega_{C}^{0}\theta_{BC}^{0}\left(
\log\omega_{C}\right)  \omega_{C}^{0}\tau_{AC}^{0}\rho_{ABC}^{1/2}\right\} \\
-\frac{1}{2}\text{Tr}\left\{  \left[  Z_{ABC}\left(  1\right)  \right]
^{0}\rho_{ABC}^{1/2}\tau_{AC}^{0}\omega_{C}^{0}\theta_{BC}^{0}\omega_{C}%
^{0}\left(  \log\tau_{AC}\right)  \tau_{AC}^{0}\rho_{ABC}^{1/2}\right\}  .
\end{multline}
Since we assume that supp$\left(  \rho_{ABC}\right)  $ is contained in each of
supp$\left(  \tau_{AC}\right)  $, supp$\left(  \omega_{C}\right)  $, and
supp$\left(  \theta_{BC}\right)  $, we can see that%
\begin{align}
\lim_{\alpha\rightarrow1}\frac{d}{d\alpha}\widetilde{Q}_{\alpha}\left(
\rho_{ABC},\tau_{AC},\omega_{C},\theta_{BC}\right)   &  =\Delta\left(
\rho_{ABC},\tau_{AC},\omega_{C},\theta_{BC}\right)  ,\\
\lim_{\alpha\rightarrow1}\widetilde{Q}_{\alpha}\left(  \rho_{ABC},\tau
_{AC},\omega_{C},\theta_{BC}\right)   &  =1,
\end{align}
by applying the relations $\rho_{ABC}=\rho_{ABC}^{0}\rho_{ABC}\rho_{ABC}^{0}$,
$\rho_{ABC}^{0}\tau_{AC}^{0}=\rho_{ABC}^{0}$, $\rho_{ABC}^{0}\theta_{BC}%
^{0}=\rho_{ABC}^{0}$, $\rho_{ABC}^{0}\omega_{C}^{0}=\rho_{ABC}^{0}$, $\left[
Z_{ABC}\left(  1\right)  \right]  ^{0}=\rho_{ABC}^{0}$, and their Hermitian
conjugates. Applying L'H\^{o}pital's rule, we find that%
\begin{align}
\lim_{\alpha\rightarrow1}\widetilde{\Delta}_{\alpha}\left(  \rho_{ABC}%
,\tau_{AC},\omega_{C},\theta_{BC}\right)   &  =\lim_{\alpha\rightarrow1}%
\frac{\frac{d}{d\alpha}\widetilde{Q}_{\alpha}\left(  \rho_{ABC},\tau
_{AC},\omega_{C},\theta_{BC}\right)  }{\widetilde{Q}_{\alpha}\left(
\rho_{ABC},\tau_{AC},\omega_{C},\theta_{BC}\right)  }\\
&  =\Delta\left(  \rho_{ABC},\tau_{AC},\omega_{C},\theta_{BC}\right)  .
\end{align}
Essentially the same proof establishes the limiting relation for the other
$\widetilde{\Delta}_{\alpha}$ quantities defined from \eqref{eq:Q_SW_a_2}-\eqref{eq:Q_SW_a_6}.

\section{Convergence to $\Delta_{\max}$}

\label{sec:convergence-Delta_max}This section gives a proof of
Proposition~\ref{prop:convergence-Delta_max}. Let $\rho_{ABC}\in
\mathcal{S}\left(  \mathcal{H}_{ABC}\right)  _{++}$, $\tau_{AC}\in
\mathcal{S}\left(  \mathcal{H}_{AC}\right)  _{++}$, $\theta_{BC}\in
\mathcal{S}\left(  \mathcal{H}_{BC}\right)  _{++}$, and $\omega_{C}%
\in\mathcal{S}\left(  \mathcal{H}_{C}\right)  _{++}$. We prove that%
\begin{equation}
\lim_{\alpha\rightarrow\infty}\widetilde{\Delta}_{\alpha}\left(  \rho
_{ABC},\tau_{AC},\omega_{C},\theta_{BC}\right)  =\Delta_{\max}\left(
\rho_{ABC},\tau_{AC},\omega_{C},\theta_{BC}\right)  .
\end{equation}
The method of proof is the same as that for \cite[Theorem~5]{MDSFT13}. By the
reverse triangle inequality for the $\alpha$ norm, we have that%
\begin{multline}
\left\vert \ \ \left\Vert \rho_{ABC}^{\frac{1}{2}}\tau_{AC}^{\frac{1-\alpha
}{2\alpha}}\omega_{C}^{\frac{\alpha-1}{2\alpha}}\theta_{BC}^{\frac{1-\alpha
}{\alpha}}\omega_{C}^{\frac{\alpha-1}{2\alpha}}\tau_{AC}^{\frac{1-\alpha
}{2\alpha}}\rho_{ABC}^{\frac{1}{2}}\right\Vert _{\alpha}-\left\Vert \rho
_{ABC}^{\frac{1}{2}}\tau_{AC}^{-\frac{1}{2}}\omega_{C}^{\frac{1}{2}}%
\theta_{BC}^{-1}\omega_{C}^{\frac{1}{2}}\tau_{AC}^{-\frac{1}{2}}\rho
_{ABC}^{\frac{1}{2}}\right\Vert _{\alpha}\ \ \right\vert \\
\leq\left\Vert \rho_{ABC}^{\frac{1}{2}}\tau_{AC}^{\frac{1-\alpha}{2\alpha}%
}\omega_{C}^{\frac{\alpha-1}{2\alpha}}\theta_{BC}^{\frac{1-\alpha}{\alpha}%
}\omega_{C}^{\frac{\alpha-1}{2\alpha}}\tau_{AC}^{\frac{1-\alpha}{2\alpha}}%
\rho_{ABC}^{\frac{1}{2}}-\rho_{ABC}^{\frac{1}{2}}\tau_{AC}^{-\frac{1}{2}%
}\omega_{C}^{\frac{1}{2}}\theta_{BC}^{-1}\omega_{C}^{\frac{1}{2}}\tau
_{AC}^{-\frac{1}{2}}\rho_{ABC}^{\frac{1}{2}}\right\Vert _{\alpha}.
\end{multline}
Then%
\begin{align}
&  \lim_{\alpha\rightarrow\infty}\widetilde{\Delta}_{\alpha}\left(  \rho
_{ABC},\tau_{AC},\omega_{C},\theta_{BC}\right) \nonumber\\
&  =\lim_{\alpha\rightarrow\infty}\frac{\alpha}{\alpha-1}\log\left\Vert
\rho_{ABC}^{\frac{1}{2}}\tau_{AC}^{\frac{1-\alpha}{2\alpha}}\omega_{C}%
^{\frac{\alpha-1}{2\alpha}}\theta_{BC}^{\frac{1-\alpha}{\alpha}}\omega
_{C}^{\frac{\alpha-1}{2\alpha}}\tau_{AC}^{\frac{1-\alpha}{2\alpha}}\rho
_{ABC}^{\frac{1}{2}}\right\Vert _{\alpha}\\
&  \leq\log\left(
\begin{array}
[c]{c}%
\lim_{\alpha\rightarrow\infty}\left\Vert \rho_{ABC}^{\frac{1}{2}}\tau
_{AC}^{-\frac{1}{2}}\omega_{C}^{\frac{1}{2}}\theta_{BC}^{-1}\omega_{C}%
^{\frac{1}{2}}\tau_{AC}^{-\frac{1}{2}}\rho_{ABC}^{\frac{1}{2}}\right\Vert
_{\alpha}+\\
\lim_{\alpha\rightarrow\infty}\left\Vert \rho_{ABC}^{\frac{1}{2}}\tau
_{AC}^{\frac{1-\alpha}{2\alpha}}\omega_{C}^{\frac{\alpha-1}{2\alpha}}%
\theta_{BC}^{\frac{1-\alpha}{\alpha}}\omega_{C}^{\frac{\alpha-1}{2\alpha}}%
\tau_{AC}^{\frac{1-\alpha}{2\alpha}}\rho_{ABC}^{\frac{1}{2}}-\rho_{ABC}%
^{\frac{1}{2}}\tau_{AC}^{-\frac{1}{2}}\omega_{C}^{\frac{1}{2}}\theta_{BC}%
^{-1}\omega_{C}^{\frac{1}{2}}\tau_{AC}^{-\frac{1}{2}}\rho_{ABC}^{\frac{1}{2}%
}\right\Vert _{\alpha}%
\end{array}
\right)
\end{align}%
\begin{align}
&  \leq\log\left(
\begin{array}
[c]{c}%
\lim_{\alpha\rightarrow\infty}\left\Vert \rho_{ABC}^{\frac{1}{2}}\tau
_{AC}^{-\frac{1}{2}}\omega_{C}^{\frac{1}{2}}\theta_{BC}^{-1}\omega_{C}%
^{\frac{1}{2}}\tau_{AC}^{-\frac{1}{2}}\rho_{ABC}^{\frac{1}{2}}\right\Vert
_{\alpha}+\dim\left(  \mathcal{H}_{ABC}\right)  \times\\
\lim_{\alpha\rightarrow\infty}\left\Vert \rho_{ABC}^{\frac{1}{2}}\tau
_{AC}^{\frac{1-\alpha}{2\alpha}}\omega_{C}^{\frac{\alpha-1}{2\alpha}}%
\theta_{BC}^{\frac{1-\alpha}{\alpha}}\omega_{C}^{\frac{\alpha-1}{2\alpha}}%
\tau_{AC}^{\frac{1-\alpha}{2\alpha}}\rho_{ABC}^{\frac{1}{2}}-\rho_{ABC}%
^{\frac{1}{2}}\tau_{AC}^{-\frac{1}{2}}\omega_{C}^{\frac{1}{2}}\theta_{BC}%
^{-1}\omega_{C}^{\frac{1}{2}}\tau_{AC}^{-\frac{1}{2}}\rho_{ABC}^{\frac{1}{2}%
}\right\Vert _{\infty}%
\end{array}
\right) \\
&  =\log\left\Vert \rho_{ABC}^{\frac{1}{2}}\tau_{AC}^{-\frac{1}{2}}\omega
_{C}^{\frac{1}{2}}\theta_{BC}^{-1}\omega_{C}^{\frac{1}{2}}\tau_{AC}^{-\frac
{1}{2}}\rho_{ABC}^{\frac{1}{2}}\right\Vert _{\infty}\\
&  =\Delta_{\max}\left(  \rho_{ABC},\tau_{AC},\omega_{C},\theta_{BC}\right)
\end{align}
and%
\begin{align}
&  \lim_{\alpha\rightarrow\infty}\widetilde{\Delta}_{\alpha}\left(  \rho
_{ABC},\tau_{AC},\omega_{C},\theta_{BC}\right) \nonumber\\
&  =\lim_{\alpha\rightarrow\infty}\frac{\alpha}{\alpha-1}\log\left\Vert
\rho_{ABC}^{\frac{1}{2}}\tau_{AC}^{\frac{1-\alpha}{2\alpha}}\omega_{C}%
^{\frac{\alpha-1}{2\alpha}}\theta_{BC}^{\frac{1-\alpha}{\alpha}}\omega
_{C}^{\frac{\alpha-1}{2\alpha}}\tau_{AC}^{\frac{1-\alpha}{2\alpha}}\rho
_{ABC}^{\frac{1}{2}}\right\Vert _{\alpha}\\
&  \geq\log\left(
\begin{array}
[c]{c}%
\lim_{\alpha\rightarrow\infty}\left\Vert \rho_{ABC}^{\frac{1}{2}}\tau
_{AC}^{-\frac{1}{2}}\omega_{C}^{\frac{1}{2}}\theta_{BC}^{-1}\omega_{C}%
^{\frac{1}{2}}\tau_{AC}^{-\frac{1}{2}}\rho_{ABC}^{\frac{1}{2}}\right\Vert
_{\alpha}-\\
\lim_{\alpha\rightarrow\infty}\left\Vert \rho_{ABC}^{\frac{1}{2}}\tau
_{AC}^{\frac{1-\alpha}{2\alpha}}\omega_{C}^{\frac{\alpha-1}{2\alpha}}%
\theta_{BC}^{\frac{1-\alpha}{\alpha}}\omega_{C}^{\frac{\alpha-1}{2\alpha}}%
\tau_{AC}^{\frac{1-\alpha}{2\alpha}}\rho_{ABC}^{\frac{1}{2}}-\rho_{ABC}%
^{\frac{1}{2}}\tau_{AC}^{-\frac{1}{2}}\omega_{C}^{\frac{1}{2}}\theta_{BC}%
^{-1}\omega_{C}^{\frac{1}{2}}\tau_{AC}^{-\frac{1}{2}}\rho_{ABC}^{\frac{1}{2}%
}\right\Vert _{\alpha}%
\end{array}
\right)
\end{align}%
\begin{align}
&  \geq\log\left(
\begin{array}
[c]{c}%
\lim_{\alpha\rightarrow\infty}\left\Vert \rho_{ABC}^{\frac{1}{2}}\tau
_{AC}^{-\frac{1}{2}}\omega_{C}^{\frac{1}{2}}\theta_{BC}^{-1}\omega_{C}%
^{\frac{1}{2}}\tau_{AC}^{-\frac{1}{2}}\rho_{ABC}^{\frac{1}{2}}\right\Vert
_{\alpha}-\dim\left(  \mathcal{H}_{ABC}\right)  \times\\
\lim_{\alpha\rightarrow\infty}\left\Vert \rho_{ABC}^{\frac{1}{2}}\tau
_{AC}^{\frac{1-\alpha}{2\alpha}}\omega_{C}^{\frac{\alpha-1}{2\alpha}}%
\theta_{BC}^{\frac{1-\alpha}{\alpha}}\omega_{C}^{\frac{\alpha-1}{2\alpha}}%
\tau_{AC}^{\frac{1-\alpha}{2\alpha}}\rho_{ABC}^{\frac{1}{2}}-\rho_{ABC}%
^{\frac{1}{2}}\tau_{AC}^{-\frac{1}{2}}\omega_{C}^{\frac{1}{2}}\theta_{BC}%
^{-1}\omega_{C}^{\frac{1}{2}}\tau_{AC}^{-\frac{1}{2}}\rho_{ABC}^{\frac{1}{2}%
}\right\Vert _{\infty}%
\end{array}
\right) \\
&  =\log\left\Vert \rho_{ABC}^{\frac{1}{2}}\tau_{AC}^{-\frac{1}{2}}\omega
_{C}^{\frac{1}{2}}\theta_{BC}^{-1}\omega_{C}^{\frac{1}{2}}\tau_{AC}^{-\frac
{1}{2}}\rho_{ABC}^{\frac{1}{2}}\right\Vert _{\infty}\\
&  =\Delta_{\max}\left(  \rho_{ABC},\tau_{AC},\omega_{C},\theta_{BC}\right)  .
\end{align}

\section{Approaches for proving Conjecture~\ref{conj:monotone-alpha} and proof
for a special case}

\label{sec:proof-approach}This section gives more details regarding the
approach outlined in Section~\ref{sec:proof-approach-outline}\ for proving
Conjecture~\ref{conj:monotone-alpha}. Let $\rho_{ABC}\in\mathcal{S}\left(
\mathcal{H}_{ABC}\right)  _{++}$, $\tau_{AC}\in\mathcal{S}\left(
\mathcal{H}_{AC}\right)  _{++}$, $\theta_{BC}\in\mathcal{S}\left(
\mathcal{H}_{BC}\right)  _{++}$, and $\omega_{C}\in\mathcal{S}\left(
\mathcal{H}_{C}\right)  _{++}$. We begin by introducing a variable%
\begin{equation}
\gamma=\alpha-1,
\end{equation}
and with%
\begin{equation}
Y\left(  \gamma\right)  \equiv\rho_{ABC}^{1+\gamma}\tau_{AC}^{\frac{-\gamma
}{2}}\omega_{C}^{\frac{\gamma}{2}}\theta_{BC}^{-\gamma}\omega_{C}%
^{\frac{\gamma}{2}}\tau_{AC}^{\frac{-\gamma}{2}}, \label{eq:Y_beta}%
\end{equation}
it follows that $\Delta_{\alpha}\left(  \rho_{ABC},\tau_{AC},\omega_{C}%
,\theta_{BC}\right)  $ is equal to%
\begin{equation}
\frac{1}{\alpha-1}\log\text{Tr}\left\{  \rho_{ABC}^{\alpha}\tau_{AC}%
^{\frac{1-\alpha}{2}}\omega_{C}^{\frac{\alpha-1}{2}}\theta_{BC}^{1-\alpha
}\omega_{C}^{\frac{\alpha-1}{2}}\tau_{AC}^{\frac{1-\alpha}{2}}\right\}
=\frac{1}{\gamma}\log\text{Tr}\left\{  Y\left(  \gamma\right)  \right\}  .
\end{equation}
Since $d\gamma/d\alpha=1$,%
\begin{equation}
\frac{d}{d\alpha}\left[  \frac{1}{\alpha-1}\log\text{Tr}\left\{  \rho
_{ABC}^{\alpha}\tau_{AC}^{\frac{1-\alpha}{2}}\omega_{C}^{\frac{\alpha-1}{2}%
}\theta_{BC}^{1-\alpha}\omega_{C}^{\frac{\alpha-1}{2}}\tau_{AC}^{\frac
{1-\alpha}{2}}\right\}  \right]  =\frac{d}{d\gamma}\left[  \frac{1}{\gamma
}\log\text{Tr}\left\{  Y\left(  \gamma\right)  \right\}  \right]  .
\end{equation}
We can then explicitly compute the derivative:%
\begin{align}
\frac{d}{d\gamma}\left[  \frac{1}{\gamma}\log\text{Tr}\left\{  Y\left(
\gamma\right)  \right\}  \right]   &  =-\frac{1}{\gamma^{2}}\log
\text{Tr}\left\{  Y\left(  \gamma\right)  \right\}  +\frac{\text{Tr}\left\{
\frac{d}{d\gamma}Y\left(  \gamma\right)  \right\}  }{\gamma\text{Tr}\left\{
Y\left(  \gamma\right)  \right\}  }\\
&  =\frac{\gamma\text{Tr}\left\{  \frac{d}{d\gamma}Y\left(  \gamma\right)
\right\}  -\text{Tr}\left\{  Y\left(  \gamma\right)  \right\}  \log
\text{Tr}\left\{  Y\left(  \gamma\right)  \right\}  }{\gamma^{2}%
\text{Tr}\left\{  Y\left(  \gamma\right)  \right\}  }.
\label{eq:derivative-exp}%
\end{align}
So%
\begin{multline}
\gamma\frac{d}{d\gamma}Y\left(  \gamma\right)  =\log\rho_{ABC}^{\gamma
}Y\left(  \gamma\right)  +\rho_{ABC}^{1+\gamma}\left[  \log\tau_{AC}%
^{-\gamma/2}\right]  \tau_{AC}^{\frac{-\gamma}{2}}\omega_{C}^{\frac{\gamma}%
{2}}\theta_{BC}^{-\gamma}\omega_{C}^{\frac{\gamma}{2}}\tau_{AC}^{\frac
{-\gamma}{2}}\\
+\rho_{ABC}^{1+\gamma}\tau_{AC}^{\frac{-\gamma}{2}}\left[  \log\omega
_{C}^{\gamma/2}\right]  \omega_{C}^{\frac{\gamma}{2}}\theta_{BC}^{-\gamma
}\omega_{C}^{\frac{\gamma}{2}}\tau_{AC}^{\frac{-\gamma}{2}}+\rho
_{ABC}^{1+\gamma}\tau_{AC}^{\frac{-\gamma}{2}}\omega_{C}^{\frac{\gamma}{2}%
}\left[  \log\theta_{BC}^{-\gamma}\right]  \theta_{BC}^{-\gamma}\omega
_{C}^{\frac{\gamma}{2}}\tau_{AC}^{\frac{-\gamma}{2}}\\
+\rho_{ABC}^{1+\gamma}\tau_{AC}^{\frac{-\gamma}{2}}\omega_{C}^{\frac{\gamma
}{2}}\theta_{BC}^{-\gamma}\omega_{C}^{\frac{\gamma}{2}}\left[  \log\omega
_{C}^{\gamma/2}\right]  \tau_{AC}^{\frac{-\gamma}{2}}+Y\left(  \gamma\right)
\log\tau_{AC}^{-\gamma/2}.
\end{multline}
If it is true that the numerator in (\ref{eq:derivative-exp}) is non-negative
for all $\rho_{ABC}$, then we can conclude the monotonicity in $\alpha$.

A potential path for proving the conjecture for the sandwiched version is to
follow a similar approach developed by Tomamichel \textit{et al}.~(see the
proof of \cite[Theorem~7]{MDSFT13}). Since we can write%
\begin{equation}
\widetilde{\Delta}_{\alpha}\left(  \rho_{ABC},\tau_{AC},\omega_{C},\theta
_{BC}\right)  =\max_{\gamma_{ABC}}\widetilde{D}_{\alpha}\left(  \rho
,\tau,\omega,\theta,\gamma\right)  ,
\end{equation}
where%
\begin{equation}
\widetilde{D}_{\alpha}\left(  \rho,\tau,\omega,\theta,\mu\right)  \equiv
\frac{\alpha}{\alpha-1}\log\text{Tr}\left\{  \rho_{ABC}^{1/2}\tau
_{AC}^{\left(  1-\alpha\right)  /2\alpha}\omega_{C}^{\left(  \alpha-1\right)
/2\alpha}\theta_{BC}^{\left(  1-\alpha\right)  /\alpha}\omega_{C}^{\left(
\alpha-1\right)  /2\alpha}\tau_{AC}^{\left(  1-\alpha\right)  /2\alpha}%
\rho_{ABC}^{1/2}\mu_{ABC}^{\left(  \alpha-1\right)  /\alpha}\right\}  ,
\label{eq:conjectured-monotone-1}%
\end{equation}
it suffices to prove that $\widetilde{D}_{\alpha}\left(  \rho,\tau
,\omega,\theta,\mu\right)  $ is monotone in $\alpha$. For this purpose, the
idea is similar to the above (i.e., try to show that the derivative of
$\widetilde{D}_{\alpha}\left(  \rho,\tau,\omega,\theta,\mu\right)  $ with
respect to $\alpha$ is non-negative). To this end, now let%
\begin{equation}
\gamma=\frac{\alpha-1}{\alpha},
\end{equation}
and with%
\begin{equation}
Z\left(  \gamma\right)  \equiv\rho_{ABC}^{1/2}\tau_{AC}^{\frac{-\gamma}{2}%
}\omega_{C}^{\frac{\gamma}{2}}\theta_{BC}^{-\gamma}\omega_{C}^{\frac{\gamma
}{2}}\tau_{AC}^{\frac{-\gamma}{2}}\rho_{ABC}^{1/2}\mu_{ABC}^{\gamma},
\end{equation}
it follows that (\ref{eq:conjectured-monotone-1}) is equal to%
\begin{equation}
\widetilde{D}_{\alpha}\left(  \rho,\tau,\omega,\theta,\mu\right)  =\frac
{1}{\gamma}\log\text{Tr}\left\{  Z\left(  \gamma\right)  \right\}  .
\end{equation}
Then since $d\gamma/d\alpha=1/\alpha^{2}$,%
\begin{equation}
\frac{d}{d\alpha}\left[  \widetilde{D}_{\alpha}\left(  \rho,\tau,\omega
,\theta,\mu\right)  \right]  =\frac{1}{\alpha^{2}}\frac{d}{d\gamma}\left[
\frac{1}{\gamma}\log\text{Tr}\left\{  Z\left(  \gamma\right)  \right\}
\right]  .
\end{equation}
Computing the derivative then results in%
\begin{align}
\frac{d}{d\gamma}\left[  \frac{1}{\gamma}\log\text{Tr}\left\{  Z\left(
\gamma\right)  \right\}  \right]   &  =-\frac{1}{\gamma^{2}}\log
\text{Tr}\left\{  Z\left(  \gamma\right)  \right\}  +\frac{\text{Tr}\left\{
\frac{d}{d\gamma}Z\left(  \gamma\right)  \right\}  }{\gamma\text{Tr}\left\{
Z\left(  \gamma\right)  \right\}  }\\
&  =\frac{\gamma\text{Tr}\left\{  \frac{d}{d\gamma}Z\left(  \gamma\right)
\right\}  -\text{Tr}\left\{  Z\left(  \gamma\right)  \right\}  \log
\text{Tr}\left\{  Z\left(  \gamma\right)  \right\}  }{\gamma^{2}%
\text{Tr}\left\{  Z\left(  \gamma\right)  \right\}  }.
\label{eq:derivative-exp-1}%
\end{align}
The calculation of the derivative $\gamma$Tr$\left\{  \frac{d}{d\gamma
}Z\left(  \gamma\right)  \right\}  $ is very similar to what we have shown
above. So, in order to prove the conjecture, it suffices to prove that the
numerator of the last line above is non-negative.

If the above approach is successful, one could take essentially the same
approach to prove all of the other conjectured monotonicities detailed in
Conjecture~\ref{conj:monotone-alpha}.

\subsection{Proof of Conjecture~\ref{conj:monotone-alpha} for $\alpha$ in a
neighborhood of one}

\label{sec:proof-neighborhood}We can prove that the numerator of
(\ref{eq:derivative-exp}) is non-negative for $\gamma$ in a neighborhood of
zero. To this end, consider a Taylor expansion of $Y\left(  \gamma\right)  $
in (\ref{eq:Y_beta})\ around $\gamma$ equal to zero (so around $\alpha$ equal
to one). Indeed, consider that%
\begin{align}
X^{1+\gamma}  &  =X+\gamma X\log X+\frac{\gamma^{2}}{2}X\log^{2}X+O\left(
\gamma^{3}\right)  ,\\
X^{\gamma}  &  =I+\gamma\log X+\frac{\gamma^{2}}{2}\log^{2}X+O\left(
\gamma^{3}\right)  .
\end{align}
For our case, we make the following substitutions into Tr$\left\{  Y\left(
\gamma\right)  \right\}  $:%
\begin{align}
\rho_{ABC}^{1+\gamma}  &  =\rho_{ABC}+\gamma\rho_{ABC}\log\rho_{ABC}%
+\frac{\gamma^{2}}{2}\rho_{ABC}\log^{2}\rho_{ABC}+O\left(  \gamma^{3}\right)
,\\
\theta_{BC}^{\frac{-\gamma}{2}}  &  =I-\frac{\gamma}{2}\log\theta_{BC}%
+\frac{\gamma^{2}}{8}\log^{2}\theta_{BC}+O\left(  \gamma^{3}\right)  ,\\
\omega_{C}^{\frac{\gamma}{2}}  &  =I+\frac{\gamma}{2}\log\omega_{C}%
+\frac{\gamma^{2}}{8}\log^{2}\omega_{C}+O\left(  \gamma^{3}\right)  ,\\
\tau_{AC}^{-\gamma}  &  =I-\gamma\log\tau_{AC}+\frac{\gamma^{2}}{2}\log
^{2}\tau_{AC}+O\left(  \gamma^{3}\right)  .
\end{align}
After a rather tedious calculation, we find that%
\begin{equation}
\text{Tr}\left\{  Y\left(  \gamma\right)  \right\}  =\text{Tr}\left\{
\rho_{ABC}\right\}  +\gamma\Delta\left(  \rho,\tau,\omega,\theta\right)
+\frac{\gamma^{2}}{2}\left[  V\left(  \rho,\tau,\omega,\theta\right)  +\left[
\Delta\left(  \rho,\tau,\omega,\theta\right)  \right]  ^{2}\right]  +O\left(
\gamma^{3}\right)  ,
\end{equation}
where $V\left(  \rho,\tau,\omega,\theta\right)  $ is a quantity for which it
seems natural to call the \textit{tripartite information variance}:
\begin{equation}
V\left(  \rho,\tau,\omega,\theta\right)  \equiv\text{Tr}\left\{  \rho
_{ABC}\left[  \log\rho_{ABC}-\log\tau_{AC}-\log\theta_{BC}+\log\omega
_{C}-\Delta\left(  \rho,\tau,\omega,\theta\right)  \right]  ^{2}\right\}  .
\end{equation}
A special case of this is a quantity which we can call the \textit{conditional
mutual information variance} of $\rho_{ABC}$:%
\begin{equation}
V\left(  A;B|C\right)  _{\rho}\equiv\text{Tr}\left\{  \rho_{ABC}\left[
\log\rho_{ABC}-\log\rho_{AC}-\log\rho_{BC}+\log\rho_{C}-I\left(  A;B|C\right)
_{\rho}\right]  ^{2}\right\}  .
\end{equation}
The mutual information variance defined in \cite{TT13}\ is a special case of
the above quantity when $C$ is trivial. For any Hermitian operator $H$, we
have that%
\begin{equation}
\left\langle H^{2}\right\rangle _{\rho}-\left\langle H\right\rangle _{\rho
}^{2}\geq0.
\end{equation}
So taking $H\equiv\log\rho_{ABC}-\log\tau_{AC}-\log\theta_{BC}+\log\omega_{C}%
$, we conclude that $V\left(  \rho,\tau,\omega,\theta\right)  \geq0$, an
observation central to our development here. We will make the abbreviations
$\Delta\equiv\Delta\left(  \rho,\tau,\omega,\theta\right)  $ and $V\equiv
V\left(  \rho,\tau,\omega,\theta\right)  $\ from here forward, so that%
\begin{equation}
\text{Tr}\left\{  Y\left(  \gamma\right)  \right\}  =1+\gamma\Delta
+\frac{\gamma^{2}}{2}\left[  V+\Delta^{2}\right]  +O\left(  \gamma^{3}\right)
. \label{eq:beta-small-proof-1}%
\end{equation}
So this implies that%
\begin{align}
\gamma\text{Tr}\left\{  \frac{d}{d\gamma}Y\left(  \gamma\right)  \right\}   &
=\gamma\Delta+\gamma^{2}\left[  V+\Delta^{2}\right]  +O\left(  \gamma
^{3}\right)  ,\\
\text{Tr}\left\{  Y\left(  \gamma\right)  \right\}  \log\text{Tr}\left\{
Y\left(  \gamma\right)  \right\}   &  =\left[  1+\gamma\Delta+\frac{\gamma
^{2}}{2}\left[  V+\Delta^{2}\right]  +O\left(  \gamma^{3}\right)  \right]
\log\left[  1+\gamma\Delta+\frac{\gamma^{2}}{2}\left[  V+\Delta^{2}\right]
+O\left(  \gamma^{3}\right)  \right]  .
\end{align}
Then for small $\gamma$, we have the following Taylor expansion for the
logarithm:%
\begin{align}
\log\left[  1+\gamma\Delta+\frac{\gamma^{2}}{2}\left[  V+\Delta^{2}\right]
+O\left(  \gamma^{3}\right)  \right]   &  =\gamma\Delta+\frac{\gamma^{2}}%
{2}\left[  V+\Delta^{2}\right]  -\frac{\gamma^{2}\Delta^{2}}{2}+O\left(
\gamma^{3}\right) \\
&  =\gamma\Delta+\frac{\gamma^{2}}{2}V+O\left(  \gamma^{3}\right)  ,
\end{align}
which gives%
\begin{align}
\text{Tr}\left\{  Y\left(  \gamma\right)  \right\}  \log\text{Tr}\left\{
Y\left(  \gamma\right)  \right\}   &  =\left[  1+\gamma\Delta+\frac{\gamma
^{2}}{2}\left[  V+\Delta^{2}\right]  +O\left(  \gamma^{3}\right)  \right]
\left[  \gamma\Delta+\frac{\gamma^{2}}{2}V+O\left(  \gamma^{3}\right)  \right]
\nonumber\\
&  =\gamma\Delta+\frac{\gamma^{2}}{2}V+\gamma^{2}\Delta^{2}+O\left(
\gamma^{3}\right)  .
\end{align}
Finally, we can say that%
\begin{align}
\gamma\text{Tr}\left\{  \frac{d}{d\gamma}Y\left(  \gamma\right)  \right\}
-\text{Tr}\left\{  Y\left(  \gamma\right)  \right\}  \log\text{Tr}\left\{
Y\left(  \gamma\right)  \right\}   &  =\gamma\Delta+\gamma^{2}\left[
V+\Delta^{2}\right]  -\left[  \gamma\Delta+\frac{\gamma^{2}}{2}V+\gamma
^{2}\Delta^{2}\right]  +O\left(  \gamma^{3}\right) \nonumber\\
&  =\frac{\gamma^{2}}{2}V+O\left(  \gamma^{3}\right)  .
\label{eq:beta-small-proof-last}%
\end{align}
If $V>0$, we can conclude that as long as $\gamma$ is very near to zero, all
terms $O\left(  \gamma^{3}\right)  $ are negligible in comparison to
$\frac{\gamma^{2}}{2}V$, and the monotonicity holds in such a regime. A
development similar to the above one establishes the other variations of
(\ref{eq:conj-1}) for $\gamma$ in a neighborhood of zero. (Note that this
argument does not work if $V=0$.)

A similar kind of development shows that the conjecture in (\ref{eq:conj-3})
and its variations hold for $\gamma$ in a neighborhood of zero. We only sketch
the main idea since it is similar to the previous development. We first
observe that we can rewrite Tr$\left\{  Z\left(  \gamma\right)  \right\}  $ in
the following way:%
\begin{equation}
\text{Tr}\left\{  Z\left(  \gamma\right)  \right\}  =\left\langle
\varphi\right\vert \tau_{AC}^{\frac{-\gamma}{2}}\omega_{C}^{\frac{\gamma}{2}%
}\theta_{BC}^{-\gamma}\omega_{C}^{\frac{\gamma}{2}}\tau_{AC}^{\frac{-\gamma
}{2}}\rho_{ABC}^{1/2}\otimes\left(  \mu_{A^{\prime}B^{\prime}C^{\prime}}%
^{T}\right)  ^{\gamma}\left\vert \varphi\right\rangle ,
\end{equation}
where $A^{\prime}B^{\prime}C^{\prime}$ are some systems isomorphic to $ABC$
and%
\begin{equation}
\left\vert \varphi\right\rangle _{ABC,A^{\prime}B^{\prime}C^{\prime}}%
\equiv\rho_{ABC}^{1/2}\otimes I_{A^{\prime}B^{\prime}C^{\prime}}\left\vert
\Gamma\right\rangle _{ABC,A^{\prime}B^{\prime}C},
\end{equation}
with $\left\vert \Gamma\right\rangle $ the maximally entangled vector. Then a
Taylor expansion about $\gamma=0$ (another tedious calculation)\ gives that%
\begin{equation}
\text{Tr}\left\{  Z\left(  \gamma\right)  \right\}  =\text{Tr}\left\{
\rho_{ABC}\right\}  +\gamma\left\langle \varphi\right\vert H_{ABC,A^{\prime
}B^{\prime}C^{\prime}}\left\vert \varphi\right\rangle +\frac{\gamma^{2}}%
{2}\left\langle \varphi\right\vert H_{ABC,A^{\prime}B^{\prime}C^{\prime}}%
^{2}\left\vert \varphi\right\rangle +O\left(  \gamma^{3}\right)  ,
\end{equation}
where%
\begin{equation}
H_{ABC,A^{\prime}B^{\prime}C^{\prime}}\equiv\log\omega_{C}-\log\tau_{AC}%
-\log\theta_{BC}+\log\mu_{A^{\prime}B^{\prime}C^{\prime}}^{T}.
\end{equation}
Then we know that%
\begin{equation}
\left\langle \varphi\right\vert H_{ABC,A^{\prime}B^{\prime}C^{\prime}}%
^{2}\left\vert \varphi\right\rangle -\left[  \left\langle \varphi\right\vert
H_{ABC,A^{\prime}B^{\prime}C^{\prime}}\left\vert \varphi\right\rangle \right]
^{2}\geq0. \label{eq:important-inequality-for-proof}%
\end{equation}
From here, we can show that the numerator of (\ref{eq:derivative-exp-1}) is
non-negative for small $\gamma$ by following the same development as in
(\ref{eq:beta-small-proof-1})-(\ref{eq:beta-small-proof-last}) (substitute
$\left\langle \varphi\right\vert H_{ABC,A^{\prime}B^{\prime}C^{\prime}%
}\left\vert \varphi\right\rangle $ for $\Delta$ and the LHS in
(\ref{eq:important-inequality-for-proof}) for $V$). The development for the
other variations of (\ref{eq:conj-3}) is similar.

\section{Dimension bounds and other inequalities}

\label{sec:dim-bounds} For the bounds in this appendix, we make the following
definitions:%
\begin{align}
I_{\alpha}^{\prime\prime}\left(  A;B|C\right)  _{\rho}  &  \equiv\inf
_{\sigma_{ABC}}\Delta_{\alpha}\left(  \rho_{ABC},\sigma_{AC},\sigma_{C}%
,\sigma_{BC}\right)  ,\\
\widetilde{I}_{\alpha}^{\prime\prime}\left(  A;B|C\right)  _{\rho}  &
\equiv\inf_{\sigma_{ABC}}\widetilde{\Delta}_{\alpha}\left(  \rho_{ABC}%
,\sigma_{AC},\sigma_{C},\sigma_{BC}\right)  ,
\end{align}
where the optimizations are over $\sigma_{ABC}$ such that supp$\left(
\rho_{ABC}\right)  \subseteq\ $supp$\left(  \sigma_{ABC}\right)  $.

\begin{proposition}
Let $\rho_{ABC}\in\mathcal{S}\left(  \mathcal{H}_{ABC}\right)  $. The
following dimension bound holds for $\alpha\in\lbrack0,1)\cup(1,2]$:%
\begin{equation}
I_{\alpha}^{\prime\prime}\left(  A;B|C\right)  _{\rho}\leq2\min\left\{  \log
d_{A},\log d_{B}\right\}  , \label{eq:dimension-bound-a}%
\end{equation}
and the following one holds for $\alpha\in(1/2,1)\cup(1,\infty)$:%
\begin{equation}
\widetilde{I}_{\alpha}^{\prime\prime}\left(  A;B|C\right)  _{\rho}\leq
2\min\left\{  \log d_{A},\log d_{B}\right\}  . \label{eq:dimension-bound-b}%
\end{equation}

\end{proposition}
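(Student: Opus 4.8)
The plan is to use that both $I_{\alpha}^{\prime\prime}$ and $\widetilde{I}_{\alpha}^{\prime\prime}$ are \emph{infima} over $\sigma_{ABC}$, so it suffices to exhibit one convenient admissible $\sigma_{ABC}$ for each of the two bounds and estimate the resulting core quantity. To obtain $2\log d_{A}$ I would take $\sigma_{ABC}=\pi_{A}\otimes\rho_{BC}$, where $\pi_{A}=I_{A}/d_{A}$ is the maximally mixed state on $A$; this satisfies $\operatorname{supp}(\rho_{ABC})\subseteq\mathcal{H}_{A}\otimes\operatorname{supp}(\rho_{BC})=\operatorname{supp}(\sigma_{ABC})$, so it is admissible. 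To obtain $2\log d_{B}$ I would take the mirror-image choice $\sigma_{ABC}=\rho_{AC}\otimes\pi_{B}$. Since the infimum is bounded above by the value at any admissible state, the two estimates together give the $2\min\{\log d_{A},\log d_{B}\}$ bound.

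The first step is to evaluate $\Delta_{\alpha}$ at $\sigma_{ABC}=\pi_{A}\otimes\rho_{BC}$, for which $\sigma_{AC}=\pi_{A}\otimes\rho_{C}$, $\sigma_{C}=\rho_{C}$, and $\sigma_{BC}=\rho_{BC}$. The factors $\pi_{A}^{(1-\alpha)/2}=d_{A}^{(\alpha-1)/2}I_{A}$ pull out a scalar $d_{A}^{\alpha-1}$; the pairs $\rho_{C}^{(1-\alpha)/2}\rho_{C}^{(\alpha-1)/2}=\rho_{C}^{0}$ then collapse, and because $\operatorname{supp}(\rho_{BC})\subseteq\mathcal{H}_{B}\otimes\operatorname{supp}(\rho_{C})$ the projections $\rho_{C}^{0}$ act as the identity on $\rho_{BC}$, so the middle operator reduces to $\rho_{BC}^{1-\alpha}$. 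This yields the clean identity $\Delta_{\alpha}(\rho_{ABC},\pi_{A}\otimes\rho_{C},\rho_{C},\rho_{BC})=\log d_{A}+D_{\alpha}(\rho_{ABC}\Vert I_{A}\otimes\rho_{BC})$, and the identical manipulation for the sandwiched quantity (collapsing the exponents $(1-\alpha)/2\alpha$ and $(\alpha-1)/2\alpha$) gives $\widetilde{\Delta}_{\alpha}(\rho_{ABC},\pi_{A}\otimes\rho_{C},\rho_{C},\rho_{BC})=\log d_{A}+\widetilde{D}_{\alpha}(\rho_{ABC}\Vert I_{A}\otimes\rho_{BC})$. It then remains only to prove the conditional-entropy-type bound $D_{\alpha}(\rho_{ABC}\Vert I_{A}\otimes\rho_{BC})\le\log d_{A}$ and its sandwiched analogue.

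For this I would establish the operator inequality $\rho_{ABC}\le d_{A}\,(I_{A}\otimes\rho_{BC})$, equivalently $\tfrac1{d_{A}}\rho_{ABC}\le I_{A}\otimes\rho_{BC}$. Fixing an orthonormal basis $\{\vert k\rangle\}_{k=1}^{d_{A}}$ of $\mathcal{H}_{A}$ and letting $\mathcal{P}$ be the associated pinching on $A$, the pinching inequality gives $\rho_{ABC}\le d_{A}\,\mathcal{P}(\rho_{ABC})$; moreover $I_{A}\otimes\rho_{BC}-\mathcal{P}(\rho_{ABC})=\sum_{k}\vert k\rangle\langle k\vert\otimes\sum_{l\neq k}\langle l\vert\rho_{ABC}\vert l\rangle\ge0$, so $\mathcal{P}(\rho_{ABC})\le I_{A}\otimes\rho_{BC}$ and the claim follows. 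For the Petz family, $D_{\alpha}(\rho\Vert\sigma)$ is non-increasing in its second argument for $\alpha\in[0,1)\cup(1,2]$ (from the operator-monotonicity properties of $t\mapsto t^{1-\alpha}$ combined with the sign of $1/(\alpha-1)$), whence $D_{\alpha}(\rho_{ABC}\Vert I_{A}\otimes\rho_{BC})\le D_{\alpha}(\rho_{ABC}\Vert\tfrac1{d_{A}}\rho_{ABC})=\log d_{A}$, using $D_{\alpha}(\rho\Vert\rho)=0$ and $D_{\alpha}(\rho\Vert c\sigma)=D_{\alpha}(\rho\Vert\sigma)-\log c$. For the sandwiched family I would instead use $\widetilde{D}_{\alpha}\le D_{\max}$ (from monotonicity of $\widetilde{D}_{\alpha}$ in $\alpha$ together with $\lim_{\alpha\to\infty}\widetilde{D}_{\alpha}=D_{\max}$) and read off $D_{\max}(\rho_{ABC}\Vert I_{A}\otimes\rho_{BC})\le\log d_{A}$ directly from the operator inequality. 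Combined with the reduction identities this gives $\Delta_{\alpha},\widetilde{\Delta}_{\alpha}\le2\log d_{A}$ at this $\sigma_{ABC}$, and the symmetric choice gives $2\log d_{B}$.

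The main obstacle is obtaining the operator inequality with the dimension factor $d_{A}$ rather than the dimension of $BC$: a naive trace estimate, $\Vert(I_{A}\otimes\rho_{BC}^{-1/2})\rho_{ABC}(I_{A}\otimes\rho_{BC}^{-1/2})\Vert_{\infty}\le\operatorname{Tr}\{\,\cdots\}=\operatorname{rank}(\rho_{BC})$, only yields the factor $\dim(\mathcal{H}_{BC})$, so the pinching bound (which shows this operator norm is at most $d_{A}$) is the essential point. A secondary technical care, easily handled, is justifying the collapse of the generalized-inverse factors in the reduction step under the support condition $\operatorname{supp}(\rho_{ABC})\subseteq\mathcal{H}_{A}\otimes\operatorname{supp}(\rho_{BC})$, so that no positive-definiteness assumption on $\rho_{ABC}$ is needed.
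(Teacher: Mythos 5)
Your proof is correct, and after a shared first step it takes a genuinely different route from the paper's. Like the paper, you begin by restricting the infimum to a product ansatz: the paper chooses $\sigma_{ABC}=\pi_{A}\otimes\sigma_{BC}$ (resp.\ $\sigma_{BC}=\pi_{B}\otimes\sigma_{C}$) to reduce to $I_{\alpha}^{\prime\prime}\left(A;B|C\right)_{\rho}\leq\log d_{A}-H_{\alpha}\left(A|BC\right)_{\rho}$, and then, exactly as you do, it specializes the remaining optimization to $\rho_{BC}$, so both arguments pass through the non-optimized quantity $\log d_{A}+D_{\alpha}\left(\rho_{ABC}\Vert I_{A}\otimes\rho_{BC}\right)$. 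The divergence is in how this conditional-entropy term is bounded by $\log d_{A}$. The paper purifies $\rho_{ABC}$ and invokes the entropic duality relations $H_{\alpha}\left(A|BC\right)_{\rho|\rho}=-H_{\beta}\left(A|D\right)_{\rho|\rho}$ with $\alpha+\beta=2$ \cite{TCR09} and $\widetilde{H}_{\alpha}\left(A|BC\right)_{\rho}=-\widetilde{H}_{\beta}\left(A|D\right)_{\rho}$ with $1/\alpha+1/\beta=2$ \cite{MDSFT13}, followed by data processing and $H_{\beta}\left(A\right)_{\rho}\leq\log d_{A}$; these dualities are precisely what produce the stated ranges $[0,1)\cup(1,2]$ and $(1/2,1)\cup(1,\infty)$. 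You instead prove the operator inequality $\rho_{ABC}\leq d_{A}\left(I_{A}\otimes\rho_{BC}\right)$ by pinching, then conclude via anti-monotonicity of $D_{\alpha}$ in its second argument — valid exactly for $\alpha\in[0,2]$ since it rests on operator (anti-)monotonicity of $t\mapsto t^{1-\alpha}$, so your Petz range matches the proposition's — and, for the sandwiched family, via $\widetilde{D}_{\alpha}\leq D_{\max}$ together with $D_{\max}\left(\rho_{ABC}\Vert I_{A}\otimes\rho_{BC}\right)\leq\log d_{A}$ read off directly from the operator inequality. Your route is purification-free and more self-contained, and it actually yields the sandwiched bound on the larger range $\alpha\in(0,1)\cup(1,\infty)$; the paper's route is shorter given the known duality lemmas and explains the range restrictions structurally. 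One step you should make explicit: the comparison $D_{\alpha}\left(\rho_{ABC}\Vert I_{A}\otimes\rho_{BC}\right)\leq D_{\alpha}\left(\rho_{ABC}\Vert d_{A}^{-1}\rho_{ABC}\right)=\log d_{A}$ should be run with a $\delta I$ regularization of both second arguments before taking generalized inverses, because $t\mapsto t^{-s}$ is operator anti-monotone only on positive definite operators and the pointwise inequality between generalized-inverse powers can fail on kernels; the limit $\delta\searrow0$ behaves as required since $\operatorname{supp}\left(\rho_{ABC}\right)$ lies in both supports, so the cross terms against $\rho_{ABC}^{\alpha}$ vanish — you flagged this support issue, and it does go through.
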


\begin{proof}
We first prove that the following dimension bounds hold%
\begin{align}
I_{\alpha}^{\prime\prime}\left(  A;B|C\right)  _{\rho}  &  \leq\log
d_{A}-H_{\alpha}\left(  A|BC\right)  _{\rho},\label{eq:dimension-bound-1}\\
I_{\alpha}^{\prime\prime}\left(  A;B|C\right)  _{\rho}  &  \leq\log
d_{B}-H_{\alpha}\left(  B|AC\right)  _{\rho},\label{eq:dimension-bound-2}\\
\widetilde{I}_{\alpha}^{\prime\prime}\left(  A;B|C\right)  _{\rho}  &
\leq\log d_{A}-\widetilde{H}_{\alpha}\left(  A|BC\right)  _{\rho},\\
\widetilde{I}_{\alpha}^{\prime\prime}\left(  A;B|C\right)  _{\rho}  &
\leq\log d_{B}-\widetilde{H}_{\alpha}\left(  B|AC\right)  _{\rho}.
\end{align}
The inequality in (\ref{eq:dimension-bound-2}) follows from%
\begin{align}
I_{\alpha}^{\prime\prime}\left(  A;B|C\right)  _{\rho}  &  =\inf_{\sigma
_{ABC}}\frac{1}{\alpha-1}\log\text{Tr}\left\{  \rho_{ABC}^{\alpha}\sigma
_{AC}^{\left(  1-\alpha\right)  /2}\sigma_{C}^{\left(  \alpha-1\right)
/2}\sigma_{BC}^{1-\alpha}\sigma_{C}^{\left(  \alpha-1\right)  /2}\sigma
_{AC}^{\left(  1-\alpha\right)  /2}\right\} \\
&  \leq\inf_{\sigma_{AC}}\frac{1}{\alpha-1}\log\text{Tr}\left\{  \rho
_{ABC}^{\alpha}\sigma_{AC}^{\left(  1-\alpha\right)  /2}\sigma_{C}^{\left(
\alpha-1\right)  /2}\left(  \pi_{B}\otimes\sigma_{C}\right)  ^{1-\alpha}%
\sigma_{C}^{\left(  \alpha-1\right)  /2}\sigma_{AC}^{\left(  1-\alpha\right)
/2}\right\} \\
&  =\inf_{\sigma_{AC}}\frac{1}{\alpha-1}\log\text{Tr}\left\{  \rho
_{ABC}^{\alpha}\pi_{B}^{1-\alpha}\sigma_{AC}^{1-\alpha}\right\} \\
&  =\log d_{B}-\left(  -\min_{\sigma_{AC}}\frac{1}{\alpha-1}\log
\text{Tr}\left\{  \rho_{ABC}^{\alpha}\sigma_{AC}^{1-\alpha}\right\}  \right)
\\
&  =\log d_{B}-H_{\alpha}\left(  B|AC\right)  _{\rho},
\end{align}
where $\pi_{B}=I_{B}/d_{B}$ and $H_{\alpha}\left(  B|AC\right)  _{\rho}%
\equiv-\inf_{\sigma_{AC}}D_{\alpha}(\rho_{ABC}\Vert I_{B}\otimes\sigma_{AC})$
as in~\eqref{eq:cond_alpha}. The bound in (\ref{eq:dimension-bound-1}) follows
similarly by choosing $\sigma_{ABC}=\pi_{A}\otimes\sigma_{BC}$. The proofs for
the sandwiched R\'{e}nyi CMIs follow similarly, except we end up with the
sandwiched R\'{e}nyi conditional entropy in the upper bound.

To prove (\ref{eq:dimension-bound-a}), we use the duality relation proved in
\cite[Lemma~6]{TCR09}. From (\ref{eq:dimension-bound-1}), we know that%
\begin{align}
I_{\alpha}^{\prime\prime}\left(  A;B|C\right)  _{\rho}  &  \leq\log d_{A}%
+\inf_{\sigma_{BC}}\frac{1}{\alpha-1}\log\text{Tr}\left\{  \rho_{ABC}^{\alpha
}\sigma_{BC}^{1-\alpha}\right\} \\
&  \leq\log d_{A}+\frac{1}{\alpha-1}\log\text{Tr}\left\{  \rho_{ABC}^{\alpha
}\rho_{BC}^{1-\alpha}\right\} \\
&  \equiv\log d_{A}-H_{\alpha}\left(  A|BC\right)  _{\rho|\rho}\\
&  =\log d_{A}+H_{\beta}\left(  A|D\right)  _{\rho|\rho}\\
&  \leq\log d_{A}+H_{\beta}\left(  A\right)  _{\rho}\\
&  \leq2\log d_{A},
\end{align}
where $H_{\alpha}\left(  A|BC\right)  _{\rho|\rho}\equiv-D_{\alpha}(\rho
_{ABC}\Vert I_{A}\otimes\rho_{BC})$. The second equality follows from the
duality from \cite[Lemma~6]{TCR09}, i.e.,%
\begin{equation}
H_{\alpha}\left(  A|BC\right)  _{\rho|\rho}=-H_{\beta}\left(  A|D\right)
_{\rho|\rho},
\end{equation}
where $\rho_{ABCD}$ is a purification of $\rho_{ABC}$ and $\beta$ is chosen so
that $\alpha+\beta=2$. The third inequality follows from data processing and
the last from a dimension bound on the R\'{e}nyi entropy.

The inequality (\ref{eq:dimension-bound-b}) follows from the duality of the
sandwiched conditional R\'{e}nyi entropy \cite[Theorem~10]{MDSFT13}:%
\begin{equation}
\widetilde{H}_{\alpha}\left(  A|BC\right)  _{\rho}=-\widetilde{H}_{\beta
}\left(  A|D\right)  _{\rho},
\end{equation}
where $\widetilde{H}_{\alpha}\left(  A|BC\right)  _{\rho}\equiv-\inf
_{\sigma_{BC}}\widetilde{D}_{\alpha}(\rho_{ABC}\Vert I_{A}\otimes\sigma_{BC}%
)$, $\rho_{ABCD}$ is a purification of $\rho_{ABC}$ and $\beta$ is chosen so
that $\frac{1}{\alpha}+\frac{1}{\beta}=2$. So this means that%
\begin{align}
\widetilde{I}_{\alpha}^{\prime\prime}\left(  A;B|C\right)  _{\rho}  &
\leq\log d_{A}-\widetilde{H}_{\alpha}\left(  A|BC\right)  _{\rho}\\
&  =\log d_{A}+\widetilde{H}_{\beta}\left(  A|D\right)  _{\rho}\\
&  \leq\log d_{A}+\widetilde{H}_{\beta}\left(  A\right)  _{\rho}\\
&  \leq2\log d_{A},
\end{align}
where the second inequality follows from data processing and the last is a
universal bound on the R\'{e}nyi entropy.
\end{proof}

\begin{proposition}
Let $\rho_{ABC}\in\mathcal{S}\left(  \mathcal{H}_{ABC}\right)  $. The
following bounds hold for $\alpha\in\lbrack0,1)\cup(1,\infty)$:%
\begin{align}
I_{\alpha}^{\prime\prime}\left(  A;B|C\right)  _{\rho}  &  \leq I_{\alpha
}\left(  A;BC\right)  _{\rho},\\
I_{\alpha}^{\prime\prime}\left(  A;B|C\right)  _{\rho}  &  \leq I_{\alpha
}\left(  B;AC\right)  _{\rho},
\end{align}
and the following hold for $\alpha\in(0,1)\cup(1,\infty)$:%
\begin{align}
\widetilde{I}_{\alpha}^{\prime\prime}\left(  A;B|C\right)  _{\rho}  &
\leq\widetilde{I}_{\alpha}\left(  A;BC\right)  _{\rho},\\
\widetilde{I}_{\alpha}^{\prime\prime}\left(  A;B|C\right)  _{\rho}  &
\leq\widetilde{I}_{\alpha}\left(  B;AC\right)  _{\rho}.
\end{align}

\end{proposition}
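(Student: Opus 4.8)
The plan is to prove each inequality by \emph{restricting} the infimum that defines $I_{\alpha}^{\prime\prime}$ (respectively $\widetilde{I}_{\alpha}^{\prime\prime}$) to a suitable subfamily of product states, and then to show that on this subfamily the core quantity $\Delta_{\alpha}$ (respectively $\widetilde{\Delta}_{\alpha}$) collapses exactly into the R\'enyi relative entropy appearing in the definition of the R\'enyi mutual information. Since restricting the feasible set of an infimum can only increase it, this yields the stated upper bounds with no analytic input beyond an operator identity.

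For the first inequality $I_{\alpha}^{\prime\prime}(A;B|C)_{\rho}\leq I_{\alpha}(A;BC)_{\rho}$, I would feed into the infimum over $\sigma_{ABC}$ the product ansatz $\sigma_{ABC}=\rho_{A}\otimes\sigma_{BC}$, whose marginals are $\sigma_{AC}=\rho_{A}\otimes\sigma_{C}$, $\sigma_{C}$, and $\sigma_{BC}$. The key step is to evaluate $Q_{\alpha}(\rho_{ABC},\rho_{A}\otimes\sigma_{C},\sigma_{C},\sigma_{BC})$. Writing $(\rho_{A}\otimes\sigma_{C})^{(1-\alpha)/2}=\rho_{A}^{(1-\alpha)/2}\sigma_{C}^{(1-\alpha)/2}$, the adjacent product $\sigma_{C}^{(1-\alpha)/2}\sigma_{C}^{(\alpha-1)/2}=\sigma_{C}^{0}$ becomes the support projection of $\sigma_{C}$. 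Using $\operatorname{supp}(\sigma_{BC})\subseteq\operatorname{supp}(I_{B}\otimes\sigma_{C})$ (see, e.g., \cite[Lemma~B.4.1]{RennerThesis}) to absorb $\sigma_{C}^{0}$, and using that $\rho_{A}$ and $\sigma_{BC}$ act on disjoint systems and hence commute, the entire sandwich reduces to $\rho_{A}^{1-\alpha}\sigma_{BC}^{1-\alpha}=(\rho_{A}\otimes\sigma_{BC})^{1-\alpha}$. This gives $Q_{\alpha}=\operatorname{Tr}\{\rho_{ABC}^{\alpha}(\rho_{A}\otimes\sigma_{BC})^{1-\alpha}\}$, so that $\Delta_{\alpha}(\rho_{ABC},\rho_{A}\otimes\sigma_{C},\sigma_{C},\sigma_{BC})=D_{\alpha}(\rho_{ABC}\Vert\rho_{A}\otimes\sigma_{BC})$. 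Taking the infimum over $\sigma_{BC}$ and recalling~\eqref{eq:alpha_mutual} then yields the bound. The second inequality $I_{\alpha}^{\prime\prime}(A;B|C)_{\rho}\leq I_{\alpha}(B;AC)_{\rho}$ follows from the same argument with the complementary ansatz $\sigma_{ABC}=\rho_{B}\otimes\sigma_{AC}$; there the middle factor $(\rho_{B}\otimes\sigma_{C})^{1-\alpha}$ is the one that simplifies, $\rho_{B}$ commutes past $\sigma_{AC}^{(1-\alpha)/2}$, and one is left with $\operatorname{Tr}\{\rho_{ABC}^{\alpha}(\rho_{B}\otimes\sigma_{AC})^{1-\alpha}\}$.

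For the sandwiched bounds I would run the identical substitution inside $\widetilde{Q}_{\alpha}$. With $\sigma_{ABC}=\rho_{A}\otimes\sigma_{BC}$, the inner operator $\tau_{AC}^{(1-\alpha)/2\alpha}\omega_{C}^{(\alpha-1)/2\alpha}\theta_{BC}^{(1-\alpha)/\alpha}\omega_{C}^{(\alpha-1)/2\alpha}\tau_{AC}^{(1-\alpha)/2\alpha}$ collapses, by the same commutation and support-projection manipulations, to $(\rho_{A}\otimes\sigma_{BC})^{(1-\alpha)/\alpha}$, so that $\widetilde{Q}_{\alpha}=\operatorname{Tr}\{(\rho_{ABC}^{1/2}(\rho_{A}\otimes\sigma_{BC})^{(1-\alpha)/\alpha}\rho_{ABC}^{1/2})^{\alpha}\}$, which is exactly the quantity inside $\widetilde{D}_{\alpha}(\rho_{ABC}\Vert\rho_{A}\otimes\sigma_{BC})$. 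The bound $\widetilde{I}_{\alpha}^{\prime\prime}(A;B|C)_{\rho}\leq\widetilde{I}_{\alpha}(A;BC)_{\rho}$ then follows by taking the infimum over $\sigma_{BC}$, and the $B;AC$ version uses $\sigma_{ABC}=\rho_{B}\otimes\sigma_{AC}$.

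The main thing to be careful about is not an inequality but the bookkeeping of generalized inverses and support projections: I must verify that the chosen product states are admissible in the infimum, i.e.\ that $\operatorname{supp}(\rho_{ABC})\subseteq\operatorname{supp}(\rho_{A}\otimes\sigma_{BC})$ (which holds whenever $\operatorname{supp}(\rho_{BC})\subseteq\operatorname{supp}(\sigma_{BC})$), and that the absorption $\sigma_{C}^{0}\sigma_{BC}^{1-\alpha}\sigma_{C}^{0}=\sigma_{BC}^{1-\alpha}$ is legitimate in the generalized-inverse sense. Because the argument only restricts the feasible set and then rewrites an identity, no concavity, convexity, or monotonicity input is required, which is precisely why the bounds hold on the full stated ranges of $\alpha$.
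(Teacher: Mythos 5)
Your proposal is correct and matches the paper's own proof essentially step for step: the paper likewise bounds the infimum by substituting the product ansatz $\sigma_{ABC}=\rho_{A}\otimes\sigma_{BC}$ (respectively $\rho_{B}\otimes\sigma_{AC}$), collapses the sandwich via $\sigma_{C}^{(1-\alpha)/2}\sigma_{C}^{(\alpha-1)/2}=\sigma_{C}^{0}$ and the commutation of operators on disjoint systems to obtain $\operatorname{Tr}\{\rho_{ABC}^{\alpha}(\rho_{A}^{1-\alpha}\otimes\sigma_{BC}^{1-\alpha})\}$, and identifies the result with $I_{\alpha}(A;BC)_{\rho}$ (with the analogous reduction for the sandwiched quantities). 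Your added care with support projections and admissibility of the ansatz is sound and slightly more explicit than the paper, but it is a refinement of the same argument rather than a different route.
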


\begin{proof}
A proof for the first inequality follows from%
\begin{align}
I_{\alpha}^{\prime\prime}\left(  A;B|C\right)  _{\rho}  &  =\inf_{\sigma
_{ABC}}\frac{1}{\alpha-1}\log\text{Tr}\left\{  \rho_{ABC}^{\alpha}\sigma
_{AC}^{\left(  1-\alpha\right)  /2}\sigma_{C}^{\left(  \alpha-1\right)
/2}\sigma_{BC}^{1-\alpha}\sigma_{C}^{\left(  \alpha-1\right)  /2}\sigma
_{AC}^{\left(  1-\alpha\right)  /2}\right\} \\
&  \leq\inf_{\sigma_{BC}}\frac{1}{\alpha-1}\log\text{Tr}\left\{  \rho
_{ABC}^{\alpha}\left(  \rho_{A}\otimes\sigma_{C}\right)  ^{\left(
1-\alpha\right)  /2}\sigma_{C}^{\left(  \alpha-1\right)  /2}\sigma
_{BC}^{1-\alpha}\sigma_{C}^{\left(  \alpha-1\right)  /2}\left(  \rho
_{A}\otimes\sigma_{C}\right)  ^{\left(  1-\alpha\right)  /2}\right\} \\
&  =\inf_{\sigma_{BC}}\frac{1}{\alpha-1}\log\text{Tr}\left\{  \rho
_{ABC}^{\alpha}\left(  \rho_{A}^{1-\alpha}\otimes\sigma_{BC}^{1-\alpha
}\right)  \right\} \\
&  \equiv I_{\alpha}\left(  A;BC\right)  _{\rho},
\end{align}
as defined in~\eqref{eq:alpha_mutual}. A proof for the second inequality
follows similarly by choosing $\sigma_{ABC}=\rho_{B}\otimes\sigma_{AC}$.
Proofs for the last two inequalities are similar, except the sandwiched
R\'{e}nyi mutual information is defined for a bipartite state $\rho_{AB}$ as%
\begin{equation}
\widetilde{I}_{\alpha}\left(  A;B\right)  _{\rho}\equiv\inf_{\sigma_{B}}%
\frac{1}{\alpha-1}\log\text{Tr}\left\{  \left[  \left(  \rho_{A}\otimes
\sigma_{B}\right)  ^{\frac{1-\alpha}{2\alpha}}\rho_{AB}\left(  \rho_{A}%
\otimes\sigma_{B}\right)  ^{\frac{1-\alpha}{2\alpha}}\right]  ^{\alpha
}\right\}  .
\end{equation}

\end{proof}

\bibliographystyle{plain}
\bibliography{Ref}

\end{document}